\documentclass[12pt, a4paper]{article}

\usepackage[fleqn]{amsmath}      
\numberwithin{equation}{section}
\usepackage{amssymb}
\usepackage{amsthm}
\usepackage[titletoc,title]{appendix}
\usepackage{array}
\usepackage{arydshln}
\usepackage{authblk}
\usepackage[english]{babel}
\usepackage{bbm}
\usepackage{bm}
\usepackage{booktabs}
\usepackage[font=small,labelfont=bf]{caption}
\usepackage{color}
\usepackage{econometrics}
\usepackage{enumerate}
\usepackage{float}
\usepackage[margin=1.9cm]{geometry}
\usepackage{graphicx}
\usepackage[latin1]{inputenc}
\usepackage{rotating}
\usepackage{mathrsfs}
\usepackage{mathtools}
\usepackage{microtype}
\usepackage{multicol}
\usepackage{multirow}
\usepackage[authoryear,round]{natbib}
\newcommand\cites[1]{\citeauthor{#1}'s\ (\citeyear{#1})}

\usepackage{pdflscape}
\usepackage{soul}
\usepackage{subfiles}
\usepackage{subcaption}

\usepackage{times} 
\usepackage{tablefootnote}
\usepackage{titletoc}
\usepackage[flushleft]{threeparttable}
\usepackage{txfonts}
\usepackage{url}
\usepackage{verbatim}
\usepackage[all]{xy}
\usepackage{xr}
\usepackage{csquotes}
\usepackage{adjustbox}
\usepackage{tcolorbox}
\usepackage{thmtools}
\usepackage{dcolumn}
\newcolumntype{d}[1]{D{.}{.}{#1}} 
\usepackage{setspace}



\makeatletter
\let\@addpunct\@gobble
\makeatother

\newtheoremstyle{mystyle}
{0.5cm}
{0.5cm}
{\itshape}
{}
{\bfseries}
{ }
{\newline}
{\thmname{#1}\thmnumber{ #2}\thmnote{ (#3)}}
\theoremstyle{mystyle}
\newtheorem{theorem}{Theorem}
\newtheorem{remark}{Remark}

\newtheorem{example}{Example}
\newtheorem{assumption}{Assumption}
\newtheorem{lemma}{Lemma}

\newtheorem{algorithm}{Algorithm}

\makeatletter
\renewenvironment{proof}[1][\proofname]{\par
	\pushQED{\qed}%
	\normalfont \topsep6\p@\@plus6\p@\relax
	\trivlist
	\item[\hskip\labelsep
	\itshape
	#1]\ignorespaces
}{%
	\popQED\endtrivlist\@endpefalse
}
\makeatother

\makeatletter
\renewcommand\@biblabel[1]{}
\makeatother
\providecommand{\indic}[1]{\mathbbm{1}_{\left\{#1\right\}}}
\newcommand{\argmin}{\operatornamewithlimits{arg\;min}}

\DeclarePairedDelimiter\abs{\lvert}{\rvert}%
\DeclarePairedDelimiter\norm{\lVert}{\rVert}%
\makeatletter
\let\oldabs\abs
\def\abs{\@ifstar{\oldabs}{\oldabs*}}
\let\oldnorm\norm
\def\norm{\@ifstar{\oldnorm}{\oldnorm*}}
\makeatother

\makeatletter
\newcommand{\rmnum}[1]{\romannumeral #1}
\newcommand{\Rmnum}[1]{\expandafter\@slowromancap\romannumeral #1@}
\newcommand{\point}[1]{$(\rmnum{#1})$}
\makeatother

\makeatletter
\newcommand{\biggg}{\bBigg@{3}}
\newcommand{\Biggg}{\bBigg@{4}}
\makeatother

\newcommand{\tran}{'}
\newcommand{\distrequal}{\stackrel{d}{=}}

\newcommand{\brown}{\bm{B}}
\newcommand{\brownnormal}{B}
\newcommand{\dtostar}{\longrightarrow_{d^*}}

\newcommand{\limitdist}{{\bm \calJ}}

\newcommand{\myint}{\int}

\newcommand{\p}{\mathbb{P}}

\newcommand{\serialbias}{\bm{\calB}_{vu}}
\newcommand{\serialbiasSimulated}{\widehat{\bm{\calB}}_{vu}^{-}}
\newcommand{\sumtT}{\sum_{t=1}^T}


\newcommand{\vbias}{{\bm{\mathcal{B}}}}

\newcommand{\romanone}{(\mathbf{\uppercase\expandafter{\romannumeral1}})}
\newcommand{\romantwo}{(\mathbf{\uppercase\expandafter{\romannumeral2}})}
\newcommand{\romanthree}{(\mathbf{\uppercase\expandafter{\romannumeral3}})}
\newcommand{\romanfour}{(\mathbf{\uppercase\expandafter{\romannumeral4}})}

\newcommand{\paraspace}{\bm \Gamma}

\newcommand{\COTWO}{\mathrm{CO}_2}
\newcommand{\Prob}{\mathbb{P}}
\renewcommand{\wto}{\longrightarrow_{d}}

\title{Cointegrating Polynomial Regressions with Power Law Trends: Environmental Kuznets Curve or Omitted Time Effects?}


\author[1]{Yicong Lin}
\author[2]{Hanno Reuvers\thanks{Corresponding author:  Department of Econometrics, Erasmus School of Economics, Erasmus University Rotterdam, Burgemeester Oudlaan 50, 3062 PA Rotterdam, The Netherlands. E-mail address: reuvers@ese.eur.nl. Phone: +31 10 40 81257.}}

\affil[1]{Department of Econometrics and Data Science, Vrije Universiteit Amsterdam, 1081HV Amsterdam, The Netherlands}
\affil[2]{Department of Econometrics, Erasmus University Rotterdam, 3062PA Rotterdam, The Netherlands}

\date{\today}

\begin{document}

\begin{spacing}{1.2}
						
\maketitle

\begin{abstract}

\noindent

The environmental Kuznets curve predicts an inverted U-shaped relationship between environmental pollution and economic growth. Current analyses frequently employ models which restrict nonlinearities in the data to be explained by the economic growth variable only. We propose a Generalized Cointegrating Polynomial Regression (GCPR) to allow for an alternative source of nonlinearity. More specifically, the GCPR is a seemingly unrelated regression with (1) integer powers of deterministic and stochastic trends for the individual units, and (2) a common flexible global trend. We estimate this GCPR by nonlinear least squares and derive its asymptotic distribution. Endogeneity of the regressors will introduce nuisance parameters into the limiting distribution but a simulation-based approach nevertheless enables us to conduct valid inference.  A multivariate subsampling KPSS test is proposed to verify the correct specification of the cointegrating relation. Our simulation study shows good performance of the simulated inference approach and subsampling KPSS test. We illustrate the GCPR approach using data for Austria, Belgium, Finland, the Netherlands, Switzerland, and the UK. A single global trend accurately captures all nonlinearities leading to a \emph{linear} cointegrating relation between GDP and CO\textsubscript{2} for all countries. This suggests that the environmental improvement of the last years is due to economic factors different from GDP. 

\bigskip
\noindent
\textbf{JEL Classification}: C12, C13, C32, O44, Q20

\bigskip
\noindent
\textbf{Keywords}: Cointegration Testing, Environmental Kuznets Curve, Generalized Cointegrating Polynomial Regression, Power Law Trends
	
\end{abstract}
\end{spacing}
\newpage		
\begin{spacing}{2}
\section{Introduction}
On page 370 of their seminal paper, \cite{grossmankrueger1995} conclude:
\begin{displayquote}
 ``\textit{Contrary to the alarmist cries of some environmental groups, we find no evidence that economic growth does unavoidable harm to the natural habitat. Instead we find that while increases in GDP may be associated with worsening environmental conditions in very poor countries, air and water quality appear to benefit from economic growth once some critical level of income has been reached.}''
\end{displayquote}
The quote above suggests an inverted U-shaped relationship between environmental degradation and economic growth. This relationship is currently known as the Environmental Kuznets Curve (EKC) and it forms an active research area. Its relevance becomes clear if we look at some forecasts of long-run economic growth. The projected GDP per capita growth of the world is about 2.1\% per year for the next decades (chapter 3 in \cite{nordhaus2013}; \cite{christensengillinghamnordhaus2018}) and this growth is partially powered by carbon-based energy resources, water usage, and material consumption. In absence of an EKC, economic growth will place more and more stress on the environment. Alternatively, if the EKC exists, then the inverted U-shape eventually implies a turning point after which economic growth and environmental improvement go hand in hand. Due to such considerations, there is now, some 25 years after its first conception, a rich literature that (1) reports on the experimental evidence on the existence/nonexistence of the EKC, (2) provides economic theory to explain the EKC, and/or (3) refines the econometric tools that are used to analyse the EKC.\footnote{Further references to these specific areas of research can be found in the review articles by \cite{dasguptaetal2002}, \cite{stern2004}, and \cite{carson2009} among others.} To quantify the volume of the literature we have entered the search query ``\textit{Environmental Kuznets Curve}'' into the Web of Science: more than 4,200 references are found.\footnote{Web of Science, accessed on December 6, 2021, http://www.webofknowledge.com.}

Driven by contradictory empirical results as well as the variability in estimated turning points, the EKC has been criticised on two main points. First, the income variable was initially treated as a stationary variable whereas later research shows that the unit root hypothesis often cannot be rejected (see \cite{galeottimaneralanza2009}, p. 553; \cite{stern2017}, p. 14--15). Nonstationarity has further implications because EKC regressions include higher integer powers of GDP as well. This combination of nonstationarity and nonlinearity places the EKC in the nonlinear cointegration literature and appropriate econometric techniques should be employed. Such techniques have been developed in \cite{wagner2015} and \cite{wagnerhong2016} under the name of \emph{Cointegrating Polynomial Regressions (CPRs)}. CPRs contain deterministic variables, integrated processes, and their integer powers.\footnote{\cite{stypkawagnergrabarczykkawka2017} reiterate the need to model the income variable as nonstationary. It is less important to use an estimation procedure acknowledging the fact that several integer powers of the same integrated process appear as regressors. That is, \cite{stypkawagnergrabarczykkawka2017} find that the ``standard estimator'' which treats higher order powers of the integrated regressor as additional I(1) variables has the same limiting distribution as the CPR estimator.} Multivariate generalizations of CPRs, \emph{Seemingly Unrelated Cointegrating Polynomial Regressions (SUCPRs)}, are discussed in \cite{wagnergrabarczykhong2019} and \cite{linreuvers2019}.

As a second point of critique, there is an ongoing debate on the model specification. Various functional forms can describe the relationship between national income and the pollution variable. The quadratic specification is widespread but cubic relationships (\cite{harbaughlevinsonwilson2002}; \cite{wagner2015}) and double-nonlinear transformation (\cite{lintuyao2020}) are also in use. Various specification tests are helpful while deciding on the right parametric specification (\cite{hongphillips2010}; \cite{wangphillips2012}; \cite{wangwuzhu2018}). Alternatively, one could resort to nonparametric estimation procedures altogether  (\cite{wangphillips2009}; \cite{lintonwang2016}). Whereas such modelling approaches do allow for a more flexible relationship, they also implicitly assume that nonlinear environmental effects are solely attributable to economic growth. Relevant variables are thus potentially missing from the model specification. Such omitted variables are a valid concern because advances in green technology, pollution policy, and environmental awareness, may all influence pollution levels. However, such data is typically available for short time spans only (and for that reason often excluded from the model). Time effects can control for time-variation in unobserved effects (\cite{volleberghmelenbergdijkgraaf2009}).

Current developments on nonlinear cointegration emphasize the role of the nonstationary regressor yet pay less attention to time effects. Time effects are important. The small simulation exercise in Table \ref{table:sim_intro} illustrates the point. Foreshadowing our proposed model, we consider a multivariate setting with a global nonlinear, smooth time trend. The global trend is omitted by the researcher and a quadratic EKC-specification is estimated: $y_{i,t} = \tau_{1,i} + \tau_{2,i} t + \phi_{1,i} x_{i,t} + \phi_{2,i} x_{i,t}^2 + u_{i,t}$ ($i=1,\ldots,3$), where $x_{i,t}$ and $y_{i,t}$ are unit-specific variables measuring income and environmental pollution, respectively. We test  $H_0: \phi_{2,1}=\phi_{2,2}=\phi_{2,3} =0$ because a significantly negative coefficient in front of $x_{i,t}^2$ is typically interpreted as evidence of an EKC.\footnote{For the moment, we focus on the curvature parameter. Clearly, for an inverted U-shaped relationship the coefficient in front of the linear term should be positive.} Panel (A) reveals exacerbated rejection frequency as curvature caused by the global deterministic trend is mistakenly interpreted as curvature caused by the income variable. In other words, negative and significant coefficients in front of squared GDP are possibly caused by omitted nonlinear deterministic trends rather than being indicative of an EKC. To be on the safe side, we recommend researchers to include a nonlinear trend component in their model specification. If unnecessary, then this is rather innocuous. Indeed, Panel (B) of Table \ref{table:sim_intro} shows that significant results for nonlinear economic growth effects continue to be found with modest losses in statistical power.

\begin{table}[t]
	\centering
	\caption{The rejection rate (in $\%$) when testing $H_0: \phi_{2,1}=\phi_{2,2}=\phi_{2,3}=0$. (A) Falsely inflated rejections of $H_0: \phi_{2,1}=\phi_{2,2}=\phi_{2,3}=0$ when time effects are omitted. (B) Adding an additional global deterministic trend to the model specification hardly influences the power of the test $H_0: \phi_{2,1}=\phi_{2,2}=\phi_{2,3}=0$. That is, significant coefficients in front of $x_{i,t}^2$ remain significant after adding a redundant flexible global trend.}
	\label{table:sim_intro}
	\resizebox{\textwidth}{!}{%
	\begin{threeparttable}
		\begin{tabular}{ccccccccc}
			\toprule
			& \multicolumn{3}{c}{Panel (A): Omitted Global Trend} &  &  & \multicolumn{3}{c}{Panel (B): Redundant Global Trend} \\
			\midrule
			\addlinespace[0.1cm]
			DGP & \multicolumn{3}{c}{$y_{i,t}= \tau_g t^{\theta} +   \tau_{1,i} + \tau_{2,i} t +  \phi_{1,i}   x_{i,t} + u_{i,t}$} &  & \multicolumn{1}{l}{} & \multicolumn{3}{c}{$\text{\small{$y_{i,t}$}}=\text{\small{$\tau_{1,i} + \tau_{2,i} t +  \phi_{1,i} x_{i,t}$}} + \phi_2 x_{i,t}^2 +  \text{\small{$u_{i,t}$}}$} \\
			\addlinespace[0.1cm]
			\cmidrule{2-4}\cmidrule{7-9} 
			\addlinespace[0.1cm]
			Model & \multicolumn{3}{c}{$y_{i,t} = \tau_{1,i} + \tau_{2,i} t + \phi_{1,i}   x_{i,t} + \phi_{2,i} x_{i,t}^2 + u_{i,t}$} &  & \multicolumn{1}{l}{} & \multicolumn{1}{l}{\small Correct Specification} & \multicolumn{1}{l}{} & \multicolumn{1}{l}{$\text{ \small{$y_{i,t}$}} = \tau_g t^{\theta} + \text{ \small{$\tau_{1,i} + \tau_{2,i} t +   \phi_{1,i} x_{i,t} + \phi_{2,i} x_{i,t}^2 + u_{i,t}$}}$} \\
			\addlinespace[0.1cm]
			\midrule
			$\tau_g\,(\times 10^{-5})$ & FM-SOLS &  & FM-SUR &  & $\phi_2$ & SimNLS &  & SimNLS \\
			\midrule
			0 & 6.30 &  & 6.63 &  & 0 & 6.93 &  & 5.70 \\
			\addlinespace[0.1cm]
			-0.5 & 13.27 &  & 12.77 &  & -0.5 & 9.20 &  & 6.97 \\
			-1 & 30.07 &  & 27.50 &  & -1 & 14.97 &  & 9.97 \\
			-1.5 & 46.23 &  & 41.57 &  & -1.5 & 30.47 &  & 20.70 \\
			\addlinespace[0.1cm]
			-2 & 56.60 &  & 50.30 &  & -2 & 55.33 &  & 40.53 \\
			-2.5 & 64.50 &  & 56.00 &  & -2.5 & 81.73 &  & 69.20 \\
			-3 & 68.60 &  & 57.57 &  & -3 & 93.97 &  & 89.50 \\
			\bottomrule
		\end{tabular}%
	    \begin{tablenotes}
	    	\footnotesize
	    	\item Note 1: For illustrative purpose, we consider a stylised example in this introduction. The exact parametrisation is available in Section of the Supplementary Material. More elaborate simulation results based on the empirical application are reported as simulation DGP2 in Section \ref{sec:simulations}. 
	    	\item Note 2: FM-SOLS and FM-SUR are documented in \cite{wagnergrabarczykhong2019}. The results in Panel (B) are based on simulation-based inference, see Section \ref{subsec:sim_inf}.
	    \end{tablenotes} 
    \end{threeparttable}
	}
\end{table}

The contributions of this paper are fourfold. First, we propose the \emph{Generalized Cointegrating Polynomial Regression (GCPR)}. This multivariate model features a global power law trend to capture time effects. Power law trends have been employed to model non-constant growth rates in technology indices (\cite{duggalsaltzmanklein1999}; \cite{duggal2007infrastructure}) and production functions (\cite{klein2004}). Within the EKC context, this flexible trend can capture common time effects that are implicit in omitted variables. Alternatively, as in \cite{lilinton2020}, the reader can view the global flexible trend as an outside option (next to the income variable) to describe nonlinearities in the data. Limiting distributions for estimators in models with purely deterministic power law trends have been reported in \cite{phillips2007}, \cite{robinson2012} and \cite{gaolintonpeng2020}. The presence of integrated variables requires an alternative asymptotic framework. Moreover, due to endogeneity, approaches assuming pre-determined integrated regressors (\cite{parkphillips1999}; \cite{parkphillips2001}; \cite{changparkphillips2001}) are inappropriate and we instead opt for a proof along the lines of \cite{chanwang2015}. The resulting limiting distribution is non-standard because (1) the scaling matrix with convergence rates is non-diagonal and parameter dependent, and (2) second-order bias terms are present. Second, we propose a simulation-based approach to conduct inference. Monte Carlo simulations show clear benefits of this simulation-based approach in terms of size control compared to existing methods. Third, in the spirit of \cite{choisaikkonen2010}, we report a multivariate KPSS-type test to verify the stationarity of the error process thus enabling researchers to avoid spurious results or misspecified cointegrating relations. Fourth and finally, in the empirical application, we investigate the EKC for Austria, Belgium, Finland, the Netherlands, Switzerland and the UK over the period 1870--2014. Nonparametric estimates and tests confirm that the global trend captures all nonlinearities in the data. Nonlinear effects in log per capita GDP (and thus also evidence for an EKC) are absent. Given such findings, we offer a clear recommendation to researchers to check whether their EKC conclusions are robust to the inclusion of power law trends.

This paper is organized as follows. Section \ref{sec:model} introduces the model and the estimation framework. Asymptotic properties of the estimators and parameter inference are discussed in Section \ref{sec:AsympTheory}. The Monte Carlo simulations in Section \ref{sec:simulations} compare asymptotic results to finite sample performance. An in depth discussion of the Environment Kuznets Curve can be found in Section \ref{sec:empapplication}. Section \ref{sec:conclusion} concludes. The proofs of the main theorems are collected in the Appendix and further information is available in the Supplementary Material.

Finally, some words on notation. The integer part of the number $a\in \SR^{+}$ is denoted by $[a]$. For a vector $\vx\in\SR^n$,  its $p$-norm is denoted by $\|\vx\|_p=(\sum_{i=1}^{n}|x_i|^p)^{1/p}$. For a matrix $\mA$, say of dimension ($n\times m$), the induced $p$-norm is defined as $\|\mA\|_p=\sup_{\vx\neq \vzeros} \|\mA\vx\|_p/\|\vx\|_p$. We will omit the subscripts whenever $p=2$. The $(n\times n)$ identity matrix is denoted $\mI_n$ and $\vones_n$ signifies an $n$-dimensional column vector with all entries equal to 1. The block-diagonal matrix $\diag[\mA_1,\ldots,\mA_n]$ stacks the matrices $\mA_1,\ldots,\mA_n$ along its diagonal. We omit the integration bounds whenever the integration interval is $[0,1]$. The symbol ``$\distrequal$'' stands for equality in distribution, and ``$\pto$'' and ``$\dto$'' denote convergence in probability and in distribution. If convergence occurs conditionally on the sample, then we add a superscript ``*'' to the standard notation. The probabilistic Landau symbols are $O_p(\cdot)$ and $o_p(\cdot)$. Finally, the generic constant $C$ can change from line to line. 
\section{The Model and NLS Estimation} \label{sec:model}

Our model specification enriches the Seemingly Unrelated Cointegrating Polynomial Regressions (SUCPRs) from \cite{wagnergrabarczykhong2019} with a flexible deterministic trend. That is, each individual series in the system is affected by specific deterministic variables and integrated regressors (and their integer powers) while a \emph{global} flexible trend describes nonlinear behaviour that is prevalent across all series. The resulting \emph{Generalized Cointegrating Polynomial Regression (GCPR)}  is given by:
\begin{equation}
 y_{i,t} = \tau_g t^{\theta} + \tau_{1,i} + \tau_{2,i} t + \sum_{j=1}^{p_i} \phi_{j,i} x_{i,t}^j + u_{i,t}, \qquad i=1,\ldots,N,\qquad t=1,\ldots,T,
\label{eq:gcp_regmodel}
\end{equation}
where $\theta\in \Theta(\varepsilon)$ with $\Theta(\varepsilon)=\left\{\theta\in [\theta_L,\theta_U]:~ |\theta|>\epsilon,\, |\theta-1|>\epsilon \right\}$ and $-1<\theta_L\leq \theta_U<\infty$. Alternatively, we write $y_{i,t}	= \tau_g t^\theta + \vz_{i,t}\tran \vbeta_i + u_{i,t}$, where $\vz_{i,t}=\big[1,t,x_{i,t},\ldots,x_{i,t}^{p_i}\big]\tran$ and  $\vbeta_i=\big[\tau_{1,i},\tau_{2,i},\phi_{1,i},\ldots,\phi_{p_i,i}\big]\tran$. Finally, we stack all $N$ equations in \eqref{eq:gcp_regmodel} in matrix form to retrieve
\begin{equation}
\vy_t= \tau_{g} t^{\theta} \vones_N+ \mZ_t'\vbeta + \vu_t, \qquad t=1,\ldots,T,
\label{eq:matrixstack}
\end{equation}
with $\vy_t=\big[y_{1,t},\ldots,y_{N,t}\big]\tran$, $\mZ_t=\diag\big[\vz_{1,t},\ldots,\vz_{N,t}\big]$, and the vector $\vbeta=\big[\vbeta_1\tran,\ldots,\vbeta_N\tran\big]\tran$ of length $p=2N+\sum_{i=1}^{N}p_i$ containing all local parameters.

We consider nonlinear least squares (NLS) estimators of the unknown parameters in \eqref{eq:gcp_regmodel}. As such, we define the objective function $Q_T(\theta,\tau_g,\vbeta)= \frac{1}{2} \sum_{t=1}^{T} \big\|  \vy_t - \tau_g t^\theta \vones_N - \mZ_t\tran\vbeta \big\|^2$ and compute
\begin{equation}
 \left(\,\widehat{\theta}_T, \widehat\tau_{g,T},\widehat\vbeta_T \right) = \argmin_{(\theta,\,\tau_g,\,\vbeta)\in \Theta(\varepsilon) \times\SR \times\SR^p} Q_T(\theta,\tau,\vbeta).
\label{eq:NLSestimation}
\end{equation}
The optimization problem in \eqref{eq:NLSestimation} is easy to solve. For any given $\theta$, model \eqref{eq:matrixstack} is linear-in-parameters and the minimizers for $\tau_g$ and $\vbeta$ can be found from an OLS regression by constructing $\mZ_t\tran(\theta)= \big[ t^\theta\vones_N \;\mZ_t\tran\big]$ and computing
$
\left[
\begin{smallmatrix}
  \tau_g(\theta) \\
  \vbeta(\theta)
 \end{smallmatrix}
\right]
 =
  \left( \sumtT \mZ_t^{}(\theta)\mZ_t\tran(\theta)   \right)^{-1} \left( \sumtT \mZ_t(\theta) \vy_t \right)
$. We subsequently minimize the concentrated criterion function $\widetilde{Q}_T(\theta)=Q_T\big(\theta,\tau_g(\theta),\vbeta(\theta)\big)$ to obtain $\widehat{\theta}_T$. At last, we plug in $\widehat{\theta}_T$ and recover $\widehat\tau_{g,T}$ and $\widehat\vbeta_T$ through a final OLS estimation.

\begin{remark}
Keeping the powers of $x_{i,t}$ fixed allows us to test for their significance and thereby distinguish between nonlinearities caused by deterministic and stochastic trends. This is important for our empirical application on the Environmental Kuznets Curve, see Section \ref{sec:empapplication}. \cite{huphillipswang2019} study a model with a flexible power of the integrated regressor. That is, these authors derive the limiting distribution of the NLS estimators for $\beta$ and $\gamma$ when $y_t=\beta |x_t|^\gamma+u_t$ with $\beta\neq 0$.
\end{remark}

\begin{remark}
The GCPR of \eqref{eq:gcp_regmodel} can be extended in several directions. First, integer powers of deterministic trends can be added as long as $\Theta(\varepsilon)$ is adjusted accordingly (to avoid collinearity). Second, multiple explanatory variables can be included. Related to the EKC, the literature suggests examples such as: population density (\cite{seldensong1994}), trade openness (\cite{jalilferdiun2011}), energy prices (\cite{almulaliozturk2016}), and educational level (\cite{maranzanobentomanera2021}). For nonstationary variables, conditions similar to those on $\{x_{i,t}\}$ should be fulfilled (Assumption \ref{assumption:innovations} below). Stationary variables should be strictly exogenous. Avoiding the elaborate notation which would otherwise arise, we focus on the baseline specification in \eqref{eq:gcp_regmodel}.
\end{remark}

\section{Asymptotic Theory} \label{sec:AsympTheory}
We subsequently study the asymptotic properties of the NLS estimators. To this end we first collect all the unknown parameters in the vector $\vgamma= \big[\theta,\tau_g,\vbeta\tran \big]\tran$. This vector is assumed to be an element of the parameter space $\paraspace = \Theta(\varepsilon)\times\SR^{1+p}$. The true parameter vector is $\vgamma_0= \big[\theta_0,\tau_{g,0},\vbeta_0\tran \big]\tran$.

\begin{assumption}
The  global trend is relevant, i.e. $\tau_{g,0}\neq 0$.
\label{assumpt:identifytheta}
\end{assumption}

\begin{assumption}
Let $\vzeta_t=[\eta_t\tran,\vepsi_t\tran]\tran$ be a sequence of i.i.d. random vectors with $\E(\vzeta_t)=\vzeros$, $\mSigma=\E\big(\vzeta_t^{}\vzeta_t\tran)\succ 0$, and $\E\left\|\vzeta_t\right\|^{2q}<\infty$ for some $q>2$.
\begin{enumerate}[(a)]
 \item $u_t = \sum_{k=0}^\infty \psi_k \eta_{t-k}$ with $\sum_{k=1}^\infty k | \psi_k | < \infty$.
 \item $\vx_t = \sum_{s=1}^t \vv_s$, where $\vv_t = \sum_{k=0}^\infty \mPsi_k \vepsi_{t-k}$ with $\sum_{k=0}^\infty \|\mPsi_k\| < \infty$ and $\det\left( \sum_{k=0}^\infty \mPsi_k \right) \neq 0$.
\end{enumerate}
\label{assumption:innovations}
\end{assumption}

The first assumption is needed to avoid identification issues. That is, if $\tau_{g,0}=0$, then $\theta$ is not identified and the Davies problem arrises when testing $H_0: \tau_i = 0$ (see \cite{davies1977,davies1987}). Such complications are not investigated here and this is further reflected in our model specification \eqref{eq:gcp_regmodel}. That is, we consider \emph{flexible} powers of the deterministic trends but \emph{fixed} powers of the stochastic trends, hence allowing us to test zero restrictions on (elements of) $\vbeta$. This is of crucial importance in the EKC application while determining whether nonlinear effects in the economic growth variables ($x_{i,t}$) remain significant after nonlinear time trends have been added to the model. Assumption \ref{assumpt:identifytheta} has been relaxed in the literature albeit for different models. \cite{baekchophillips2015} and \cite{chophillips2018} study the asymptotic behaviour of a quasi-likelihood ratio test when Assumption \ref{assumpt:identifytheta} is violated and the conditional mean of the data contains strictly stationary regressors and a flexible time trend. Alternatively, one can use drifting parameter sequences with different identification strengths as in \cite{andrewscheng2012}.

Assumption \ref{assumption:innovations} excludes cointegration among elements of $\vx_t$ and defines this vector as the partial sum of a short memory process. The latter implies that
$
\frac{1}{\sqrt{T}}\sum_{t=1}^{[rT]}
\left[
\begin{smallmatrix}
\vu_t\\
\vv_t
\end{smallmatrix}
\right]
\wto \brown(r)=
\left[
\begin{smallmatrix}
\brown_u(r)\\
\brown_v(r)
\end{smallmatrix}
\right]
$
where $\brown(r)$ denotes an $2N$-dimensional vector Brownian motion with covariance matrix $\mOmega =
\left[\begin{smallmatrix}
\mOmega_{uu} & \mOmega_{uv}\\
\mOmega_{vu} & \mOmega_{vv}
\end{smallmatrix}\right]$. The one-sided long-run covariance matrix $
\mDelta=
 \sum_{h=0}^\infty \E\left(
 \left[\begin{smallmatrix}
  \vu_t \vu_{t+h}		& \vu_t \vv_{t+h}\tran \\
  \vv_t \vu_{t+h}		& \vv_t^{}\vv_{t+h}\tran
 \end{smallmatrix}
 \right]
  \right)
=
\left[\begin{smallmatrix}
\mDelta_{uu} & \mDelta_{uv}\\
\mDelta_{vu} & \mDelta_{vv}
\end{smallmatrix}\right]$ is partitioned similarly. Subscripts refer to specific elements. For example, $\brown_{v_i}$ and $\mDelta_{v_i u_j}$ denote the $i$\textsuperscript{th} and $(i,j)$\textsuperscript{th} elements of $\brown_v$ and $\mDelta_{vu}$, respectively.

A concise exposition of our results asks for additional notation. An enumeration of various definitions is presented below.
\begin{enumerate}[(1)]
\item Introduce $\mD_{(i),T}=\diag\big[1,T,T^{1/2},T,\ldots,T^{p_i/2}\big]$ to scale the deterministic and stochastic trends within each equation. For the full system of equation, define $\mD_{Z,T}=\diag\left[\mD_{(1),T},\ldots,\mD_{(N),T}\right]$, $\mD_{\theta_{0},T}=\sqrt{T}\left[
\begin{smallmatrix}
T^{\theta_{0}}   & & \\
& T^{\theta_{0}} & \\
&                & \mD_{Z,T}
\end{smallmatrix}\right]$
and $\mL_{\tau_{g,0},T}=\left[
\begin{smallmatrix}
1 & -\tau_{g,0}\ln{T} & \\
0 & 1                & \\
  &                  & \mI_{p}
\end{smallmatrix}\right]$. Finally, set $\mG_{\vgamma_0,T}=\mD_{\theta_0,T}^{}\mL_{\tau_{g,0},T}^{\prime-1}$.
\item Define $\vj_i(r)=\left[1,r,\brownnormal_{v_i}(r),\brownnormal_{v_i}^2(r),\ldots,\brownnormal_{v_i}^{p_i}(r)\right]\tran$, $\mJ_Z(r)=\diag\left[\vj_1(r),\ldots,\vj_N(r)\right]$, and $\mJ(r;\vgamma_0)=\Big[\tau_{g,0} r^{\theta_0}\ln{r}\,\vones_N, r^{\theta_0}\vones_N, \mJ_Z\tran(r)\Big]\tran$.
\item For the second-order bias terms, we define $\vb_i= \Big[\vzeros_{1\times 2},1,2\int \brown_{v_i}(r)dr,\ldots,p_i\int \brown_{v_i}^{p_i-1}(r)dr \Big]\tran$ and $\serialbias=\big[\vzeros_{1\times 2},\vb_1\tran \mDelta_{v_1 u_1} ,\dots,\vb_N \tran \mDelta_{v_N u_N}\big]\tran$. 
\end{enumerate}

\begin{theorem}\label{thm:limiting_dist}
Under Assumptions \ref{assumpt:identifytheta}-\ref{assumption:innovations}, we have
\begin{equation*}
\mG_{\vgamma_0,T}\big(\widehat{\vgamma}_T-\vgamma_0\big)
	\wto \left(\myint \mJ(r;\vgamma_0)\mJ\tran(r;\vgamma_0)~dr\right)^{-1}\left(\myint \mJ(r;\vgamma_0)~d\brown_{u}(r)+\vbias_{vu}\right) =: \limitdist(\vgamma_0),
\label{eq:limiting_distribution}
\end{equation*}
as $T\rightarrow\infty$ and $N$ fixed.
\end{theorem}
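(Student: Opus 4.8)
The plan is to read the limit off the first-order conditions of \eqref{eq:NLSestimation} through a mean-value expansion of the score around $\vgamma_0$, after first securing consistency of $\widehat{\vtheta}_T$. Write $e_t(\vgamma)=y_t-\vd_t(\vtheta)\tran\vtau-\vs_t\tran\vphi$, so that $e_t(\vgamma_0)=u_t$ and $Q_T(\vgamma)=\tfrac12\sum_{t=1}^T e_t(\vgamma)^2$; the gradient $\nabla e_t(\vgamma)$ has a $\vtheta$-block with $i$th entry $-\tau_i t^{\theta_i}\ln t$, a $\vtau$-block $-\vd_t(\vtheta)$, and a $\vphi$-block $-\vs_t$. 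The factor $\mL_{\vtau_0,T}$ in $\mG_{\vgamma_0,T}$ is the standard device for power-law trends (cf.\ \cite{phillips2007}): since $\ln t=\ln T+\ln(t/T)$ the $\vtheta$- and $\vtau$-directions are asymptotically collinear, and subtracting $\tau_{0i}\ln T$ times the $\vtau$-component from the $\vtheta$-component leaves the $\ln(t/T)$ part. Starting from $\mG_{\vgamma_0,T}=\mD_{\vtheta_0,T}\mL_{\vtau_0,T}^{\prime-1}$ one obtains $\mG_{\vgamma_0,T}^{\prime-1}=\mD_{\vtheta_0,T}^{-1}\mL_{\vtau_0,T}$, and applying $\mD_{\vtheta_0,T}^{-1}\mL_{\vtau_0,T}$ to $-\nabla e_t(\vgamma_0)$ yields, block by block, $T^{-1/2}$ times $\big(\vtau_0\odot\vd(t/T;\vtheta_0)\big)\ln(t/T)$, $\vd(t/T;\vtheta_0)$, and $\big((x_{it}/\sqrt T)^{j}\big)_{i,j}$ --- that is, $T^{-1/2}\vj_{t,T}$, writing $\vj_{t,T}$ for $\vj(t/T;\vgamma_0)$ with each $\brown_{v_i}(t/T)$ replaced by $x_{it}/\sqrt T$. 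Everything that follows is an application of this identity.

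First I would prove consistency. Existence of $(\widehat{\vtheta}_T,\widehat{\vtau}_T,\widehat{\vphi}_T)$ follows from compactness of $\Theta$ and continuity of $\widetilde{Q}_T$; to get $\widehat{\vtheta}_T\pto\vtheta_0$ I would show that an appropriately normalized version of $\widetilde{Q}_T(\vtheta)-\widetilde{Q}_T(\vtheta_0)$ converges uniformly over $\Theta$ to a nonrandom limit with a unique minimum at $\vtheta_0$. Identification here uses Assumption \ref{assumpt:identifytheta} ($\tau_{0i}\neq0$ for every $i$), the $L^2[0,1]$-linear independence of the trends $\{r^{\theta_i}\}$ over the $\delta$-separated exponents permitted in $\Theta$, and the asymptotic orthogonality, after scaling, between the deterministic trends and the stochastic regressors $x_{it}^{j}$. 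Substituting $\widehat{\vtheta}_T$ into \eqref{eq:concentratedEST} and sharpening the rate then delivers the tightness statement $\mG_{\vgamma_0,T}(\widehat{\vgamma}_T-\vgamma_0)=O_p(1)$, and in particular the rate $T^{-1/2-\theta_{i0}}$ for the $i$th element of $\widehat{\vtheta}_T-\vtheta_0$.

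Next I would identify the two limiting objects. For the score, the identity above gives $\mG_{\vgamma_0,T}^{\prime-1}\nabla Q_T(\vgamma_0)=-T^{-1/2}\sum_{t=1}^T u_t\,\vj_{t,T}$. The deterministic blocks converge with no bias: by the invariance principle for $(u_t,\vv_t\tran)\tran$ recorded after Assumption \ref{assumption:innovations} and a continuous-mapping argument, $T^{-1/2}\sum_t u_t f(t/T)\wto\int f(r)\,d\brownnormal_u(r)$ for $f(r)=r^{\theta_{i0}}$ and $f(r)=\tau_{0i}r^{\theta_{i0}}\ln r$, both in $L^2[0,1]$ because $\theta_{i0}>-\tfrac12$. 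The stochastic blocks contribute the second-order bias: the standard cointegrating-polynomial convergence $T^{-1/2}\sum_t(x_{it}/\sqrt T)^{j}u_t\wto\int\brown_{v_i}^{j}(r)\,d\brownnormal_u(r)+j\,\mDelta_{v_iu}\!\int\brown_{v_i}^{j-1}(r)\,dr$ (cf.\ \cite{wagnerhong2016}) reproduces exactly $\serialbias$. Hence $\mG_{\vgamma_0,T}^{\prime-1}\nabla Q_T(\vgamma_0)\wto-\big(\int\vj(r;\vgamma_0)\,d\brownnormal_u(r)+\serialbias\big)$. For the Hessian, $\nabla^2 Q_T(\vgamma)=\sum_t\nabla e_t\,(\nabla e_t)\tran+\sum_t e_t\,\nabla^2 e_t$; the second sum has nonzero entries only in the $\vtheta\vtheta$- and $\vtheta\vtau$-blocks (the model is linear in $(\vtau,\vphi)$) and is $o_p(1)$ after the sandwich normalization by $\mG_{\vgamma_0,T}^{\prime-1}$ and $\mG_{\vgamma_0,T}^{-1}$, being of smaller order than the first sum. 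The first sum, by the same block algebra, equals $T^{-1}\sum_t\vj_{t,T}\vj_{t,T}\tran$ after normalization, a Riemann sum approximating $\int\vj(r;\vgamma_0)\vj(r;\vgamma_0)\tran dr$, and converges by the continuous mapping theorem; I would additionally establish this convergence \emph{uniformly} over the shrinking $\mG_{\vgamma_0,T}$-neighborhood of $\vgamma_0$ obtained in the consistency step, so that the Hessian may be evaluated at an intermediate point.

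Combining, the mean-value expansion $0=\nabla Q_T(\vgamma_0)+\bar H_T(\widehat{\vgamma}_T-\vgamma_0)$, with $\bar H_T$ the Hessian averaged along the segment joining $\vgamma_0$ and $\widehat{\vgamma}_T$, gives $\mG_{\vgamma_0,T}(\widehat{\vgamma}_T-\vgamma_0)=-\big(\mG_{\vgamma_0,T}^{\prime-1}\bar H_T\mG_{\vgamma_0,T}^{-1}\big)^{-1}\mG_{\vgamma_0,T}^{\prime-1}\nabla Q_T(\vgamma_0)$, and the stated limit follows by joint weak convergence and the continuous mapping theorem, using that $\int\vj(r;\vgamma_0)\vj(r;\vgamma_0)\tran dr$ is a.s.\ positive definite: the deterministic components $\{r^{\theta_{i0}}\}$ and $\{r^{\theta_{i0}}\ln r\}$ are linearly independent (distinct exponents, plus the extra $\ln$ factor), and the paths $\brown_{v_i}^{j}$ are a.s.\ linearly independent of one another and of every fixed $L^2[0,1]$ function because $\brown_v$ is non-degenerate under Assumption \ref{assumption:innovations}(b). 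The genuinely delicate points, I expect, are the uniform-in-$\vtheta$ controls needed in the consistency and Hessian steps: both require dominating the $\ln t$ factors attached to the $\vtheta$-derivatives and showing that perturbing $\theta_i$ within an $O_p(T^{-1/2-\theta_{i0}})$ ball leaves the leading behaviour of $t^{\theta_i}$ unchanged uniformly over $t\le T$, so that the classical power-law-trend estimates can be grafted onto the cointegration machinery governing the stochastic blocks.
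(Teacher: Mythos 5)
Your proposal is correct in strategy and reaches the stated limit, but it takes a different route from the paper. The paper does not run a self-contained consistency-plus-Taylor-expansion argument; instead it invokes Theorem 3.1 of \cite{chanwang2015} and spends the proof verifying that theorem's five high-level conditions: $\|\mG_{\vgamma_0,T}^{-1}\|\to 0$, uniform convergence of the outer-product-of-gradients difference, uniform negligibility of the two Hessian remainder terms $\sum_t\ddot f\,(f-f_0)$ and $\sum_t\ddot f\,u_t$ over the shrinking neighbourhood $\{\vgamma:\|\mG_{\vgamma_0,T}(\vgamma-\vgamma_0)\|\le\delta\ln T\}$, and joint weak convergence of the normalized score and Gram matrix. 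The substance overlaps almost completely with your plan: your block identity $\mG_{\vgamma_0,T}^{\prime-1}\dot f(\vw_t,\vgamma_0)=T^{-1/2}\vj_{t,T}$ is exactly the paper's Lemma on derivatives, your score and Hessian limits are its condition (v) (proved via the analogues of your $\int f\,dB_u$ and $\int f B_{v_i}^j\,dr$ convergences, including the $j\mDelta_{v_iu}\int B_{v_i}^{j-1}$ bias), and the ``genuinely delicate points'' you flag --- dominating the $\ln t$ factors and showing that perturbing $\theta_i$ within an $O_p(T^{-1/2-\theta_{0i}})$ ball leaves $t^{\theta_i}$ essentially unchanged uniformly in $t\le T$ --- are precisely the content of the paper's supremum lemma over $\calN_{\delta,T}(\vgamma_0)$. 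What the paper's route buys is that it never needs an explicit mean-value point or a separate tightness argument; what your route buys is self-containedness and an explicit treatment of consistency, which the paper leaves implicit in the Chan--Wang machinery (acknowledging in a footnote that their theorem must be extended to a non-diagonal, parameter-dependent $\mG_{\vgamma_0,T}$).

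Two cautions on your consistency step. First, the uniform limit of the normalized concentrated criterion $\widetilde{Q}_T(\vtheta)-\widetilde{Q}_T(\vtheta_0)$ is not nonrandom: after concentrating out $(\vtau,\vphi)$ the relevant object is the squared norm of the projection of $\vd_t(\vtheta_0)\tran\vtau_0$ onto the orthogonal complement of the span of $\{\vd_t(\vtheta),\vs_t\}$, and the presence of $\vs_t$ makes this a functional of $\brown_v$; you need it to be a.s.\ uniquely minimized at $\vtheta_0$, not deterministically so. Second, the identification strength of each component differs: misspecifying $\theta_i$ by a fixed amount inflates the SSR by $O(T^{2\theta_{0i}+1})$ against cross-terms of order $O_p(T^{\theta_{0i}+1/2}\ln T)$, so a single global normalization of $\widetilde{Q}_T$ is degenerate in all but the largest exponent and the argument must be run component-by-component (or with a vector of normalizations). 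Both points are repairable --- $2\theta_{0i}+1>\theta_{0i}+\tfrac12$ precisely because $\theta_{0i}>-\tfrac12$ --- but as written the sketch glosses over them.
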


The proof of Theorem \ref{thm:limiting_dist} is closely related to the work by \cite{chanwang2015}. These authors provide the asymptotic distribution of NLS estimators under a set of general conditions in univariate, nonstationary time series models (see their theorem 3.1). The results in \cite{chanwang2015} and \cite{wangwuzhu2018} suggest that Assumption \ref{assumption:innovations} can be replaced by a long memory specification for $\diff \vx_t$. However, long memory parameters will enter the limiting distribution and inference will be complicated further. We illustrate Theorem \ref{thm:limiting_dist} with two examples. These examples highlight the two mathematical features that complicate parameter inference.

\begin{example}\label{example:dettrend}
 We consider $y_t= \tau t^\theta+ u_t$ with innovations satisfying Assumption \ref{assumption:innovations}. The limiting distribution of the parameter estimators depends solely on the mean square Riemann-Stieltjes integrals $\myint \tau_0 r^{\theta_0} \ln(r)  dB_u$ and $\myint r^{\theta_0} dB_u$, and is therefore normally distributed (e.g., section 2.3 in \cite{tanaka2017}). We have
 \begin{equation}
   \begin{bmatrix}
   T^{\theta_0+\frac{1}{2}}				& 0\\
   T^{\theta_0+\frac{1}{2}} \tau_0 \ln(T)	& T^{\theta_0+\frac{1}{2}}
 \end{bmatrix}
 \begin{bmatrix}
  \,\widehat{\theta}_T -\theta_0 \\
  \,\widehat{\tau}_T - \tau_0
 \end{bmatrix}
 \wto \rN
 \left( \vzeros,  \Omega_{uu}(2\theta_0+1)^3
 \begin{bmatrix}
 2 \tau_0^2					&  - \tau_0 (2\theta_0+1)  \\
 - \tau_0 (2\theta_0+1)			&  (2\theta_0+1)^2
\end{bmatrix}^{-1} \right).
 \label{eq:powerlawtrend}
 \end{equation}
The scaling matrix in the LHS of \eqref{eq:powerlawtrend},
 $\left[
    \begin{smallmatrix}
   T^{\theta_0+\frac{1}{2}}				& 0\\
   T^{\theta_0+\frac{1}{2}} \tau_0 \ln(T)	& T^{\theta_0+\frac{1}{2}}
 \end{smallmatrix}\right]$, depends on $\theta_0$ and is non-diagonal. The dependence on $\theta_0$ is unavoidable but asymptotic results for the case of a diagonal scaling matrix are obtainable at the expense of a singular joint distribution. That is, noting that
$
\left[ \begin{smallmatrix}
   T^{\theta_0+\frac{1}{2}}			& 0\\
   0							& T^{\theta_0+\frac{1}{2}}/\ln(T)
 \end{smallmatrix} \right]
=
\left[ \begin{smallmatrix}
 1		& 0 \\
 -\tau_0	& 1/\ln(T)
\end{smallmatrix} \right]
\left[\begin{smallmatrix}
   T^{\theta_0+\frac{1}{2}}				& 0\\
   T^{\theta_0+\frac{1}{2}} \tau_0 \ln(T)	& T^{\theta_0+\frac{1}{2}}
 \end{smallmatrix} \right]
$ and since $\lim_{T\to \infty}
\left[ \begin{smallmatrix}
 1		& 0 \\
 -\tau_0	& 1/\ln(T)
\end{smallmatrix} \right]
=
\left[ \begin{smallmatrix}
 1		& 0 \\
 -\tau_0	& 0
\end{smallmatrix} \right]
$, the continuous mapping theorem implies
$
\left[\begin{smallmatrix}
   T^{\theta_0+\frac{1}{2}}			& 0\\
   0							& T^{\theta_0+\frac{1}{2}}/\ln(T)
 \end{smallmatrix} \right]
\left[ \begin{smallmatrix} \widehat{\theta}_T - \theta_0  \\ \widehat{\tau}_T - \tau_0 \end{smallmatrix} \right]
\dto
\left[\begin{smallmatrix}
 1/\tau_0 \\
 -1
\end{smallmatrix} \right]
\times
\rN
 \left( \vzeros,  \Omega_{uu}(2\theta_0+1)^3 \right).
 $
 This limiting distribution coincides with the result in theorem 6.3 of \cite{phillips2007}.
\end{example}

\begin{example}\label{example:stochtrend}
If $y_t = \tau t^\theta+ \phi x_t + u_t$, then the limiting distribution of the NLS estimator is:
 \begin{equation*}
\resizebox{1\hsize}{!}{$
 \begin{bmatrix}
   T^{\theta_0+\frac{1}{2}}  \\
   T^{\theta_0+\frac{1}{2}} \tau_0 \ln(T)  & T^{\theta_0+\frac{1}{2}}			\\
			&		& T
 \end{bmatrix}
 \begin{bmatrix}
  \widehat{\theta}_T -\theta_0 \\
  \widehat{\tau}_T - \tau_0 \\
  \widehat{\phi}_T - \phi_0
 \end{bmatrix}
 \wto
\begin{bmatrix}
\int \big(\tau_0 r^{\theta_0}\ln (r)\big)^2 dr 	& \int \tau_0 r^{2\theta_0}{\ln (r)} dr		& \int  \tau_0 r^{\theta_0} \ln (r) B_v dr \\
\int \tau_0 r^{2\theta_0}{\ln (r)} dr			& \int r^{2\theta_0} dr				& \int r^{\theta_0} B_v dr \\
\int \tau_0 r^{\theta_0} \ln (r) B_v dr		& \int r^{\theta_0} B_v dr					& \int B_v^2 dr 
\end{bmatrix}^{-1} \times \\
\left(
\begin{bmatrix}
\int \tau_0 r^{\theta_0} \ln(r)  dB_u\\
\int r^{\theta_0} dB_u\\
\int B_v dB_u
\end{bmatrix}
 +
 \begin{bmatrix}
 0 \\
 0 \\
 \Delta_{vu}
\end{bmatrix}
 \right).
 $}
 \end{equation*}
 This limiting distribution exhibits second order bias when $ \Delta_{vu}\neq 0$, or when $B_u$ and $B_v$ are correlated.
\end{example}

Two features of the limiting distribution of $\mG_{\vgamma_0,T}\big(\widehat{\vgamma}_T-\vgamma_0\big)$ deserve further comments. First, as emphasised in Examples \ref{example:dettrend}--\ref{example:stochtrend}, the scaling matrix $\mG_{\vgamma_0,T}$ features two less common properties: (1) this matrix depends on the true parameters $\vtau_{g,0}$ and $\theta_0$, and (2) $\mG_{\vgamma_0,T}$ is not diagonal. These peculiarities are caused by the nonlinearity and nonstationarity of the model. More specifically, these features can be traced back to the presence of the global trend. Limiting distributions with a similar mathematical structure can be found in the structural breaks literature, cf. model setting II.b of \cite{perronzhu2005} and its detailed analysis in \cite{beutnerlinsmeekes2019}.

Second, the nonstationary regressor $x_{i,t}$ enters the model \eqref{eq:gcp_regmodel} through a polynomial transformation of the form $g(x_{i,t},\vphi_i)= \phi_{i,1} x_{i,t} + \ldots + \phi_{i ,p_i} x_{i,t}^{p_i}$ ($i=1,2,\ldots,N$). In the terminology of \cite{parkphillips2001} this part of the regression function is a linear combination of $H_0$-regular functions. It is well-documented in the literature, e.g. \cite{changparkphillips2001} and \cite{chanwang2015}, that this leads to second-order bias terms and hence nonstandard inference (except for the special case of strictly exogenous nonstationary regressors).

\subsection{Consistent Long-Run Covariance Matrix Estimation}
Correcting for second-order bias terms typically involves estimating long-run variance (LRV) matrices. This subsection establishes that the NLS residuals can be used to construct consistent kernel estimators for the LRV matrices $\mDelta$ and $\mOmega$. Defining $\bm V_t(\vgamma)=\big[\vu_t(\vgamma)', \diff \vx_t\tran]\tran$ with $\vu_t(\vgamma)=\vy_t - \tau_{g} t^{\theta} \viota_N - \mZ_t'\vbeta $, these LRV estimators are defined as 
\begin{equation}
\widehat{\mDelta}_T = \frac{1}{T} \sum_{t=1}^T \sum_{s=1}^t k\left(\frac{|t-s|}{b_T} \right)  \bm V_t(\,\widehat{\vgamma}_T) \bm V_t(\,\widehat{\vgamma}_T)\tran,
\qquad
\widehat{\mOmega}_T = \frac{1}{T} \sum_{t=1}^T \sum_{s=1}^T k\left(\frac{|t-s|}{b_T} \right)  \bm V_t(\,\widehat{\vgamma}_T) \bm V_t(\,\widehat{\vgamma}_T)\tran,
\label{eq:LRVdefinitions}
\end{equation}
for some kernel function $k(\cdot)$ and bandwidth parameter $b_T$. The first $N$ elements of $\bm V_t(\,\widehat{\vgamma}_T)$ are the elements of the residual vector $\widehat{\vu}_t =\vy_t -\widehat{\tau}_{g,T}\,t^{\widehat{\theta}_T}\vones_N-\mZ_t'\widehat{\vbeta}_T^{}$. The remaining elements are $\diff \vx_t=\vv_t$.

\begin{assumption}
\begin{enumerate}[(a)] 
 \item[]
 \item $k(0)=1$, $k(\cdot)$ is continuous at zero, and $\sup_{x\geq 0}\left| k(x) \right| < \infty$.
 \item $\int_0^\infty \bar{k}(x)dx<\infty$, where $\bar{k}(x)= \sup_{y\geq x} \left| k(y) \right|$.
 \item The bandwidth parameters $\{b_T : T\geq 1 \}$ satisfies $\{b_T\}\subseteq (0,\infty)$ and $\lim_{T\to\infty} \left( b_T^{-1} + T^{-1/2} b_T  \ln T  \right) = 0$.
\end{enumerate}
\label{assumption:LRVestimation}
\end{assumption}

The conditions on the kernel function $k(\cdot)$, Assumptions \ref{assumption:LRVestimation}(a)--(b), are identical to those in \cite{jansson2002}. \cite{jansson2002} remarks that these assumptions ``\emph{would appear to be satisfied by any kernel in actual use}''. Commonly used kernel functions such as the Bartlett, Parzen, and Quadratic Spectral kernels indeed satisfy all these assumptions. Assumption \ref{assumption:LRVestimation}(c) differs from the usual requirement, $\lim_{T\to\infty} \left( b_T^{-1} + T^{-1/2} b_T \right) = 0$, by a factor $\ln T$. The difference is caused by the estimation error in $\widehat{\theta}_T$. This error causes the residuals $\{ \widehat{\vu}_t  \}$ to be less close to the innovations $\{ \vu_t\}$ and we balance this by including autocovariance matrices of higher lags at a slower pace.

\begin{theorem} \label{thm:consistentLRVs}
Under Assumptions \ref{assumpt:identifytheta}-\ref{assumption:LRVestimation}, we have $\widehat{\mDelta}_T\pto \mDelta$ and $\widehat{\mOmega}_T \pto \mOmega$.
\end{theorem}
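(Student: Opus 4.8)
The plan is to compare the feasible kernel estimators in \eqref{eq:LRVdefinitions} with their infeasible analogues built from the true innovations, and to show the two coincide asymptotically. Let $\widetilde{\mDelta}_T$ and $\widetilde{\mOmega}_T$ denote the estimators obtained from \eqref{eq:LRVdefinitions} by replacing $\bm V_t(\widehat{\vgamma}_T)=[\hat u_t,\vv_t\tran]\tran$ by the infeasible $\bm V_t(\vgamma_0)=[u_t,\vv_t\tran]\tran$. Under Assumption \ref{assumption:innovations}, $[u_t,\vv_t\tran]\tran$ is a vector linear process driven by i.i.d.\ innovations with absolutely summable coefficient matrices and finite $2q$-th moments ($q>2$); since Assumption \ref{assumption:LRVestimation}(c) is strictly stronger than the usual requirement $b_T^{-1}+T^{-1/2}b_T\to0$, classical consistency results for kernel long-run variance estimators of linear processes (e.g.\ \cite{jansson2002}, with the one-sided version of $\widetilde{\mDelta}_T$ handled by the same arguments) give $\widetilde{\mDelta}_T\pto\mDelta$ and $\widetilde{\mOmega}_T\pto\mOmega$. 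It thus remains to prove $\widehat{\mDelta}_T-\widetilde{\mDelta}_T\pto0$ and $\widehat{\mOmega}_T-\widetilde{\mOmega}_T\pto0$.

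Only the first coordinate of $\bm V_t$ carries estimation error, so set $r_t=\hat u_t-u_t=\big(\vd_t(\vtheta_0)\tran\vtau_0-\vd_t(\widehat{\vtheta}_T)\tran\widehat{\vtau}_T\big)-\vs_t\tran(\widehat{\vphi}_T-\vphi_0)=:r_t^{(d)}+r_t^{(s)}$. The crucial estimate is the uniform bound $\max_{1\le t\le T}|r_t|=O_p(T^{-1/2}\ln T)$. For the stochastic part, reading off the block structure of the scaling matrix $\mG_{\vgamma_0,T}$ in Theorem \ref{thm:limiting_dist} gives $\widehat{\phi}_{ijT}-\phi_{ij0}=O_p(T^{-(j+1)/2})$, while the invariance principle and the continuous mapping theorem (cf.\ Lemma \ref{lem:Brownianconv}) yield $\max_{t\le T}|x_{it}|=O_p(T^{1/2})$, hence $\max_{t\le T}|x_{it}^{\,j}|=O_p(T^{j/2})$; summing over $i,j$ gives $\max_t|r_t^{(s)}|=O_p(T^{-1/2})$. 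For the deterministic part, Theorem \ref{thm:limiting_dist} also gives $\widehat{\theta}_{iT}-\theta_{i0}=O_p(T^{-\theta_{i0}-1/2})$ and $\widehat{\tau}_{iT}-\tau_{i0}=O_p(T^{-\theta_{i0}-1/2}\ln T)$, so on an event of probability tending to one the exponent in $t^{\widehat{\theta}_{iT}}-t^{\theta_{i0}}=t^{\theta_{i0}}\big(e^{(\widehat{\theta}_{iT}-\theta_{i0})\ln t}-1\big)$ is uniformly small over $t\le T$; linearising this and using $t^{\theta_{i0}}\le\max(1,T^{\theta_{i0}})$, together with the two rates above, delivers $\max_t|r_t^{(d)}|=O_p(T^{-1/2}\ln T)$. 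I expect this uniform control of $r_t^{(d)}$ — in which the estimation error of $\widehat{\vtheta}_T$ is amplified by the factor $\ln t\le\ln T$ coming from differentiating $t^{\theta}$ in $\theta$ — to be the main obstacle, and it is precisely what forces the extra $\ln T$ in Assumption \ref{assumption:LRVestimation}(c).

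The last ingredient is the kernel summability bound $\sum_{h=0}^\infty\big|k(h/b_T)\big|=O(b_T)$: dominate $|k|$ by the non-increasing envelope $\bar k(x)=\sup_{y\ge x}|k(y)|$, compare the sum with $\int_0^\infty\bar k(x/b_T)\,dx=b_T\int_0^\infty\bar k(u)\,du<\infty$ by Assumption \ref{assumption:LRVestimation}(b), and use $\bar k(0)\le\sup_x|k(x)|<\infty$ and $b_T\to\infty$. Expanding the difference of the two estimators, its $(1,1)$ entry equals $\frac1T\sum_{t,s}k(\tfrac{|t-s|}{b_T})\big(r_tu_s+u_tr_s+r_tr_s\big)$, and its first row and first column consist of entries of the form $\frac1T\sum_{t,s}k(\tfrac{|t-s|}{b_T})\,r_t\,w_s$ or $\frac1T\sum_{t,s}k(\tfrac{|t-s|}{b_T})\,w_t\,r_s$ with $w_s$ a component of $u_s$ or $\vv_s$ (the inner sum running over $s\le t$ for $\widehat{\mDelta}_T$, which only sharpens the bound). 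Each such term is controlled crudely by
\[
\Big|\tfrac1T\sum_{t,s}k(\tfrac{|t-s|}{b_T})\,r_t\,w_s\Big|\ \le\ \Big(\max_{t\le T}|r_t|\Big)\Big(\tfrac1T\sum_{s=1}^T|w_s|\Big)\Big(2\sum_{h=0}^\infty\big|k(h/b_T)\big|\Big)=O_p(T^{-1/2}\ln T)\,O_p(1)\,O(b_T),
\]
which is $o_p(1)$ by Assumption \ref{assumption:LRVestimation}(c), using $\tfrac1T\sum_s|w_s|=O_p(1)$ (finite first moments and a law of large numbers); the $u_tr_s$ term is symmetric and the $r_tr_s$ term is of the even smaller order $O_p(T^{-1}b_T\ln^2 T)$. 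Hence $\widehat{\mDelta}_T-\widetilde{\mDelta}_T\pto0$ and $\widehat{\mOmega}_T-\widetilde{\mOmega}_T\pto0$, and combining with the first paragraph completes the proof.
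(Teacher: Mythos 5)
Your overall architecture — split each estimator into its infeasible counterpart plus a correction driven by $r_t=\hat u_t-u_t$, invoke \cite{jansson2002} for the infeasible part, and kill the correction using a bound on $r_t$ together with $\sum_{h\geq 0}|k(h/b_T)|=O(b_T)$ — matches the paper's strategy in outline, and your treatment of the stochastic part $r_t^{(s)}$ and of the kernel summability is fine. However, there is a genuine gap in the key step, the claimed uniform bound $\max_{t\leq T}|r_t^{(d)}|=O_p(T^{-1/2}\ln T)$. This does not follow from your own ingredients once some $\theta_{0i}<0$, which the parameter space permits (only $\theta_{0i}\geq\theta_L>-\tfrac12$ is imposed, and the paper's confidence intervals in the application are explicitly truncated at $-\tfrac12$). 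Concretely, for $\theta_{0i}\in(-\tfrac12,0)$ the factor $t^{\theta_{0i}}\ln t$ is maximized at an interior $t$ of order one, not at $t=T$, so $\max_t\big|\tau_{0i}\big(t^{\widehat\theta_i}-t^{\theta_{0i}}\big)\big|\leq C\,\big(\max_t t^{\theta_{0i}}\ln t\big)\,|\widehat\theta_i-\theta_{0i}|=O_p\big(T^{-(\theta_{0i}+1/2)}\big)$, and likewise $\max_t t^{\theta_{0i}}|\widehat\tau_i-\tau_{0i}|=O_p\big(T^{-(\theta_{0i}+1/2)}\ln T\big)$; your own inequality $t^{\theta_{0i}}\leq\max(1,T^{\theta_{0i}})$ combined with $|\widehat\theta_i-\theta_{0i}|=O_p(T^{-\theta_{0i}-1/2})$ yields exactly this, not $O_p(T^{-1/2}\ln T)$. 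When $\theta_{0i}$ is close to $-\tfrac12$ this rate is close to $O_p(1)$, and your final bound $O_p(\max_t|r_t|)\cdot O(b_T)$ is then not $o_p(1)$ under Assumption \ref{assumption:LRVestimation}(c) (take $b_T\sim T^{1/4}$ and $\theta_{0i}=-0.4$: the product is of order $T^{0.15}\ln T$).

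The repair — and the route the paper actually takes — is to avoid the sup norm entirely and control $r_t$ in mean square: Cauchy--Schwarz on each kernel-weighted sum gives a bound of the form $\big(\sum_h|k(h/b_T)|\big)\big(\tfrac1T\sum_t r_t^2\big)^{1/2}\big(\tfrac1T\sum_s w_s^2\big)^{1/2}$, and in $\tfrac1T\sum_t\big(t^{\widehat\theta_i}-t^{\theta_{0i}}\big)^2$ the normalization produces $T^{2\theta_{0i}}\cdot\tfrac1T\sum_t(t/T)^{2\theta_{0i}}$, whose explicit power of $T$ cancels exactly against $|\widehat\theta_i-\theta_{0i}|^2=O_p(T^{-2\theta_{0i}-1})$ while the Ces\`aro average stays bounded for every $\theta_{0i}>-\tfrac12$ (Lemma \ref{lem:boundsforfixed}(i)). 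This yields $\tfrac1T\sum_t r_t^2=O_p(T^{-1}(\ln T)^2)$ uniformly in the sign of the exponents, after which your kernel bound gives $O(b_T)\,O_p(T^{-1/2}\ln T)=o_p(1)$ by Assumption \ref{assumption:LRVestimation}(c); this is precisely the Hansen-type normalization $T^{1/2}/(b_T\ln T)$ used in the paper's proof. With that substitution the rest of your skeleton goes through.
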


\subsection{Simulation-Based Inference} \label{subsec:sim_inf}
The limiting distribution in Theorem \ref{thm:limiting_dist} is nonpivotal and thus not directly suited for inference. Some popular solutions for linear-in-parameters cointegration models are: \cites{saikkonen1992} dynamic least squares, the integrated modified OLS and fixed-$b$ approaches by \cite{vogelsangwagner2014}, and the fully modified approach advocated in \cite{phillipshansen1990} and \cite{phillips1995}. For a nonlinear-in-parameter model as in \eqref{eq:gcp_regmodel}, a preliminary Monte Carlo exercise\footnote{The details are available in Section \ref{sec:FMOLS} of the Supplementary Material. The analytical results in that appendix also suggest that the convergence speed of $\widehat{\theta}_T$ to $\theta$ is too slow to recover the standard zero-mean Gaussian limiting distribution.} shows poor performance for fully modified inference but promising results for a simulation based approach. We pursue the latter method for the remainder of this paper.

The main idea behind the simulation based approach is to replace nuisance parameters by consistent estimates and to rely on Monte Carlo (MC) simulations to approximate the limiting distribution. The empirical quantiles of these MC draws allow us to conduct inference. Clearly, this kind of approach will provide exact inference when the limiting distribution is invariant with respect to the nuisance parameters (e.g. \cite{dufourkhalaf2002} and \cite{dufour2006}). In the absence of such invariance, \cite{wangwuzhu2018} and \cite{bergamellibianchikhalafurga2019} show that the simulation approach remains asymptotically justified in several model specification. We adapt the algorithm from \cite{wangwuzhu2018} to the current setting and prove its asymptotic validity.

\begin{algorithm}[Simulation-Based Inference] \
\vspace{-1.2 cm}
\begin{enumerate}[\textsc{Step} 1:]
 \item Estimate $\widehat{\vgamma}_T$ and use the residuals $\{ \widehat{\vu}_t  \}$ to compute the estimators $\widehat{\mDelta}_T$ and $\widehat{\mOmega}_T$ from \eqref{eq:LRVdefinitions}.
 \item Repeat for $j=1,\ldots,J$,
 \end{enumerate}
 \begin{enumerate}[(a)]
  \item Draw random variables $\{ \ve_t \}_{t=1}^{M_T}$ i.i.d. from $\rN(\vzeros,\mI_{2N})$.
  \item Compute $\left[ \begin{smallmatrix} \widehat{\vmu}_t \\ \widehat{\vv}_t \end{smallmatrix} \right]= \widehat{\mOmega}_T^{1/2} \ve_t $ and the partial sum $\widehat{\vchi}_t = \big[\widehat{\chi}_{1,t},\ldots,\widehat{\chi}_{N,t} \big]\tran= \sum_{s=1}^t \widehat{\vupsilon}_s$.
  \item Let $\widehat{\mJ}\big(t;\widehat{\vgamma}_{T}\big)=\Big[\,\widehat{\tau}_{g,T}\,t^{\widehat{\theta}_T}\ln{t}\,\vones_N,t^{\widehat{\theta}_T}\vones_N,\widehat{\calZ}_t'\Big]'$, where $\widehat{\calZ}_t=\diag\big[\widehat{\vz}_{1,t},\ldots,\widehat{\vz}_{N,t}\big]$ with $\widehat{\vz}_{i,t}=\Big[1,t,\widehat{\chi}_{i,t},\ldots,\widehat{\chi}_{i,t}^{\,p_i}\Big]'$ (for $i=1,\ldots,N$). For a given $M_T$, construct the $j$\textsuperscript{th} simulated draw as
		\begin{equation*} 
		\widehat{\limitdist}^{(j)} \left(\widehat{\vgamma}_T,\widehat{\mOmega}_T,\widehat{\mDelta}_{vu}^-\right)
		=\left\{\text{\footnotesize $
			\mG_{\widehat{\vgamma}_T,M_T}^{\prime-1}\left[\sum_{t=1}^{M_T}\widehat{\mJ}\big(t;\widehat{\vgamma}_{T}\big)\widehat{\mJ}\big(t;\widehat{\vgamma}_{T}\big)'\right]\mG_{\widehat{\vgamma}_T,M_T}^{-1}
			$}
		\right\}^{-1}
		\left\{\text{\footnotesize $
			\mG_{\widehat{\vgamma}_T,M_T}^{\prime-1}\left[\sum_{t=1}^{M_T}\widehat{\mJ}\big(t;\widehat{\vgamma}_{T}\big)\,\widehat{\vmu}_t\right]+\serialbiasSimulated
			$}
		\right\},
		\end{equation*}
		where $\widehat{\mDelta}_{vu}^-$ is a consistent estimator of the lower-left subblock of $\mDelta^{-}=\left[\begin{smallmatrix}
		\mDelta_{uu}^{-} & \mDelta_{uv}^{-}\\
		\mDelta_{vu}^{-} & \mDelta_{vv}^{-}
		\end{smallmatrix}\right]=\mSigma-\mDelta\tran$, and $\serialbiasSimulated=\left[\vzeros_{1\times 2},\widehat{\vb}_{1}\tran \widehat{\mDelta}_{v_1u_1}^{-},\dots,\widehat{\vb}_{N}\tran\widehat{\mDelta}_{v_Nu_N}^{-}\right]\tran$ with $\widehat{\vb}_{i}=\Bigg[\vzeros_{1\times 2},1,2\frac{1}{M_T}\sum_{t=1}^{M_T}\Big(\frac{\widehat{\chi}_{i,t}}{\sqrt{M_T}}\Big),\ldots,p_i\frac{1}{M_T}\sum_{t=1}^{M_T}\Big(\frac{\widehat{\chi}_{i,t}}{\sqrt{M_T}}\Big)^{p_i-1}\Bigg]\tran$.
	\end{enumerate}
\begin{enumerate}[\textsc{Step} 1:]
  \setcounter{enumi}{2}
 \item Use the empirical quantiles of elements of $\left\{\widehat{\limitdist}^{(1)},\ldots, \widehat{\limitdist}^{(J)} \right\}$ to conduct inference.
\end{enumerate}
\end{algorithm}

Algorithm 1 uses a discretisation in $M_T$ steps to approximate the limiting distribution of the parameters. In practice, and in accordance with Theorem \ref{thm:subsampling}, we can take $M_T=T$. Remark \ref{remark:simulationbasedtest} details how simulation-based inference can be used to test hypotheses concerning the model's parameters. Discussions on size and power are also presented there.

\begin{theorem} \label{thm:subsampling}
 Suppose Assumptions \ref{assumpt:identifytheta}-\ref{assumption:LRVestimation} hold, let $\{M_T\}\subseteq (0,\infty)$ with $\lim_{T\to\infty} \frac{M_T}{T} \leq \kappa$, for some $\kappa<\infty$, then we have
\begin{equation}
\begin{aligned}
&\left\{\mG_{\widehat{\vgamma}_T,T}^{\prime-1}\left[\sum_{t=1}^{M_T}\widehat{\mJ}\big(t;\widehat{\vgamma}_{T}\big)\widehat{\mJ}\big(t;\widehat{\vgamma}_{T}\big)'\right]\mG_{\widehat{\vgamma}_T,T}^{-1}
	\right\}^{-1}
	\left\{
		\mG_{\widehat{\vgamma}_T,T}^{\prime-1}\left[\sum_{t=1}^{M_T}\widehat{\mJ}\big(t;\widehat{\vgamma}_{T}\big)\,\widehat{\vmu}_t\right]+\serialbiasSimulated	
	\right\}  \\
	&\qquad\qquad\qquad\qquad\dtostar
	\left(\int  \mJ(r;\vgamma_0)\mJ(r;\vgamma_0)' ~dr\right)^{-1}\left(\int  \mJ(r;\vgamma_0)~d\brown_{u}(r)+\vbias_{vu}\right),
\end{aligned}
\end{equation}
 in probability.
\end{theorem}
Theorem \ref{thm:subsampling} establishes the asymptotic validity of the simulation approach. That is, for a large enough $J$, the empirical quantiles of the simulated distribution will coincide with the asymptotic distribution. Two remarks are important. First, even though the simulation algorithm is adapted from \cite{wangwuzhu2018}, the proof of Theorem \ref{thm:subsampling} is not. In particular, the method of proof is similar to Theorem \ref{thm:limiting_dist} and continues to allow for endogeneity of the regressors. Second, the simulation approach mimics the stochastic integrals in the limiting distribution directly. It therefore suffices to draw normally distributed random variables in Step 2(a) and use consistent long-run covariance estimates to replicate the covariance structure of the underlying Brownian motions. Compared to a bootstrap procedure, this simulation approach has the advantage of avoiding tedious NLS re-estimation on bootstrap samples but it comes at the cost of forsaking possible asymptotic refinements.

\begin{remark} \label{remark:simulationbasedtest}
Step 3 in Algorithm 1 has been kept general for notational convenience. An illustrative example is as follows. Assume we are interested in $H_0:\phi_{2,1}=0$ (irrelevance of the regressor $x_{1,t}^2$) when
$$
 y_{i,t} = \tau_g t^{\theta} + \tau_{1,i} + \tau_{2t,i} t + \phi_{1,i} x_{i,t} + \phi_{2,i} x_{i,t}^2 + u_{i,t}, \qquad i=1,\ldots,N,\quad t=1,\ldots,T.
$$
Under $H_0$, we have $T^{3/2}\widehat{\phi}_{2,1} = \ve_6\tran \mG_{\vgamma_0,T}\big(\widehat{\vgamma}_T-\vgamma_0\big) \wto \ve_6\tran \limitdist(\vgamma_0)$ with $\ve_k$ being the k\textsuperscript{th} basis vector in $\SR^{2+p}$. Denoting the empirical $\zeta$-quantiles of $\left\{\ve_6\tran\widehat{\limitdist}^{(1)},\ldots, \ve_6\tran\widehat{\limitdist}^{(J)} \right\}$ by $c_\zeta$, a test of size $\alpha$ will reject for $T^{3/2}\widehat{\phi}_{2,1}<c_{\alpha/2}$ or $T^{3/2}\widehat{\phi}_{2,1}>c_{1-\alpha/2}$. Under the alternative $\phi_{2,1} \neq 0$, we rewrite the test statistic as $T^{3/2}\widehat{\phi}_{2,1}=T^{3/2} \big(\, \widehat{\phi}_{2,1} - \phi_{2,1}\big) + T^{3/2} \phi_{2,1}$. Statistical power is guaranteed because the simulation approach mimics the asymptotic distribution and is thus bounded, whereas the second term diverges.
\end{remark}

\subsection{KPSS-Type Test for the Null of Cointegration} \label{subsec:kpss_test}
The correct specification of the nonlinear cointegrating relation will result in a stationary error process $\{u_t\}_{t\in\SZ}$. We consider a KPSS-type test statistic for the null of stationarity. The candidate statistic is $\widetilde K_T^+ = \frac{1}{T^2}\sum_{t=1}^T \left\|\widehat{\mOmega}_{u.v}^{-1/2}\sum_{i=\ell}^{t}\widehat{\vu}_i^+\right\|^2$, where $\widehat{\vu}_t^+=\vy_t - \widehat{\mOmega}_{uv}\widehat{\mOmega}_{vv}^{-1} \diff \vx_t -\widehat{\tau}_{g,T}\,t^{\widehat{\theta}_T}\vones_N-\mZ_t'\widehat{\vbeta}_T$ and $\widehat{\mOmega}_{u.v}$ is a consistent estimator of $\mOmega_{u.v}=\mOmega_{uu}-\mOmega_{uv}\mOmega_{vv}^{-1} \mOmega_{vu}$. This statistic is stochastically bounded under the null hypothesis but diverges under the alternative. Rejections of the null hypothesis are an indication of a spurious relationship and/or an incorrect functional form of the nonlinear cointegrating relationship. Several authors have reported model settings in which the asymptotic null distribution of $K_T^+$ is known, e.g. \cite{kwiatkowskiphillipsschmidtshin1992} and \cite{wagnerhong2016}.

The estimation of $\vtheta$ contaminates the limiting distribution of $\widetilde K_T^+$ with nuisance parameters.\footnote{Proposition 5 in \cite{wagnerhong2016} shows that the limiting distribution of $K_T^+$ is free of nuisance parameters if $\vtheta_0$ is known and only a single integrated regressor occurs with integer powers greater than one. This result does not carry over to the current setting because of the estimation error in $\widehat{\vtheta}_T$.} \cite{choisaikkonen2010}, \cite{wagnerhong2016}, \cite{jianglupark2019}, and \cite{linreuvers2019}, have shown that subsampling can resolve this issue. We will follow their approach and use subsamples of size $q_T$ to compute the test statistics.

\begin{theorem}\label{thm:KPSS_plus}
	Under Assumptions \ref{assumpt:identifytheta}-\ref{assumption:LRVestimation} and if $\lim_{T\to\infty} \left( q_T^{-1} +  (\ln T) \left(\frac{q_T}{T} \right)^{\theta_L+\frac{1}{2}}  \right) = 0$, then for any $\ell\in\{1,\ldots,T-q_T+1 \}$, we have
	\begin{equation}
	K_{q_T,\ell}^{+}=\frac{1}{q_T}\sum_{t=\ell}^{\ell+q_T-1}\left\|\frac{1}{\sqrt{q_T}}\widehat{\mOmega}_{u.v}^{-1/2}\sum_{i=\ell}^{t}\widehat{\vu}_i^+\right\|^2\wto \int \left\| \mW(r) \right\|^2dr,
	\end{equation}
	where $\bm W(\cdot)$ denotes an $N$-dimensional standard Brownian motion.
\end{theorem}

Theorem \ref{thm:KPSS_plus} does not provide any guidance on the choices for the starting value $\ell$ and the subsample size $q_T$. First, for a given $q_T$, \cite{choisaikkonen2010} argue that the use of a single subsample (instead of all $T$ observations) implies a significant loss of power. We follow their example and combine all $M=[T/q_T]$ subresidual series of length $q_T$ using a Bonferroni procedure. That is, we create subresiduals series by selecting adjacent blocks of $q_T$ residuals while alternating between the start and end of the sample. We calculate the KPSS-type test statistic for each subseries, say $K_1,\ldots, K_M$, and reject the null of stationarity at significance $\alpha$ whenever $\max\{K_1,\ldots,K_M \}$ exceeds $c_{\alpha/M}$ which is defined by $\Prob\left( \myint \big\| \bm W(r) \big\|^2 dr  \geq c_{\alpha/M} \right) = \alpha/M$ . Finally, we select the block size $q_T$ using \cites{romanowolf2001} minimum volatility rule. The approach is now completely data-driven.

\section{Simulations} \label{sec:simulations}
This section lists various Monte Carlo simulations showing that the asymptotic approximations from Section \ref{sec:AsympTheory} provide useful guidance in finite samples. Further details on the implementation are as follows. The long-run covariance matrices in \eqref{eq:LRVdefinitions} are computed using the Barlett kernel, $k(x)= 1 -|x|$ for $|x|\leq 1$ (and zero otherwise), and the bandwidth selection method described in \cite{andrews1991}. Simulated limiting distributions are based on $J=299$ replicates and we set $M_T=T$. We test at 5\% significance and report results based on $3,000$ Monte Carlo replications. Two data generating processes are studied: DGP1 and DGP2.

\subsubsection*{DGP1: Empirical size and power of the coefficient tests}
This DGP is inspired by the simulation study in \cite{wagnergrabarczykhong2019}. It augments their quadratic seemingly unrelated cointegrating polynomial regression model with a global flexible trend. That is, we consider
\begin{equation}
 y_{i,t} = \tau_g t^\theta + \tau_{1,i} + \tau_{2,i} t + \phi_{1,i} x_{i,t} + \phi_{2,i} x_{i,t}^2 + u_{i,t}, \qquad i=1,\ldots,N,\qquad t=1,\ldots,T,
\label{eq:SimDGP1}
\end{equation}
and compute $u_{i,t}$ and $\diff x_{i,t}=v_{i,t}$ recursively as
\begin{equation*}
\label{eq:errorrecursion}
u_{i,t} = \rho_1 u_{i,t-1} + \varepsilon_{i,t} + \rho_2 e_{i,t}, \qquad v_{i,t} = e_{i,t} + 0.5 e_{i,t-1}.
\end{equation*}
All recursions are initialized from zero, i.e. $x_{i,0}=u_{i,0}=e_{i,0}=0$ ($i=1,\ldots,N$). The innovations $\vepsi_t=[\varepsilon_{1,t},\ldots,\varepsilon_{N,t}]\tran$ and $\ve_t=[e_{1,t},\ldots,e_{N,t}]\tran$ are drawn independently as $\vepsi_t\stackrel{i.i.d.}{\sim}\rN(\vzeros,\mSigma_{\varepsilon\varepsilon})$ and $\ve_t\stackrel{i.i.d.}{\sim}\rN(\vzeros,\mSigma_{ee})$, where
$$
 \mSigma_{\varepsilon\varepsilon}
 =
 \renewcommand\arraystretch{0.8}
 \begin{bmatrix}
 1        & \rho_3   & \cdots & \rho_3\\
 \rho_3   &   1      & \cdots & \rho_3\\
 \vdots   & \vdots   & \ddots & \vdots\\ 
 \rho_3   & \rho_3   & \cdots & 1\\
 \end{bmatrix}
 ,\qquad\text{and}\qquad
  \mSigma_{ee}
 =
 \begin{bmatrix}
 1        & \rho_4   & \cdots & \rho_4\\
 \rho_4   &   1      & \cdots & \rho_4\\
 \vdots   & \vdots   & \ddots & \vdots\\ 
 \rho_4   & \rho_4   & \cdots & 1\\
 \end{bmatrix}.
$$
Regarding the global trend in \eqref{eq:SimDGP1}, we set $\tau_g=-0.2$ and consider $\theta\in\{0.8,1.3,1.8\}$. All other coefficient values are inspired by \cite{wagnergrabarczykhong2019}. That is, $\tau_{1,i}=1$, $\tau_{2,i}=1$ and $\phi_{1,i}=5$ are identical across equations.\footnote{This homogenous parametrisation is particularly convenient to study the impact of the cross-sectional dimension. That is, we can vary $N$ without having to provide additional parameter values. DGP2 is directly inspired by the empirical application and thus more realistic.} Also, we let $\rho_1=\rho_2=\rho_3=\rho_4$ and redefine these four parameters as $\rho$. We vary $\rho\in\{0,0.3,0.6,0.8 \}$, $N\in\{3,5,10\}$, and $T\in\{150,300,600\}$. In line with the typical EKC application, we test for the significance of $x_{i,t}^2$. We set $\phi_{2,i}=0$ for $i=1,\ldots,N$ and report the empirical size of the single equation test for $H_0: \phi_{2,1}=0$ and the joint test for $H_0: \phi_{2,1}=\ldots=\phi_{2,N}=0$.

\begin{table}[t]
	\centering
	\caption{The empirical size (in \%) of the single-equation tests $H_0:\phi_{2,1}=0$ and the joint test for $H_0: \phi_{2,1}=\ldots=\phi_{2,N}=0$ with $\phi_{2,i}$ denoting the coefficient in front of $x_{i,t}^2$. The Monte Carlo results are based on: simulated inference with $\theta$ estimated by NLS (SimNLS), simulated inference with known $\theta=1.3$ (SimNLS($\theta_0$)), and two Fully Modified estimators for systems developed by \cite{wagnergrabarczykhong2019} with known $\theta=1.3$ (FM-SOLS($\theta_0$) and FM-SUR($\theta_0$)).}
	\label{tbl:SizeWagnerEtAl1}
	\resizebox{\textwidth}{!}{%
		\begin{tabular}{crrrrlrrrrlrrrr}
			\toprule
			$\theta_0=1.3$ & \multicolumn{4}{c}{$N=3$} & \multicolumn{1}{c}{} & \multicolumn{4}{c}{$N=5$} & \multicolumn{1}{c}{} & \multicolumn{4}{c}{$N=10$} \\
			\midrule
			$\rho$ & \multicolumn{1}{c}{SimNLS} & \multicolumn{1}{c}{SimNLS($\theta_0$)} & \multicolumn{1}{c}{FM-SOLS($\theta_0$)} & \multicolumn{1}{c}{FM-SUR($\theta_0$)} &  & \multicolumn{1}{c}{SimNLS} & \multicolumn{1}{c}{SimNLS($\theta_0$)} & \multicolumn{1}{c}{FM-SOLS($\theta_0$)} & \multicolumn{1}{c}{FM-SUR($\theta_0$)} &  & \multicolumn{1}{c}{SimNLS} & \multicolumn{1}{c}{SimNLS($\theta_0$)} & \multicolumn{1}{c}{FM-SOLS($\theta_0$)} & \multicolumn{1}{c}{FM-SUR($\theta_0$)} \\
			\midrule
			\multicolumn{15}{l}{\textbf{Panel A:   Single-equation test}} \\
			\midrule
			\multicolumn{15}{l}{$T=150$} \\
			\midrule
			0 & 4.77 & 4.63 & 9.53 & 10.77 &  & 4.43 & 4.90 & 10.03 & 12.80 &  & 4.37 & 4.47 & 9.70 & 16.63 \\
			0.3 & 4.80 & 4.77 & 10.47 & 11.97 &  & 4.53 & 4.63 & 10.00 & 13.00 &  & 4.23 & 4.27 & 11.60 & 19.00 \\
			0.6 & 4.83 & 4.53 & 11.90 & 12.93 &  & 3.93 & 4.17 & 11.87 & 16.50 &  & 5.13 & 4.90 & 14.23 & 31.70 \\
			0.8 & 4.50 & 4.67 & 14.00 & 19.10 &  & 4.90 & 4.60 & 15.23 & 26.53 &  & 5.27 & 4.43 & 16.83 & 56.47 \\
			\midrule
			\multicolumn{15}{l}{$T=300$} \\
			\midrule
			0 & 4.43 & 4.10 & 8.03 & 8.63 &  & 4.43 & 4.20 & 7.33 & 8.33 &  & 4.50 & 4.67 & 8.60 & 11.73 \\
			0.3 & 4.20 & 4.60 & 8.13 & 9.20 &  & 4.37 & 4.77 & 8.97 & 9.80 &  & 4.43 & 4.23 & 8.67 & 12.83 \\
			0.6 & 5.80 & 5.70 & 10.23 & 11.80 &  & 4.97 & 4.97 & 10.17 & 12.73 &  & 4.40 & 4.30 & 10.83 & 18.77 \\
			0.8 & 4.97 & 4.43 & 11.17 & 13.40 &  & 4.60 & 4.53 & 12.07 & 19.37 &  & 4.03 & 3.53 & 13.67 & 36.53 \\
			\midrule
			\multicolumn{15}{l}{$T=600$} \\
			\midrule
			0 & 4.67 & 4.67 & 7.00 & 7.20 &  & 4.77 & 4.73 & 6.57 & 7.53 &  & 4.27 & 4.23 & 7.37 & 9.20 \\
			0.3 & 4.77 & 4.37 & 7.67 & 7.77 &  & 4.43 & 4.73 & 7.37 & 7.40 &  & 4.80 & 4.67 & 7.90 & 9.57 \\
			0.6 & 5.60 & 5.07 & 9.43 & 9.50 &  & 5.43 & 5.43 & 8.77 & 10.30 &  & 5.30 & 4.83 & 9.57 & 14.10 \\
			0.8 & 4.70 & 4.50 & 8.73 & 9.20 &  & 5.10 & 5.23 & 9.27 & 13.87 &  & 5.80 & 5.20 & 11.47 & 24.63 \\
			\midrule
			\multicolumn{15}{l}{\textbf{Panel B: Joint test}} \\
			\midrule
			\multicolumn{15}{l}{$T=150$} \\
			\midrule
			0 & 4.23 & 4.03 & 12.70 & 15.10 &  & 4.10 & 4.07 & 14.50 & 21.57 &  & 3.63 & 3.47 & 25.67 & 51.23 \\
			0.3 & 4.80 & 4.57 & 14.37 & 17.33 &  & 3.80 & 3.70 & 19.50 & 26.90 &  & 3.70 & 3.77 & 31.63 & 59.73 \\
			0.6 & 4.27 & 4.03 & 18.03 & 22.07 &  & 3.80 & 3.57 & 23.67 & 37.77 &  & 3.13 & 3.03 & 40.27 & 82.17 \\
			0.8 & 3.13 & 2.93 & 23.37 & 30.47 &  & 3.33 & 2.73 & 31.60 & 57.20 &  & 2.00 & 1.50 & 49.73 & 82.60 \\
			\midrule
			\multicolumn{15}{l}{$T=300$} \\
			\midrule
			0 & 4.80 & 4.73 & 10.07 & 10.87 &  & 4.30 & 4.37 & 10.63 & 14.83 &  & 3.77 & 3.80 & 18.17 & 31.80 \\
			0.3 & 4.77 & 4.97 & 11.60 & 12.93 &  & 4.67 & 4.67 & 14.03 & 17.70 &  & 3.40 & 3.47 & 19.13 & 36.53 \\
			0.6 & 4.93 & 4.40 & 14.33 & 14.87 &  & 3.77 & 3.90 & 17.50 & 25.63 &  & 3.10 & 3.10 & 29.20 & 59.97 \\
			0.8 & 3.87 & 3.40 & 17.40 & 20.97 &  & 3.33 & 2.77 & 22.97 & 38.80 &  & 2.60 & 2.13 & 37.87 & 86.33 \\
			\midrule
			\multicolumn{15}{l}{$T=600$} \\
			\midrule
			0 & 4.27 & 4.53 & 7.37 & 8.03 &  & 4.33 & 4.13 & 8.50 & 10.83 &  & 3.97 & 4.07 & 12.90 & 19.23 \\
			0.3 & 4.93 & 5.17 & 9.23 & 10.00 &  & 4.80 & 4.53 & 10.57 & 12.30 &  & 4.60 & 4.50 & 14.87 & 24.03 \\
			0.6 & 4.07 & 3.80 & 11.63 & 12.43 &  & 4.57 & 4.73 & 12.30 & 16.97 &  & 4.30 & 4.23 & 21.73 & 38.80 \\
			0.8 & 5.00 & 4.53 & 12.77 & 14.37 &  & 3.57 & 3.80 & 15.67 & 25.33 &  & 3.60 & 3.57 & 26.43 & 66.63\\
			\bottomrule
		\end{tabular}%
	}
\end{table}

For $\theta_0=1.3$, the empirical size of various tests are displayed in Table \ref{tbl:SizeWagnerEtAl1}.\footnote{The results for $\theta=0.8$ and $\theta=1.8$ are qualitatively the same. For brevity, we do not include these results in the main paper. The interested reader can find such simulation results in Section \ref{secSup:MoreSimulations} of the Supplementary Material.} These tests are based on four estimators: (1) the NLS estimator with simulated critical values as described in Section \ref{subsec:sim_inf} (SimNLS); (2) the NLS estimator with simulated critical values and the true value for $\theta_0=1.3$ being provided (SimNLS($\theta_0$)); (3) the FM-SOLS estimator based on $\theta_0=1.3$ (FM-SOLS($\theta_0$)); and (4) the FM-SUR estimator based on $\theta_0=1.3$ (FM-SUR($\theta_0$)). The main findings are as follows:
\begin{enumerate}[(a)]
 \item The simulation-based approaches SimNLS and SimNLS($\theta_0$) offer better size control. The size improvements are particularly pronounced when $T=150$ and $\rho=0.8$. The differences in the empirical size of SimNLS and SimNLS($\theta_0$) are small.
 \item Size distortions are more severe when $N$ increases and/or a joint test is performed. The same observation was made in \cite{wagnergrabarczykhong2019}. The behaviour of the simulation-based and fully modified tests is opposite in these cases. SimNLS and SimNLS($\theta_0$) tend to become conservative whereas FM-SOLS and FM-SUR are oversized.
\end{enumerate}

We subsequently simulate power curves.\footnote{Power curves are computationally more intensive. We economize computational time by (1) reducing the number of Monte Carlo replicates to 1,000 and (2) investigating a subset of all possible parameter configurations.} The specification of the single equation test and joint test are as before but we now vary $\phi_{2,1}=\ldots=\phi_{2,N}$ over the set $[-0.008,-0.007,\ldots, 0]$. We take $\rho=0.3$, $\theta=1.3$ and $N=3$ as the baseline scenario and subsequently vary these quantities one-by-one. Figures \ref{fig:SingleEq_Powercurves}--\ref{fig:Joint_Powercurves} show the results. As expected, power increases with increasing sample size, and as $\phi_{2,i}$ moves away from zero.

\begin{figure}[h]
  \centering
\begin{subfigure}{.9\textwidth}
  \includegraphics[width=\linewidth]{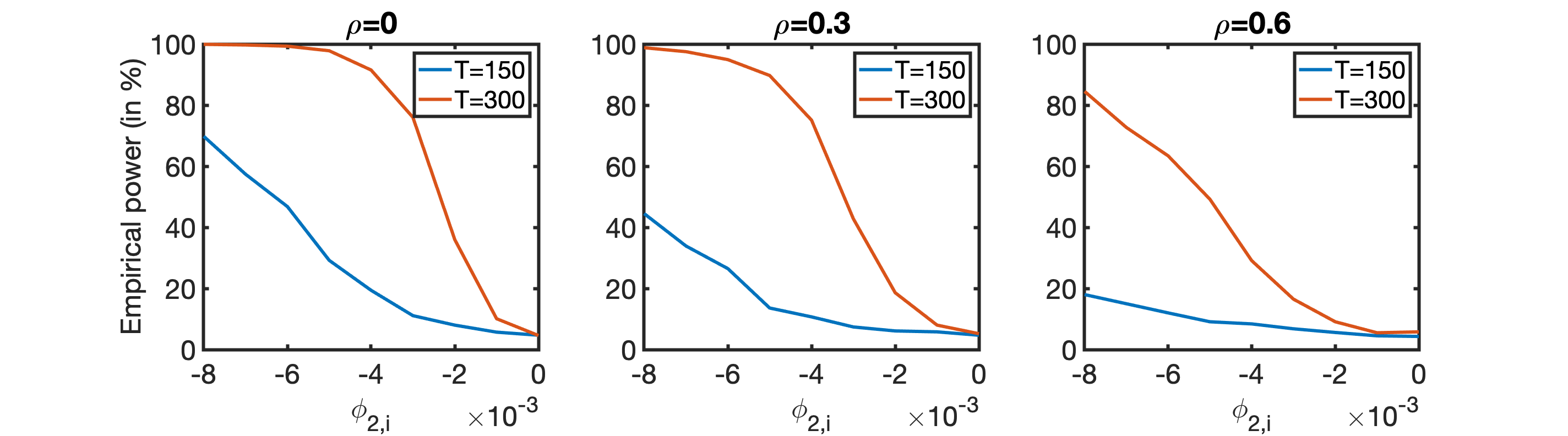}
  \caption{}
\end{subfigure} \\
\centering
\begin{subfigure}{.6\textwidth}
  \includegraphics[width=\linewidth]{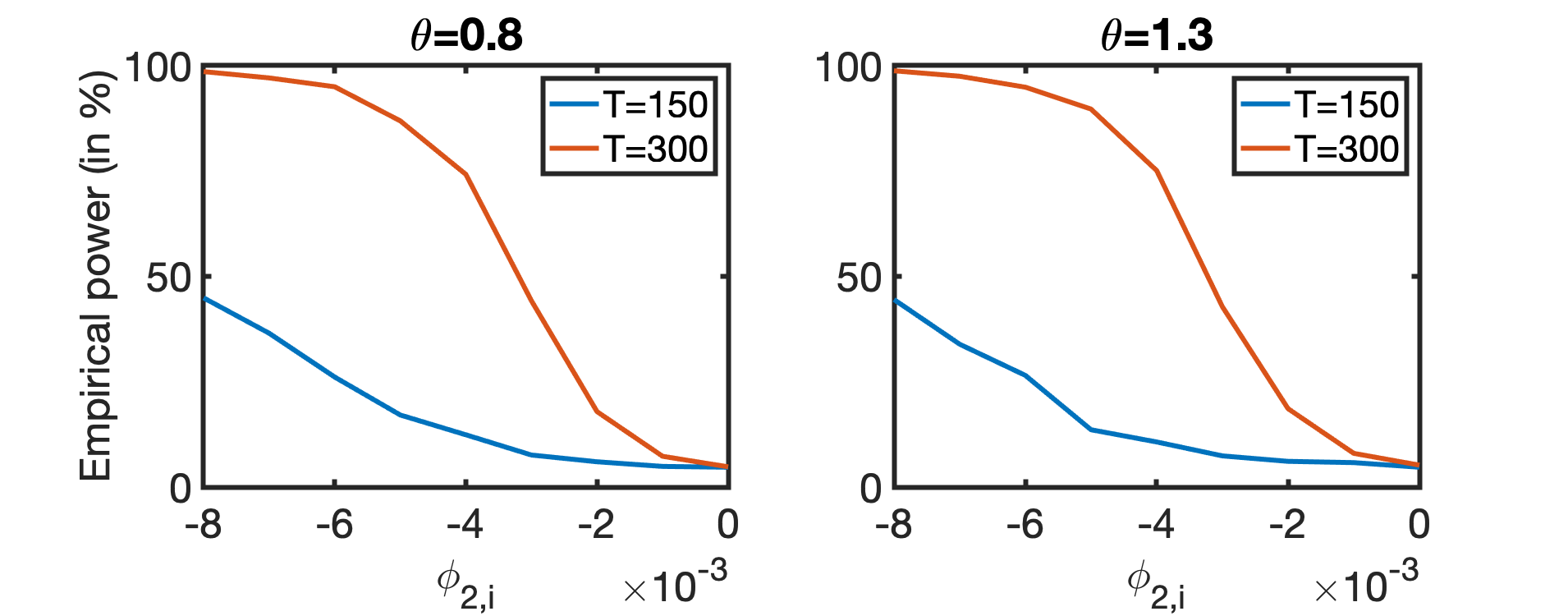}
  \caption{}
\end{subfigure}  \\
\begin{subfigure}{.6\textwidth}
  \centering
  \includegraphics[width=\linewidth]{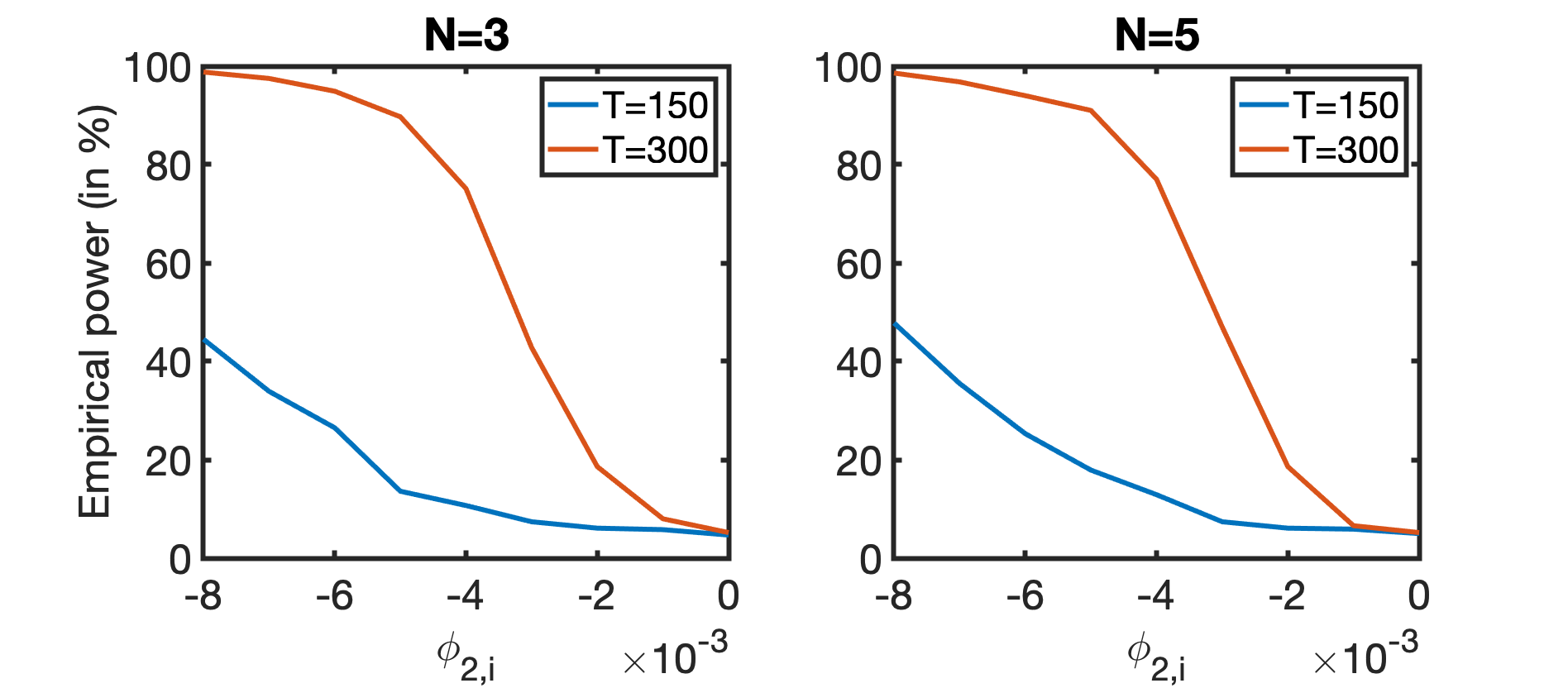}
  \caption{}
\end{subfigure}
\caption{The power curves for the \emph{single equation test} $H_0:\phi_{2,1}=0$. The reference model is DGP1 with $\rho=0.3$, $\theta=1.3$ and $N=3$. We vary the parameters of this reference specification one-by-one while keeping the remaining two parameters fixed at their baseline values. Specifically, we study changes in: \textbf{(a)} the serial correlation and endogeneity parameter $\rho$, \textbf{(b)} the nonlinear deterministic time trend power, and \textbf{(c)} the  cross-sectional dimension.}
\label{fig:SingleEq_Powercurves}
\end{figure}

\begin{figure}[h]
  \centering
\begin{subfigure}{.9\textwidth}
  \includegraphics[width=\linewidth]{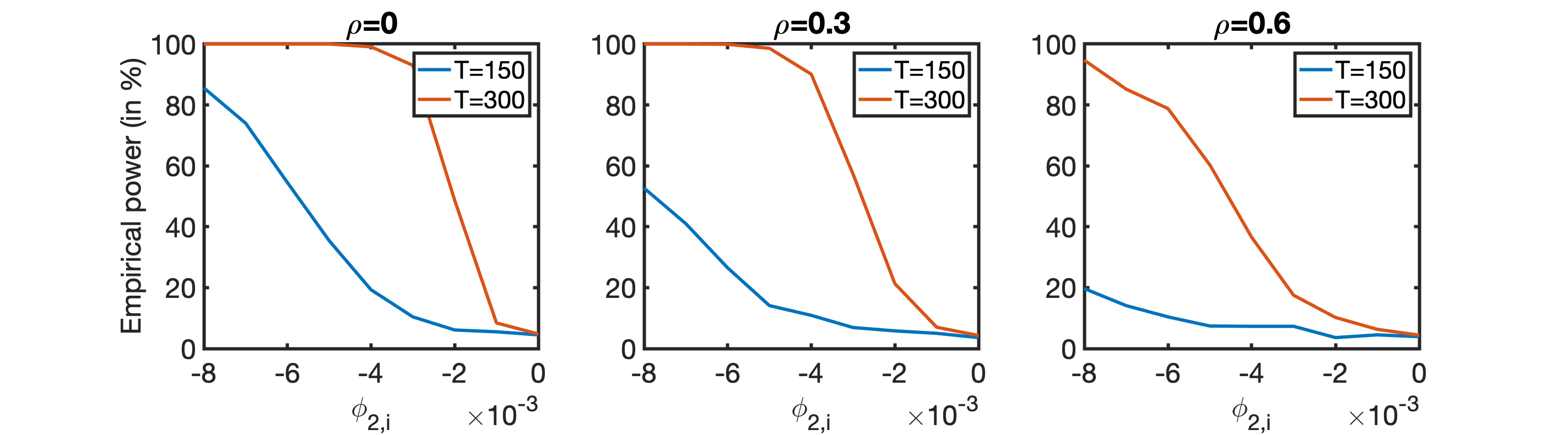}
  \caption{}
\end{subfigure} \\
\centering
\begin{subfigure}{.6\textwidth}
  \includegraphics[width=\linewidth]{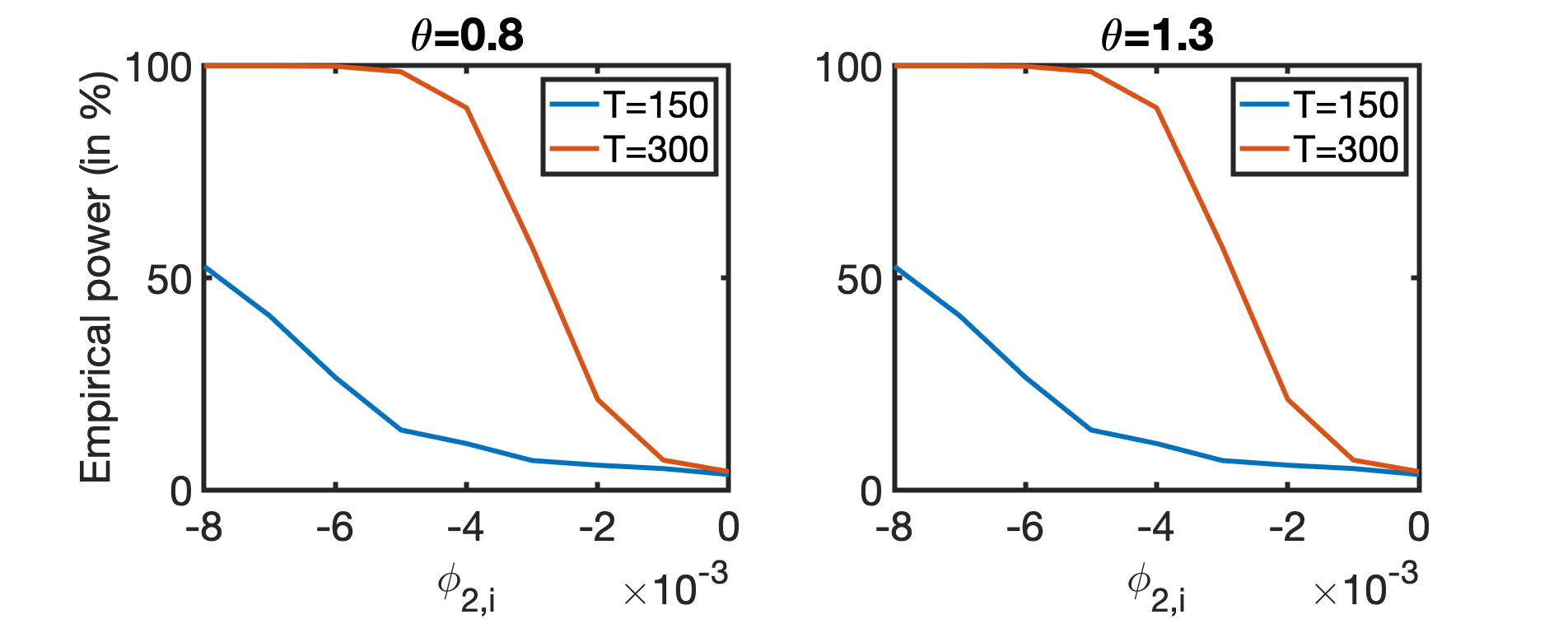}
  \caption{}
\end{subfigure}  \\
\begin{subfigure}{.6\textwidth}
  \centering
  \includegraphics[width=\linewidth]{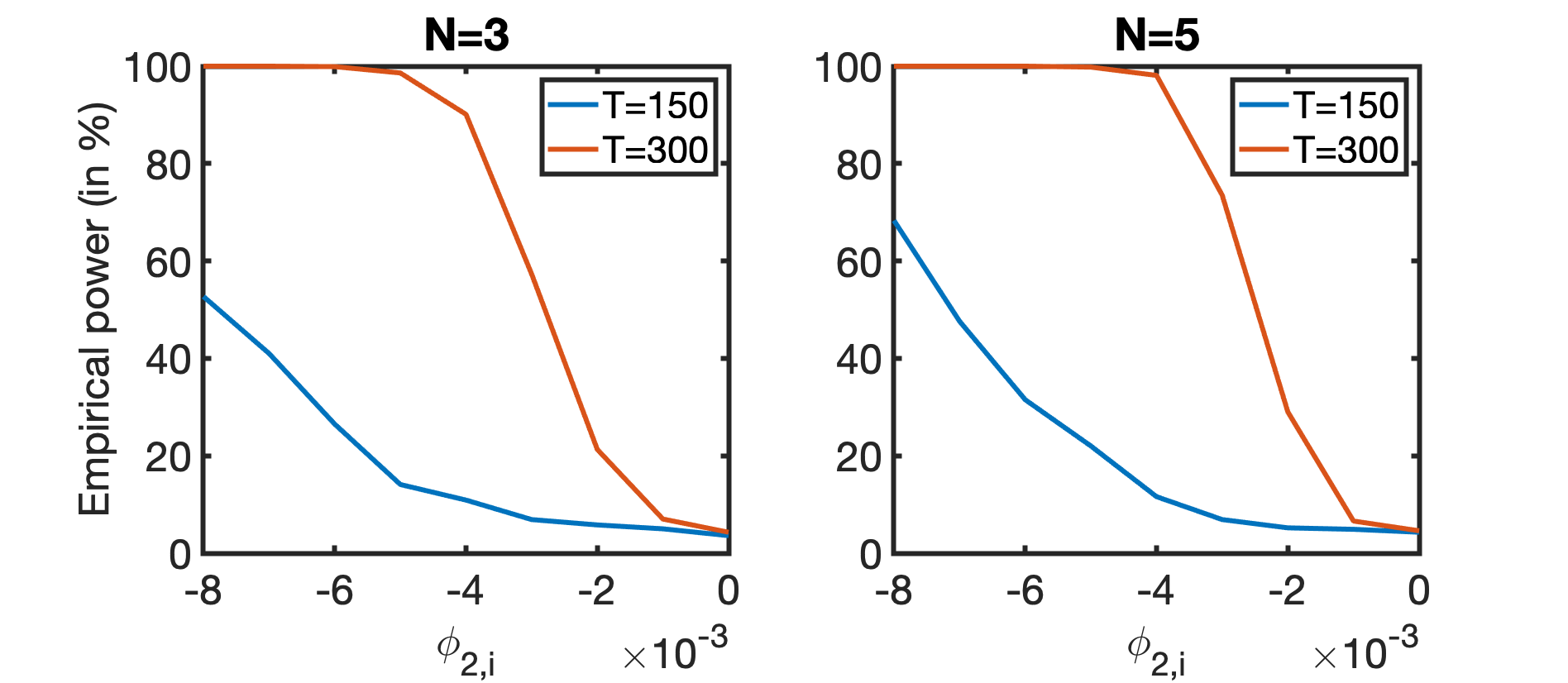}
  \caption{}
\end{subfigure}
\caption{The power curves for the \emph{joint test} for $H_0: \phi_{2,1}=\ldots=\phi_{2,N}=0$. The reference model is DGP1 with $\rho=0.3$, $\theta=1.3$ and $N=3$. We vary the parameters of this reference specification one-by-one while keeping the remaining two parameters fixed at their baseline values. Specifically, we study changes in: \textbf{(a)} the serial correlation and endogeneity parameter $\rho$, \textbf{(b)} the nonlinear deterministic time trend power, and \textbf{(c)} the  cross-sectional dimension.}
\label{fig:Joint_Powercurves}
\end{figure}

\subsubsection*{DGP2: Illustrative simulations in line with the empirical application}
Our second set of simulations is tailored towards the empirical application. That is, we employ parametrizations that mimic the distributional properties of data. Generally speaking, we first estimate the baseline model specification on the data and subsequently fit a VAR(1) specification on the stacked vector of residuals and first-differenced explanatory variables.\footnote{All details on the simulation designs for DGP2(a)--(c) are available in Section \ref{secSup:SimulationDGP2} of the Supplementary Material.} In line with the empirical application, these simulations use $N=6$ and $T=145$. All results are displayed in Figure \ref{fig:DGP2results}. Below, we motivate the simulation settings in view of the EKC application and draw conclusions.
\begin{enumerate}[(a)]
 \item \textbf{Correctly specified model}: The specification $ y_{i,t} = \tau_g t^{\theta} + \tau_{1,i} + \tau_{2,i} t + \phi_{1,i} x_{i,t} + \phi_{2,i} x_{i,t}^2 + u_{i,t}$ with $\phi_{2,i}=0$ is estimated on the data. We subsequently move $\phi_{2,1}=\ldots=\phi_{2,6}$ away from zero in the DGP and check whether we can detect the resulting curvature caused by the integrated variable. Power curves for the individual and joint test for the coefficients in front of $x_{i,t}^2$ are found in Figures \ref{fig:DGP2results}(a) and \ref{fig:DGP2results}(b), respectively. Clearly, nonlinear effects due to $x_{i,t}^2$ are detectable. The statistical power varies across units because (contrary to DGP1) time series properties are now heterogenous across equations.
 \item \textbf{Redundant global trend}: Assumption \ref{assumpt:identifytheta} requires the global trend to be relevant. This simulation DGP investigates how violations of this assumption affect the typical EKC coefficient test. We obtain parameter values by fitting the model $y_{i,t}= \tau_{1,i} + \tau_{2,i} t + \phi_{1,i} x_{i,t} + \phi_{2,i} x_{i,t}^2 + u_{i,t}$ with $\phi_{2,i}=0$. As in (a), we vary  $\phi_{2,1}=\ldots=\phi_{2,6}$ and test for the significance of these parameters. 
 
 The solid lines in Figures \ref{fig:DGP2results}(c) and \ref{fig:DGP2results}(d) are power curves obtained using the correctly specified DGP whereas markers indicate the power when a redundant global trend is estimated as well. The redundant trend has virtually no influence on the statistical power of the coefficient tests of the first five series. There is a power loss for $i=6$. An inspection of the coefficients offers an explanation. The estimated coefficients in front of the global trend are mostly small ($10^{-10}$ to $10^{-9}$) and thus irrelevant. However, in a fraction of cases the flexible trend mimics the curvature in the $6$\textsuperscript{th} series causing the quadratic stochastic trend to become insignificant. As reported in the introduction, the power of the joint test does not suffer from the inclusion of a redundant trend.
 \item \textbf{KPSS test}: Nonstationary residuals are an indication of model misspecification. That is, either the regression is spurious or the functional form of the cointegrating relation is misspecified. We look at the latter situation. The simulation DGP is the quadratic GCPR as in DGP2(a) but the quadratic component is missing in the fitted model. The empirical rejection frequency of the KPSS test (Figure \ref{fig:DGP2results}) is signalling that there are specification issues. However, a comparison with Figures \ref{fig:DGP2results}(a) and \ref{fig:DGP2results}(b) also reveals that if the source of misspecification is known, then a dedicated coefficient test leads to higher power.
\end{enumerate}

\begin{figure}[h]
\begin{subfigure}{.5\textwidth}
  \centering
  \includegraphics[width=.8\linewidth]{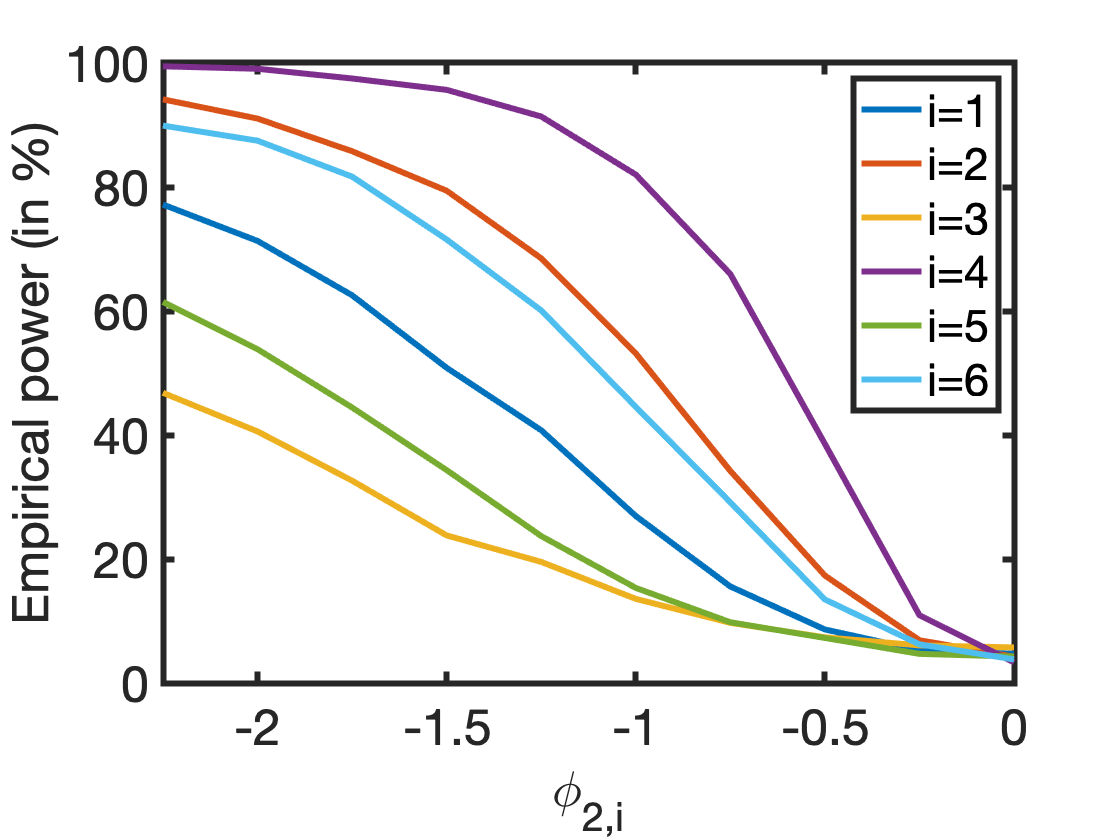}
  \caption{}
\end{subfigure}%
\begin{subfigure}{.5\textwidth}
  \centering
  \includegraphics[width=.8\linewidth]{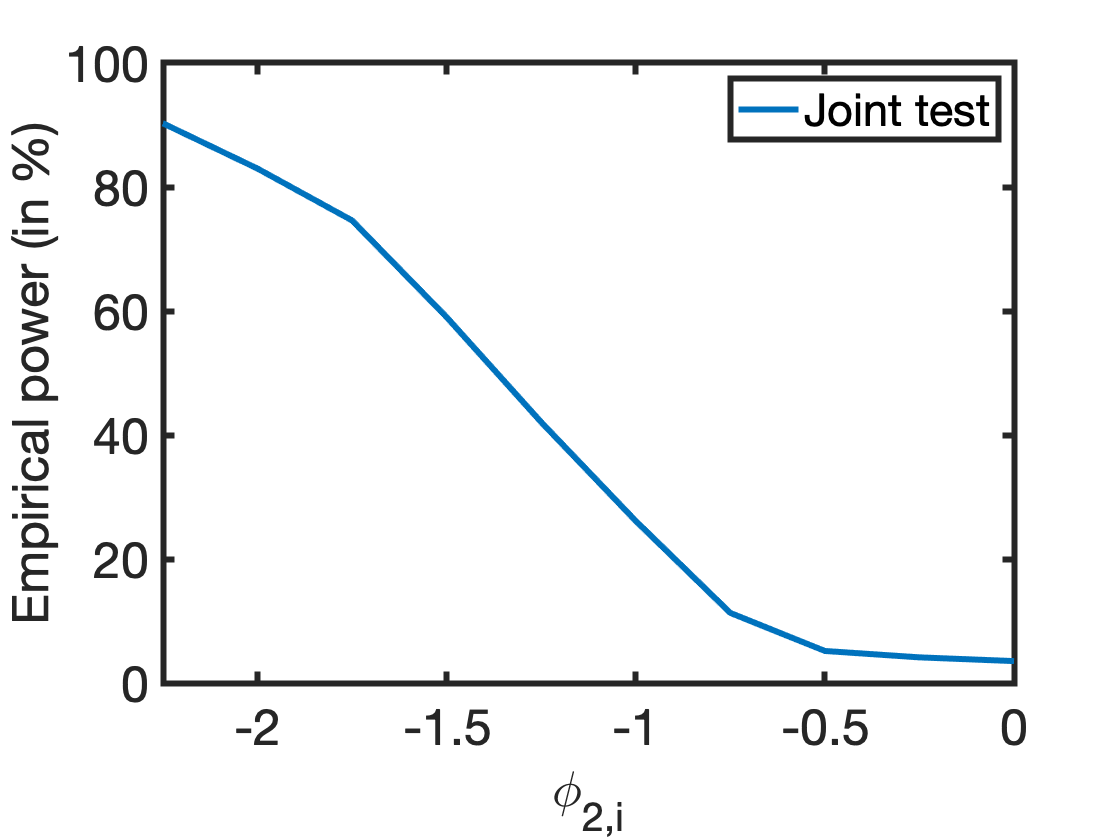}
  \caption{}
\end{subfigure}
\begin{subfigure}{.5\textwidth}
  \centering
  \includegraphics[width=.8\linewidth]{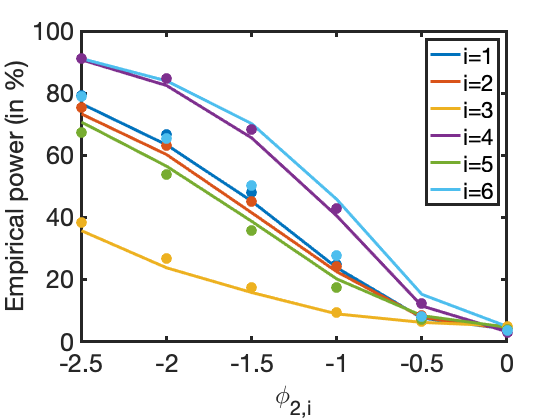}
  \caption{}
\end{subfigure}
\begin{subfigure}{.5\textwidth}
  \centering
  \includegraphics[width=.8\linewidth]{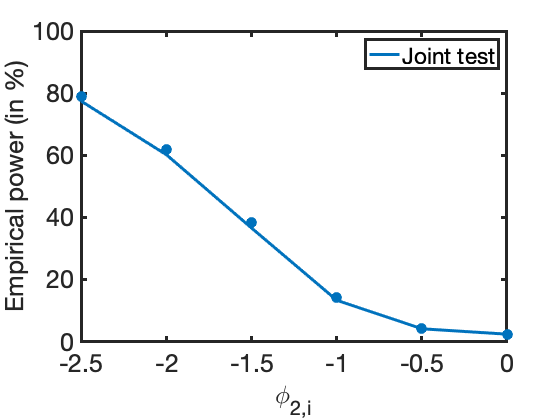}
  \caption{}
\end{subfigure}
\center
\begin{subfigure}{.5\textwidth}
  \centering
  \includegraphics[width=.8\linewidth]{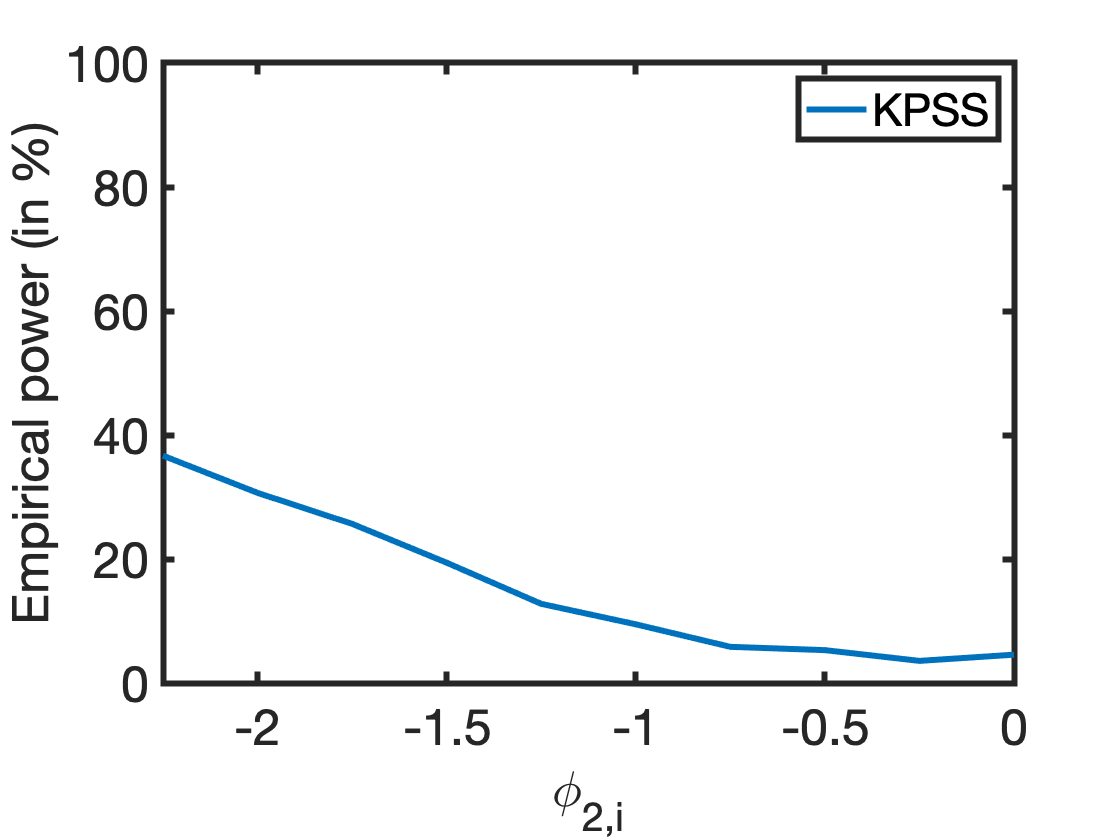}
  \caption{}
\end{subfigure}
\caption{An overview of various power curves. \textbf{(a)} Unit-specific power curves when testing $H_0 : \phi_{2,i}=0$ ($i=1,\ldots,6$) for a correctly specified model. \textbf{(b)} The power curve when testing $H_0: \phi_{2,1}=\ldots=\phi_{2,6}=0$ for a correctly specified model. \textbf{(c)} The empirical rejection frequencies for a correctly specified model (lines) and an estimation with a redundant global time trend (dots). Individual coefficients are tested. \textbf{(d)} As in (c), but now for the joint test $H_0: \phi_{2,1}=\ldots=\phi_{2,6}=0$. \textbf{(e)} The empirical power of the KPSS for a misspecified linear cointegrating relation.}
\label{fig:DGP2results}
\end{figure}

\section{Empirical Application} \label{sec:empapplication}
We examine the evidence for an EKC for a collection of 18 countries over the period 1870--2014 ($T=145$). Economic growth is measured by GDP and we use carbon dioxide (CO\textsubscript{2}) emissions as a proxy for air pollution. The origin of these data is as follows. We use population and GDP data from the Maddison Project (see https://www.rug.nl/ggdc/historicaldevelopment/maddison/). Our carbon dioxide observations are fossil-fuel CO\textsubscript{2} emissions as made available by the Carbon Dioxide Information Analysis Center (CDIAC, see https://cdiac.ess-dive.lbl.gov). The CDIAC database ceased operation in 2017 causing these time series to be available until 2014. Both GDP and CO\textsubscript{2} emissions are expressed per capita and subsequently log-transformed. In accordance with the notation of this paper, we will denote them by $x_{i,t}$ and $y_{i,t}$, respectively. The same data (or subsets thereof) have also been studied by \cite{wagner2015}, \cite{chanwang2015}, \cite{wangwuzhu2018}, \cite{wagnergrabarczykhong2019}, and \cite{linreuvers2019}.\footnote{The stationarity properties of the series have been extensively studied and discussed in these papers. We will not repeat this analysis but refer the interested reader to Section \ref{appendix:furtherempirics} of the Supplement. The exact numbers may show (minor) differences from previously reported results due to differences in: (1) the time span of the data, (2) the implemented long-run covariance estimator, and (3) the scaling of the data. Related to scaling, we follow the official guidelines and multiply by 3.667 and $10^3$ to convert thousand of metric tons of carbon into units of carbon dioxide. Since the data will be expressed in logarithms, this rescaling effectively amounts to a change of intercept.} This conveniently allows us to compare results. All user choices (kernel specification, bandwidth selection, etc.) are kept the same as during the simulation study (see page \pageref{sec:simulations}).

\subsection{An Illustration using Belgian Data}
Prior to the analysis of a multivariate specification, we will first discuss several features of the individual time series (hence omitting subscripts ``$i$''). The example throughout this narrative is Belgium (Figure \ref{fig:overviewBelgium}).\footnote{The data for Austria, Belgium, and Finland are mentioned in both \cite{wagner2015} and \cite{wagnergrabarczykhong2019} to behave in line with the EKC. We discuss Belgium in the main text but the interested reader can find the same figures for Austria and Finland in Section \ref{sec:AuandFi}. Qualitatively, the findings for these other two countries are the same.} An inverted U-shaped relationship between GDP and CO\textsubscript{2} (both in log per capita) is clearly visible in Figure \ref{fig:overviewBelgium}(a) and behavior like this has triggered research on the Environmental Kuznets Curve. However, the time heat map also shows that time is almost monotonically increasing along the curve. Time effects -- e.g. increasing global environmental awareness, worldwide advances in sustainable technologies -- can be valid alternative explanations for these nonlinearities and their omission can (falsely) exaggerate the influence of GDP. It is for this reason that we develop and analyse the Generalized Cointegrating Polynomial Regression (GCPR).

\begin{figure}[h]
\begin{subfigure}{.5\textwidth}
  \centering
  \includegraphics[width=.8\linewidth]{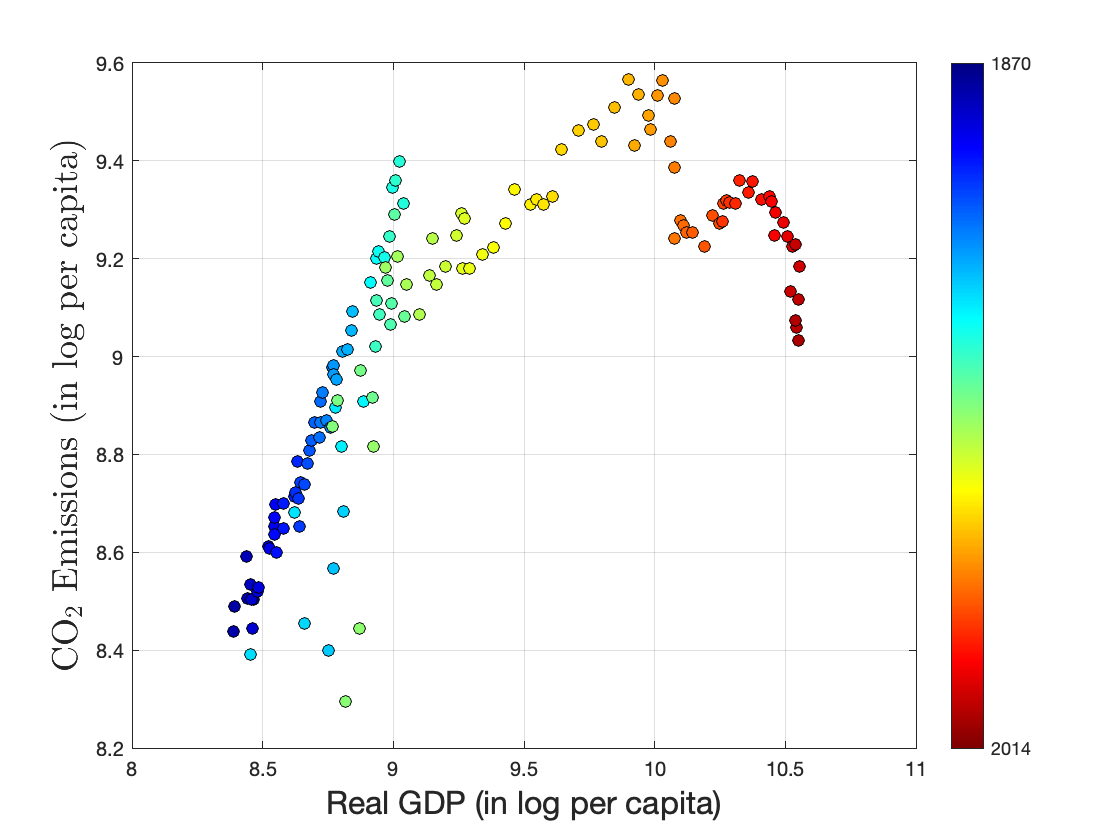}
  \caption{}
  \label{fig:overviewBelgiumA}
\end{subfigure}%
\begin{subfigure}{.5\textwidth}
  \centering
  \includegraphics[width=.8\linewidth]{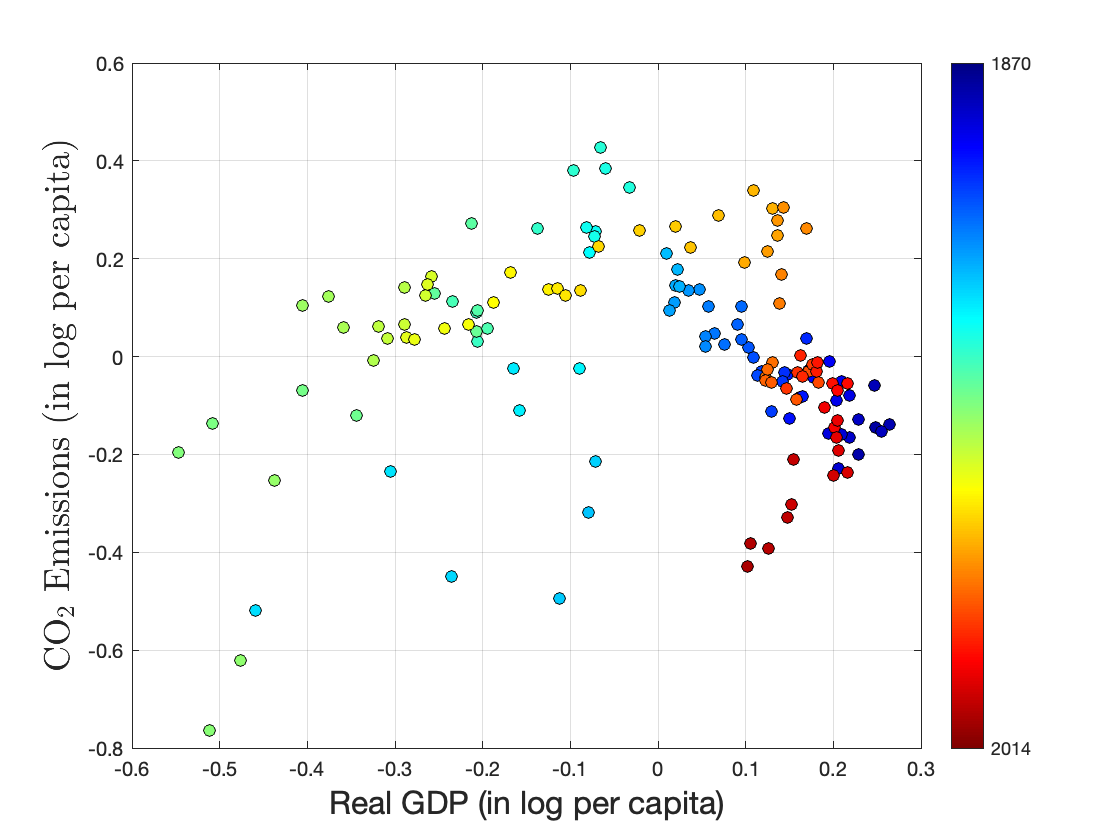}
  \caption{}
\end{subfigure}
\begin{subfigure}{.5\textwidth}
  \centering
  \includegraphics[width=.8\linewidth]{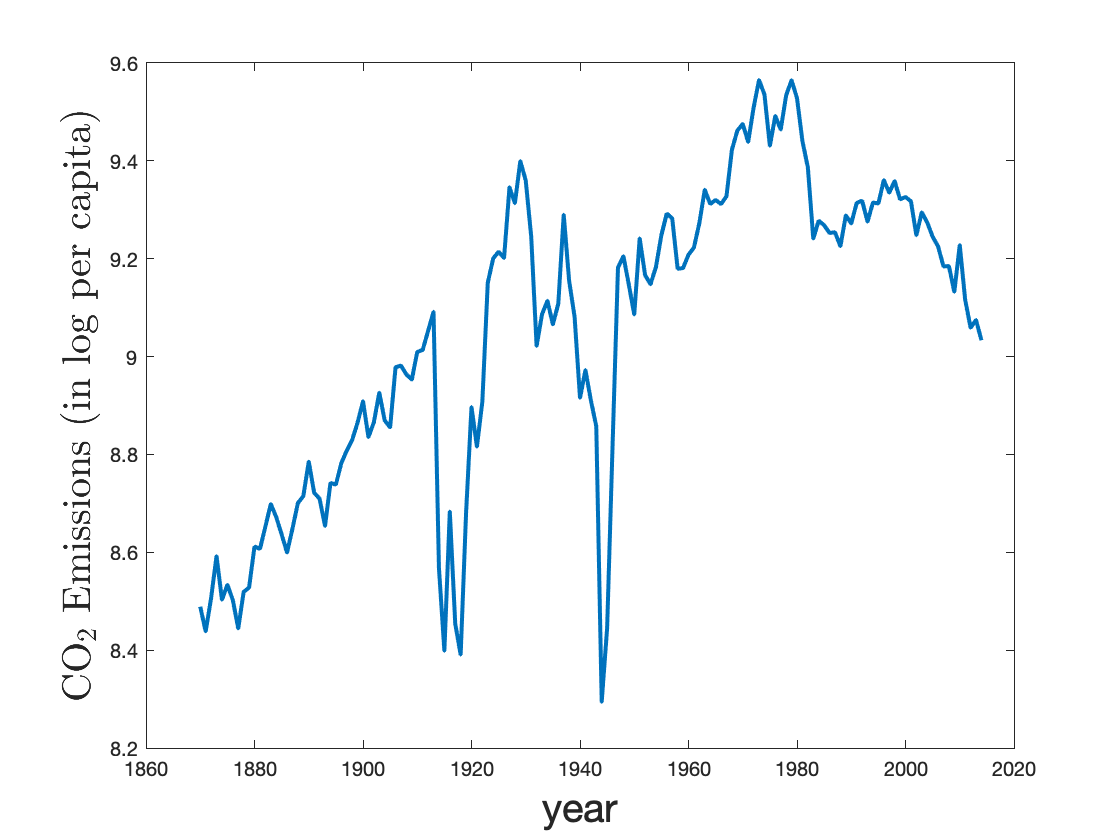}
  \caption{}
\end{subfigure}
\begin{subfigure}{.5\textwidth}
  \centering
  \includegraphics[width=.8\linewidth]{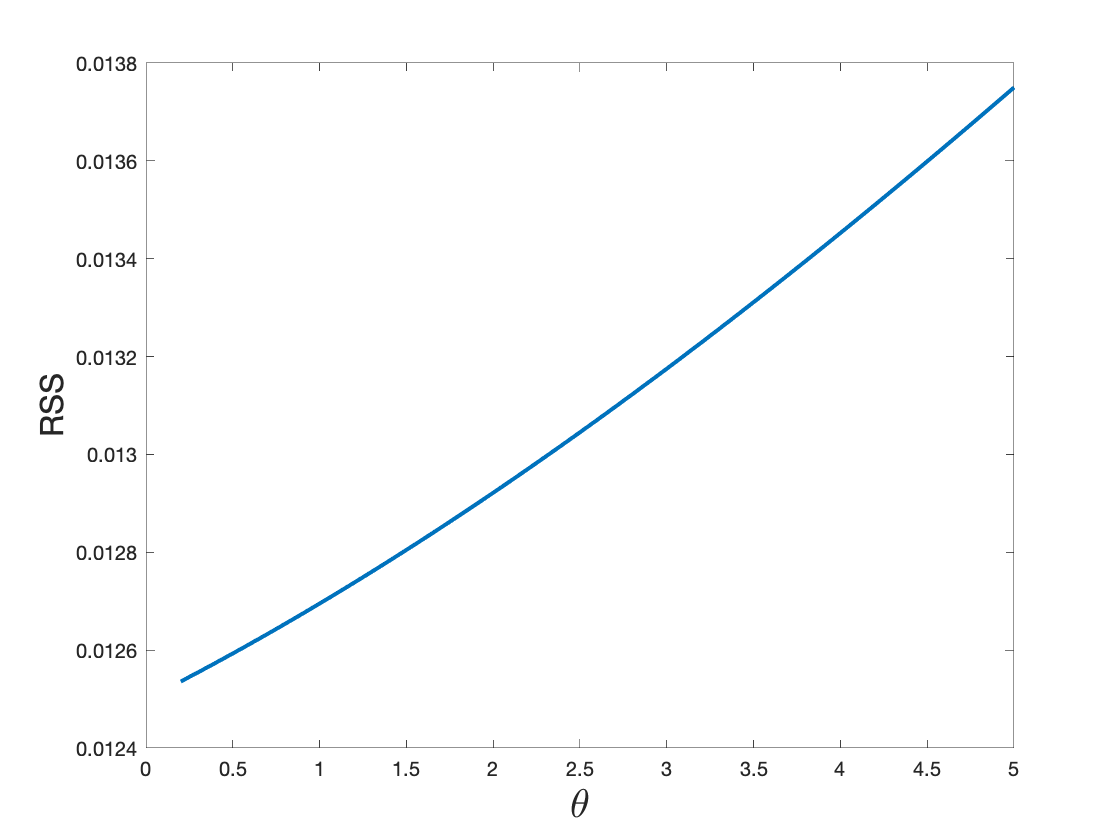}
  \caption{}
\end{subfigure}
\begin{subfigure}{.5\textwidth}
  \centering
  \includegraphics[width=.8\linewidth]{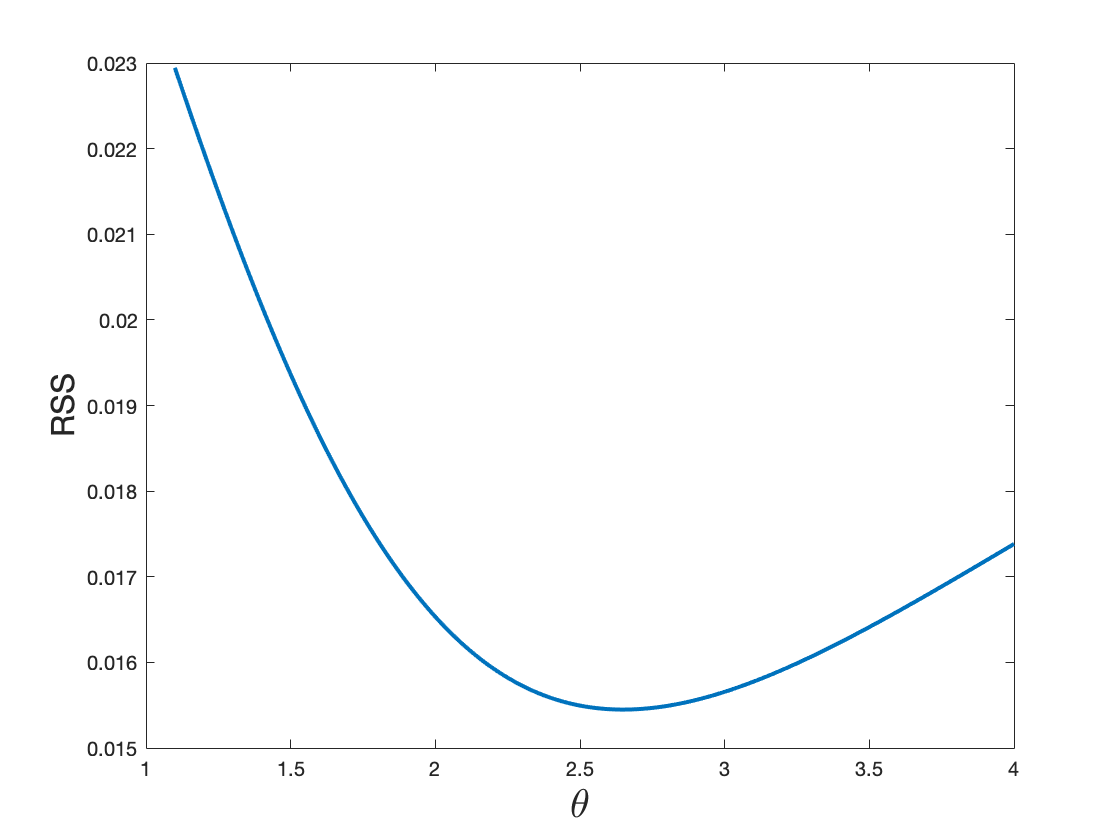}
  \caption{}
\end{subfigure}
\begin{subfigure}{.5\textwidth}
  \centering
  \includegraphics[width=.8\linewidth]{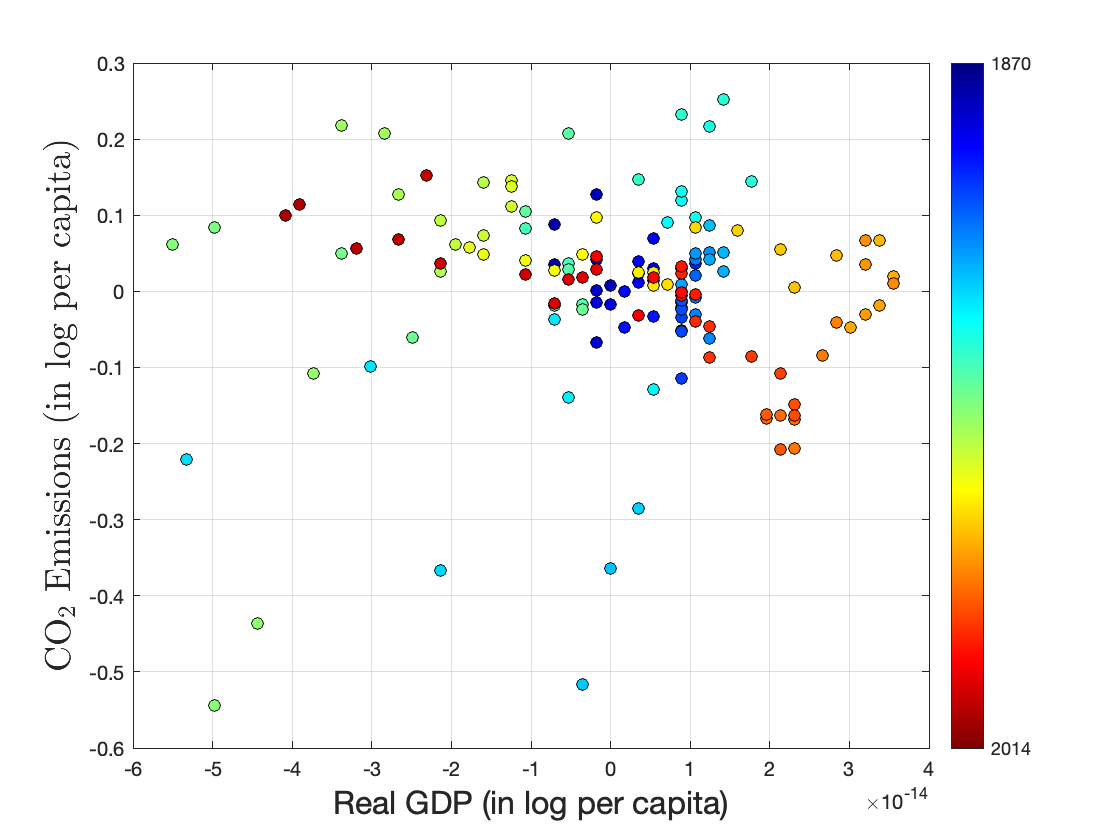}
  \caption{}
\end{subfigure}
\caption{Overview graphs for Belgium over 1870-2014. \textbf{(a)} $\log(\text{GDP})$ versus $\log(\text{CO}_2)$ (both per capita). \textbf{(b)} The same series as in subfigure (a), but now using detrended variables. \textbf{(c)} The log per capita CO\textsubscript{2} emissions time series for Belgium over time. \textbf{(d)} The residual sum of squares (RSS) for the nonlinear model specification $y_t=\tau_1+\tau_2 t + \phi_1 x_t+ \phi_2 x_t^\theta+u_t$ for various values of $\theta$. \textbf{(e)} The RSS as a function of $\theta$ for the flexible nonlinear trend specification $y_t=\tau_1 + \tau_2 t + \tau_3 t^\theta + \phi x_t+u_t$. \textbf{(f)} The relation between $x_t$ and $y_t$ after partialling out the constant, linear trend, and flexible deterministic trend.}
\label{fig:overviewBelgium}
\end{figure}

More evidence for the importance of time effects is available in Figure \ref{fig:overviewBelgium}(b). This figure depicts the same per capita series after detrending.\footnote{The \cite{perronyabu2009} test allows us to test for the presence of a deterministic trend irrespectively of the series being trend-stationary or having an unit root. The results of this test (see supplement) indicate that log per capita GDP is likely to have a deterministic trend component. It is thus recommended to have a deterministic trend in the model for log per capita $\COTWO$ emissions and the visual inspection of the relationship between GDP and $\COTWO$ emissions (in log per capita) should take place after partialling out this deterministic trend.\label{footnoteDetTrend}} The inverted U-shape is now (visually) less pronounced or even absent.

Finally, let us depart from a traditional linear cointegration specification: $y_{t} = \tau_1 + \tau_2 t + \phi_1 x_{t}+u_{t}$. This model cannot incorporate any nonlinear behaviour over time and is therefore ill-suited to fit the data displayed in Figure \ref{fig:overviewBelgium}(c). Cointegrating polynomial regressions use integer powers of $x_{t}$ to describe the curvature over time. More general, as in \cite{huphillipswang2019}, we can allow for an integrated regressor with a flexible power and estimate $y_{t}=\tau_1+\tau_2 t + \phi_1 x_{t}+ \phi_2 x_{t}^\theta+u_{t}$. The residual sum of squares (RSS) of the NLS estimator for this specification is shown in Figure \ref{fig:overviewBelgium}(d). The absence of a minimum at $\theta=2$ casts doubt on the commonly used quadratic specification in $x_{t}$. Additionally, the lack of any minimum might be interpreted as a sign that log per capita GDP is not the source of nonlinearity. This finding is not specific for Belgium. There are no minima in the RSS for 15 out of 18 countries (see Section \ref{sec:thetagraphs}). For the remaining three countries -- Denmark, France and the Netherlands -- minima are found at $\widehat\theta_{DK}=1.46$, $\widehat\theta_{FR}=3.61$ and $\widehat\theta_{NL}=1.28$, respectively. Alternatively, we can describe the nonlinearity in the data using a flexible deterministic trend as in $y_{t}=\tau_1 + \tau_2 t + \tau_3 t^\theta + \phi_1 x_{t}+u_{t}$. The RSS in Figure \ref{fig:overviewBelgium}(e) now exhibits a clear minimum. Further empirical analysis on individual countries (see Section \ref{appendix:univeriateanalysis} of the Supplement) suggests that: (1) the inclusion of a flexible time trend renders all quadratic effects in squared log per capita GDP insignificant, and (2) models remain well-specified after removing quadratic income effects from the model. These results suggest -- albeit in a univariate setting -- that flexible time trends gives a more satisfactory (or at least competing) description of the nonlinearities in the data.

\subsection{Seemingly Unrelated Regression}
The interpretation of a country-specific flexible deterministic trend is complicated because of its high collinearity with GDP per capita. The multivariate analysis of this section allows us to separate country-specific environmental improvements caused by national income growth from global environmental improvements. We study the following six countries ($N=6$): Austria, Belgium, Finland, the Netherlands, Switzerland, and the UK. The motivation behind this choice is as follows. First, based on data series to ours, \cite{piaggiopadilla2012}, \cite{mazzantimusolesi2013}, and \cite{wagnergrabarczykhong2019} report considerable evidence of parameter heterogeneity across countries.\footnote{Parameter heterogeneity is also reported for other data sets. Examples are \cite{listgallet1999},  \cite{cole2005}, and \cite{dijkgraafvollebergh2005}.} The evidence in \cite{mazzantimusolesi2013} is anecdotal in the sense that these authors consider groups of similar countries and find different results for different groups. The lack of overlap among confidence intervals of country-specific parameters has also been interpreted as a sign of heterogeneity (section 4.2 in \cite{piaggiopadilla2012}). \cite{wagnergrabarczykhong2019} explicitly test for various forms of poolability and conclude that pooling is (at most) appropriate for small subgroups of countries. This lack of parameter homogeneity justifies a multivariate approach with small $N$ rather than a panel setting. Admittedly, in the current time-series setting, studying ``large $N$'' is also infeasible since consistent estimators for $(2N\times 2N)$ long-run covariance matrices are required. Second, prior studies already refute the existence of a carbon dioxide EKC for several countries and little seems lost by excluding these countries from the outset.\footnote{Most of the parameters in the Generalized Cointegrating Polynomial Regression are country-specific. The estimation accuracy of these parameters should deteriorate little when focussing attention on a subset of countries. Losses will occur in the precision of the estimators for $\tau_g$ and $\theta$. There is thus a trade-off between accurate global trend estimation (improving with large $N$) and accurate LRV estimation (deteriorating with large $N$). To strike a balance and to connect to the recent literature, we continue the analysis of \cite{wagnergrabarczykhong2019} and take $N=6$.} That is, we consider the same countries as in \cite{wagnergrabarczykhong2019}, who decide on these countries because their prior cointegration analysis ``\emph{leads to evidence for a quadratic cointegrating EKC including a constant and linear trend}''.

\begin{table}[t]
	\centering
	\caption{Parameter estimates and test results for Models \eqref{eq:QuadraticEKC}--\eqref{eq:LinearEKCwithGlobalTrend}. The joint $p$-value refers to the test with null hypothesis $H_0: \phi_{2,1}=\ldots=\phi_{2,6}=0$ and is thus inapplicable for Model \eqref{eq:LinearEKCwithGlobalTrend}.}
	\label{tbl:multivariate_empirical}
	\resizebox{\textwidth}{!}{%
	\begin{threeparttable}
		\begin{tabular}{c d{2.5} d{3.5} d{2.5} d{2.5} d{2.5} d{2.4} c d{2.5} d{2.2} c d{1.3} c}
			\toprule
			& \multicolumn{6}{c}{Omitted Global Trend} & & \multicolumn{5}{c}{Global Trend} \\
			\cmidrule{2-13}
			Model & \multicolumn{6}{c}{(M1)} & & \multicolumn{2}{c}{(M2)} & & \multicolumn{2}{c}{(M3)} \\
			\cmidrule{2-7}\cmidrule{9-10}\cmidrule{12-13}
			& \multicolumn{2}{c}{FM-SOLS} & \multicolumn{2}{c}{FM-SUR} & \multicolumn{2}{c}{SimNLS} & & \multicolumn{2}{c}{SimNLS} & & \multicolumn{2}{c}{SimNLS} \\
			\cmidrule{1-13}
			& \multicolumn{1}{c}{$\phi_{1,i}$} & \multicolumn{1}{c}{$\phi_{2,i}$} &   \multicolumn{1}{c}{$\phi_{1,i}$} & \multicolumn{1}{c}{$\phi_{2,i}$} &  \multicolumn{1}{c}{$\phi_{1,i}$} & \multicolumn{1}{c}{$\phi_{2,i}$} &  & \multicolumn{1}{c}{$\phi_{1,i}$} & \multicolumn{1}{c}{$\phi_{2,i}$} & & \multicolumn{2}{c}{$\phi_{1,i}$} \\
			\midrule
			Austria & 9.37^{***} & -0.43^{***} & 3.96^{*} & -0.16 & 6.42^{***} & -0.28 &  & 3.08^{***} & -0.09 &  & \multicolumn{2}{c}{$1.73^{***}$} \\
			Belgium & 11.78^{***} & -0.59^{***} & 9.92^{***} & -0.50^{***} & 12.36^{***} & -0.62^{**} &  & 7.68^{***} & -0.36 &  & \multicolumn{2}{c}{$1.01^{***} $}\\
			Finland & 16.00^{***} & -0.72^{***} & 15.07^{***} & -0.68^{***} & 17.18^{***} & -0.78^{*} &  & 15.19^{***} & -0.65 &  & \multicolumn{2}{c}{$2.22^{***}$} \\
			Netherlands & 10.68^{***} & -0.51^{***} & 9.58^{***} & -0.46^{***} & 9.27^{***} & -0.44^{*} &  & 4.97^{***} & -0.20 &  &\multicolumn{2}{c}{$ 1.33^{***}$}  \\
			Switzerland & 8.17^{***} & -0.27^{***} & 7.29^{***} & -0.23^{***} & 8.11^{***} & -0.28 &  & 0.58^{*} & 0.10 &  & \multicolumn{2}{c}{$2.55^{***}$} \\
			UK & 9.28^{***} & -0.47^{***} & 7.93^{***} & -0.40^{***} & 9.16^{***} & -0.46^{*} &  & 4.93^{***} & -0.21 &  & \multicolumn{2}{c}{$1.33^{***}$} \\
			\midrule
			Joint $p$-value & \multicolumn{2}{c}{0.00} & \multicolumn{2}{c}{0.00} & \multicolumn{2}{c}{0.16} & & \multicolumn{2}{c}{0.39} & & \multicolumn{2}{c}{---} \\
			\addlinespace[0.1cm]
			KPSS-statistic & \multicolumn{2}{c}{3.45} & \multicolumn{2}{c}{5.10} & \multicolumn{2}{c}{3.46} & & \multicolumn{2}{c}{3.48} & & \multicolumn{2}{c}{3.78}\\
			\midrule
			$\widehat{\tau}\, t^{\widehat{\theta}}$ &  &  &  & &  &  &  & \multicolumn{2}{c}{$-0.012\, t^{1.263}$} & & \multicolumn{2}{c}{$-1.374\cdot 10^{-5} t^{2.450}$} \\
			\bottomrule
		\end{tabular}%
		    \begin{tablenotes}
     \footnotesize
     \item Note: Asterisks denote rejection of the null hypothesis at the $^{***}1\%$, $^{**}5\%$, and $^{*}10\%$ significance level. Depending on the specific table entry, the null hypothesis refers to coefficient(s) being zero or a well-specified cointegrating relation.
     \end{tablenotes}
		    \end{threeparttable}
	}
\end{table}

Having decided on the set of countries, we subsequently study the effect of the global flexible trend on EKC evidence. Table \ref{tbl:multivariate_empirical} shows the estimation results of the quadratic EKC specification
\begin{equation}
 y_{i,t} = \tau_{1,i} + \tau_{2,i} t + \phi_{1,i} x_{i,t} + \phi_{2,i} x_{i,t}^2 + u_{i,t}.
\tag{M1}
\label{eq:QuadraticEKC}
\end{equation}
This setting (possibly with the additional constraint $\tau_{2,i}=0$) has been explored in numerous papers, for example: \cite{seldensong1994}, \cite{piaggiopadilla2012}, \cite{chanwang2015}, \cite{wagner2015}, \cite{wangwuzhu2018}, and \cite{wagnergrabarczykhong2019}. For Model \eqref{eq:QuadraticEKC}, an inverted-U relationship results when $\phi_{1,i}>0$ and $\phi_{2,i}<0$ and empirical evidence hereof is traditionally interpreted as the existence of an EKC. If these coefficients have the correct signs, then the country's turning point -- the level of economic growth at which environmental improvement starts -- can be computed as $\exp\left( -\phi_{1,i}/2\phi_{2,i} \right)$. We assess the parameter values and their significance using FM-SOLS and FM-SUR (repeating the analysis of \cite{wagnergrabarczykhong2019} for ease of comparison) and the simulated approach of Section \ref{subsec:sim_inf}. Regardless of estimation method and country, all coefficient signs are in agreement with the EKC hypothesis. The parameters $\phi_{1,i}$ are generally significantly different from zero but the significance of $\phi_{2,i}$ does vary across estimation methods. FM-SOLS and FM-SUR typically (strongly) reject $H_0: \phi_{2,i}=0$ ($i=1,\ldots,6$) whereas evidence against these null-hypotheses is less pronounced for the simulation-based approach. The same behaviour emerges when testing $\phi_{2,1}=\ldots=\phi_{2,6}=0$ jointly. This pattern reminds of the simulation results in Table \ref{tbl:SizeWagnerEtAl1} where the cross-sectional dimensions $N=5$ and $N=10$ cause over-sized tests for FM-SOLS and FM-SUR and conservative tests for simulation-based inference. The KPSS test does not indicate any signs of misspecification. Overall, Model (M1) leads to considerable evidence in favour of a quadratic cointegrating EKC.
 
 The reported evidence in favour of the EKC should not come as surprise. First, the set of countries was selected based on this criteria. Second, the visualisations of the data clearly suggest nonlinear effects (recall Figures \ref{fig:overviewBelgium}(a) and  \ref{fig:overviewBelgium}(c) for the case of Belgium). With Model (M1) being restrictive in the sense that nonlinearities over time are solely incorporable through $x_{i,t}^2$, we expect this variable to be important. In line with our proposed Generalized Cointegrating Polynomial Regression (GCPR) framework, we subsequently add a global flexible trend and estimate
 \begin{equation}
   y_{i,t} = \tau_g t^{\theta} + \tau_{1,i} + \tau_{2,i} t + \phi_{1,i} x_{i,t} + \phi_{2,i} x_{i,t}^2 + u_{i,t}.
 \tag{M2}
 \label{eq:QuadraticEKCwithGlobalTrend}
 \end{equation}
 From a statistical perspective, the term $\tau_g t^{\theta}$ opens a different channel through which nonlinearities can be described. We refer back to the introduction for a further elaboration on this point. From an economic perspective, $\tau_g t^{\theta}$ captures changes in CO\textsubscript{2} emissions that are common across series and thus unrelated to changes in national GDPs. Parameter inference for Model \eqref{eq:QuadraticEKCwithGlobalTrend} is also reported in Table \ref{tbl:multivariate_empirical}. The contributions of $x_{i,t}^2$ are insignificant for both individual countries and all countries jointly. How about the significance of the global trend? The standard Wald test for $\tau_g=0$ is invalid because $\theta$ is unidentified under the null hypothesis (see Assumption \ref{assumpt:identifytheta} and the related discussion). As a heuristic alternative, we vary $\theta$ over the interval $[0,2.5]$ and compute Wald statistics while assuming $\theta$ to be fixed. Comparing these Wald statistics to the 95\% quantile of a $\chi^2(1)$-distributed random variable (critical value: 3.842), the range of $\theta$-values from about 0.5 to 1.75 implies a significant global trend (Figure \ref{fig:WaldForTheta}). Having estimated $\widehat\theta = 1.263$, our analysis suggests that the global trend and not GDP per capita is the source of nonlinearity. Before interpreting this result, we first verify whether the model with $\phi_{2,1}=\ldots=\phi_{2,6}=0$ shows signs of misspecification.
 
\begin{figure}[t]
\center
  \includegraphics[width=.5\linewidth]{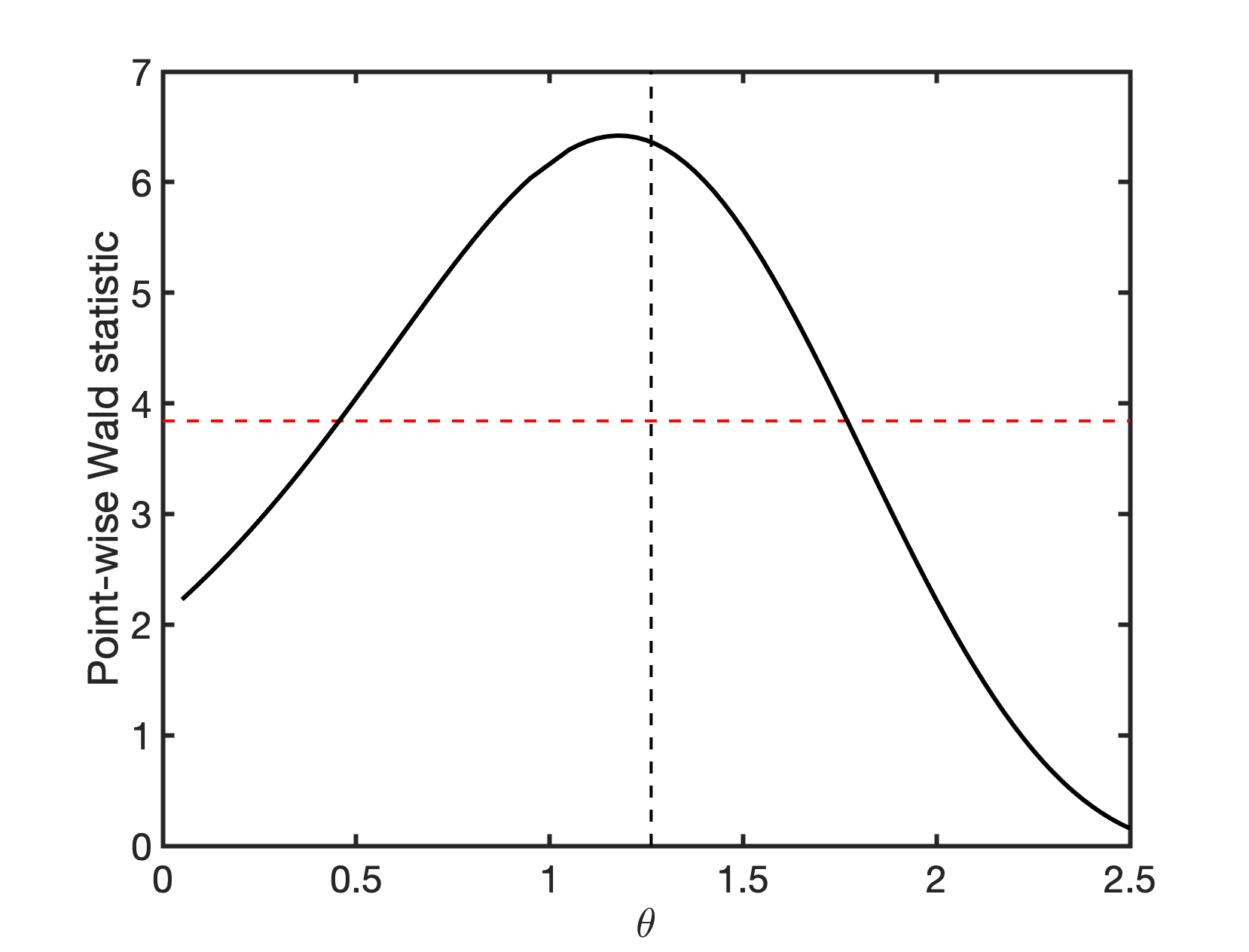}
  \caption{The magnitude of the Wald test for fixed values of $\theta$ when testing $H_0:\tau_g = 0$ under Model \eqref{eq:QuadraticEKCwithGlobalTrend}. Dash lines display the 95\% quantile of a chi-squared distributed random variable with 1 degree of freedom (red) and the NLS estimate $\widehat\theta=1.263$ for specification \eqref{eq:QuadraticEKCwithGlobalTrend}.}
\label{fig:WaldForTheta}
\end{figure}
 
 Omitting insignificant parameters from the previous model specification, we arrive at
  \begin{equation}
   y_{i,t} = \tau_g t^{\theta} + \tau_{1,i} + \tau_{2,i} t + \phi_{1,i} x_{i,t} + u_{i,t}.
 \tag{M3}
 \label{eq:LinearEKCwithGlobalTrend}
 \end{equation}
 Model \eqref{eq:LinearEKCwithGlobalTrend} is linear in log per capita GDP. The positive parameter estimates for $\phi_{1,i}$ imply that \emph{at a given point in time} increases in economic growth imply increases in CO\textsubscript{2} emissions. However, as $\widehat\tau_g= -1.374\times 10^{-5}$ and $\widehat\theta= 2.45$, there will be common emission reductions over time. Also, the omission of the quadratic terms in log per capita GDP do not seem to result in a misspecified model. First, the KPSS test does not reject the null of cointegration. Second, there is no (visual) evidence that the linear functional form of \eqref{eq:LinearEKCwithGlobalTrend} is inappropriate. To arrive at this last conclusion, we compute $\widetilde y_{i,t} = y_{i,t}-\widehat\tau_g t^{\widehat\theta} - \widehat{\tau}_{1,i}-\widehat{\tau}_{2,i} t$ and employ the nonparametric kernel estimator from \cite{wangphillips2009} to estimate $\widetilde y_{i,t}= f(x_{i,t})+\widetilde u_{i,t}$ for each individual country.\footnote{The properties of nonparametric kernel estimators in nonlinear cointegration models have been studied by \cite{wangphillips2009}, \cite{gaokanayalitjostheim2015} and \cite{wangphillips2016}, among others. The latter reference is particularly relevant because it establishes that kernel estimators remain consistent and asymptotically (mixed) normal under serially correlated errors and endogeneity. None of these papers includes deterministic trends in the DGP. However, we conjecture that detrending does not affect the asymptotic properties of the kernel estimator due to the high convergence rates of the trend parameters in comparison to the slow convergence rates of the nonparametric estimator. Our bandwidth choice is $h=T^{-1/3}$.\label{ftnt:nonparametric1}} Figure \ref{fig:NonParametric} shows the nonparametric estimate in blue and the fit of Model \eqref{eq:LinearEKCwithGlobalTrend} in red. After removal of the global trend, there are some temporary departures from linearity but there is little curvature overall and certainly no visual turning point. In Table \ref{tbl:linearitytest}, we formally test the null of linearity using the model specification test as outlined in section 3 of \cite{wangphillips2016}. Based on the full sample, linearity is rejected for Austria only. A comparison with the 95\% confidence intervals of the kernel estimate (Figure \ref{fig:NonParametric}) suggests that this rejection is caused by the sharp decline in CO\textsubscript{2} emissions during World War II. We subsequently repeat the analysis using the $T=69$ observations after 1945. Linearity is never rejected.\footnote{The high $p$-values in Table \ref{tbl:linearitytest} are caused by visually small deviations from the linear trend (see Section \ref{appendix:graphslinearfit} in the Supplement for detailed graphs). Also, the relatively small sample size (for nonparametric settings) might adversely affect power. Matlab functions for nonparametric kernel regression and specification test are available at \texttt{https://github.com/HannoReuvers}. For remarks on bandwidth choice and detrending, we refer to footnote \ref{ftnt:nonparametric1}.} All this align well with our earlier findings of a relevant global trend and irrelevant quadratic effects in log GDP per capita.
 
\begin{figure}[h]
\center
  \includegraphics[width=.8\linewidth]{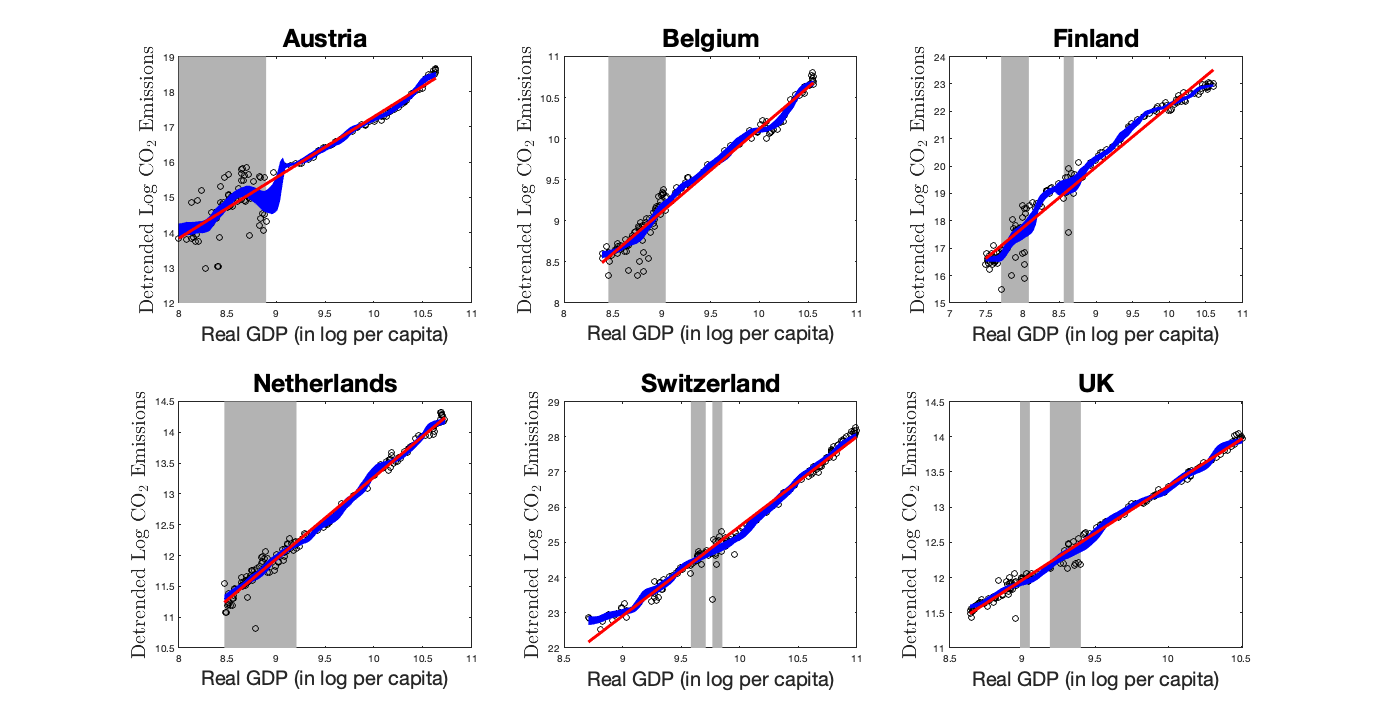}
  \caption{The 95\% (point-wise) confidence intervals of the non-parametric kernel estimate for the relationship between GDP and CO\textsubscript{2} emissions (blue) after removal of the country-specific and joint flexible deterministic trends. The red line indicates the linear fit implied by the estimation results of Model \eqref{eq:LinearEKCwithGlobalTrend}.  As the sample covers the years 1870--2014 there are several observations during World War I and World War II. The affected ranges of GDP are indicated in grey.}
\label{fig:NonParametric}
\end{figure}

\begin{table}[t]
	\centering
	\caption{Linearity test results. The linearity test is based on the model specification test documented in section 3 of \cite{wangphillips2016}. The test is based on the integrated weighted squared deviations between the data and the linear model fit. We report the integration range, the (standardized) test statistic, and the $p$-value. Under $H_0$, the relationship between (detrended) log CO\textsubscript{2} emissions and log GDP per capita is linear.}
	\label{tbl:linearitytest}
	\begin{threeparttable}
		\begin{tabular}{c c c c c c c c c c}
			\toprule
			& \multicolumn{3}{c}{Full Sample} & & \multicolumn{3}{c}{After World War II} \\
			\cmidrule{2-4} \cmidrule{6-8}
						& range 	& $\frac{\phi}{\tau_0 \sqrt{n} h} T_n$  & $p$-value 	&& range & $\frac{\phi}{\tau_0 \sqrt{n} h} T_n$ & $p$-value \\
			\midrule
			Austria		& [8.003,10.635]	& 8.987	& 0.000	&& [8.129,10.635] 	& 0.162& 0.871 \\
			Belgium		& [8.389,10.553]	& 0.299	& 0.765	&& [8.923,10.553]	& 0.076& 0.939 \\
			Finland		& [7.494,10.602]	& 0.105	& 0.916	&& [8.693,10.602]	& 0.020& 0.984  \\
			Netherlands	& [8.469,10.728]	& 0.046	& 0.964	&& [8.990,10.728]	& 0.022& 0.982  \\
			Switzerland	& [8.708,10.993]	& 0.028	& 0.977	&& [9.925,10.993]	& 0.009& 0.993  \\
			UK			& [8.641,10.510]	& 0.022	& 0.982	&& [9.242,10.510]	& 0.021& 0.984  \\
			\bottomrule
		\end{tabular}%
		\begin{tablenotes}
			\footnotesize
			\item Note: The asymptotic properties of $\frac{\phi}{\tau_0 \sqrt{n} h} T_n$ are established in \cite{wangphillips2016}. That is, under suitable conditions, $\frac{\phi}{\tau_0 \sqrt{n} h} T_n\to L_W(1,0)$  as $n\to\infty$ with $L_W(1,0)$ denoting the sojourning time of a standard Brownian motion around zero during the time interval $[0,1]$. The $p$-values are computed using the cumulative distribution function of $L_W(1,0)$, see (2.11) in \cite{donggaotjostheimyin2017}.
		\end{tablenotes}
     \end{threeparttable}
\end{table}
 
 The preceding analysis suggests that the global flexible trend captures omitted determinants of CO\textsubscript{2} emission levels that have been decreasing over time. In their analysis, \cite{grossmankrueger1995} already included a global deterministic trend in their model because they ``\emph{did not want to attribute to national income growth any improvements in local environmental quality that might actually be due to global advances in the technology for environmental preservation or to an increased global awareness of the severity of environmental problems}''. Indeed, since reliable data on green technology adaptation\footnote{\cite{nordhaus2014} discusses the link between climate change and technological changes. As another example, Figure 2 in \cite{gillinghamstock2018} reports a steady decline in the price of solar panels and a steady growth in solar panel sales. Cheaper solar energy can substitute fossil energy thereby reducing pollution.} and global awareness is scarcely available (certainly for time horizons allowing for a cointegration analysis), these variables are likely missing and thus requiring a proxy. Similar remarks are applicable to variables such as pollution control policies\footnote{A policy variable, `Repudiation of Contracts by Government', was included by \cite{panayotou1997} to proxy the quality of environmental policies and institutions.}. In reduced-form models, an EKC finding is typically explained by national income being the proxy for these omitted variables. That is, at higher levels of national income, countries have access to cleaner technologies and its citizens show greater appreciation for the environment and pollution legislation. The current analysis contradicts these income effects and points towards improvements being captured by a global trend.
 
Our final model specification, Model \eqref{eq:LinearEKCwithGlobalTrend}, is linear in log GDP per capita. Moreover, for a given year, the coefficient estimates suggest that increasing national income by 1\% implies an \emph{increase} in carbon-dioxide emissions of about 1\%--2.5\% (depending on the country). This result seems plausible for non-carbon-neutral economies. However, CO\textsubscript{2} emissions in Austria, Belgium, Finland, the Netherlands, Switzerland, and the UK are jointly reducing at the end of the sample. What cause these global emission reductions? \cite{mazzantimusolesi2013} suggest that conglomerates of countries anticipate and respond to international climate agreements such as the Rio convention (1992) and Kyoto protocol (adopted in 1997; operational since 2005). Interestingly, the latter agreement contains emission reduction targets to be reached in 2020 and such ``working-towards-a-common-reduction-deadline'' does point towards a time effect.\footnote{According to the Doha amendment of the Kyoto protocol, the reduction commitments were 92\% (over the period 2008--2012) and 80\% (over the period 2013--2020) of 1990 emission levels for Austria, Belgium, Finland, the Netherlands and the UK. For Switzerland, the reduction target was also 92\% (over the period 2008--2012) but 84.2\% (over the period 2013--2020). (source: https://unfccc.int/files/kyoto\_protocol/application/pdf/kp\_doha\_amendment\_english.pdf).} Alternatively, given our sample of European countries, EU coordinated emission reduction efforts like the EU Emissions Trading System (ETS) can be a driving force behind these common emission decreases.
 
\section{Summary and Conclusion} \label{sec:conclusion}
In this paper we have extended the Seemingly Unrelated Cointegrating Polynomial Regression (SUCPR) model of \cite{wagnergrabarczykhong2019} with a global power law deterministic trend. This multivariate specification allows us to disentangle national income and unobserved time effects. The importance of this separation is well-documented (see, e.g. \cite{volleberghmelenbergdijkgraaf2009} and \cite{mazzantimusolesi2013}) but a methodological approach accounting for nonstationary regressors is currently unavailable. We fill this gap.

The unknown powers of the global trend are estimated jointly with the parameters in the cointegrating relation. The limiting distribution is nonstandard due to a non-diagonal scaling matrix and second order bias terms. We therefore suggest a simulation-based approach to conduct inference. The usual subsampling KPSS-type for stationarity of the innovations of the nonlinear cointegrating relation remains valid. Our results are supported by Monte Carlo simulation. The empirical application on the Environmental Kuznets Curve shows that a flexible trend can fully capture the nonlinearity in the data thereby making higher order powers of log per capita GDP redundant. Our resulting model is linear in log per capita GDP and suggests an alternative explanation in which time effects -- e.g. technological progress, increasing environmental awareness, tightening pollution policy -- rather than economic growth cause the recent slowdown in CO\textsubscript{2} emissions. Contrary to the opening quote in the introduction, our analysis suggests that CO\textsubscript{2} emissions increase with economic growth. Carbon dioxide emissions do decrease due to time effects.

\section{Acknowledgements}
Earlier versions of this paper have been presented at the 2019 CFE meeting in London, the Econometrics Internal Seminar (EIS) at Erasmus University Rotterdam, and the Brownbag Seminar at Vrije Universiteit Amsterdam. We gratefully acknowledge the comments by the participants. We extend our thanks to Eric Beutner, Dick van Dijk, Stephan Smeekes, and Xiaohu Wang for their valuable feedback. All remaining errors are our own.
\end{spacing}

\begin{spacing}{1.52}
\bibliographystyle{chicagoa}
\bibliography{LinReuvers_pwrlaw_Dec21.bbl}
\end{spacing}

\clearpage
\begin{appendices}
 \section{Proofs for Main Theorems}
 
 \textbf{Proof of Theorem \ref{thm:limiting_dist}} In view of the identity $\norm{\va + \vb}^2=\norm{\va}^2+\norm{\vb}^2+2\va\tran\vb$, we also have
 \begin{equation*}
Q_T(\vgamma)=\frac{1}{2}\sum_{t=1}^{T}\left\|\vy_t-\mZ_t'\vbeta\right\|^2-\tau_g\sum_{t=1}^{T}t^{\theta}\left(\vy_t-\mZ_t'\vbeta\right)'\vones_N+\frac{1}{2}N\tau_g^2\sum_{t=1}^{T}t^{2\theta}.
\end{equation*}
The proof proceeds along the lines of Lemma 1 of \cite{andrewssun2004} and Theorem 3.1 of \cite{chanwang2015}. The proofs separate into two parts. The first part  uses a Taylor expansion of $Q_T(\vgamma)$ around $Q_T(\vgamma_0)$ to recover a quadratic approximation for $Q_T(\vgamma)$ on the set $\mGamma_{\delta,k_T}\subseteq \mGamma$. In the second part, we obtain the limiting distribution from this quadratic approximation.

\medskip
\noindent \emph{Part 1}: Let $\{k_T,T\geq 1\}$ denote a deterministic sequence such that $k_T\to\infty$ as $T\to\infty$.  Define $\mGamma_{\delta,k_T}=\left\{\vgamma\in\mGamma:~ \left\|\mG_{\vgamma_0,T}\left(\vgamma-\vgamma_0\right)\right\|\leq k_T,~\left\|\vgamma-\vgamma_0\right\|\leq \delta\right\}$ and select a $\delta>0$ such that $Q_T(\cdot)$ is twice differentiable on $\{\vgamma\in\SR^{p+2}:\,\left\|\vgamma-\vgamma_0\right\|\leq \delta\}\subset\mGamma$. For any $\vgamma\in\mGamma_{\delta,k_T}$, the Taylor expansion of $Q_T(\vgamma)$ around $\vgamma_0$ reads
\begin{equation}
\begin{aligned}
Q_T(\vgamma) &-Q_T(\vgamma_0)
 = \dot{Q}_T'(\vgamma_0)(\vgamma-\vgamma_0)+\frac{1}{2}(\vgamma-\vgamma_0)\tran\ddot{Q}_T(\bar{\vgamma})(\vgamma-\vgamma_0) \\
 &= \dot{Q}_T'(\vgamma_0)(\vgamma-\vgamma_0)+\frac{1}{2}(\vgamma-\vgamma_0)\tran\left[ \ddot{Q}_T(\bar{\vgamma}) - \ddot{Q}_T(\vgamma_0) - \ddot{Q}_{T,2}(\vgamma_0) + \ddot{Q}_{T,1}(\vgamma_0) \right](\vgamma-\vgamma_0),
\end{aligned}
\label{eq:taylorexpansionQT}
\end{equation}
where $\bar{\vgamma}$ is a point on the line segment connecting $\vgamma$ and $\vgamma_0$, and the various derivatives of $Q_T$ are
\begin{equation*}
\begin{aligned}
\dot{Q}_T(\vgamma_0)&=-\sum_{t=1}^{T}\left[
\begin{smallmatrix}
\vones_N'\tau_{g0}\,t^{\theta_0}{\ln t}\\
\vones_N'\,t^{\theta_0}\\
\mZ_t
\end{smallmatrix}
\right]\vu_t,\\
\ddot{Q}_T(\vgamma)&=\sum_{t=1}^{T}\left[
\begin{smallmatrix}
-\tau_gt^{\theta}(\ln{t})^2\left(\vy_t-\mZ_t'\vbeta\right)'\vones_N+2N\tau_g^2t^{2\theta}(\ln{t})^2\; & -t^{\theta}\ln{t}\left(\vy_t-\mZ_t'\vbeta\right)'\vones_N+2N\tau_gt^{2\theta}\ln{t}\; & \tau_gt^{\theta}\ln{t}\,\vones_N'\mZ_t' \\
-t^{\theta}\ln{t}\left(\vy_t-\mZ_t'\vbeta\right)'\vones_N+2N\tau_gt^{2\theta}\ln{t} & Nt^{2\theta} & t^{\theta}\vones_N'\mZ_t'\\	
\tau_gt^{\theta}\ln{t}\,\mZ_t\vones_N & t^{\theta}\mZ_t\vones_N & \mZ_t\mZ_t'
\end{smallmatrix}
\right], \\
\ddot{Q}_T(\vgamma_0)&=\sum_{t=1}^{T}\left[
\begin{smallmatrix}
N\tau_{g0}^2t^{2\theta_0}(\ln{t})^2-\tau_{g0}t^{\theta_0}(\ln{t})^2\vu_t'\vones_N\; & N\tau_{g0}t^{2\theta_0}\ln{t}-t^{\theta_0}\ln{t}\,\vu_t'\vones_N\; & \tau_{g0}t^{\theta_0}\ln{t}\vones_N'\mZ_t'\\
N\tau_{g0}t^{2\theta_0}\ln{t}-t^{\theta_0}\ln{t}\,\vu_t'\vones_N & Nt^{2\theta_0} & t^{\theta_0}\vones_N'\mZ_t'\\
\tau_{g0}t^{\theta_0}\ln{t}\,\mZ_t\vones_N & t^{\theta_0}\mZ_t\vones_N & \mZ_t\mZ_t'
\end{smallmatrix}\right]\\
&=\sum_{t=1}^{T}
\begin{bmatrix}
N\tau_{g0}^2t^{2\theta_0}(\ln{t})^2 & N\tau_{g0}t^{2\theta_0}\ln{t} & \tau_{g0}t^{\theta_0}\ln{t}\,\vones_N'\mZ_t'\\
N\tau_{g0}t^{2\theta_0}\ln{t} & N t^{2\theta_0} & t^{\theta_0}\vones_N'\mZ_t'\\ 
\tau_{g0} t^{\theta_0}\ln{t}\,\mZ_t\vones_N & t^{\theta_0}\mZ_t\vones_N & \mZ_t\mZ_t'
\end{bmatrix}
-\sum_{t=1}^{T}t^{\theta_0}\ln{t}
\begin{bmatrix}
\tau_{g0}\ln{t} & 1 &  \\
1 & 0 &  \\
 &  & \mZeros
\end{bmatrix}\vu_t'\vones_N\\
&=:\ddot{Q}_{T,1}(\vgamma_0)-\ddot{Q}_{T,2}(\vgamma_0).
\end{aligned}
\end{equation*}
Defining $
R_T(\bar{\vgamma},\vgamma_0)=\frac{1}{2}(\vgamma-\vgamma_0)'\left[\ddot{Q}_T(\bar{\vgamma})-\ddot{Q}_T(\vgamma_0)-\ddot{Q}_{T,2}(\vgamma_0)\right](\vgamma-\vgamma_0)
$, $\mA_T:=\mG_{\vgamma_0,T}^{\prime-1}\ddot{Q}_{T,1}^{}(\vgamma_0)\mG_{\vgamma_0,T}^{-1}$, and $\vb_T:=-\mG_{\vgamma_0,T}^{\prime-1}\dot{Q}_T^{}(\vgamma_0)$, we finally arrive at
\begin{equation}
\begin{aligned}
Q_T&(\vgamma) -Q_T(\vgamma_0) =-\vb_T'\big[\mG_{\vgamma_0,T}(\vgamma-\vgamma_0)\big]+\frac{1}{2}\big[\mG_{\vgamma_0,T}(\vgamma-\vgamma_0)\big]'\mA_T\big[\mG_{\vgamma_0,T}(\vgamma-\vgamma_0)\big]+R_T(\bar{\vgamma},\vgamma_0) \\
&=\frac{1}{2}\left[\mG_{\vgamma_0,T}(\vgamma-\vgamma_0)-\mA_T^{-1}\vb_T^{}\right]'\mA_T^{}\left[\mG_{\vgamma_0,T}(\vgamma-\vgamma_0)-\mA_T^{-1}\vb_T^{}\right]-\frac{1}{2}\vb_T'\mA_T^{-1}\vb_T^{}+R_T(\bar{\vgamma},\vgamma_0).
\end{aligned}
\label{eq:taylorexpansionQTfull}
\end{equation}
\emph{Part 2}: For any $\varepsilon>0$, let $\mGamma_{T}(\varepsilon)=\big\{\vgamma\in\mGamma:~\big\|\mG_{\vgamma_0,T}(\vgamma-\vgamma_0)-\mA_T^{-1}\vb_T^{}\big\|\leq \varepsilon\big\}$. We shall show that the minimum of $Q_T(\cdot)$ over $\vgamma\in \mGamma_{T}(\varepsilon)$ is attained in the interior of $\mGamma_{T}(\varepsilon)$. The next two statements are proven later:
\begin{enumerate}[(a)]
	\item $\sup_{\vgamma\in\mGamma_{\delta,k_T}}\left\|\mG_{\vgamma_0,T}^{\prime-1}\big[\ddot{Q}_T(\vgamma)-\ddot{Q}_T(\vgamma_0)\big]\mG_{\vgamma_0,T}^{-1}\right\|=o_p(1)$;
	\item $\mA_T^{-1}\vb_T^{}=O_p(1)$, where $\mA_T\overset{T\rightarrow\infty}{\wto} \mA_{\infty}$ with $\Prob\left(\mA_{\infty}>0\right)=1$.
\end{enumerate}
Given claim (b), for any $\varepsilon>0$, we have $\Prob\left(\mGamma_{T}(\varepsilon)\subset\mGamma_{\delta,k_T}\right)\rightarrow 1$ as $T\rightarrow\infty$ because $\|\mG_{\vgamma_0,T}^{-1}\|\to 0$. Define $\vgamma_T^*=\vgamma_0+\mG_{\vgamma_0,T}^{-1}\mA_T^{-1}\vb_T^{}$. Clearly, $\vgamma_T^*$ is an interior point of $\mGamma_{T}(\varepsilon)$ as long as $\varepsilon>0$. Subsequently select a $\vgamma_{\varepsilon}\in \partial\mGamma_{T}(\varepsilon)$, i.e. $\vgamma_{\varepsilon}$ is a boundary point of $\mGamma_{T}(\varepsilon)$. From \eqref{eq:taylorexpansionQTfull}, we have
\begin{equation*}
\begin{aligned}
Q_T(\vgamma_{\varepsilon})-Q_T(\vgamma_T^*)
&=\left[Q_T(\vgamma_{\varepsilon})-Q_T(\vgamma_0)\right]-\left[Q_T(\vgamma_T^*)-Q_T(\vgamma_0)\right]\\
&=\left[\frac{1}{2}\vmu_T'\mA_T^{}\vmu_T^{}-\frac{1}{2}\vb_T'\mA_T^{-1}\vb_T^{}+R_T(\bar{\vgamma}_{\varepsilon},\vgamma_0)\right]-\left[-\frac{1}{2}\vb_T'\mA_T^{-1}\vb_T^{}+R_T(\bar{\vgamma}_{T}^*,\vgamma_0)\right]\\
&=\frac{1}{2}\vmu_T'\mA_T^{}\vmu_T^{}+R_T(\bar{\vgamma}_{\varepsilon},\vgamma_0)-R_T(\bar{\vgamma}_{T}^*,\vgamma_0)=\frac{1}{2}\vmu_T'\mA_T^{}\vmu_T^{}+o_p(1),
\end{aligned}
\end{equation*}
where $\vmu_T$ a random vector with $\|\vmu_T\|=\varepsilon$, and $\bar{\vgamma}_{\varepsilon}$ is a point on the line segment connecting $\vgamma_{\varepsilon}$ and $\vgamma_0$. The point $\bar{\vgamma}_{T}^*$ is defined similarly. Moreover, the final equality follows from
\begin{equation*}
R_T(\bar{\vgamma},\vgamma_0)\leq \frac{1}{2}\left\|\mG_{\vgamma_0,T}\left(\vgamma-\vgamma_0\right)\right\|^2\left\{\sup_{\vgamma\in\mGamma_{\delta,k_T}}\left\|\mG_{\vgamma_0,T}^{\prime-1}\big[\ddot{Q}_T(\vgamma)-\ddot{Q}_T(\vgamma_0)\big]\mG_{\vgamma_0,T}^{-1}\right\|+\left\|\mG_{\vgamma_0,T}^{\prime-1}\ddot{Q}_{T,2}(\vgamma_0)\mG_{\vgamma_0,T}^{-1}\right\|\right\},
\end{equation*}
claim (a), and $\big\|\mG_{\vgamma_0,T}^{\prime-1}\ddot{Q}_{T,2}(\vgamma_0)\mG_{\vgamma_0,T}^{-1}\big\|\leq o_p(1) \big| T^{-1}\sum_{t=1}^{T}\left(\frac{t}{T}\right)^{\theta_{0}+1/2}\big|=o_p(1)$. The second part of claim (b), $\p\left(\mA_{\infty}>0\right)=1$, implies that $\p\big(\frac{1}{2}\vmu_T'\mA_T^{}\vmu_T^{}>0\big)\rightarrow 1$ as $T\rightarrow\infty$. Therefore, $\Prob\big(Q_T(\vgamma_{\varepsilon})>Q_T(\vgamma_T^*)\big)\rightarrow 1$ for any boundary point $\vgamma_{\varepsilon}$ and the minimum of $Q_T$ must be attained at an interior point of $\mGamma_{T}(\varepsilon)$, say $\widehat{\vgamma}_T(\varepsilon)$. As in \cite{andrewssun2004}, we can now construct a sequence $\{\widehat{\vgamma}_T\}$ such that $\widehat{\vgamma}_T^{}=\widehat{\vgamma}_T^{}(J_T^{-1})\in\mGamma_{T}(J_T^{-1})$, where $J_T\rightarrow\infty$, satisfying the first-order conditions $\p\Big(\dot{Q}_T\big(\widehat{\vgamma}_T\big)=0\Big)\rightarrow 1$ as $T\rightarrow\infty$. As a result, we obtain
\begin{equation}\label{eq:asymp_dist_classic}
\mG_{\vgamma_0,T}(\widehat{\vgamma}_T-\vgamma_0)=\mA_T^{-1}\vb_T^{}+o_p(1). 
\end{equation}

It remains to verify claims (a) and (b). We consider the sequence $\mGamma_{\delta,k_T}$ for $k_T=\tilde{\kappa}\ln{T}$ and $\tilde{\kappa}>0$. There exists $T^*>0$ such that whenever $T>T^*$,
\begin{align*} 
\mGamma_{\delta,k_T}&\subset\left\{\text{\footnotesize 
$\vgamma\in\mGamma:~ \left\|\mG_{\vgamma_0,T}\left(\vgamma-\vgamma_0\right)\right\|\leq \tilde{\kappa} \ln{T}$
}\right\}\\
&\subset\left\{\text{\footnotesize 
$\vgamma\in\mGamma:\, T^{\theta_{0}+1/2}\left|\theta-\theta_0\right|\leq \tilde{\kappa}\ln{T},\,T^{\theta_{0}+1/2}\big|\big(\tau_g-\tau_{g0}\big)+\tau_{g0}\left(\theta-\theta_0\right)\ln{T}\big|\leq \tilde{\kappa}\ln{T},\,T^{1/2}\left\|\mD_{Z,T}\big(\vbeta-\vbeta_0\big)\right\| \leq \tilde{\kappa}\ln{T}
$}\right\}\\
&\subset \calN_{\kappa,T}(\vgamma_0),
\end{align*}
where $\calN_{\kappa,T}(\vgamma_0)$ is given in \eqref{eq:neighborhood_gamma0}, and $\kappa=C\tilde{\kappa}$ with some constant $C>0$. Claim (a) thus holds if $\sup_{\vgamma\in\calN_{\kappa,T}(\vgamma_0)}\left\|\mG_{\vgamma_0,T}^{\prime-1}\big[\ddot{Q}_T(\vgamma)-\ddot{Q}_T(\vgamma_0)\big]\mG_{\vgamma_0,T}^{-1}\right\|=o_p(1)$. Since $N$ is fixed, we can bound $\mG_{\vgamma_0,T}^{\prime-1}\big[\ddot{Q}_T(\vgamma)-\ddot{Q}_T(\vgamma_0)\big]\mG_{\vgamma_0,T}^{-1}$ element-wise. Using the identity $\left(\vy_t-\mZ_t'\vbeta\right)'\vones_N=N\tau_{g0}t^{\theta_0}-\big(\vbeta-\vbeta_0\big)'\mZ_t\vones_N+\vu_t\tran\vones_N$ and Lemma \ref{lem:supremums}, it is easily shown that the supremum of each element is indeed $o_p(1)$. Claim (b) follows directly from the weak convergence results in Lemma \ref{lem:Brownianconv}. That is, $\mA_T \wto \myint \mJ(r;\vgamma_0)\mJ(r;\vgamma_0)' ~dr$ and $\vb_T \wto \myint \mJ(r;\vgamma_0)~d\brown_{u}(r)+\vbias_{vu}$ as $T\rightarrow\infty$. Theorem \ref{thm:limiting_dist} now follows from \eqref{eq:asymp_dist_classic}. \qed 

\bigskip
\noindent
\textbf{Proof of Theorem \ref{thm:consistentLRVs}} The proof is to a large extent an application of theorem 2 in \cite{jansson2002}. We provide the details in the Supplement. \qed 

\bigskip
\noindent
\textbf{Proof of Theorem \ref{thm:subsampling}}
We abbreviate  $M=M_T$. By simple rearrangements, we obtain
\begin{equation}
\begin{aligned}
	&\left\{\text{\footnotesize $
		\mG_{\widehat{\vgamma}_T,M}^{\prime-1}\left[\sum_{m=1}^{M}\widehat{\mJ}\big(m;\widehat{\vgamma}_{T}\big)\widehat{\mJ}\big(m;\widehat{\vgamma}_{T}\big)'\right]\mG_{\widehat{\vgamma}_T,M}^{-1}
		$}
	\right\}^{-1}
	\left\{\text{\footnotesize $
		\mG_{\widehat{\vgamma}_T,M}^{\prime-1}\left[\sum_{m=1}^{M}\widehat{\mJ}\big(m;\widehat{\vgamma}_{T}\big)\,\widehat{\vmu}_m\right]+\serialbiasSimulated
		$}
	\right\}\\
	&\,\,=\mS_M^{-1}
	\left\{\text{\footnotesize $
		\mG_{\vgamma_0,M}^{\prime-1}\left[\sum_{m=1}^{M}\widehat{\mJ}\big(m;\widehat{\vgamma}_{T}\big)\widehat{\mJ}\big(m;\widehat{\vgamma}_{T}\big)'\right]\mG_{\vgamma_0,M}^{-1}
		$}
	\right\}^{-1}
	\left\{\text{\footnotesize $
		\mG_{\vgamma_0,M}^{\prime-1}\left[\sum_{m=1}^{M}\widehat{\mJ}\big(m;\widehat{\vgamma}_{T}\big)\,\widehat{\vmu}_m\right]+\mS_M^{\prime-1}\serialbiasSimulated
		$}
	\right\}\\
	&\,\,=\mS_M^{-1}
	\left\{\text{\footnotesize $
		\mG_{\vgamma_0,M}^{\prime-1}\left[\sum_{m=1}^{M}\widehat{\mJ}\big(m;\vgamma_0\big)\widehat{\mJ}\big(m;\vgamma_0\big)'\right]\mG_{\vgamma_0,M}^{-1}+\mR_{1,M}
		$}
	\right\}^{-1}
	\left\{\text{\footnotesize $
		\mG_{\vgamma_0,M}^{\prime-1}\left[\sum_{m=1}^{M}\widehat{\mJ}\big(m;\vgamma_0\big)\,\widehat{\vmu}_m\right]+\mS_M^{\prime-1}\serialbiasSimulated+\mR_{2,M}
		$}
	\right\},
\end{aligned}
\label{eq:simulatedApproachDecomp}
\end{equation}
while having defined $\widehat{\mJ}\big(m;\vgamma_0\big)=\Big[\tau_{g0}\,m^{\theta_{0}}\ln{m}\,\vones_N,m^{\theta_0}\vones_N,\widehat{\calZ}_m\tran \Big]\tran$ and the quantities
\begin{enumerate}[(a)]
 \item $\mS_M:=\mG_{\vgamma_0,M}^{}\mG_{\widehat{\vgamma}_T,M}^{-1}
	= \begin{bmatrix}
		M^{\theta_0-\widehat{\theta}_T}  & 0  & 0\\
		(\tau_{g0}-\widehat{\tau}_{g,T})M^{\theta_0-\widehat{\theta}_T}{\ln M} & M^{\theta_0-\widehat{\theta}_T}  & \vzeros_{1\times p} \\
		\vzeros_{p\times 1}              & \vzeros_{p\times 1} & \mI_p
	\end{bmatrix}$,
 \item $\mR_{1,M}=\mG_{\vgamma_0,M}^{\prime-1}\sum_{m=1}^{M}\left[\widehat{\mJ}\big(m;\widehat{\vgamma}_{T}\big)\widehat{\mJ}\big(m;\widehat{\vgamma}_{T}\big)'-\widehat{\mJ}\big(m;\vgamma_0\big)\widehat{\mJ}\big(m;\vgamma_0\big)'\right]\mG_{\vgamma_0,M}^{-1}$,
 \item $\mR_{2,M}=\mG_{\vgamma_0,M}^{\prime-1}\sum_{m=1}^{M}\left[\widehat{\mJ}\big(m;\widehat{\vgamma}_{T}\big)-\widehat{\mJ}\big(m;\vgamma_0\big)\right]\widehat{\vmu}_m$.
\end{enumerate}
\textbf{(a)} We have $\mS_M\pto \mI_{p+2}$. To see this, note that  $M^{|\widehat{\theta}_T-\theta_0|}= \exp\left((\ln M)T^{-(\theta_0+1/2)}\big|T^{\theta_{0}+1/2}\big(~\widehat{\theta}_T-\theta_{0}\big)\big|\right)\pto 1$ and $\big|\,\widehat{\tau}_{g,T}-\tau_{g0}\big|\ln M = \big|\frac{T^{\theta_{0}+1/2}}{\ln{T}}\big(\widehat{\tau}_{g,T}-\tau_{g0}\big)\big| \frac{\ln T \ln M}{T^{\theta_{0}+1/2}}=o_p(1)$. \textbf{(b)} Looking at the elements of $\mR_{1,M}$, we conclude that $\mR_{1,M}=o_p^*(1)$ if results similar to those in Lemma \ref{lem:supremums}(\emph{i})-(\emph{iii}) continue to hold. Two conditions need to be verified:
\begin{enumerate}
 \item [(b1)] $\p\left(\widehat{\vgamma}_T\in \calN_{\kappa,M}(\vgamma_0)\right) \to 1$ with $\calN_{\kappa,M}(\vgamma_0)$ similarly defined to \eqref{eq:neighborhood_gamma0},
 \item [(b2)] the stochastic order of terms remains the same when replacing $\mZ_m$ by $\widehat{\calZ}_m$, conditional on the sample $(\vx_1,\vy_1),\ldots,(\vx_T,\vy_T)$.
\end{enumerate}
 For condition (b1), using set inclusions similar to those below \eqref{eq:asymp_dist_classic}, it suffices to show $\widehat{\vgamma}_T\in\big\{\vgamma\in\mGamma:~ \|\mG_{\vgamma_0,M}\left(\vgamma-\vgamma_0\right)\|\leq \tilde{\kappa} \ln{M}\big\}$ with large probability for some $\tilde{\kappa}>0$. This is trivial because
\begin{equation*}
	\mG_{\vgamma_0,M}^{}\mG_{\vgamma_0,T}^{-1}=\sqrt{\frac{M}{T}}\begin{bmatrix}
	\left(\frac{M}{T}\right)^{\theta_0}  &                     & \\
	\tau_{g0}\left(\frac{M}{T}\right)^{\theta_0}{\ln \left(\frac{M}{T}\right)} & \left(\frac{M}{T}\right)^{\theta_0} & \\
	\vzeros_{p\times 1}                 & \vzeros_{p\times 1} & \mD_{Z,M}^{}\mD_{Z,T}^{-1}
	\end{bmatrix}=O(1),
\end{equation*}
and thus $\left\|\mG_{\vgamma_0,M}\big(\widehat{\vgamma}_T-\vgamma_0\big)\right\|\leq \left\|\mG_{\vgamma_0,M}^{}\mG_{\vgamma_0,T}^{-1}\right\|\left\|\mG_{\vgamma_0,T}\big(\widehat{\vgamma}_T-\vgamma_0\big)\right\|=O_p(1)$. Continuing with (b2), by independence between $\{\ve_m\}$ and $\big\{\widehat{\mOmega}_T,\widehat{\mDelta}_{vu}^-\big\}$, the consistency of $\widehat{\mOmega}_T$, and a FCLT for an $i.i.d.$ sequence, we may have
\begin{equation}
\frac{1}{\sqrt{M}}\sum_{m=1}^{[rM]}\begin{bmatrix} \widehat{\vmu}_m\\
	\widehat{\vupsilon}_m
	\end{bmatrix}=\widehat{\mOmega}_T^{1/2}\frac{1}{\sqrt{M}}\sum_{m=1}^{[rM]}\ve_n \dtostar \brown(r),
\label{eq:fclt_simerrors}
\end{equation}
in probability, c.f. Section 2 of \cite{park2002}. Since $\big\{\widehat{\calZ}_m\big\}$ contains partial sum processes of $\{\widehat{\vupsilon}_m\}$, its integer powers and deterministic terms, (b2) is satisfied. \textbf{(c)} We have $\big\|\mR_{2,M}\big\|\leq C\sum_{j=1}^{4}\big|\mR_{2,M,j}\big|$ where
\begin{equation}
\begin{aligned}
	\big|\mR_{2,M,1}\big|
	&=M^{-1/2}\left|\sum_{m=1}^{M}M^{-\theta_{0}}\left(m^{\widehat{\theta}_T}-m^{\theta_{0}}\right)\widehat{\vmu}_m'\vones_N^{}\right|\\
	&=O\left(\ln M\right)\left|\,\widehat{\theta}_T-\theta_{0}\right|\,\left\|M^{-1/2}\sum_{m=1}^{M}\left(\frac{m}{M}\right)^{\theta_{0}}\widehat{\vmu}_m\right\|=O_p^*\left(\frac{\ln M}{T^{\theta_{0}+1/2}}\right)=o_p^*(1),\\
	\big|\mR_{2,M,2}\big|&=M^{-1/2}\left|\sum_{m=1}^{M}M^{-\theta_{0}}\left(m^{\widehat{\theta}_T}-m^{\theta_{0}}\right)\ln\frac{m}{M}\widehat{\vmu}_m'\vones_N^{}\right|=T^{-(\theta_{0}+1/2)}O_p^*(1)=o_p^*(1),
\end{aligned}
\end{equation}
$\big|\mR_{2,M,3}\big|=M^{-1/2}\left|\,\widehat{\tau}_{g,T}-\tau_{g0}\right|\,\left|\sum_{m=1}^{M}M^{-\theta_{0}}\left(m^{\widehat{\theta}_T}-m^{\theta_{0}}\right)\ln{m}\,\widehat{\vmu}_m'\vones_N^{}\right|=O_p^*\left(\frac{\ln T (\ln M)^2}{T^{2\theta_{0}+1}}\right)=o_p^*(1)$, and $\big|\mR_{2,M,4}\big|=M^{-1/2}\left|\,\widehat{\tau}_{g,T}-\tau_{g0}\right|\,\left|\sum_{m=1}^{M}\left(\frac{m}{M}\right)^{\theta_{0}}\ln{m}\,\widehat{\vmu}_m'\vones_N^{}\right|=O_p^*\left(\frac{\ln T \ln M}{T^{\theta_{0}+1/2}}\right)=o_p^*(1)$. All these stochastic orders are a consequence of \eqref{eq:fclt_simerrors} and a straightforward modification of Lemma \ref{lem:Brownianconv}. Overall, we have $\mR_{2,M}=o_p^*(1)$.

It remains to look at the leading terms in \eqref{eq:simulatedApproachDecomp}. The elements of $\widehat{\mOmega}$ and $\widehat{\mDelta}$ are always multiplicative in the construction. From $\mS_M\pto \mI_{p+2}$, \eqref{eq:fclt_simerrors}, and Lemma \ref{lem:Brownianconv}, we have
	\begin{equation}\label{eq:simdist_1}
	\mG_{\vgamma_0,M}^{\prime-1}\left[\sum_{m=1}^{M}\widehat{\mJ}\big(m;\vgamma_0\big)\,\widehat{\vmu}_m\right]+\mS_M^{\prime-1}\serialbiasSimulated\dtostar\int_{0}^{1} \mJ(r;\vgamma_0)~d\brown_{u}(r)+
	\begin{bmatrix}
	\vzeros_{2\times 1}\\
	\mOmega_{v_1 u_1}\vb_1\\
	\vdots\\
	\mOmega_{v_N u_N}\vb_N
	\end{bmatrix}+\begin{bmatrix}
	\vzeros_{2\times 1}\\
	\mDelta^{-}_{v_1 u_1}\vb_1\\
	\vdots\\
	\mDelta^{-}_{v_N u_N}\vb_N
	\end{bmatrix},
	\end{equation}
	in probability. The last two terms in \eqref{eq:simdist_1} equal $\vbias_{vu}$, because $\mOmega+\mDelta^{-}=(\mDelta+\mDelta'-\mSigma)+(\mSigma-\mDelta')=\mDelta$. Similarly, 
	\begin{equation}
	\mG_{\vgamma_0,M}^{\prime-1}\left[\sum_{m=1}^{M}\widehat{\mJ}\big(m;\vgamma_0\big)\widehat{\mJ}\big(m;\vgamma_0\big)'\right]\mG_{\vgamma_0,M}^{-1}\dtostar
	\int_{0}^{1} \mJ(r;\vgamma_0)\mJ(r;\vgamma_0)' ~dr,\qquad\text{in probability.}
	\label{eq:simdist_2}
	\end{equation}
The theorem follows after combining the limiting distribution of these leading terms  through \eqref{eq:simulatedApproachDecomp}. \qed 
 
\bigskip
\noindent
\textbf{Proof of Theorem \ref{thm:KPSS_plus}}
Without loss of generality, we set $\ell=1$. Subsequently, note that
\begin{equation}
 q_T^{-1/2}\sum_{t=1}^{[rq_T]}\widehat{\vu}_t^+=q_T^{-1/2}\sum_{t=1}^{[rq_T]}\left(\vu_t-\mOmega_{uv}^{}\mOmega_{vv}^{-1}\vv_t\right)-\sum_{j=1}^{3}\widetilde{\mR}_{q_T,j}, \qquad r\in[0,1],
\label{eq:KPSS_decomp}
\end{equation}
where the stochastic order of the remainder terms $\widetilde{\mR}_{q_T,1}$--$\widetilde{\mR}_{q_T,3}$ follows from Lemma \ref{lem:supremums} and Theorem \ref{thm:limiting_dist}:
\begin{enumerate}[(a)]
 \item $\widetilde{\mR}_{q_T,1}=\left(\widehat{\mOmega}_{uv}^{} \widehat{\mOmega}_{vv}^{-1}-\mOmega_{uv}^{}\mOmega_{vv}^{-1}\right)q_T^{-1/2}\sum_{t=1}^{[rq_T]}\vv_t=o_p(1)$,
 \item $\widetilde{\mR}_{q_T,2}=q_T^{-1/2}\sum_{t=1}^{[rq_T]}\left(\,\widehat{\tau}_{g,T}\,t^{\widehat{\theta}_T}-\tau_{g0}\,t^{\theta_{0}}\right)\vones_N= O_p\left(\ln{T}\left(\frac{q_T}{T}\right)^{\theta_{0}+1/2}\right)$,
 \item $\widetilde{\mR}_{q_T,3}=q_T^{-1/2}\sum_{t=1}^{[rq_T]}\mZ_t'\left(\widehat{\vbeta}_T-\vbeta_0\right)=O_p\left(\left(\frac{q_T}{T}\right)^{1/2}\right)q_T^{-1}\sum_{t=1}^{[rq_T]}\left(\mD_{Z,q_T}^{-1}\mZ_t^{}\right)'\mD_{Z,q_T}^{}\mD_{Z,T}^{-1}=O_p\left(\left(\frac{q_T}{T}\right)^{1/2}\right)$.
\end{enumerate}
The theorem follows from \eqref{eq:KPSS_decomp}, a functional central limit theorem for linear processes, the continuous mapping theorem, and the rate requirements.  \qed 
\end{appendices}

\clearpage

\newcommand{\hbAppendixPrefix}{S}
\renewcommand{\thelemma}{\hbAppendixPrefix\arabic{lemma}} 
\setcounter{lemma}{0}   
\renewcommand{\thecorollary}{\hbAppendixPrefix\arabic{corollary}}  
\setcounter{corollary}{0}   
\renewcommand{\thetheorem}{\hbAppendixPrefix\arabic{theorem}}  
\setcounter{theorem}{0}   
\renewcommand{\theassumption}{\hbAppendixPrefix\arabic{assumption}}  
\setcounter{assumption}{0}
\renewcommand{\theremark}{\hbAppendixPrefix\arabic{remark}}  
\setcounter{remark}{0}
\renewcommand{\thetable}{\hbAppendixPrefix\arabic{table}}  
\setcounter{table}{0}
\renewcommand{\thefigure}{\hbAppendixPrefix\arabic{figure}}  
\setcounter{figure}{0}
\renewcommand{\thepage}{\hbAppendixPrefix\arabic{page}}  
\setcounter{page}{1}

\counterwithout{equation}{section} 
\renewcommand{\theequation}{\hbAppendixPrefix\arabic{equation}} 
\setcounter{equation}{0}

\makeatletter
\renewcommand\appendix{\par
	\setcounter{section}{0}%
	\def\@chapapp{\appendixname}%
	\def\thesection{S\arabic{section}}}
\makeatother


%
%




\begin{titlepage}
	\centering
	{\scshape \huge Supplement to ``Cointegrating Polynomial Regressions with Power Law Trends: Environmental Kuznets Curve or Omitted Time Effects?'' \par}
	
	\bigskip
	\startlist{toc}
	\printlist{toc}{}{}
\end{titlepage}

\clearpage
\appendix

\renewcommand{\arraystretch}{1.1}
\section{Simulation DGP used for Introduction} \label{sec:ParametersSimulationDGP}
The simulation DGPs of the introduction are based on the data for Austria, Belgium and Finland. Parameter values are (nonlinear) least squares estimates and innovations are mean-zero normally distributed random variables with a covariance matrix estimated from the residuals and $\diff\vx_t$. The specific parametrization for the \emph{model with global trend} is
\begin{equation}
 \begin{bmatrix}
  y_{1,t} \\
  y_{2,t} \\
  y_{3,t}
 \end{bmatrix}
 = \tau_g t^{2.21} \vones_3
 -
 \begin{bmatrix}
   8.89 \\
   4.16 \\
   16.39
 \end{bmatrix}
 +
 \begin{bmatrix}
 0.0017\\
 0.0122 \\
 0.0108
 \end{bmatrix}
 t
 +
 \begin{bmatrix}
 2.015 x_{1,t} \\
 1.477 x_{2,t} \\
 2.703 x_{3,t}
 \end{bmatrix}
 +\vu_t,
\end{equation}
where $[\vu_t\tran \; \diff \vx_t\tran]\tran \sim \rN(\vzeros,\widehat\mSigma)$ with
$$
 \widehat\mSigma
 =
 \begin{bmatrix}
  18.86	& *		& *		& *		& *		& *\\
  1.35 	& 2.02 	& *		& *		& *		& *\\
  3.68	& 2.65	& 1.88 	& *		& *		& *\\
  0.10	& 0.38	& 1.46	& 0.83 	& *		& *\\
  0.45	& 0.09 	& 0.22 	& 0.07	& 0.18	& *\\
  0.26	& 0.14	& 0.56	& 0.15 	& 0.14	& 0.24
 \end{bmatrix}
 \times 10^{-2}.
$$
The simulations investigating the influence of the redundant trend follow
$$
\begin{bmatrix}
y_{1,t} \\
y_{2,t} \\
y_{3,t}
\end{bmatrix}
=
\begin{bmatrix}
 -1.01 \\
 8.64 \\
 -5.15
\end{bmatrix}
+
\begin{bmatrix}
 -0.0111 \\
 0.0058 \\
 0.0163
\end{bmatrix}
t
+
\begin{bmatrix}
 1.103 x_{1,t} \\
 -0.001 x_{2,t} \\
 1.232 x_{3,t}
\end{bmatrix}
+
\phi_2
\begin{bmatrix}
 x_{1,t}^2 \\
 x_{2,t}^2 \\
 x_{3,t}^2
\end{bmatrix}
+ \vu_t,
$$
where $[\vu_t\tran \; \diff \vx_t\tran]\tran \sim \rN(\vzeros,\widehat\mSigma)$ with
$$
 \widehat\mSigma
 =
 \begin{bmatrix}
  18.57	& *		& *		& *		& *		& *\\
  4.03 	& 4.02 	& *		& *		& *		& *\\
  11.82	& 9.90	& 35.88 	& *		& *		& *\\
  0.56	& 0.64	& 1.79	& 0.83 	& *		& *\\
  0.47	& 0.20 	& 0.45 	& 0.07	& 0.18	& *\\
  0.43	& 0.35	& 0.91	& 0.15 	& 0.14	& 0.24
 \end{bmatrix}
 \times 10^{-2}.
$$
Compared to the empirical application (and thus also DGP2 in Section \ref{sec:simulations}), the main differences are the smaller $N$ and the omission of serial correlation in both the innovations and the increments of the integrated variables. These modifications allow us to showcase the influence of the omitted global trend while not having to worry about the effects of long-run covariance estimation on statistical size.

\section{Auxiliary Lemmas}
\begin{lemma} \ \vspace{-0.8cm}
	\begin{enumerate}[(i)]
		\item For $a_L> -1$, we have $\sup_{a\geq a_L} \left| \frac{1}{T} \sumtT \left(\frac{t}{T}\right)^a \right| \leq C$.
		\item Under Assumption \ref{assumption:innovations}, for any $a\geq a_L>-\frac{1}{2}$, and any $k\geq 0$,
		$
		\E\left( \frac{1}{\sqrt{T}}\sum_{t=1}^{T}\left( \frac{t}{T} \right)^a ({\ln t})^k u_{i,t}\right)^2\leq C({\ln T})^{2k} 
		$, $i=1,\ldots,N.$
		\item Under Assumption \ref{assumption:innovations}, for some $a_L$ and $a_U$ such that $-\frac{1}{2}<a_L<a_U<\infty$, and any $k\geq 0$,
		$
		\E \left( \sup_{a\in[a_L,a_U]} \left| \frac{1}{\sqrt{T}} \sumtT \left( \frac{t}{T} \right)^a (\ln t)^k u_{i,t} \right| \right) \leq C(\ln T)^k
		$, $i=1,\ldots,N.$
		\item If $a_L$ and $a_U$ satisfy $-1<a_L<a_U<\infty$, and if $k=0,1,2,\ldots$, then
		$$
		\sup_{a\in[a_L,a_U]} \left| \frac{1}{T} \sumtT \left( \frac{t}{T} \right)^a \left( \ln \frac{t}{T} \right)^k - \int_0^1 r^a (\ln r)^k dr \right| \leq C \frac{(\ln T)^{k+1}}{T^{1+\min(a_L,0)}}.
		$$
	\end{enumerate}
	\label{lem:boundsforfixed}
\end{lemma}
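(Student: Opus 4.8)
It is cleanest to prove the four claims in the order (i), (ii), (iv), (iii), since (ii) relies on (i) and (iii) relies on both (ii) and (iv). For (i), split the supremum at $a=0$: when $a\ge0$ every summand $(t/T)^a$ lies in $[0,1]$ so $\frac1T\sum_{t=1}^T(t/T)^a\le1$, while for $a_L\le a<0$ the map $r\mapsto r^a$ is decreasing, so the right Riemann sum underestimates the integral and $\frac1T\sum_{t=1}^T(t/T)^a\le\int_0^1 r^a\,dr=\frac1{a+1}\le\frac1{a_L+1}$; take $C=\max\{1,(a_L+1)^{-1}\}$. For (ii), write $c_t=(t/T)^a(\ln t)^k$ and expand $\E\big(T^{-1/2}\sum_t c_tu_t\big)^2=\frac1T\sum_{t,s}c_tc_s\gamma_u(t-s)$ with $\gamma_u$ the autocovariance of $u_t$; Assumption~\ref{assumption:innovations}(a) forces $\sum_{h}|\gamma_u(h)|<\infty$, and bounding $|c_tc_{t+h}|\le\tfrac12(c_t^2+c_{t+h}^2)$ gives $\E(\cdot)^2\le\big(\sum_h|\gamma_u(h)|\big)\frac1T\sum_t c_t^2$. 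Since $c_t^2\le(\ln T)^{2k}(t/T)^{2a}$ and $2a>-1$, part (i) yields $\frac1T\sum_t c_t^2\le C(\ln T)^{2k}$, which is the claim.

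For (iv), the computational core, write $f_a(r)=r^a(\ln r)^k$ and decompose
\[
\frac1T\sum_{t=1}^T f_a\!\Big(\tfrac tT\Big)-\int_0^1 f_a(r)\,dr=\Big[\tfrac1T f_a\!\Big(\tfrac1T\Big)-\int_0^{1/T}f_a(r)\,dr\Big]+\sum_{t=2}^T\Big[\tfrac1T f_a\!\Big(\tfrac tT\Big)-\int_{(t-1)/T}^{t/T}f_a(r)\,dr\Big].
\]
For the first bracket, $\tfrac1T|f_a(1/T)|=T^{-1-a}(\ln T)^k$ and, after $r=e^{-s}$, $\int_0^{1/T}|f_a(r)|\,dr=\int_{\ln T}^\infty e^{-(a+1)s}s^k\,ds\le CT^{-(a+1)}(\ln T)^k$ (using $a+1\ge a_L+1>0$); both are at most $C(\ln T)^k T^{-1-\min(a_L,0)}$ uniformly over $a\in[a_L,a_U]$. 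For the $t\ge2$ terms, since $|f_a(t/T)-f_a(r)|\le\int_r^{t/T}|f_a'|$ the $t$-th bracket is at most $\tfrac1T\int_{(t-1)/T}^{t/T}|f_a'(\xi)|\,d\xi$, so their sum is $\le\tfrac1T\int_{1/T}^1|f_a'(\xi)|\,d\xi$; with $|f_a'(\xi)|\le C\xi^{a-1}(1+|\ln\xi|)^k$ ($C$ depending only on $a_L,a_U,k$) and the elementary estimate $\xi^{a-1}=\xi^{-1}e^{a\ln\xi}\le\xi^{-1}\max\{1,T^{-a_L}\}$ on $[1/T,1]$ — splitting the range at $\xi=1/e$ so that $\int_{1/T}^{1/e}\xi^{-1}\,d\xi\le\ln T$ contributes exactly one extra power of $\ln T$ — one obtains $\tfrac1T\int_{1/T}^1|f_a'|\le C(\ln T)^{k+1}T^{-1-\min(a_L,0)}$. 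Adding the two pieces gives the claim; the finitely many $T$ with $1/T\ge1/e$ are absorbed into $C$.

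For (iii), set $S_T(a)=T^{-1/2}\sum_t(t/T)^a(\ln t)^k u_t$, which is (for every realization) smooth in $a$ with $S_T'(a)=T^{-1/2}\sum_t(t/T)^a\ln(t/T)(\ln t)^k u_t$. Then $\sup_{a\in[a_L,a_U]}|S_T(a)|\le|S_T(a_L)|+\int_{a_L}^{a_U}|S_T'(a)|\,da$, so by Tonelli $\E\sup_a|S_T(a)|\le\E|S_T(a_L)|+\int_{a_L}^{a_U}\E|S_T'(a)|\,da$. The first term is $\le C(\ln T)^k$ by (ii). For the second, the covariance bookkeeping of (ii) gives $\E S_T'(a)^2\le C(\ln T)^{2k}\,\frac1T\sum_t(t/T)^{2a}(\ln(t/T))^2$, and (iv) with exponent $2a>-1$ (here $a_L>-1/2$ is used) and log-power $2$ shows $\frac1T\sum_t(t/T)^{2a}(\ln(t/T))^2\to\int_0^1 r^{2a}(\ln r)^2\,dr=2/(2a+1)^3$ uniformly in $a$, hence is uniformly bounded. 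So $\E|S_T'(a)|\le\big(\E S_T'(a)^2\big)^{1/2}\le C(\ln T)^k$, and integrating over the fixed interval preserves the rate. The delicate point is that although $|\ln(t/T)|$ can be of order $\ln T$, it costs only an $O(1)$ factor in $L^2$ because $\int_0^1 r^{2a}(\ln r)^2\,dr<\infty$; a bound $|\ln(t/T)|\le\ln T$ would lose a factor $\ln T$ and miss the claimed rate.

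The only genuinely delicate step is (iv): handling the $r=0$ singularity of $f_a$ and $f_a'$ separately (the first mesh cell, and the estimate of $\int_{1/T}^1|f_a'|$) and tracking the exact powers of $\ln T$ and $T$, all uniformly over $a\in[a_L,a_U]$. Once (iv) is in place, (iii) follows from the derivative/chaining device together with the covariance-summability estimate and the observation above, while (i) and (ii) are elementary.
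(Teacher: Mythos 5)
Your proof is correct, and in parts (iii) and (iv) it takes a genuinely different route from the paper's. For (iii) the paper uses Abel summation in $t$: it rewrites $(t/T)^a=\sum_{s=0}^{t-1}\big[((s+1)/T)^a-(s/T)^a\big]$, swaps the order of summation, and controls the increments via the Bernoulli/convexity bound $|(1+1/s)^a-1|\le C/s$ together with $\E\big|\sum_{t\le s}(\ln t)^k u_t\big|\le Cs^{1/2}(\ln s)^k$. You instead differentiate in the parameter $a$ and use the pathwise inequality $\sup_{a}|S_T(a)|\le|S_T(a_L)|+\int_{a_L}^{a_U}|S_T'(a)|\,da$ plus Tonelli, which reduces the uniform bound to pointwise second-moment bounds; the extra factor $\ln(t/T)$ this introduces costs only $O(1)$ in $L^2$ precisely because $\int_0^1 r^{2a}(\ln r)^2\,dr=2/(2a+1)^3$ is bounded uniformly on $[a_L,a_U]$ with $a_L>-1/2$, as you correctly emphasize. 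For (iv) the paper bounds the pointwise difference $\big|\ell^a(\ln\ell)^k-(\ell-s)^a(\ln(\ell-s))^k\big|$ by hand via $|1-(1-x)^a|\le Cx$ and the factorization $p^n-q^n=(p-q)\sum_j p^{n-1-j}q^j$, whereas you integrate $|f_a'|$ over each mesh cell via the fundamental theorem of calculus; both yield the claimed rate (the paper's internal bound is in fact one power of $\ln T$ sharper when $a_L\neq 0$, but the lemma only asserts the $(\ln T)^{k+1}$ rate, which you match, and your handling of the first cell and the uniformity in $a$ is sound). For (i) the paper simply cites lemma 4 of Robinson (2012); your split at $a=0$ with the monotone Riemann-sum comparison is a clean self-contained substitute. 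Part (ii) is essentially the paper's covariance expansion, though your symmetrization $|c_tc_s|\le\tfrac12(c_t^2+c_s^2)$ is arguably tidier than the paper's manipulation with $|t-s|\le t$.
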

\begin{proof}
	\textbf{\textit{(i)}} This is shown in lemma 4 of \cite{robinson2012}. \textbf{\textit{(ii)}} Note that
	\begin{equation}
	\begin{aligned}
	\E\,\Bigg( \frac{1}{\sqrt{T}} &\sum_{t=1}^{T}\left( \frac{t}{T} \Bigg)^a ({\ln t})^k u_{i,t}\right)^2
	= \frac{1}{T} \sum_{t=1}^T \sum_{s=1}^T \left( \frac{t}{T} \right)^a \left( \frac{s}{T} \right)^a (\ln t)^k (\ln s)^k \, \E(u_{i,t} u_{i,s}) \\
	&\leq \frac{(\ln T)^{2k} }{T} \sum_{t=1}^T \sum_{s=1}^T \left( \frac{t}{T} \right)^a \left( \frac{s}{T} \right)^a \big| \E(u_{i,t} u_{i,s}) \big|
	\leq 2 \frac{(\ln T)^{2k} }{T} \sum_{t=1}^T \sum_{s=0}^{t-1} \left( \frac{t}{T} \right)^a \left( \frac{t-s}{T} \right)^a \big| \gamma_{i,s} \big|,
	\end{aligned}
	\end{equation}
	where we define $\gamma_{i,s} = \E(u_{i,t} u_{i,t-s})$. For the given index ranges, we also have $|t-s|\leq t$ such that
	\begin{equation}
	\E\,\Bigg( \frac{1}{\sqrt{T}} \sum_{t=1}^{T}\left( \frac{t}{T} \Bigg)^a ({\ln t})^k u_{i,t}\right)^2
	\leq 2 (\ln T)^{2k} \frac{1}{T}  \sum_{t=1}^T \left(\frac{t}{T}\right)^{2a}\sum_{s=0}^\infty | \gamma_{i,s} |.
	\label{eq:lemma1ii}
	\end{equation}
	The first summation in the RHS of \eqref{eq:lemma1ii} is bounded in view of Lemma \ref{lem:boundsforfixed}\textit{(i)} and $\sum_{s=0}^\infty |\gamma_{i,s}|<\infty$ due to Assumption \ref{assumption:innovations}(a) (cf. Appendix 3.A. in \cite{hamilton1994}). \textbf{\textit{(iii)}} Using the equality $\frac{t}{T}=\sum_{s=0}^{t-1}\left[ \left( \frac{s+1}{T} \right)^a - \left( \frac{s}{T} \right)^a \right]$ and a change in the order of summation, we find
	$$
	\begin{aligned}
	\sumtT &\left( \frac{t}{T} \right)^a (\ln t)^k u_{i,t}
	= \sumtT \sum_{s=0}^{t-1} \left[ \left( \frac{s+1}{T} \right)^a - \left( \frac{s}{T} \right)^a \right] (\ln t)^k u_{i,t}
	= \sum_{s=0}^{T-1} \left[ \left( \frac{s+1}{T} \right)^a - \left( \frac{s}{T} \right)^a \right] \sum_{t=s+1}^T (\ln t)^k u_{i,t} \\
	&= \left( \frac{1}{T} \right)^a \sumtT (\ln t)^k u_{i,t} + \sum_{s=1}^{T-1} \left[ \left( \frac{s+1}{T} \right)^a - \left( \frac{s}{T} \right)^a \right] \left( \sumtT (\ln t)^k u_{i,t} - \sum_{t=1}^s (\ln t)^k u_{i,t}\right) \\
	&= \left( \frac{1}{T} \right)^a \sumtT (\ln t)^k u_{i,t} + \sumtT (\ln t)^k u_{i,t} - \left( \frac{1}{T} \right)^a \sumtT (\ln t)^k u_{i,t}  - \sum_{s=1}^{T-1} \left[ \left( \frac{s+1}{T} \right)^a  - \left( \frac{s}{T} \right)^a \right]\sum_{t=1}^s (\ln t)^k u_{i,t},
	\end{aligned}
	$$
	and hence
	\begin{equation}
	\begin{aligned}
	\E \Bigg( &\sup_{a\in[a_L,a_U]} \left| \frac{1}{\sqrt{T}} \sumtT \left( \frac{t}{T} \right)^a (\ln t)^k u_{i,t} \right| \Bigg)
	\leq 
	\E \left| \frac{1}{\sqrt{T}} \sumtT (\ln t)^k u_{i,t}  \right| \\
	&\qquad\qquad +\E \left( \sup_{a\in[a_L,a_U]} \left|  \frac{1}{\sqrt{T}}\sum_{s=1}^{T-1} \left[ \left( \frac{s+1}{T} \right)^a - \left( \frac{s}{T} \right)^a \right]\sum_{t=1}^s (\ln t)^k u_{i,t}\right| \right).
	\end{aligned}
	\label{eq:lemma1.3}
	\end{equation}
	For the first term in the RHS of \eqref{eq:lemma1.3}, we have $\E \left| \frac{1}{\sqrt{T}} \sumtT (\ln t)^k u_{i,t}  \right| \leq \left(\E \left( \frac{1}{\sqrt{T}} \sumtT (\ln t)^k u_{i,t}  \right)^2 \right)^{1/2}\leq C (\ln T)^k$ by Lemma \ref{lem:boundsforfixed}(ii) with $a=0$. For the second term, note that
	\begin{equation}
	\left| \frac{1}{\sqrt{T}}\sum_{s=1}^{T-1} \left[ \left( \frac{s+1}{T} \right)^a - \left( \frac{s}{T} \right)^a \right]\sum_{t=1}^s (\ln t)^k u_{i,t} \right|
	\leq \frac{1}{\sqrt{T}} \sum_{s=1}^{T-1} \left( \frac{s}{T} \right)^a \left| \left(1 +\frac{1}{s} \right)^a-1 \right| \left| \sum_{t=1}^s (\ln t)^k u_{i,t} \right|.
	\label{eq:nextbound}
	\end{equation}
	To deal with the supremum of $\left|\left(1+\frac{1}{s} \right)^a -1\right|$ over $[a_L,a_U]$, we define $g_a(x)=(1+x)^a-1$ for $0\leq x \leq 1$. If $-\frac{1}{2} < a\leq 1$, then $|g_a(x)|\leq |a| x$ by Bernoulli's inequality. If $a\geq 1$, then convexity of $g_a(x)$ implies
	$$
	g_a(x) \leq (1-x)g_a(0) + x g_a(1) \leq \left(2^a-1\right) x.
	$$
	We conclude that $|g_a(x)|\leq C x$ for all $a_L\leq a \leq a_U$ and $x\in[0,1]$. Combining this result with  \eqref{eq:nextbound}, we have
	$$
	\begin{aligned}
	\E &\left( \sup_{a\in[a_L,a_U]} \left|  \frac{1}{\sqrt{T}}\sum_{s=1}^{T-1} \left[ \left( \frac{s+1}{T} \right)^a - \left( \frac{s}{T} \right)^a \right]\sum_{t=1}^s (\ln t)^k u_{i,t}\right| \right) \\
	&\leq \E \left( \frac{1}{\sqrt{T}} \sum_{s=1}^{T-1} \left( \frac{s}{T} \right)^{a_L} \sup_{a\in[a_L,a_U]}\left| \left(1 +\frac{1}{s} \right)^a-1 \right|\, \left| \sum_{t=1}^s (\ln t)^k u_{i,t} \right| \right)
	\leq C \frac{1}{\sqrt{T}} \sum_{s=1}^{T-1}\left( \frac{s}{T} \right)^{a_L} \frac{1}{s} \, \E \left| \sum_{t=1}^s (\ln t)^k u_{i,t} \right| \\
	&\leq C T^{-(a_L+1/2)} \sum_{s=1}^{T-1} s^{a_L-1/2} (\ln s)^k
	\leq C (\ln T)^k \left[ \frac{1}{T}\sum_{s=1}^T \left(\frac{s}{T} \right)^{a_L-1/2} \right]
	\leq C (\ln T)^k,
	\end{aligned}
	$$
	where we used $\E \left| \sum_{t=1}^s (\ln t)^k u_{i,t} \right|\leq \left(\E \left( \sum_{t=1}^s (\ln t)^k u_{i,t} \right)^2 \right)^{1/2} \leq C s^{1/2} (\ln s)^k$ (the steps in the proof of (ii) require a small modification to establish this) to go to the last line, and (i) to obtain the final inequality. The proof is complete since we have bounded both terms in the RHS of \eqref{eq:lemma1.3}. \textbf{\textit{(iv)}} If we divide the integral into integration intervals of width $\frac{1}{T}$, then we find
	\begin{align}
	&\sup_{a\in[a_L,a_U]} \left| \frac{1}{T} \sumtT \left( \frac{t}{T} \right)^a \left( \ln \frac{t}{T} \right)^k - \int_0^1 r^a (\ln r)^k dr \right| \nonumber\\
	&= \sup_{a\in[a_L,a_U]} \left|  \sumtT \int_{(t-1)/T}^{t/T} \left( \frac{t}{T} \right)^a \left( \ln \frac{t}{T} \right)^k dr- \sum_{t=1}^T \int_{(t-1)/T}^{t/T} r^a (\ln r)^k dr \right| \nonumber\\
	&= \sup_{a\in[a_L,a_U]} \left| \frac{1}{T} \left( \frac{1}{T} \right)^a \left( \ln \frac{1}{T} \right)^k - \int_0^{1/T} r^a (\ln r)^k dr + \sum_{t=2}^T \int_{(t-1)/T}^{t/T} \left[\left( \frac{t}{T} \right)^a \left( \ln \frac{t}{T} \right)^k- r^a(\ln r)^k\right] dr \right| \nonumber\\
	&\leq \sup_{a\in[a_L,a_U]} \left| \left( \frac{1}{T} \right)^{a+1} \left( \ln \frac{1}{T} \right)^k  \right|
	+ \sup_{a\in[a_L,a_U]} \left|  \int_0^{1/T} r^a (\ln r)^k dr \right| + \sup_{a\in[a_L,a_U]}\sum_{t=2}^T  \int_{(t-1)/T}^{t/T} \left| \left( \frac{t}{T} \right)^a \left( \ln \frac{t}{T} \right)^k- r^a(\ln r)^k\right| dr \nonumber\\
	&=: Ia+Ib+Ic, \label{eq:threeparts}
	\end{align}
	using the triangle inequality. Clearly, $Ia$ is bounded by $T^{-(a_L+1)} (\ln T)^k$. For $Ib$ we can use the standard integral (cf. \cite{adamsessex2016}), namely $\int_0^{1/T} r^a (\ln r)^k dr = \frac{(-1)^k}{a+1}\left( \frac{1}{T} \right)^{a+1}(\ln T)^k-\frac{k}{a+1} \int_0^{1/T} r^a (\ln r)^{k-1} dr$ for $k\neq -1$, to obtain
	$$
	\begin{aligned}
	\int_0^{1/T} r^a (\ln r)^k dr &=
	(-1)^k \left(\frac{1}{T} \right)^{a+1} \sum_{j=0}^{k-1} \frac{k!}{(k-j)!} \frac{1}{(a+1)^{1+j}} (\ln T)^{k-j} + (-1)^k \frac{k!}{(a+1)^k} \int_0^{1/T} r^a dr \\
	&=   (-1)^k \left(\frac{1}{T} \right)^{a+1} \sum_{j=0}^{k} \frac{k!}{(k-j)!} \frac{1}{(a+1)^{1+j}} (\ln T)^{k-j}.
	\end{aligned}
	$$
	We therefore conclude that
	$$
	\begin{aligned}
	Ib &\leq  \sum_{j=1}^k \frac{k!}{(k-j)!} \sup_{a\in[a_L,a_U]} \frac{1}{(a+1)^{1+j}}  \left(\frac{1}{T} \right)^{a+1}
	(\ln T)^{k-j}\leq \sum_{j=1}^k \frac{k!}{(k-j)!}  \frac{1}{(a_L+1)^{1+j}} \left( \frac{1}{T} \right)^{a_L+1}  (\ln T)^{k-j} \\
	& \leq C T^{-(a_L+1)} (\ln T)^k.
	\end{aligned}
	$$
	It remains to bound the term $Ic$. Changing the integration variable to $r=\frac{t}{T}-s$ yields
	\begin{equation}
	Ic=\sup_{a\in[a_L,a_U]}\sum_{t=2}^{T}\int_{0}^{1/T}\left|\left(\frac{t}{T}\right)^a\left({\ln \frac{t}{T}}\right)^k-\left(\frac{t}{T}-s\right)^a\left[{\ln\left(\frac{t}{T}-s\right)}\right]^k\right|ds.
	\label{eq:Ic}
	\end{equation}
	We subsequently derive an upper bound for the integrand using an approach which mimics the derivations in (D.14) and (D.15) in \cite{robinson2012}. For any $\frac{2}{T}\leq \ell \leq 1$ (such that $0<s/\ell\leq \frac{1}{2}$), we have
	\begin{equation}\label{eq:Ic_rewriting}
	\begin{aligned}
	\Big| \ell^a(\ln \ell)^k&-(\ell-s)^a\big(\ln(\ell-s)\big)^k\Big|
	=\Big| \big[\ell^a-(\ell-s)^a \big] (\ln \ell)^k + (\ell-s)^a \big[(\ln \ell)^k-(\ln(\ell-s))^k\big] \Big| \\
	&\leq \Big| \big[ \ell^a-(\ell-s)^a\big](\ln \ell)^k\Big|+\Big|(\ell-s)^a\Big[(\ln \ell)^k-(\ln(\ell-s))^k\Big] \Big| \\
	&=\ell^a \big|\ln \ell \big|^k \big|1-\left(1-s/\ell \right)^a\big|+\ell^a(1-s/\ell)^a\left|(\ln \ell)^k-(\ln(\ell-s))^k\right|=: IIa + IIb,
	\end{aligned}
	\end{equation}
	by the triangle inequality and the fact that $\big| (\ell-s)^a \big| = (\ell-s)^a$. For $IIa$ similar arguments as those found below \eqref{eq:nextbound} give $\big|1- (1-x)^a \big| \leq C x$, and hence
	\begin{equation}
	IIa \leq C \ell^{a_L} \big|\ln \ell \big|^k \frac{s}{\ell} \leq C \ell^{a_L-1} \big|\ln \ell \big|^k s \leq  C \ell^{a_L-1} \big|\ln \ell \big|^k \frac{1}{T} \leq C \ell^{a_L-1} \big|\ln \ell \big|^k \frac{1}{T} \leq C \ell^{a_L-1} \big( \ln T \big)^k \frac{1}{T},
	\label{eq:Ic_upperbound1}
	\end{equation}
	since $| \ln \ell| \leq | \ln T |$ for all $\frac{2}{T}\leq \ell \leq 1$. For $IIb$ we first note that $\frac{1}{2}\leq 1-s/\ell <1$ and therefore $\left(1-s/\ell \right)^a < \left(1-s/\ell \right)^{-1} \leq 2$. Moreover, we use the factorization $p^n-q^n = (p-q) \sum_{j=0}^{n-1} p^{n-1-j} q^j$ to obtain\footnote{For any $x>-1$, we have the inequality $\frac{x}{1+x} \leq \ln(1+x) \leq x$. This implies that $|\ln (1- s/\ell) |=- \ln \big(1 - s/\ell \big)\leq \frac{s/\ell}{1-s/\ell}\leq 2 \frac{s}{\ell}$.}
	\begin{equation}
	\begin{aligned}
	\Big|(\ln \ell)^k&-(\ln(\ell-s))^k\Big|=\big|\ln \ell-\ln(\ell-s) \big| \left| \sum_{j=0}^{k-1}(\ln \ell)^{k-1-j} \big(\ln(\ell-s)\big)^j\right|  \\
	&= \big|\ln(1-s/\ell)\big| \left| \sum_{j=0}^{k-1} (\ln \ell)^{k-1-j} \big(\ln(\ell-s) \big)^j \right| 
	\leq \big|\ln(1-s/\ell)\big| \sum_{j=0}^{k-1} \left|\ln \ell\right|^{k-1-j} \left|\ln(\ell-s)\right|^j \\
	&\leq k\left|{\ln(1-s/\ell)}\right|({\ln T})^{k-1} \leq  2 k \frac{s}{\ell} \big(\ln T \big)^{k-1},
	\end{aligned}
	\label{eq:Ic_upperbound2}
	\end{equation}
	because $1/T\leq \ell-s<1$ and thus $|\ln (\ell-s) |\leq \ln T$. Combining all previous results for $IIb$ gives
	$$
	IIb \leq C \ell^a  \frac{s}{\ell} \big(\ln T \big)^{k-1} \leq C \ell^{a_L-1} \big(\ln T \big)^{k-1} \frac{1}{T}.
	$$
	Since $\frac{2}{T}\leq \ell \leq 1$, we use the bounds on $IIa$ and $IIb$ to bound the integrand of \eqref{eq:Ic} as follows:
	\begin{equation*}
	Ic\leq C\sup_{a\in[a_L,a_U]}\sum_{t=2}^{T}\int_{0}^{1/T}\left(\frac{t}{T}\right)^{a_L-1}\frac{1}{T}({\ln T})^kds\leq C\frac{({\ln T})^k}{T^2}\sum_{t=1}^{T}\left(\frac{t}{T}\right)^{a_L-1}.
	\end{equation*}
	The asymptotic order of $\sum_{t=1}^{T}\big(\frac{t}{T}\big)^{a_L-1}$ relies on the values of $a_L$. We distinguish three cases: (1) if $a_L<0$, then $\sum_{t=1}^{T}\big(\frac{t}{T}\big)^{a_L-1}=T^{1-a_L}\sum_{t=1}^{T}\frac{1}{t^{1-a_L}}=T^{1-a_L}O(1)$, (2) if $a_L=0$, then $\sum_{t=1}^{T}\big(\frac{t}{T}\big)^{a_L-1}=T\sum_{t=1}^{T}t^{-1}=TO({\ln T})$, and (3) if $a_L>0$, $\sum_{t=1}^{T}\big(\frac{t}{T}\big)^{a_L-1}=O(T)$ by Lemma \ref{lem:boundsforfixed}(i). Overall, we have
	\begin{equation}\label{eq:limit_normtrend_firstterm_uppbound}
	Ic \leq C\frac{({\ln T})^k}{T^2}\sum_{t=1}^{T}\left(\frac{t}{T}\right)^{a_L-1}=O\left(\frac{({\ln T})^k}{T^{a_L+1}}\indic{a_L<0}+\frac{({\ln T})^{k+1}}{T}\indic{a_L=0}+\frac{({\ln T})^k}{T}\indic{a_L>0}\right).
	\end{equation}
	It is seen that $Ia$, $Ib$, and $Ic$ converge to zero as $T\to \infty$. The proof follows from \eqref{eq:threeparts}.
\end{proof}

\newpage
\begin{lemma}\label{lem:Brownianconv}
	Let Assumption \ref{assumption:innovations} hold. For any $a$ such that $ - \frac{1}{2} < a_L \leq a \leq a_U < \infty$, any $j\in\{1,2,\ldots,p_i\}$, $i\in\{1,2,\ldots,N\}$, and $k\in \{0,1,2,\ldots\}$, as $T\rightarrow\infty$, we have:
	\begin{enumerate}[(i)]
		\item $\frac{1}{\sqrt{T}}\sum_{t=1}^{T}\bigg(\frac{x_{i,t}}{\sqrt{T}}\bigg)^ju_{i,t}\wto\int_{0}^{1}\brownnormal_{v_i}^j(r)d\brownnormal_{u_i}(r)+j\mDelta_{v_iu_i}\int_{0}^{1}\brownnormal_{v_i}^{j-1}(r)dr$,
		\item $\frac{1}{\sqrt{T}}\sum_{t=1}^{T}\left(\frac{t}{T}\right)^a\left({\ln \frac{t}{T}}\right)^k u_{i,t} \wto  \int_{0}^{1}r^a({\ln r})^k d\brownnormal_{u_i}(r)$,
		\item $\frac{1}{T}\sum_{t=1}^{T}\left(\frac{t}{T}\right)^a\left({\ln \frac{t}{T}}\right)^k\bigg(\frac{x_{i,t}}{\sqrt{T}}\bigg)^j\wto\int_{0}^{1}r^a({\ln r})^k\brownnormal_{v_i}^j(r)dr$.
	\end{enumerate}
\end{lemma}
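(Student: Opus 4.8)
The plan is to prove (i)--(iii) in turn and then combine them into a joint statement. The common backbone is the functional central limit theorem $\frac{1}{\sqrt{T}}\sum_{t=1}^{[rT]}[u_t,\vv_t\tran]\tran\wto\brown(r)=[\brownnormal_u(r),\brown_v(r)\tran]\tran$ recorded above, together with a Beveridge--Nelson / Phillips--Solo expansion of $u_t$ and $\vv_t$ in the i.i.d.\ innovations $\vzeta_t$ (see \cite{phillipssolo1992}); this expansion and the summability $\sum_{k\ge1}k|\psi_k|<\infty$ already underlie the absolute summability of the autocovariances of $u_t$ used in Lemma \ref{lem:boundsforfixed}. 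Once this is in place, each left-hand side of (i)--(iii) is, up to asymptotically negligible remainders, a measurable functional of the single partial-sum process $\frac{1}{\sqrt{T}}\sum_{t\le[\cdot T]}\vzeta_t$, and this is the structure I would exploit for the joint claim at the end.

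For (i), I would note that $x\mapsto x^j$ is a polynomial, hence $H_0$-regular in the sense of \cite{parkphillips2001}, so the established theory on convergence of sample covariances between transformations of an integrated process and a serially correlated stationary process applies. Writing $\frac{1}{\sqrt{T}}\sum_{t=1}^{T}(x_{it}/\sqrt{T})^j u_t=T^{-(j+1)/2}\sum_{t=1}^{T}x_{it}^j u_t$ and invoking \cite{parkphillips2001}, \cite{changparkphillips2001}, and---in the form closest to our needs---\cite{chanwang2015}, this converges, jointly with the FCLT, to $\int_0^1\brown_{v_i}^j(r)\,d\brownnormal_u(r)$ plus the second-order bias $j\mDelta_{v_iu}\int_0^1\brown_{v_i}^{j-1}(r)\,dr$; the contemporaneous term of the one-sided long-run covariance $\mDelta_{v_iu}$ is included because the statistic involves $x_{it}$ rather than $x_{i,t-1}$.

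Part (iii) is of first order and requires no stochastic-integral machinery. For fixed $\delta\in(0,1)$ the weight $r^a(\ln r)^k$ is bounded and continuous on $[\delta,1]$, so a Riemann-sum argument with $x_{i,[\cdot T]}/\sqrt{T}\wto\brown_{v_i}$ and the continuous mapping theorem would give $\frac{1}{T}\sum_{t=[\delta T]+1}^{T}(t/T)^a(\ln(t/T))^k(x_{it}/\sqrt{T})^j\wto\int_\delta^1 r^a(\ln r)^k\brown_{v_i}^j(r)\,dr$. The discarded block is bounded by $\big(\sup_{t\le T}|x_{it}/\sqrt{T}|\big)^j\cdot\frac{1}{T}\sum_{t\le[\delta T]}(t/T)^a|\ln(t/T)|^k$, i.e.\ $O_p(1)$ times a quantity converging, by Lemma \ref{lem:boundsforfixed}(iv), to $\int_0^\delta r^a|\ln r|^k\,dr$; since this integral and $\int_0^\delta r^a|\ln r|^k\brown_{v_i}^j(r)\,dr$ both vanish as $\delta\downarrow0$, letting $\delta\downarrow0$ after $T\to\infty$ closes the argument.

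Part (ii) is where the work is, and the hard part is the singularity of $r^a(\ln r)^k$ at the origin when $a<0$---tempered only by $2a>-1$, which keeps $r^{2a}(\ln r)^{2k}$ integrable. First I would treat the truncated weight $\phi_\delta(r)=r^a(\ln r)^k\indic{r\ge\delta}$, which has bounded variation on $[0,1]$; for such weights summation by parts reduces $\frac{1}{\sqrt{T}}\sum_{t=1}^{T}\phi_\delta(t/T)u_t$ to the FCLT for $\sum_{s\le t}u_s$, so that $\frac{1}{\sqrt{T}}\sum_{t=1}^{T}\phi_\delta(t/T)u_t\wto\int_0^1\phi_\delta(r)\,d\brownnormal_u(r)$ (cf.\ \cite{tanaka2017}). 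Next I would bound the tail $\frac{1}{\sqrt{T}}\sum_{t\le[\delta T]}(t/T)^a(\ln(t/T))^k u_t$ in mean square: arguing as in the proof of Lemma \ref{lem:boundsforfixed}(ii), its second moment is at most $C\,\frac{1}{T}\sum_{t\le[\delta T]}(t/T)^{2a}(\ln(t/T))^{2k}$, whose limit superior over $T$ is $C\int_0^\delta r^{2a}(\ln r)^{2k}\,dr$ and hence vanishes as $\delta\downarrow0$; the tail $\int_0^\delta r^a(\ln r)^k\,d\brownnormal_u$ of the limit process is small in mean square for the same reason. A standard approximation lemma for weak convergence then upgrades the $\phi_\delta$-limit to $\frac{1}{\sqrt{T}}\sum_{t=1}^{T}(t/T)^a(\ln(t/T))^k u_t\wto\int_0^1 r^a(\ln r)^k\,d\brownnormal_u(r)$. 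Finally, the joint convergence of (i)--(iii) follows because, after discarding the truncation errors just bounded (which vanish jointly by the same mean-square estimates), each statistic is a functional of the common process $\frac{1}{\sqrt{T}}\sum_{t\le[\cdot T]}\vzeta_t$, so the joint FCLT---together with the fact that the convergence-to-stochastic-integral result used in (i) holds jointly with it---delivers the stated joint limit.
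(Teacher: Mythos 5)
Your argument is correct, but for parts (ii) and (iii) it takes a genuinely different route from the paper's proof. Part (i) is handled the same way in both cases: by citing the established sample-covariance limit theory for integer powers of an integrated regressor (the paper invokes lemma 1 of Hong and Phillips (2010); your references to Park--Phillips and Chan--Wang deliver the same statement, including the one-sided long-run covariance bias term). For (ii), the paper performs summation by parts to write the statistic as $-\sum_{t}\int_{(t-1)/T}^{t/T}S_T(r)\,df_T(r)$, replaces $S_T$ by $\brownnormal_u$ via a uniform almost-sure strong approximation on an enlarged probability space --- this is where the moment condition $\E\|\vzeta_t\|^{2q}<\infty$ with $q>2$ is used --- and then controls the discretization error through an explicit $L^2$ bound on $\int_{1/T}^{1}|f_T(r)-f(r)|^2dr$ together with the isometry property of the stochastic integral. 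You instead truncate the weight at $\delta$, prove convergence for the bounded-variation truncated weight by summation by parts and the CMT, bound the near-origin tails of both the statistic and the limit in mean square (exploiting $2a>-1$), and close with the standard approximation theorem for weak convergence. Your route avoids the strong approximation entirely and is in that sense more elementary, at the price of the extra $\delta\downarrow 0$ bookkeeping and the double limit interchange; the paper's route is more direct once the embedding is granted, and its $L^2$ estimates are reused almost verbatim for (iii). For (iii) your truncation is not actually needed: since $\int_0^1 r^a|\ln r|^k\,dr<\infty$, the functional $x\mapsto\int_0^1 r^a(\ln r)^k x^j(r)\,dr$ is already sup-norm continuous on all of $D[0,1]$, which is how the paper argues; but the truncation is harmless. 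Your treatment of the joint convergence is at the same (informal) level of detail as the paper's.
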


\begin{proof}
	
	For $r\in (0,1]$, we define $f(r)=r^a({\ln r})^k$. Two partial sum processes are defined as $S_{i,T}(r)=\frac{1}{\sqrt{T}}\sum_{s=1}^{[rT]}u_{i,s}$, and $X_{i,T}(r)=\frac{1}{\sqrt{T}}x_{i,[rT]}=\frac{1}{\sqrt{T}}\sum_{t=1}^{[rT]} v_{i,t}$. Finally, set $f_{T}(r)=\left(\frac{[rT]}{T}\right)^a\left({\ln \frac{[rT]}{T}}\right)^k$ for $r\in\left[\frac{1}{T},1 \right]$. \textbf{\textit{(i)}} This result follows from lemma 1 of \cite{hongphillips2010}. \textbf{\textit{(ii)}} We have
	\begin{equation}
	\begin{aligned}
	\frac{1}{\sqrt{T}} &\left(\frac{t}{T}\right)^a\left({\ln \frac{t}{T}}\right)^k u_{i,t}
	= f_T\left( \frac{t}{T}\right)  \frac{u_{i,t}}{\sqrt{T}}
	= f_T\left( \frac{t}{T} \right) \left[S_{i,T}\left( \frac{t}{T}\right)- S_{i,T}\left( \frac{t-1}{T}\right) \right] \\
	&= \left[  f_T\left(\frac{t}{T} \right) S_{i,T}\left( \frac{t}{T}\right) - f_T\left(\frac{t-1}{T} \right) S_{i,T}\left( \frac{t-1}{T}\right)  \right]
	- \left[ f_T\left( \frac{t}{T} \right) - f_T\left( \frac{t-1}{T} \right)  \right] S_{i,T}\left( \frac{t-1}{T}\right) 
	\end{aligned}
	\label{eq:telescope}
	\end{equation}
	and hence
	\begin{align}
	& \frac{1}{\sqrt{T}} \sumtT \left(\frac{t}{T}\right)^a\left({\ln \frac{t}{T}}\right)^k u_{i,t} \nonumber
	= \left(\frac{1}{T}\right)^a\left({\ln \frac{1}{T}}\right)^k \frac{u_{i,1}}{\sqrt{T}}+\frac{1}{\sqrt{T}}\sum_{t=2}^{T}\left(\frac{t}{T}\right)^a\left({\ln \frac{t}{T}}\right)^k u_{i,t} \\ 
	\nonumber
	&\stackrel{\eqref{eq:telescope}}{=} f_T\left(\frac{1}{T}\right) S_{i,T}\left(\frac{1}{T}\right) + \left[ f_T(1) S_{i,T}(1) - f_T\left(\frac{1}{T}\right) S_{i,T}\left(\frac{1}{T}\right) \right]
	- \sum_{t=2}^T \left[ f_T\left( \frac{t}{T} \right) - f_T\left( \frac{t-1}{T} \right)  \right] S_{i,T}\left( \frac{t-1}{T}\right) \\
	& \stackrel{f_T(1)=0}{=} - \sum_{t=2}^{T}\int_{(t-1)/T}^{t/T}S_{i,T}(r)df_T(r) \label{eq:sumofints}
	\end{align}
	where we used the fact that $S_{i,T}(\cdot)$ is piecewise constant. In view of Assumption \ref{assumption:innovations}, we can extend suitably extend the probability space and have the following uniformly strong approximation of the partial sum process $S_{i,T}$ (see for example page 562 of \cite{phillips2007}):
	\begin{equation}
	\sup_{1\leq t\leq T}\left|S_{i,T}\left(\frac{t-1}{T}\right)-\brownnormal_{u_i}\left(\frac{t-1}{T}\right)\right|=o_{a.s.}\left(\frac{1}{T^{(1/2)-(1/q)}}\right),
	\end{equation}
	for $q>2$. Continuing from \eqref{eq:sumofints}, this uniformly strong approximation gives
	\begin{equation}
	\begin{aligned}
	\frac{1}{\sqrt{T}} \sumtT \left(\frac{t}{T}\right)^a\left({\ln \frac{t}{T}}\right)^k u_{i,t}
	&=-\sum_{t=2}^{T}\int_{(t-1)/T}^{t/T}\brownnormal_{u_i}(r)df_T(r)+o_{a.s.}\left(\frac{1}{T^{(1/2)-(1/q)}}\right) \\
	& = -\int_{1/T}^{1}\brownnormal_{u_i}(r)df_T(r)+o_{a.s.}\left(\frac{1}{T^{(1/2)-(1/q)}}\right) \\
	& = \brownnormal_{u_i}\left(\frac{1}{T}\right)f_T\left(\frac{1}{T}\right) + \int_{1/T}^{1}f_T(r)d\brownnormal_{u_i}(r)+o_{a.s.}\left(\frac{1}{T^{(1/2)-(1/p)}}\right) \\
	& =  \int_0^1 f(r)d\brownnormal_{u_i}(r) - \int_0^{1/T} f(r)d\brownnormal_{u_i}(r)  + \brownnormal_{u_i}\left(\frac{1}{T}\right)f_T\left(\frac{1}{T}\right) \\
	& \qquad + \int_{1/T}^1 \Big[ f_T(r) - f(r)  \Big] d\brownnormal_{u_i}(r)   +o_{a.s.}\left(\frac{1}{T^{(1/2)-(1/p)}}\right),
	\end{aligned}
	\end{equation}
	where the third line is obtained using integration by parts of the mean square Riemann-Stieltjes integral, c.f. theorem 2.7 in \cite{tanaka2017}. It remains to show that $\int_0^{1/T} f(r)d\brownnormal_{u_i}(r)$, $\brownnormal_{u_i}\left(\frac{1}{T}\right)f_T\left(\frac{1}{T}\right)$, and $\int_{1/T}^1 \Big[ f_T(r) - f(r)  \Big] d\brownnormal_{u_i}(r)$ are asymptotically negligible. These quantities are zero mean so it suffices to show that their variances vanish as $T\to \infty$. By the isometry property and steps similar to those above \eqref{eq:Ic}, we have
	\begin{equation}
	\var\left(\int_0^{1/T} f(r)d\brownnormal_{u_i}(r)\right) = \Omega_{u_iu_i} \int_0^{1/T} \big[ f(r) \big]^2 dr \leq CT^{-(2a_L+1)}(\ln T)^{2k}\rightarrow 0,
	\end{equation}
	as $T\to \infty$. Also, $\var\left( \brownnormal_{u_i}\left(\frac{1}{T}\right)f_T\left(\frac{1}{T}\right) \right)=\frac{1}{T} \Omega_{u_iu_i}\left[f_T\left(\frac{1}{T}\right)  \right]^2=\Omega_{u_iu_i} \left( \frac{1}{T} \right)^{2a_L +1} \left(\ln \frac{1}{T} \right)^{2k}\to 0$. To control the variance of $\int_{1/T}^1 \Big[ f_T(r) - f(r)  \Big] d\brownnormal_{u_i}(r)$, we look at 
	\begin{equation}
	\begin{aligned}
	\int_{1/T}^{1}\left|f(r)-f_T(r)\right|^2dr&=\sum_{t=2}^{T}\int_{(t-1)/T}^{t/T}\left|f(r)-\left(\frac{t-1}{T}\right)^a\left(\ln \frac{t-1}{T}\right)^k\right|^2dr \\
	&=\sum_{t=1}^{T-1}\int_{t/T}^{(t+1)/T}\left| r^a \big( \ln r \big)^k - \left(\frac{t}{T}\right)^a\left(\ln \frac{t}{T}\right)^k\right|^2dr \\
	&=\sum_{t=1}^{T-1}\int_{0}^{1/T}\left|\left(\frac{t}{T}+s\right)^a\left[\ln\left(\frac{t}{T}+s\right)\right]^k-\left(\frac{t}{T}\right)^a\left(\ln \frac{t}{T}\right)^k\right|^2 ds.
	\end{aligned}
	\label{eq:splittedintegral}
	\end{equation}
	Now let $\ell\in\left\{\frac{1}{T},\frac{2}{T},\ldots,1\right\}$ and recall that $0 \leq s \leq \frac{1}{T}$ (hence also $0\leq \frac{s}{\ell} \leq 1$). Using the triangle inequality, the expression in absolute values can be bounded as
	\begin{equation}
	\begin{aligned}
	\Big|(\ell+s)^a (\ln (\ell+s))^k-\ell^a &(\ln\ell)^k\Big|=\left|\left[(\ell+s)^a-\ell^a\right](\ln (\ell+s))^k+\ell^a\left[(\ln (\ell+s))^k-(\ln\ell)^k\right]\right|\\
	&\hspace{-1.5cm}\leq \left|\left[(\ell+s)^a-\ell^a\right](\ln (\ell+s))^k\right|+\left|\ell^a\left[(\ln (\ell+s))^k-(\ln\ell)^k\right]\right|\\
	&\hspace{-1.5cm}=\ell^a \Big| \left(1+s/\ell\right)^a-1\Big| \left|\ln (\ell+s)\right|^k+\ell^a\left|(\ln (\ell+s))^k-(\ln\ell)^k\right|=\Rmnum{2}c+\Rmnum{2}d.
	\end{aligned}
	\end{equation}
	By the inequality $|g_a(x)|\leq Cx$ below (\ref{eq:nextbound}) and the fact that $\left|\ln (\ell+s)\right|\leq |\ln \ell|+|\ln(1+s/\ell)|\leq \ln T+s/\ell$, we obtain 
	$\Rmnum{2}c\leq C\ell^{a_L}\frac{s}{\ell}\left|\ln T+\frac{s}{\ell}\right|^k\leq C\ell^{a_L-1}(\ln T)^k\frac{1}{T}$. Moreover, the factorisation $p^n-q^n=(p-q)\sum_{j=0}^{n-1}p^{n-1-j}q^j$ yields
	\begin{equation}
	\Rmnum{2}d=\ell^a\left|\ln \left(1+s/\ell\right)\right|\left|\sum_{j=0}^{k-1}\left(\ln(\ell+s)\right)^{k-1-j}\left(\ln \ell\right)^j\right|\leq k\ell^{a_L}\frac{s}{\ell}\left|(\ln T)+1\right|^{k-1}\leq C\ell^{a_L-1}(\ln T)^{k-1}\frac{1}{T}.
	\end{equation} 
	By combination of the bounds on $\Rmnum{2}c$ and $\Rmnum{2}d$, we conclude that $\left|(\ell+s)^a(\ln (\ell+s))^k-\ell^a(\ln\ell)^k\right|\leq C\ell^{a_L-1}(\ln T)^k\frac{1}{T}$ and arrive at the following upper bound on the RHS of \eqref{eq:splittedintegral}:
	\begin{equation}
	\begin{aligned}
	\int_{1/T}^{1}\left|f(r)-f_T(r)\right|^2dr &\leq C(\ln T)^{2k}\frac{1}{T^3}\sum_{t=1}^{T}\left(\frac{t}{T}\right)^{2(a_L-1)}\\
	&=O\left(\frac{(\ln T)^{2k}}{T^{2(a_L+\frac{1}{2})}}\indic{a_L<\frac{1}{2}}+\frac{(\ln T)^{2k+1}}{T^{2}}\indic{a_L=\frac{1}{2}}+\frac{(\ln T)^{2k}}{T^2}\indic{a_L>\frac{1}{2}}\right).
	\end{aligned}
	\label{eq:differentcases}
	\end{equation}
	The RHS of \eqref{eq:differentcases} will go to zero as $T\to \infty$, thereby establishing that $\int_{1/T}^1 \Big[ f_T(r) - f(r)  \Big] d\brownnormal_{u_i}(r)$ is also asymptotically negligible. The proof of part (ii) is now complete. \textbf{\textit{(iii)}} We have
	\begin{equation}
	\begin{aligned}
	\frac{1}{T} &\sum_{t=1}^{T}\left(\frac{t}{T}\right)^a\left({\ln \frac{t}{T}}\right)^k\bigg(\frac{x_{i,t}}{\sqrt{T}}\bigg)^j=\sum_{t=2}^{T}\int_{(t-1)/T}^{t/T}f_T(r)X_{i,T}^j(r) dt \\
	&\qquad=\int_0^1 f(r) X_{i,T}^j(r) dr + \int_{1/T}^{1}\left[f_T(r)-f(r)\right]X_{i,T}^j(r) dr=: IIIa + IIIb.
	\end{aligned}
	\end{equation}
	Given the CMT and $X_{i,T}\wto \brownnormal_{v_i}$, term $IIIa$ will converge weakly to $\int_{0}^{1}f(r)\brownnormal_{v_i}^j(r)dr$ if we can show that $x\mapsto\int_{0}^{1}f(r)x^j(r)dr$ is a continuous functional. Let  $x,y\in D[0,1]$. H\"older's inequality implies
	\begin{multline}
	\left| \int_0^1 f(r)x^j(r)dr-\int_{0}^{1}f(r) y^j(r)dr\right| =
	\left| \int_0^1 f(r) \left( x^j(r)-y^j(r) \right) dr \right| \\
	\leq \int_{0}^{1}|f(r)|dr\sup_{r\in[0,1]}|x^j(r)-y^j(r)|\leq C\sup_{r\in[0,1]}|x(r)-y(r)|\rightarrow 0, 
	\label{eq:contFunc}
	\end{multline}
	because $\int_{0}^{1}|f(r)|dr=\frac{k!}{(1+a)^{k+1}}$ is bounded. Continuity of the functional now follows from \eqref{eq:contFunc}. If we apply the Cauchy-Schwartz inequality to $IIIb$, then we find
	\begin{equation*}
	IIIb\leq \left[\int_{1/T}^{1}\left|f(r)-f_T(r)\right|^2dr\right]^{1/2}\left[\int_{1/T}^{1}X_{i,T}^{2j}(r) dr\right]^{1/2}.
	\end{equation*}
	Since $\int_{1/T}^{1}\left|f(r)-f_T(r)\right|^2=o(1)$ by (\ref{eq:differentcases}) and $\int_{1/T}^{1}X_{i,T}^{2j}(r) dr=\myint X_{i,T}^{2j}(r) dr\wto \myint \brownnormal_{v_i}^{2j}(r) dr$. We conclude that $IIIb=o_p(1)$. Now combine the limiting results for $IIIa$ and $IIIb$ to complete the argument.
\end{proof}

\begin{lemma}\label{lem:supremums}
	For any $\kappa>0$, define 
	\begin{equation}\label{eq:neighborhood_gamma0}
		\calN_{\kappa,T}(\vgamma_0)=\Big\{\text{\footnotesize $\vgamma\in\mGamma:\, T^{\theta_{0}+1/2}\left|\theta-\theta_0\right|\leq \kappa\ln{T},\, T^{\theta_{0}+1/2}|\tau_g-\tau_{g0}|\leq \kappa(\ln{T})^2,\,T^{1/2}\left\|\mD_{Z,T}\big(\vbeta-\vbeta_0\big)\right\| \leq \kappa\ln{T}$}\Big\}.
	\end{equation}	
	Assume $N$ is fixed and $T\rightarrow\infty$. Let $k_1$, $k_2$ be any nonnegative integers. Let Assumption \ref{assumption:innovations} hold.
	\begin{enumerate}[(i)]
		\item $\sup_{\vgamma\in \calN_{\kappa,T}(\vgamma_0)}(\ln{T})^{k_2}T^{-1}\sum_{t=1}^{T} \left|T^{-\theta_0}\big(\tau_{g}t^{\theta}-\tau_{g0}t^{\theta_0}\big)\left(\ln{t}\right)^{k_1}\right|=o(1)$.
		\item $\sup_{\vgamma\in \calN_{\kappa,T}(\vgamma_0)} N (\ln{T})^{k_2} T^{-1}\sum_{t=1}^{T}\left|T^{-2\theta_0}\big(\tau_gt^{2\theta}-\tau_{g0}t^{2\theta_0}\big)\left(\ln{t}\right)^{k_1}\right|=o(1)$.
		\item $\sup_{\vgamma\in \calN_{\kappa,T}(\vgamma_0)}(\ln{T})^{k_2}T^{-1}\left|\sum_{t=1}^{T}T^{-\theta_{0}}\big(t^{\theta}-t^{\theta_0}\big)\mD_{Z,T}^{-1}\mZ_t^{}\viota_N^{}\right|=o_p(1)$.
		\item $\sup_{\vgamma\in \calN_{\kappa,T}(\vgamma_0)}T^{-1}\left|\sum_{t=1}^{T}T^{-\theta_{0}}\left[\big(\tau_gt^{\theta}-\tau_{g0}t^{\theta_0}\big)\ln{t}-\tau_{g0}\big(t^{\theta}-t^{\theta_0}\big)\ln{T}\right]\mD_{Z,T}^{-1}\mZ_t^{}\viota_N^{}\right|=o_p(1)$.
		\item $\sup_{\vgamma\in \calN_{\kappa,T}(\vgamma_0)}(\ln{T})^{k_2}T^{-1}\left|\sum_{t=1}^{T}T^{-2\theta_{0}}t^{\theta}\left(\ln{t}\right)^{k_1}\big(\vbeta-\vbeta_0\big)'\mZ_t^{}\viota_N^{}\right|=o_p(1)$.
		\item $\sup_{\vgamma\in \calN_{\kappa,T}(\vgamma_0)}(\ln{T})^{k_2}T^{-1}\left|\sum_{t=1}^{T}T^{-2\theta_{0}}\big(\tau_g t^{\theta}-\tau_{g0}t^{\theta_0}\big)\left(\ln{t}\right)^{k_1}\vu_t'\viota_N\right|=o_p(1)$.
	\end{enumerate}
\end{lemma}

\begin{proof}
	We only show \textbf{\textit{(i)}}, \textbf{\textit{(iii)}}, \textbf{\textit{(v)}} and \textbf{\textit{(vi)}}. The proof of the remaining results is similar and thus omitted.
	
	\medskip
	\noindent \textbf{\textit{(i)}} For any $\vgamma\in\calN_{\kappa,T}(\vgamma_0)$, by the triangular inequality and the mean-value theorem (MVT), 
	\begin{align*}
	\left|T^{-\theta_0}\big(\tau_{g}t^{\theta}-\tau_{g0}t^{\theta_0}\big)\left(\ln{t}\right)^{k_1}\left(\ln{T}\right)^{k_2}\right|&=\left(\frac{t}{T}\right)^{\theta_0}\left|\tau_{g}\big(t^{\theta-\theta_0}-1\big)+\big(\tau_{g}-\tau_{g0}\big)\right|\left(\ln{t}\right)^{k_1}\left(\ln{T}\right)^{k_2}\\
	&\leq \left(\frac{t}{T}\right)^{\theta_0}\left[\big|\tau_{g}\big|\left|\,t^{\tilde{\theta}}(\theta-\theta_{0})\ln{t}\,\right|+\big|\tau_{g}-\tau_{g0}\big|\right]\left(\ln{T}\right)^{k_1+k_2}\\
	&\leq C\left(\frac{t}{T}\right)^{\theta_0} \frac{(\ln{T})^{k_1+k_2+2}}{T^{\theta_{0}+1/2}},
	\end{align*}
	where $t^{|\tilde{\theta}|}\leq T^{|\theta-\theta_0|}=\exp\left(|\theta-\theta_0|\ln {T}\right)\leq C$ whenever $T$ is sufficiently large. We obtain the first result due to Lemma \ref{lem:boundsforfixed}\point{1} and $\frac{(\ln{T})^{k}}{T^{\theta_{0}+1/2}}=o(1)$ for any $k\geq 0$. 
	
	\medskip
	\noindent \textbf{\textit{(iii)}} By Lemma \ref{lem:boundsforfixed}\point{1}, Lemma \ref{lem:Brownianconv}\point{3}, and the MVT,
	\begin{equation*}
	(\ln{T})^{k_2}T^{-1}\left|\sum_{t=1}^{T}T^{-\theta_{0}}\big(t^{\theta}-t^{\theta_0}\big)\mD_{Z,T}^{-1}\mZ_t^{}\viota_N^{}\right|\leq \sqrt{N}~O\left(\frac{(\ln T)^{k_2+2}}{T^{\theta_{0}+1/2}}\right)\left\|T^{-1}\sum_{t=1}^{T}\left(\frac{t}{T}\right)^{\theta_{0}}\mD_{Z,T}^{-1}\mZ_t^{}\,\right\|=o_p(1),
	\end{equation*}
	where the term $o_p(1)$ is uniform over $\vgamma\in\calN_{\kappa,T}(\vgamma_0)$. 
	
	\medskip
	\noindent \textbf{\textit{(v)}} By Part \textbf{\textit{(i)}} and Lemma \ref{lem:Brownianconv}\point{3},
	\begin{multline*}
	(\ln{T})^{k_2}T^{-1}\left|\sum_{t=1}^{T}T^{-2\theta_{0}}t^{\theta}\left(\ln{t}\right)^{k_1}\big(\vbeta-\vbeta_0\big)'\mZ_t^{}\viota_N^{}\right|
	\leq \sqrt{N}(\ln{T})^{k_2}T^{1/2}\left\|\mD_{Z,T}\big(\vbeta-\vbeta_0\big)\right\|\\
	\times \left\|T^{-(\theta_0+1/2)}T^{-1}\sum_{t=1}^{T}\left(\frac{t}{T}\right)^{\theta_{0}}\left(\ln{t}\right)^{k_1}\mD_{Z,T}^{-1}\mZ_t^{}+T^{-(\theta_0+1/2)}o_p(1)\right\|=o_p(1)
	\end{multline*}
	
	\medskip
	\noindent \textbf{\textit{(vi)}} Using the MVT and Lemma \ref{lem:Brownianconv}\point{2}, we obtain
	\begin{equation*}
	(\ln{T})^{k_2}T^{-1}\left|\sum_{t=1}^{T}T^{-2\theta_{0}}\big(\tau_g t^{\theta}-\tau_{g0}t^{\theta_0}\big)\left(\ln{t}\right)^{k_1}\vu_t'\viota_N\right|\leq \sqrt{N}\,o\left(\frac{(\ln{T})^{k_2}}{T^{\theta_{0}+1/2}}\right)\left\|T^{-1/2}\sum_{t=1}^{T}\left(\frac{t}{T}\right)^{\theta_0}\vu_t\right\|=o_p(1).
	\end{equation*}
	The proof is completed.
\end{proof}

\section{Proof of Theorem \ref{thm:consistentLRVs}}

\begin{proof}
We write $\widehat{\mDelta}_T\equiv\widehat{\mDelta}_T(\widehat{\vgamma}_T,b_T)$ and $\widehat{\mOmega}_T\equiv\widehat{\mOmega}_T(\widehat{\vgamma}_T,b_T)$ to make their dependence on the parameter estimator $\widehat{\vgamma}_T$ and bandwidth $b_T$ explicit. Changing the summation indices, we can express the one-sided long-run covariance estimator as
\begin{equation*}
	\widehat{\mDelta}_T(\widehat{\vgamma}_T,b_T)
	= \sum_{i=0}^{T-1} k\left(\frac{i}{b_T}\right)\left[ \frac{1}{T} \sum_{t=1}^{T-i} \bm V_{t+i}(\widehat{\vgamma}_T) \bm V_t(\widehat{\vgamma}_T)\tran \right]=:\widehat{\mSigma}_T(\widehat{\vgamma}_T)+\widehat{\mGamma}_T(\widehat{\vgamma}_T,b_T),
\end{equation*}
where $\widehat{\mSigma}_T(\widehat{\vgamma}_T)= T^{-1} \sum_{t=1}^T  \bm V_t(\widehat{\vgamma}_T) \bm V_t(\widehat{\vgamma}_T)\tran$ and $\widehat{\mGamma}_T(\widehat{\vgamma}_T,b_T)= \sum_{i=1}^{T-1} k\left(\frac{i}{b_T}\right)\left[ T^{-1} \sum_{t=1}^{T-i} \bm V_{t+i}(\widehat{\vgamma}_T) \bm V_t(\widehat{\vgamma}_T)\tran \right]$. Similarly, we have
\begin{equation*}
	\widehat{\mOmega}_T(\widehat{\vgamma}_T,b_T) = \widehat{\mSigma}_T(\widehat{\vgamma}_T)+\widehat{\mGamma}_T(\widehat{\vgamma}_T,b_T)+\widehat{\mGamma}_T(\widehat{\vgamma}_T,b_T)\tran.
\end{equation*}
Clearly, it suffices to study the asymptotic behavior of $\widehat{\mSigma}_T(\widehat{\vgamma}_T)$ and $\widehat{\mGamma}_T(\widehat{\vgamma}_T,b_T)$. As the lower right subblock of $\bm V_{t+i}(\vgamma)\bm V_t(\vgamma)\tran$ equals $\vv_{t+i}^{}\vv_t\tran$ (no parameter estimation uncertainty here), the consistency result for this subblock follows from the properties of $\{\vv_t\}$ in Assumption \ref{assumption:innovations}, the kernel requirements in Assumption \ref{assumption:LRVestimation}, and an application in Theorem 2 of \cite{jansson2002}. 

We proceed to the upper left subblocks of $\widehat{\mSigma}_T(\widehat{\vgamma}_T)$ and $\widehat{\mGamma}_T(\widehat{\vgamma}_T,b_T)$. If the residuals are close enough to the true innovations, then the results from Theorem 2 of \cite{jansson2002} again applies. It suffices to show
$$
T^{-1} \sumtT \big[ \widehat{\vu}_t\widehat{\vu}_t\tran - \vu_t \vu_t\tran \big] \pto 0 \quad\text{and}\quad \sum_{i=1}^{T-1} k\left(i/b_T\right)\left( T^{-1} \sum_{t=1}^{T-i}  \big[ \widehat{\vu}_{t+i} \widehat{\vu}_t\tran - \vu_{t+i} \vu_t\tran  \big] \right) \pto 0.
$$
Using Lemmas \ref{lem:boundsforfixed}--\ref{lem:Brownianconv} and Theorem \ref{thm:limiting_dist}, the following key result follows immediately
\begin{equation}
	\begin{aligned}
		\widehat{\vu}_t 
		= \vu_t - \left(\,\widehat{\tau}_{g,T}\,t^{\widehat{\theta}_T}-\tau_{g0}\,t^{\theta_{0}}\right)\vones_N - \mZ_t'\left(\widehat{\vbeta}_T-\vbeta_0\right)
		= \vu_t + O_p\big(T^{-1/2}\ln T\big)\left(\frac{t}{T}\right)^{\theta_{0}} + O_p\left(T^{-1/2}\right)\left(\mD_{Z,T}^{-1}\mZ_t\right)'.  \label{eq:resiDecomp}\end{aligned}
\end{equation}
This implies $T^{-1} \sumtT \big[ \widehat{\vu}_t\widehat{\vu}_t\tran - \vu_t \vu_t\tran \big]=O_p\big(T^{-1/2}\ln T\big)$ and 
\begin{equation}
	\left\|\sum_{i=1}^{T-1} k\left(i/b_T\right) T^{-1}\sum_{t=1}^{T-i} \big( \widehat{\vu}_{t+i}^{} \widehat{\vu}_t' - \vu_{t+i}^{} \vu_t'  \big)  \right\|\leq O_p\big(T^{-1/2}\ln T\big)\sum_{i=1}^{T-1} \left|k\left(i/b_T\right)\right|=O_p\big(T^{-1/2}b_T\ln T\big),
\end{equation}
where the final step is due to lemma 1 of \cite{jansson2002}.  Clearly, both terms are asymptotically negligible under the assumption $T^{-1/2}b_T\ln T\rightarrow 0$ as $T\to \infty$. The limits of the two remaining subblocks of $\widehat{\mSigma}_T(\widehat{\vgamma}_T)$ and $\widehat{\mGamma}_T(\widehat{\vgamma}_T,b_T)$ are derived similarly.  
\end{proof}

\section{Limiting Distribution for Example \ref{example:dettrend}}
Invoking Theorem \ref{thm:limiting_dist}, we have
\begin{equation*}
  \begin{bmatrix}
   T^{\theta_0+\frac{1}{2}}  \\
   T^{\theta_0+\frac{1}{2}} \tau_0 \ln(T)  & T^{\theta_0+\frac{1}{2}}
 \end{bmatrix}
 \begin{bmatrix}
  \,\widehat{\theta}_T -\theta_0 \\
  \,\widehat{\tau}_T - \tau_0
 \end{bmatrix}
 \wto
 \begin{bmatrix}
\myint \big(\tau_0 r^{\theta_0}\ln (r)\big)^2 dr 	& \myint \tau_0 r^{2\theta_0}{\ln (r)} dr  \\
\myint \tau_0 r^{2\theta_0}{\ln (r)} dr			& \myint r^{2\theta_0} dr 
\end{bmatrix}^{-1}
\begin{bmatrix}
\myint \tau_0 r^{\theta_0} \ln(r)  dB_u\\
\myint r^{\theta_0} dB_u
\end{bmatrix}.
\end{equation*}
It remains to show that the quantity in the RHS is normally distributed with a mean and variance as in \eqref{eq:powerlawtrend} of the main paper. Consider an arbitrary vector $\vc=[c_1,c_2]\tran$ and define
\begin{equation*}
 A_{\vc} = \vc\tran 
 \begin{bmatrix}
\myint \tau_0 r^{\theta_0} \ln(r)  dB_u \\
\myint r^{\theta_0} dB_u
\end{bmatrix}
=
\myint \left[c_1 \tau_0 r^{\theta_0} \ln(r) + c_2 r^{\theta_0}  \right] dB_u
\distrequal
\Omega_{uu}^{1/2} \myint \left[c_1 \tau_0 r^{\theta_0} \ln(r) + c_2 r^{\theta_0}  \right] dW_u.
\end{equation*}
Gaussianity is preserved under mean square integration (see, e.g., section 4.6 in \cite{soong1973}) and we proceed to the mean and variance of $A_{\vc}$. From (4.190) in the same reference, we get $\E(A_{\vc})=\Omega_{uu}^{1/2} \myint \left[c_1 \tau_0 r^{\theta_0} \ln(r) + c_2 r^{\theta_0}  \right] d\E\big(W_u\big)=0$. Moreover, (2.16) in \cite{tanaka2017} yields
$$
 \var\big(  A_{\vc} \big) = \Omega_{uu}^{} \myint \left[c_1 \tau_0 r^{\theta_0} \ln(r) + c_2 r^{\theta_0}  \right]^2 dr
  = \Omega_{uu}^{} \vc\tran
   \begin{bmatrix}
\myint \big(\tau_0 r^{\theta_0}\ln (r)\big)^2 dr 	& \myint \tau_0 r^{2\theta_0}{\ln (r)} dr  \\
\myint \tau_0 r^{2\theta_0}{\ln (r)} dr			& \myint r^{2\theta_0} dr 
\end{bmatrix}
\vc.
$$
Our choice of $\vc$ was arbitrary and thus $\left[\begin{smallmatrix}
\myint \tau_0 r^{\theta_0} \ln(r)  dB_u\\
\myint r^{\theta_0} dB_u
\end{smallmatrix}\right]\sim \rN \left(\vzeros,  \Omega_{uu}^{}
   \left[\begin{smallmatrix}
\myint \big(\tau_0 r^{\theta_0}\ln (r)\big)^2 dr 	& \myint \tau_0 r^{2\theta_0}{\ln (r)} dr  \\
\myint \tau_0 r^{2\theta_0}{\ln (r)} dr			& \myint r^{2\theta_0} dr 
\end{smallmatrix}\right] \right)$. Use $\myint \big(r^{\theta_0} \ln(r) \big)^2 dr=\frac{2}{(2\theta_0+1)^3}$, $\myint r^{2\theta_0} \ln(r) dr= -\frac{1}{(2\theta_0+1)^2}$, and basic linear algebra to recover the result.

\section{Simulation and Calculations related to FMOLS} \label{sec:FMOLS}

\subsection{Preliminary simulations}
We consider $N=1$ and test $H_0:\phi_2=0$ versus $H_a:\phi_2\neq0$ with FMOLS. Specifically, we generate the data according to
\begin{equation}
  y_t = \tau_1 + \tau_2 t  + \tau_g t^{\,\theta} + \phi_1 x_t + \phi_2 x_t^2 + u_t,
\label{eq:simDGP}
\end{equation}
where $x_t = \sum_{s=1}^t v_s$. The chosen parameter values are $\theta=2$, $\vtau=[\tau_1,\tau_2,\tau_g]\tran=[7,0.05,-5\times 10^{-4}]\tran$, and $\vphi=[\phi_1,\phi_2]\tran=[5,0]\tran$. These parameter values are representative. The disturbance vector $[u_t,v_t]\tran$ is generated from the VAR($1$) specification\footnote{We start the VAR recursions from $\left[\begin{smallmatrix} u_0 \\ v_0 \end{smallmatrix} \right]=\vzeros$ and use a presample of 50 observations to reduce the influence of these initial values.}
\begin{equation}
\begin{bmatrix}
 u_t \\
 v_t
\end{bmatrix}
=
\mA
\begin{bmatrix}
 u_{t-1} \\
 v_{t-1}
\end{bmatrix}
+
\begin{bmatrix}
 \eta_t \\
 \epsilon_t
\end{bmatrix},
\qquad\qquad\qquad 
\begin{bmatrix}
 \eta_t \\
 \epsilon_t
\end{bmatrix}
 \stackrel{i.i.d.}{\sim} \rN\
 \left( \vzeros,
 \left[
\begin{smallmatrix}
 1_{\phantom 1}^{\phantom 2}		 	& 0.5 \\
 0.5								& 1_{\phantom 1}^{\phantom 2}
\end{smallmatrix}
\right]
\right).
\label{eq:simdesign}
\end{equation}
We construct the autoregressive matrix $\mA$ along the following two steps: (1) generate a $(2\times 2)$ random matrix $\mU$ from $\rU[0,1]$ to construct the orthogonal matrix $\mH = \mU\left( \mU\tran \mU \right)^{-1/2}$, and (2) compute $\mA = \mH \mL \mH\tran$ with $\mL = \diag[0.9,0.7]$.

As shown in Figure \ref{fig:fmgls_comparisons}, for sample sizes as large as $15,000$, the empirical size of the feasible FMOLS estimator seems to stabilisze at $11\%$  whereas the infeasible estimator FMOLS($\theta_0$) yields an empirical size close to $5\%$. These results indicate poor finite sample performance of FMOLS or possible even a lack of asymptotic validity.

\begin{figure}[H]
	\centering
	\includegraphics[width=0.5\textwidth]{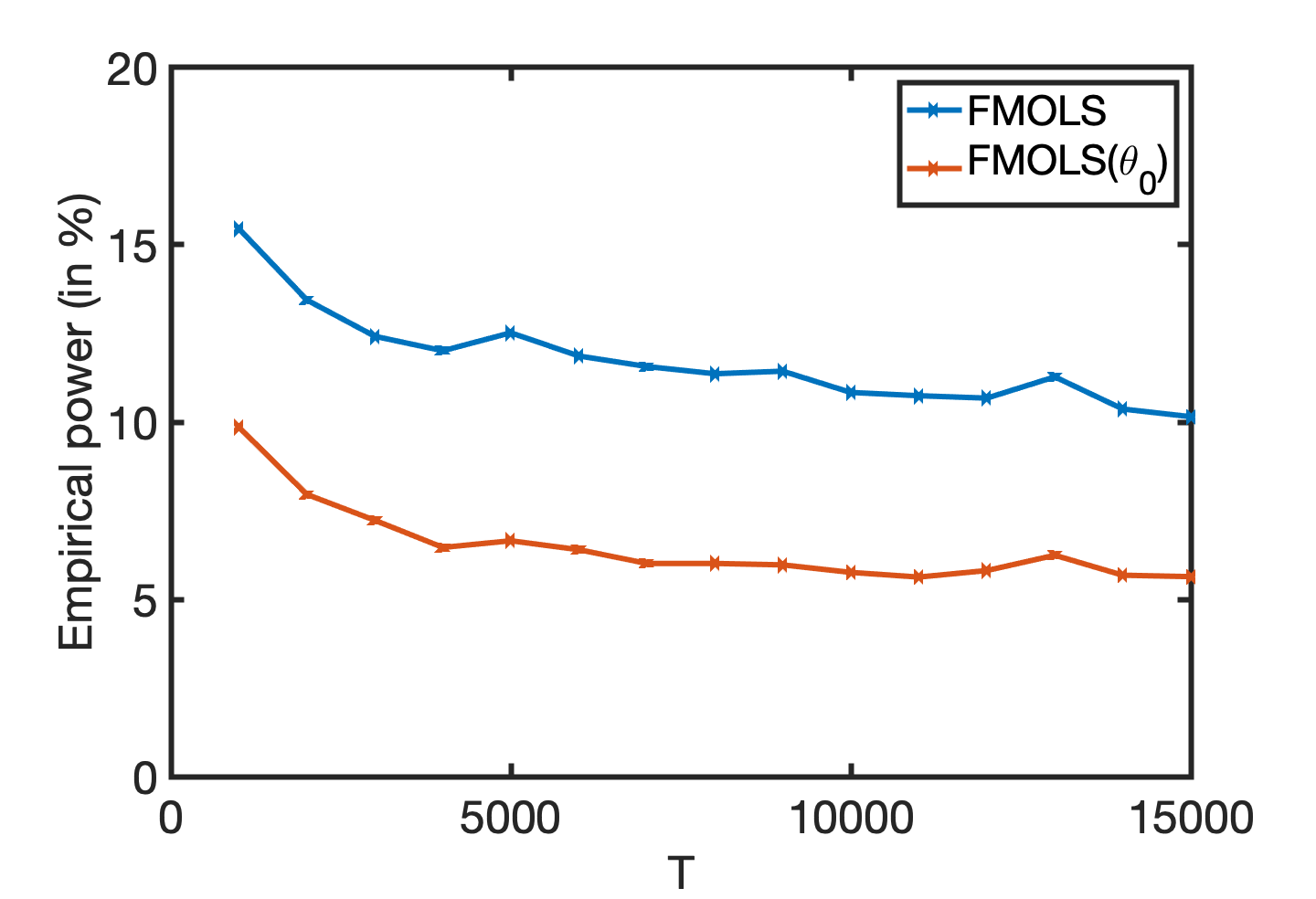}
	\caption{\footnotesize The empirical size of feasible and infeasible FMOLS estimators for a large range of sample sizes.}
	\label{fig:fmgls_comparisons}
\end{figure}

\subsection{Asymptotic properties of FMOLS}
We comment on the asymptotic properties of the FMOLS estimator when $N=1$. To shorten notation, the subscript `$i$' in $x_{i,t}$, $y_{i,t}$ and $p_i$ will be omitted. We analyse the asymptotic properties of $
 \widetilde{\mD}_{\theta_0,T}
 \left[ \begin{smallmatrix}
  \widehat{\tau}_{g,T}^+ - \vtau_{g,0} \\
  \widehat{\vbeta}_T^+ - \vbeta_0
 \end{smallmatrix} \right],
$
with $\widetilde{\mD}_{\theta_0,T}=\sqrt{T}
\left[\begin{smallmatrix}
 T^{\theta_0}		& \vzeros_{1 \times (p+2)}\\
 \vzeros_{(p+2)\times 1}			& \mD_{(1),T}
\end{smallmatrix}\right]
$
and 
$$
 \begin{bmatrix}
  \widehat{\tau}_{g,T}^+ \\
  \widehat{\vbeta}_T^+
 \end{bmatrix}
 =
 \left(
  \sumtT \vz_t^{}(\widehat{\theta}_T) \vz_t^{}(\widehat{\theta}_T)\tran
 \right)^{-1}
 \left(
 \sumtT \vz_t^{}(\widehat{\theta}_T) y_t^+
  - \mA^*
 \right),
$$
where $\vz_t(\theta) = [t^\theta, 1,t,x_{1,t},\ldots, x_{1,t}^{p_1}]\tran$, and $y_t^+$ and $\mA^*$ are second-order bias corrections. That is, $y_t^+= y_t - \widehat{\mOmega}_{uv} ^{} \widehat{\mOmega}_{vv}^{-1} \diff \vx_t$ and $\mA^*= [\vzeros_{3\times 1}',\mA_1^{*\prime}]\tran$ with $\mA_1^*=\widehat{\mDelta}_{vu}^{+}\left[T,2\sum_{t=1}^{T}x_{t},\dots,p\sum_{t=1}^{T}x_{t}^{p-1}\right]\tran$ and $\widehat{\mDelta}_{vu}^{+}$ equals $\widehat{\mDelta}_{vu}^+=\widehat{\mDelta}_{vu}^{}-\widehat{\mDelta}_{vv}^{}\widehat{\mOmega}_{vv}^{-1}\widehat{\mOmega}_{vu}^{}$.

We now investigate how the estimation of $\theta$ affects the limiting distribution of the FMOLS estimator. By straightforward linear algebra manipulations, we find
\begin{equation}\label{eq:fmols}
\widetilde{\mD}_{\theta_0,T}
\begin{bmatrix}
	\widehat{\vtau}_{g,T}^+ - \vtau_{g,0}\\
	\widehat{\vbeta}_T^+ - \vbeta_0
\end{bmatrix}
= \left(\widetilde{\mD}_{\theta_0,T}^{-1} \sumtT \vz_t^{}\big(\widehat{\theta}_T\big)\vz_t^{}\big(\widehat{\theta}_T\big)\tran \widetilde{\mD}_{\theta_0,T}^{-1}\right)^{-1} \widetilde{\mD}_{\theta_0,T}^{-1} \left[\sumtT \vz_t^{}\big(\widehat{\theta}_T\big) \tilde{u}_t^+-\mA^*\right],
\end{equation}
where $\tilde{u}_t^+ = \Big(\vz_t^{}\big( \theta_0 \big)- \vz_t^{}\big(\widehat{\theta}_T\big) \Big)\tran 
\left[\begin{smallmatrix}
	\vtau_{g,0}\\
	\vbeta_0
\end{smallmatrix}\right] +u_t
-\widehat{\mOmega}_{uv}^{}\widehat{\mOmega}_{vv}^{-1}\diff x_t^{}$. We will discuss $\widetilde{\mD}_{\theta_0,T}^{-1} \sumtT \vz_t^{}\big(\widehat{\theta}_T\big)\vz_t^{}\big(\widehat{\theta}_T\big)\tran \widetilde{\mD}_{\theta_0,T}^{-1}$ and $ \widetilde{\mD}_{\theta_0,T}^{-1} \left[\sumtT \vz_t^{}\big(\widehat{\theta}_T\big) \tilde{u}_t^+-\mA^*\right]$ separately after having enumerate several intermediate results.

\begin{lemma}\label{lem:fmolsresults}
Define $\widetilde{\vj}(r;\theta_0)=\big[r^{\theta_0}, 1, r ,B_v(r),\dots,B_v^p(r)\big]\tran$ and $\brownnormal_{u.v}^{}=\brownnormal_{u}^{}-\mOmega_{uv}^{}\mOmega_{vv}^{-1}B_{v}^{}$. Then, under Assumptions \ref{assumpt:identifytheta}-\ref{assumption:LRVestimation}, we have
 \begin{enumerate}[(i)]
 \item  $\widetilde{\mD}_{\theta_0,T}^{-1} \sumtT \vz_t^{}\big(\widehat{\theta}_T\big)\vz_t^{}\big(\widehat{\theta}_T\big)\tran \widetilde{\mD}_{\theta_0,T}^{-1} \wto \myint \widetilde{\vj}(r;\theta_0)\widetilde{\vj}(r;\theta_0)\tran dr$,
 \item $\widetilde{\mD}_{\theta_0,T}^{-1} \left[\sumtT \vz_t^{}\big(\theta_0\big) \left(u_t^{}-\widehat{\mOmega}_{uv}^{}\widehat{\mOmega}_{vv}^{-1}v_t^{}\right)-\mA^*\right] \wto \myint \widetilde{\vj}(r;\theta_0)d\brownnormal_{u.v}(r)$,
 \item $\widetilde{\mD}_{\theta_0,T}^{-1} \sumtT \vz_t^{}\big(\theta_0\big)\left(\vz_t^{}\big(\widehat{\theta}_T\big)-\vz_t^{}\big(\theta_0\big)\right)\tran\left[\begin{smallmatrix}
	\vtau_{g,0}\\
	\vphi_0
	\end{smallmatrix}\right]=O_p\Big(  \ln T \Big)$\,,
 \item $\sum_{t=1}^{T}\widetilde{\mD}_{\theta_0,b_T}^{-1}\left(\vz_t^{}\big(\widehat{\theta}_T\big)-\vz_t^{}\big(\theta_0\big)\right)\left(\vz_t^{}\big(\widehat{\theta}_T\big)-\vz_t^{}\big(\theta_0\big)\right)\tran \left[\begin{smallmatrix}
	\vtau_{g,0} \\ \vphi_0 \end{smallmatrix} \right]=O_p\left( (\ln T)^2 T^{-(\theta_L+\frac{1}{2})}\right)\,$,
 \item $\widetilde{\mD}_{\theta_0,T}^{-1} \sumtT \left(\vz_t^{}\big(\widehat{\theta}_T\big)-\vz_t^{}\big(\theta_0\big)\right)\left(u_t^{}-\widehat{\mOmega}_{uv}^{}\widehat{\mOmega}_{vv}^{-1}v_t^{}\right)=o_p(1)$.
 \end{enumerate}
\end{lemma}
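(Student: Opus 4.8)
\textbf{Proof proposal for Lemma \ref{lem:fmolsresults}.}

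The plan is to establish items (i)--(v) by exploiting the same scaling and approximation arguments already developed in the proofs of Theorems \ref{thm:limiting_dist}--\ref{thm:KPSS_plus}, the only new feature being the slightly coarser scaling matrix $\widetilde{\mD}_{\vtheta_0,T}$ (which omits the $\mL_{\vtau_0,T}$ rotation). For \textbf{(i)}, I would first replace $\widehat{\vtheta}_T$ by $\vtheta_0$: write $\widetilde{\mD}_{\vtheta_0,T}^{-1}\vz_t(\widehat{\vtheta}_T) = \widetilde{\mD}_{\vtheta_0,T}^{-1}\vz_t(\vtheta_0) + \widetilde{\mD}_{\vtheta_0,T}^{-1}\big(\vz_t(\widehat{\vtheta}_T)-\vz_t(\vtheta_0)\big)$, expand the product of sums into three pieces as in \eqref{eq:variousterms}, and bound the cross and quadratic correction terms using the mean-value-theorem estimate $|t^{\widehat\theta_i}-t^{\theta_{0i}}|\le C(\ln t)|\widehat\theta_i-\theta_{0i}|$ together with $T^{\theta_{0i}+1/2}(\widehat\theta_i-\theta_{0i})=O_p(1)$ from Theorem \ref{thm:limiting_dist} and Lemma \ref{lem:boundsforfixed}(i); this reproduces the bound \eqref{eq:ijbound}. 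The leading term $\widetilde{\mD}_{\vtheta_0,T}^{-1}\sum_t\vz_t(\vtheta_0)\vz_t(\vtheta_0)'\widetilde{\mD}_{\vtheta_0,T}^{-1}$ converges weakly to $\int\widetilde{\vj}(r;\vtheta_0)\widetilde{\vj}(r;\vtheta_0)'dr$ by Lemma \ref{lem:Brownianconv}(iii) (this is exactly the convergence already invoked below \eqref{eq:condition3} in the proof of Theorem \ref{thm:consistentLRVs}).

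For \textbf{(ii)}, I would use that $u_t-\mOmega_{uv}\mOmega_{vv}^{-1}\vv_t$ is the innovation that is asymptotically uncorrelated (at long-run level) with $\brown_v$, so that the FM-corrected score converges to the mixed-Gaussian stochastic integral $\int\widetilde{\vj}(r;\vtheta_0)d\brownnormal_{u.v}(r)$; the one-sided-bias term $\mA^*$ exactly cancels the $\mDelta^+_{v_iu}$ offsets that would otherwise appear via Lemma \ref{lem:Brownianconv}(i), and the sampling error $\|\mOmega_{uv}\mOmega_{vv}^{-1}-\widehat{\mOmega}_{uv}\widehat{\mOmega}_{vv}^{-1}\|\pto 0$ follows from Theorem \ref{thm:consistentLRVs}. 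This is the same computation underlying the standard CPR--FMOLS result of \cite{wagnerhong2016}, now carried out with flexible (but known) trend powers via Lemma \ref{lem:Brownianconv}(i)--(iii). Items \textbf{(iii)} and \textbf{(iv)} are pure rate bounds: for (iii) I expand $\vz_t(\widehat{\vtheta}_T)-\vz_t(\vtheta_0)$ componentwise, apply Cauchy--Schwarz exactly as in the bounds on term $I$ in the proof of Theorem \ref{thm:consistentLRVs} and in $VIc$ of the proof of Theorem \ref{thm:KPSS_plus}, and collect the $O_p(\ln T)$ factor coming from the $\ln t$ in the mean-value bound combined with $\widetilde{\mD}_{\vtheta_0,T}^{-1}$ scaling; for (iv) the same manipulation (now with the $b_T$-scaling $\widetilde{\mD}_{\vtheta_0,b_T}^{-1}$) gives the $O_p\big((\ln T)^2 T^{-(\theta_L+1/2)}\big)$ rate, mirroring \eqref{eq:ijbound} and the $VIc^{(3)}$ estimate. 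For \textbf{(v)} I would again split $\vz_t(\widehat{\vtheta}_T)-\vz_t(\vtheta_0)$ into its deterministic-trend block (the stochastic block is identically zero) and bound $\widetilde{\mD}_{\vtheta_0,T}^{-1}\sum_t\big(\vd_t(\widehat{\vtheta}_T)-\vd_t(\vtheta_0)\big)(u_t-\widehat{\mOmega}_{uv}\widehat{\mOmega}_{vv}^{-1}\vv_t)$ by Cauchy--Schwarz: one factor is $\big(\tfrac1T\sum_t\|\mD_{d,T}(\vtheta_0)^{-1}(\vd_t(\widehat{\vtheta}_T)-\vd_t(\vtheta_0))\|^2\big)^{1/2}=o_p(1)$ on the relevant neighbourhood by Lemma \ref{lem:supremums}(i) (evaluated at $\widehat{\vtheta}_T$, which lies in $\calN_{\delta,T}(\vgamma_0)$ with probability tending to one), and the other is $O_p(1)$ by a central limit theorem for the linear process $u_t-\mOmega_{uv}\mOmega_{vv}^{-1}\vv_t$ plus Theorem \ref{thm:consistentLRVs}.

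The main obstacle I anticipate is \textbf{item (iii)}: it is the one estimate that is only $O_p(\ln T)$ rather than $o_p(1)$, and it is precisely this logarithmic factor (noted in footnote \ref{footnote:FMOLS}) that ultimately obstructs a clean mixed-normal limit for the FMOLS estimator once $\widehat{\vtheta}_T$ is plugged in. Making the $\ln T$ rate sharp — rather than merely $o_p(T^{1/2})$ — requires carefully keeping the $\ln t$ weight from the mean-value expansion attached to the $\mD_{d,T}(\vtheta_0)^{-1}$ scaling while using $\|\mD_{d,T}(\vtheta_0)(\widehat{\vtau}_T-\vtau_0)\|=O_p(T^{-1/2}\ln T)$ and $T^{\theta_{0i}+1/2}(\widehat\theta_i-\theta_{0i})=O_p(1)$ in the right combination; the bookkeeping is identical in spirit to \eqref{eq:ratetau} and \eqref{eq:ijbound} but must be done without discarding the logarithm. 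All remaining steps are routine given the lemmas already proved.
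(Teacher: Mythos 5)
Your treatment of parts (i)--(iv) follows essentially the same route as the paper: add-and-subtract around $\vtheta_0$, the mean-value bound $|t^{\widehat{\theta}_i}-t^{\theta_{0i}}|\leq C\,t^{\theta_{0i}}(\ln t)\,|\widehat{\theta}_i-\theta_{0i}|$ combined with $T^{\theta_{0i}+\frac{1}{2}}(\widehat{\theta}_i-\theta_{0i})=O_p(1)$ and Lemma \ref{lem:boundsforfixed}(i), Lemma \ref{lem:Brownianconv} for the leading terms, and the cancellation of $\widetilde{\bm{\calB}}_{vu}^+$ by $\mA^*$ in (ii). Those parts are fine.

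Part (v), however, has a genuine gap as you propose it. After Cauchy--Schwarz the ``other factor'' is $\big(\sumtT (u_t^+)^2\big)^{1/2}=O_p(T^{1/2})$, not an $O_p(1)$ CLT-normalized quantity; no central limit theorem rescues this once you have separated the weights from the innovations. Concretely, the $i$\textsuperscript{th} component of the deterministic block is $T^{-(\theta_{0i}+\frac{1}{2})}\sumtT\big(t^{\widehat{\theta}_i}-t^{\theta_{0i}}\big)u_t^+$, and your factorization gives
\begin{equation*}
\left(\frac{1}{T}\sumtT\Big[T^{-\theta_{0i}}\big(t^{\widehat{\theta}_i}-t^{\theta_{0i}}\big)\Big]^2\right)^{1/2}\left(\sumtT (u_t^+)^2\right)^{1/2}
= O_p\!\left( (\ln T)\, T^{-(\theta_{0i}+\frac{1}{2})}\right)\cdot O_p\!\left(T^{1/2}\right)
= O_p\!\left( (\ln T)\, T^{-\theta_{0i}}\right),
\end{equation*}
which diverges for $\theta_{0i}<0$ and is only $O_p(\ln T)$ even in the standard case $\theta_{0i}=0$ (an intercept). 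Cauchy--Schwarz discards exactly the square-root cancellation in $\sum_t g(t/T)u_t^+$ that makes the term small. The paper instead linearizes $t^{\widehat{\theta}_i-\theta_{0i}}-1=(\widehat{\theta}_i-\theta_{0i})\ln t+\cdots$, splits $\ln t=\ln\frac{t}{T}+\ln T$, and invokes the weak convergence of $\frac{1}{\sqrt{T}}\sumtT\big(\frac{t}{T}\big)^{\theta_{0i}}\big(\ln\frac{t}{T}\big)^k u_t^+=O_p(1)$ from Lemma \ref{lem:Brownianconv}(ii); this keeps the weighted partial sum intact and yields the bound $T^{-(\theta_{0i}+\frac{1}{2})}(1+\ln T)\,O_p(1)=o_p(1)$ for every $\theta_{0i}>-\frac{1}{2}$. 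You need this sharper argument for (v); the rest of your outline stands.
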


\begin{proof}
\textbf{\textit{(i)}} We can always add and subtract such that the LHS of $(i)$ reads
\begin{equation}
\begin{aligned}
 \widetilde{\mD}_{\theta_0,T}^{-1} \sumtT & \vz_t^{}\big(\widehat{\theta}_T\big)\vz_t^{}\big(\widehat{\theta}_T\big)\tran \widetilde{\mD}_{\theta_0,T} = 
 \widetilde{\mD}_{\theta_0,T}^{-1} \sumtT \vz_t^{}\big(\theta_0\big)\vz_t^{}\big(\theta_0\big)\tran \widetilde{\mD}_{\theta_0,T}^{-1} \\
 &\qquad+ \left( \widetilde{\mD}_{\theta_0,T}^{-1}  \sumtT \vz_t^{}\big(\widehat{\theta}_T\big)\vz_t^{}\big(\widehat{\theta}_T\big)\tran \widetilde{\mD}_{\theta_0,T}^{-1} - \widetilde{\mD}_{\theta_0,T}^{-1} \sumtT  \vz_t^{}\big(\theta_0 \big)\vz_t^{}\big(\theta_0\big)\tran\widetilde{\mD}_{\theta_0,T}^{-1} \right).
\end{aligned}
\label{eq:myterms1}
\end{equation}
Lemma \ref{lem:Brownianconv}(iii) implies that the first term in the RHS of \eqref{eq:myterms1} converges to $\myint \widetilde{\vj}(r;\theta_0)\widetilde{\vj}(r;\theta_0)\tran dr$. It remains to show that the term in parenthesis vanishes. By $\sum_{t}\va_t^{}\va_t\tran-\sum_{t}\vb_t^{}\vb_t\tran=\sum_{t}(\va_t^{}-\vb_t^{})(\va_t^{}-\vb_t^{})\tran+\sum_{t}(\va_t^{}-\vb_t^{})\vb_t\tran+\sum_{t}\vb_t^{}(\va_t^{}-\vb_t^{})\tran$ and the Cauchy-Schwarz inequality, we have
\begin{equation*}
\begin{aligned}
	\Bigg\| & \widetilde{\mD}_{\theta_0,T}^{-1} \sumtT \vz_t\big(\widehat{\theta}_T\big)\vz_t\big(\widehat{\theta}_T\big)'\widetilde{\mD}_{\theta_0,T}^{-1} -\widetilde{\mD}_{\theta_0,T}^{-1} \sumtT \vz_t\big(\theta_0\big)\vz_t\big(\theta_0\big)\tran \widetilde{\mD}_{\theta_0,T}^{-1}\Bigg\| \\
	&\leq \sumtT \left\|\widetilde{\mD}_{\theta_0,T}^{-1}\left(\vz_t\big(\widehat{\theta}_T\big)-\vz_t\big(\theta_0\big)\right)\right\|^2+2\sum_{t=1}^{T}\left\|\widetilde{\mD}_{\theta_0,T}^{-1}\vz_t\big(\theta_0\big)\right\|\left\|\widetilde{\mD}_{\theta_0,T}^{-1}\left(\vz_t\big(\widehat{\theta}_T\big)-\vz_t\big(\theta_0\big)\right)\right\| \\
	&\leq \sumtT \left\|\widetilde{\mD}_{\theta_0,T}^{-1}\left(\vz_t\big(\widehat{\theta}_T\big)-\vz_t\big(\theta_0\big)\right)\right\|^2+2\sqrt{\sumtT \left\|\widetilde{\mD}_{\theta_0,T}^{-1}\vz_t\big(\theta_0\big)\right\|^2}\sqrt{\sumtT \left\|\widetilde{\mD}_{\theta_0,T}^{-1}\left(\vz_t\big(\widehat{\theta}_T\big)-\vz_t\big(\theta_0\big)\right)\right\|^2}.
\end{aligned}
\end{equation*}
We have $\sumtT \big\|\widetilde{\mD}_{\theta_0,T}^{-1}\vz_t\big(\theta_0\big)\big\|^2 = \tr\left( \sumtT \widetilde{\mD}_{\theta_0,T}^{-1}\vz_t\big(\theta_0\big) \vz_t\big(\theta_0\big)\tran \widetilde{\mD}_{\theta_0,T}^{-1} \right) \wto \tr\Big( \myint \widetilde{\vj}(r;\theta_0)\widetilde{\vj}(r;\theta_0)\tran dr \Big)$. Next note that $\sumtT \big\|\widetilde{\mD}_{\theta_0,T}^{-1}\big(\vz_t\big(\widehat{\theta}_T\big)-\vz_t\big(\theta_0\big)\big)\big\|^2=\frac{1}{T}\sumtT [T^{-\theta_0}(t^{\widehat\theta_T} - t^{\theta_0})]^2$. We have
\begin{equation}
\begin{aligned}
 \frac{1}{T}\sumtT \big[T^{-\theta_{0}} &\big(t^{\widehat{\theta}_T}-t^{\theta_{0}}\big)\big]^2
  \leq C \left(\widehat{\theta}_T - \theta_{0} \right) \frac{1}{T} \sumtT \left(\frac{t}{T}\right)^{2 \theta_{0}} (\ln t)^2 \\
  & \leq C T^{-2(\theta_{0}+\frac{1}{2})} (\ln T)^2 \left[T^{\theta_{0}+\frac{1}{2}} \left(\, \widehat{\theta}_T - \theta_{0} \right)\right]^2  \sup_{\theta_L\leq \theta \leq \theta_U} \left| \frac{1}{T} \sumtT \left(\frac{t}{T} \right)^{2\theta} \right|
  = o_p(1),
\end{aligned}
\end{equation}
where we used the mean-value theorem and Lemma \ref{lem:boundsforfixed}(i). The claim follows. \textbf{\textit{(ii)}} $\widehat{\mOmega}_{uv}$ and $\widehat{\mOmega}_{vv}$ consistently estimate $\mOmega_{uv}$ and $\mOmega_{vv}$, respectively (Theorem \ref{thm:consistentLRVs}). It therefore suffices to look at $\widetilde{\mD}_{\theta_0,T}^{-1}\sumtT \vz_t^{}\big(\theta_0\big) \left(u_t^{} - \mOmega_{uv}^{}\mOmega_{vv}^{-1}\vv_t^{}\right)$ and $\widetilde{\mD}_{\theta_0,T}^{-1} \mA^*$. Lemma \ref{lem:Brownianconv}(ii) with $u_t^+=u_t^{} - \mOmega_{uv}^{}\mOmega_{vv}^{-1}v_t^{}$ instead of $u_t$ gives the limiting result $\frac{1}{\sqrt{T}}\sum_{t=1}^{T}\big(x_{t}/\sqrt{T}\big)^j u_t^+\wto\int_{0}^{1}\brown_{v}^j(r)d\brownnormal_{u.v}(r)+j\mDelta_{vu}^+\int_{0}^{1}\brown_{v}^{j-1}(r)dr$, which implies
\begin{equation}
 \widetilde{\mD}_{\theta_0,T}^{-1}\sumtT \vz_t^{}\big(\theta_0\big) \left(u_t^{} - \mOmega_{uv}^{}\mOmega_{vv}^{-1}v_t^{}\right) \wto \myint \widetilde{\vj}(r;\theta_0)d\brownnormal_{u.v}(r)+\widetilde{\bm{\calB}}_{vu}^+,
\label{eq:FMOLScov}
\end{equation}
where $\widetilde{\bm{\calB}}_{vu}^+=\big[\vzeros_{3\times 1}\tran,\vb\tran\mDelta_{vu}^+\big]\tran$. The term $-\widetilde{\mD}_{\vtheta_0,T}^{-1} \mA^*$ is constructed to asymptotically cancel out the term $\widetilde{\bm{\calB}}_{vu}^+$ in the RHS of \eqref{eq:FMOLScov}. \textbf{\textit{(iii)}} Using $\vz_t(\widehat{\theta}_T)-\vz_t(\theta_0)=\begin{bmatrix} t^{\widehat{\theta}_T}-t^{\vtheta_0} & \vzeros\tran \end{bmatrix}\tran$, we have
\begin{equation*}
\widetilde{\mD}_{\theta_0,T}^{-1} \sumtT \vz_t^{}\big(\theta_0\big)\left(\vz_t^{}\big(\widehat{\theta}_T\big)-\vz_t^{}\big(\theta_0\big)\right)\tran\left[\begin{smallmatrix}
	\tau_{g,0}\\
	\beta_0
	\end{smallmatrix}\right]
=
\widetilde{\mD}_{\theta_0,T}^{-1} \sumtT \vz_t^{}\big(\theta_0\big)\left( t^{\widehat{\theta}_T}-t^{\vtheta_0}  \right) \tau_{g,0}.
\end{equation*}
The typical elements in the vector on the RHS are of the form $\frac{1}{\sqrt{T}}\sumtT \left(\frac{t}{T}\right)^{\theta_{0}} \tau_{g,0}\big(t^{\widehat{\theta}_{T}}-t^{\theta_{0}}\big)$ or $\frac{1}{\sqrt{T}}\sumtT \left(\frac{x_{it}}{\sqrt{T}}\right)^{j} \tau_{g,0}\big(t^{\widehat{\theta}_{T}}-t^{\theta_{0}}\big)$. We show that both contributions are $O_p\left( \ln T \right)$. By the mean-value theorem and Lemma \ref{lem:boundsforfixed}(i),
\begin{equation}
\begin{aligned}
 \Bigg|\frac{1}{\sqrt{T}}\sumtT &\left(\frac{t}{T}\right)^{\theta_{0}} \tau_{g,0}\big(t^{\widehat{\theta}_{T}}-t^{\theta_{0}}\big) \Bigg|
  \leq \Bigg| \frac{1}{\sqrt{T}}  \tau_{g,0} \sumtT \left( \frac{t}{T} \right)^{\theta_{0}} t^{\theta_{0}} \left( t^{\widehat{\theta}_{T}-\theta_{0}}-1\right) \Bigg|\\
   & \leq C |\tau_{g,0}| \, \Big| T^{\theta_{0}+\frac{1}{2}}(\widehat{\theta}_{T}-\theta_{0})\Big| \, \frac{1}{T} \sumtT \left(\frac{t}{T}\right)^{2\theta_{0}} \ln t \\
   &\leq  C (\ln T) |\tau_{g,0}| \, \Big| T^{\theta_{0}+\frac{1}{2}}(\widehat{\theta}_{T}-\theta_{0})\Big| \, \left[ \frac{1}{T} \sumtT \left(\frac{t}{T}\right)^{2\theta_{0}} \right] =  O_p( \ln T).
\end{aligned}
\label{eq:fm_subsample_I1}
\end{equation}
Similarly, from the mean-value theorem and Cauchy-Schwartz inequality, we see that
\begin{equation}
\begin{aligned}
 \Bigg| &\frac{1}{\sqrt{T}}\sumtT \left(\frac{x_{it}}{\sqrt{T}}\right)^{j} \tau_{g,0}\big(t^{\widehat{\theta}_{T}}-t^{\theta_{0}}\big) \Bigg|
  \leq \Bigg| \frac{1}{\sqrt{T}} \tau_{g,0} \sumtT \left(\frac{x_{it}}{\sqrt{T}}\right)^{j}   t^{\theta_{0}} \left( t^{\widehat{\theta}_{T}-\theta_{0}}-1\right) \Bigg| \\
  & \leq C  |\tau_{0k}| \, \Big| T^{\theta_{0}+\frac{1}{2}}(\widehat{\theta}_{T}-\theta_{0})\Big| \; \frac{1}{T} \sumtT \left| \frac{x_{it}}{\sqrt{T}}\right|^{j} \left( \frac{t}{T} \right)^{\theta_{0}} \ln t \\
  & \leq  C (\ln T) |\tau_{g,0}| \, \Big| T^{\theta_{0}+\frac{1}{2}}(\widehat{\theta}_{T}-\theta_{0})\Big| \; \sqrt{ \frac{1}{T} \sumtT \left(\frac{x_{it}}{\sqrt{T}} \right)^{2j}} \sqrt{\frac{1}{T} \sumtT \left(\frac{t}{T}\right)^{2\theta_L} }.
\end{aligned}
\label{eq:fm_subsample_I2}
\end{equation}
From \eqref{eq:fm_subsample_I1} and \eqref{eq:fm_subsample_I2} we conclude that
$
\widetilde{\mD}_{\theta_0,T}^{-1} \sumtT \vz_t^{}\big(\theta_0\big)\left(\vz_t^{}\big(\widehat{\theta}_T\big)-\vz_t^{}\big(\theta_0\big)\right)\tran\left[\begin{smallmatrix}
	\tau_{g,0}\\
	\vbeta_0
	\end{smallmatrix}\right]
=
O_p(\ln T)$. \textbf{\textit{(iv)}} Use $\vz_t(\widehat{\theta}_T)-\vz_t(\theta_0)=\begin{bmatrix} t^{\widehat{\theta}_T}-t^{\vtheta_0} & \vzeros\tran \end{bmatrix}\tran$ to obtain $\widetilde{\mD}_{\theta_0,T}^{-1} \sumtT  \left(\vz_t^{}\big(\widehat{\theta}_T\big)-\vz_t^{}\big(\theta_0\big)\right)\left(\vz_t^{}\big(\widehat{\theta}_T\big)-\vz_t^{}\big(\theta_0\big)\right)\tran
\left[\begin{smallmatrix}
	\tau_{g,0} \\ \vbeta_0
\end{smallmatrix} \right]\\
	=\left[\begin{smallmatrix}
T^{-\theta_0} \frac{1}{\sqrt{T}}\sumtT \left(t^{\widehat{\theta}_T}-t^{\vtheta_0}\right)^2 \tau_{g,0} \\
	\vzeros
	\end{smallmatrix}\right]$. The absolute value of the nonzero element can be bounded as follows
\begin{equation*}
\begin{aligned}
 \Bigg| &\tau_{g,0} \frac{1}{T^{\theta_{0}+1/2}} \sumtT \left(t^{\widehat{\theta}_{T}}-t^{\theta_{0}} \right)^2  \Bigg|
  \leq|\tau_{g,0} | \, \frac{1}{T^{\theta_{0}+1/2}} \sumtT t^{ 2\theta_{0}} \left| t^{\widehat{\theta}_{T}-\theta_{0}}-1 \right| \left| t^{\widehat{\theta}_{T}-\theta_{0}}-1 \right| \\
  & \leq C |\tau_{g,0} | \, \big| \widehat{\theta}_{T}-\theta_{0}^2 \big| \frac{1}{T^{\theta_{0}+1/2}} \sumtT t^{ 2\theta_{0i}} (\ln t)^2 \\
  & \leq C (\ln T)^2  T^{-(\theta_{L}+\frac{1}{2})} |\tau_{g,0} | \,  \big| T^{\theta_{0}+\frac{1}{2}} (\widehat{\theta}_{T}-\theta_{0} ) \big|^2  \left[ \frac{1}{T} \sumtT \left( \frac{t}{T} \right)^{2\theta_{0}} \right] 
   = O_p\left( \frac{ (\ln T)^2 }{T^{\theta_{L}+\frac{1}{2}}} \right).
\end{aligned}
\end{equation*}
\textbf{\textit{(v)}} By similar steps as before, and invoking Theorem \ref{thm:consistentLRVs}, it is easy to show that it suffices to bound $T^{-(\theta_{0}+\frac{1}{2})} \sumtT \big(t^{\widehat{\theta}_T}-t^{\theta_{0}} \big)\big(u_t -\mOmega_{uv}^{} \mOmega_{vv}^{-1} v_t^{} \big)$. Writing $u_t^+=u_t -\mOmega_{uv}^{} \mOmega_{vv}^{-1} v_t^{}$, we have
\begin{equation*}
\begin{aligned}
  &T^{-(\theta_{0}+\frac{1}{2})} \sumtT \big(t^{\widehat{\theta}_T}-t^{\theta_{0}} \big) u_t^+
  = \frac{1}{\sqrt{T}} \sumtT \left( \frac{t}{T} \right)^{\theta_{0}}
   \big( t^{\widehat{\theta}_{T}-\theta_{0}} -1 \big) u_t^+ 
   = \big(\, \widehat{\theta}_T -\theta_{0} \big) \frac{1}{\sqrt{T}} \sumtT (\ln t) \left( \frac{t}{T} \right)^{\theta_{0}} u_t^+ +o_p(1) \\
   &= T^{-(\theta_{0}+\frac{1}{2})} \left[T^{\theta_{0}+\frac{1}{2}}  \big( \widehat{\theta}_T -\theta_{0} \big) \right] \frac{1}{\sqrt{T}} \sumtT \left(\ln \frac{t}{T} \right) \left( \frac{t}{T} \right)^{\theta_{0}} u_t^+ \\
   &\qquad
   + T^{-(\theta_{0}+\frac{1}{2})} \left[T^{\theta_{0}+\frac{1}{2}}  \big(\, \widehat{\theta}_T -\theta_{0} \big) \right] (\ln T)\frac{1}{\sqrt{T}}  \sumtT \left( \frac{t}{T} \right)^{\theta_{0}} u_t^+ + o_p(1) = \frac{1}{T^{\theta_{0}+\frac{1}{2} }} O_p(1) + \frac{\ln T}{T^{\theta_{0}+\frac{1}{2}}} O_p(1).
\end{aligned}
\end{equation*}
This establishes (v).
\end{proof}

The currents upper bounds in the lemma above suggest that the RHS of \eqref{eq:fmols} does not converge to a Gaussian mixture limiting distribution. The problematic expression is Lemma \ref{lem:fmolsresults}(iii). That is, if $\theta$ is estimated, then $\vz_t^{}\big(\widehat{\theta}_T\big)-\vz_t^{}\big(\theta_0\big)$ does not convergence sufficiently fast to zero to obtain the standard stochastic integral.

\newpage
\section{Additional Simulation Results}\label{secSup:MoreSimulations}
\subsection{Empirical size for DGP1 when $\theta=0.8$}
\begin{table}[h!]
	\centering
	\caption{The empirical size (in \%) of the single-equation $t$-tests $H_0:\beta_{2,1}=0$ and the joint Wald tests for $H_0: \beta_{2,1}=\ldots=\beta_{2,N}=0$ with $ \beta_{2,i}$ denoting the coefficient in front of $x_{i,t}^2$. The Monte Carlo results are based on: simulated inference with $\theta$ estimated by NLS (SimNLS), simulated inference with known $\theta=0.8$ (SimNLS($\theta_0$)), and two Fully Modified estimators for systems developed by \cite{wagnergrabarczykhong2019} with known $\theta=0.8$ (FM-SOLS($\theta_0$) and FM-SUR($\theta_0$)).}
	\label{tbl:SizeWagnerEtAl2}
	\resizebox{\textwidth}{!}{%
		\begin{tabular}{crrrrrrrrrrrrrr}
			\toprule
			$\theta_0=0.8$ & \multicolumn{4}{c}{$N=3$} & \multicolumn{1}{c}{} & \multicolumn{4}{c}{$N=5$} & \multicolumn{1}{c}{} & \multicolumn{4}{c}{$N=10$} \\
			\midrule
			$\rho$ & \multicolumn{1}{c}{SimNLS} & \multicolumn{1}{c}{SimNLS($\theta_0$)} & \multicolumn{1}{c}{FM-SOLS($\theta_0$)} & \multicolumn{1}{c}{FM-SUR($\theta_0$)} & \multicolumn{1}{c}{} & \multicolumn{1}{c}{SimNLS} & \multicolumn{1}{c}{SimNLS($\theta_0$)} & \multicolumn{1}{c}{FM-SOLS($\theta_0$)} & \multicolumn{1}{c}{FM-SUR($\theta_0$)} & \multicolumn{1}{c}{} & \multicolumn{1}{c}{SimNLS} & \multicolumn{1}{c}{SimNLS($\theta_0$)} & \multicolumn{1}{c}{FM-SOLS($\theta_0$)} & \multicolumn{1}{c}{FM-SUR($\theta_0$)} \\
			\midrule
\multicolumn{15}{l}{$T=150$} \\
\midrule
0 & 4.03 & 3.93 & 9.10 & 10.47 &  & 4.67 & 4.67 & 10.03 & 12.90 &  & 4.40 & 4.50 & 10.80 & 16.67 \\
0.3 & 4.60 & 4.50 & 9.77 & 11.07 &  & 4.53 & 4.50 & 10.37 & 13.07 &  & 5.07 & 4.90 & 11.90 & 19.50 \\
0.6 & 4.53 & 4.47 & 10.57 & 12.60 &  & 4.30 & 4.23 & 11.87 & 16.27 &  & 4.60 & 4.40 & 13.57 & 29.83 \\
0.8 & 4.33 & 4.50 & 13.80 & 18.47 &  & 4.70 & 4.63 & 15.60 & 27.07 &  & 4.33 & 4.30 & 16.30 & 56.73 \\
\midrule
\multicolumn{15}{l}{$T=300$} \\
\midrule
0 & 4.20 & 4.23 & 7.87 & 8.67 &  & 4.87 & 4.87 & 7.87 & 9.40 &  & 4.37 & 4.40 & 8.27 & 10.97 \\
0.3 & 5.27 & 5.23 & 8.47 & 9.50 &  & 4.47 & 4.50 & 9.23 & 10.83 &  & 4.47 & 4.50 & 8.87 & 12.93 \\
0.6 & 4.50 & 4.63 & 9.47 & 11.00 &  & 5.10 & 4.83 & 10.10 & 12.90 &  & 4.27 & 4.47 & 10.63 & 18.57 \\
0.8 & 4.60 & 4.40 & 12.07 & 14.47 &  & 4.43 & 4.30 & 12.17 & 18.13 &  & 5.47 & 5.23 & 14.00 & 35.00 \\
\midrule
\multicolumn{15}{l}{$T=600$} \\
\midrule
0 & 4.43 & 4.47 & 6.83 & 7.47 &  & 4.23 & 4.37 & 6.53 & 6.93 &  & 4.23 & 4.17 & 7.27 & 9.10 \\
0.3 & 4.97 & 5.03 & 7.57 & 8.60 &  & 5.13 & 4.93 & 7.37 & 8.10 &  & 4.70 & 4.93 & 8.10 & 9.60 \\
0.6 & 5.27 & 5.40 & 8.27 & 9.50 &  & 5.17 & 4.93 & 8.43 & 9.33 &  & 5.03 & 4.97 & 9.60 & 14.57 \\
0.8 & 4.13 & 4.30 & 8.83 & 10.10 &  & 4.93 & 4.77 & 9.73 & 14.10 &  & 5.20 & 4.93 & 10.90 & 23.70 \\
\midrule
			\multicolumn{15}{l}{\textbf{Panel B: Joint   test}} \\
\midrule
\multicolumn{15}{l}{$T=150$} \\
\midrule
0 & 3.57 & 3.63 & 12.03 & 15.23 &  & 4.00 & 4.23 & 14.30 & 21.57 &  & 4.03 & 3.93 & 26.03 & 50.13 \\
0.3 & 4.07 & 3.90 & 13.83 & 16.43 &  & 3.77 & 3.47 & 19.47 & 26.93 &  & 3.23 & 3.40 & 29.67 & 60.20 \\
0.6 & 3.73 & 3.70 & 17.03 & 21.43 &  & 3.60 & 3.67 & 23.60 & 38.07 &  & 2.33 & 2.03 & 39.73 & 83.70 \\
0.8 & 3.20 & 2.80 & 23.43 & 31.03 &  & 2.87 & 2.87 & 32.13 & 58.27 &  & 1.53 & 1.37 & 50.30 & 82.73 \\
\midrule
\multicolumn{15}{l}{$T=300$} \\
\midrule
0 & 5.13 & 5.13 & 10.47 & 11.43 &  & 3.43 & 3.60 & 12.00 & 15.17 &  & 3.67 & 3.57 & 17.93 & 30.33 \\
0.3 & 4.40 & 4.30 & 9.90 & 11.63 &  & 4.07 & 3.87 & 13.43 & 17.83 &  & 3.83 & 4.00 & 19.30 & 36.63 \\
0.6 & 4.20 & 4.37 & 13.40 & 15.57 &  & 3.97 & 4.00 & 17.47 & 24.33 &  & 3.30 & 3.20 & 28.60 & 59.80 \\
0.8 & 4.07 & 3.63 & 16.27 & 20.87 &  & 3.47 & 3.17 & 22.20 & 38.97 &  & 2.40 & 2.40 & 37.77 & 86.00 \\
\midrule
\multicolumn{15}{l}{$T=600$} \\
\midrule
0 & 3.50 & 3.53 & 7.03 & 7.97 &  & 4.43 & 4.63 & 8.83 & 11.07 &  & 3.90 & 4.10 & 12.63 & 18.93 \\
0.3 & 4.57 & 4.53 & 8.90 & 9.53 &  & 4.17 & 4.13 & 10.70 & 12.67 &  & 4.53 & 4.53 & 15.07 & 23.83 \\
0.6 & 5.37 & 4.87 & 10.40 & 12.23 &  & 4.73 & 4.33 & 13.30 & 16.97 &  & 4.03 & 4.07 & 21.47 & 39.70 \\
0.8 & 3.70 & 3.83 & 11.63 & 14.30 &  & 3.50 & 3.70 & 15.20 & 24.83 &  & 3.60 & 3.60 & 26.37 & 66.77\\
\bottomrule
		\end{tabular}%
	}
\end{table}

\newpage
\subsection{Empirical size for DGP1 when $\theta=1.8$}
\begin{table}[h!]
	\centering
	\caption{The empirical size (in \%) of the single-equation $t$-tests $H_0:\beta_{2,1}=0$ and the joint Wald tests for $H_0: \beta_{2,1}=\ldots=\beta_{2,N}=0$ with $ \beta_{2,i}$ denoting the coefficient in front of $x_{i,t}^2$. The Monte Carlo results are based on: simulated inference with $\theta$ estimated by NLS (SimNLS), simulated inference with known $\theta=1.8$ (SimNLS($\theta_0$)), and two Fully Modified estimators for systems developed by \cite{wagnergrabarczykhong2019} with known $\theta=1.8$ (FM-SOLS($\theta_0$) and FM-SUR($\theta_0$)).}
	\label{tbl:SizeWagnerEtAl3}
	\resizebox{\textwidth}{!}{%
		\begin{tabular}{crrrrrrrrrrrrrr}
			\toprule
			$\theta_0=1.8$ & \multicolumn{4}{c}{$N=3$} & \multicolumn{1}{c}{} & \multicolumn{4}{c}{$N=5$} & \multicolumn{1}{c}{} & \multicolumn{4}{c}{$N=10$} \\
			\midrule
			$\rho$ & \multicolumn{1}{c}{SimNLS} & \multicolumn{1}{c}{SimNLS($\theta_0$)} & \multicolumn{1}{c}{FM-SOLS($\theta_0$)} & \multicolumn{1}{c}{FM-SUR($\theta_0$)} & \multicolumn{1}{c}{} & \multicolumn{1}{c}{SimNLS} & \multicolumn{1}{c}{SimNLS($\theta_0$)} & \multicolumn{1}{c}{FM-SOLS($\theta_0$)} & \multicolumn{1}{c}{FM-SUR($\theta_0$)} & \multicolumn{1}{c}{} & \multicolumn{1}{c}{SimNLS} & \multicolumn{1}{c}{SimNLS($\theta_0$)} & \multicolumn{1}{c}{FM-SOLS($\theta_0$)} & \multicolumn{1}{c}{FM-SUR($\theta_0$)} \\
\midrule
\multicolumn{15}{l}{$T=150$} \\
\midrule
0 & 4.70 & 4.80 & 9.87 & 11.37 &  & 4.13 & 4.23 & 10.27 & 12.30 &  & 4.43 & 4.37 & 9.80 & 15.60 \\
0.3 & 3.97 & 3.80 & 9.67 & 10.97 &  & 4.43 & 4.50 & 9.83 & 13.13 &  & 4.77 & 4.50 & 11.40 & 18.37 \\
0.6 & 5.03 & 4.63 & 12.53 & 14.77 &  & 4.70 & 4.83 & 12.27 & 17.07 &  & 3.63 & 3.37 & 12.27 & 29.63 \\
0.8 & 5.43 & 5.47 & 14.93 & 18.93 &  & 4.97 & 4.60 & 14.80 & 27.27 &  & 5.23 & 4.90 & 16.30 & 56.23 \\
\midrule
\multicolumn{15}{l}{$T=300$} \\
\midrule
0 & 4.53 & 4.80 & 7.23 & 8.10 &  & 4.40 & 4.43 & 7.80 & 9.63 &  & 4.47 & 4.73 & 8.53 & 11.77 \\
0.3 & 4.27 & 4.50 & 8.10 & 9.37 &  & 5.07 & 5.00 & 9.23 & 9.97 &  & 4.37 & 4.20 & 8.90 & 12.80 \\
0.6 & 6.23 & 5.87 & 10.17 & 12.23 &  & 4.73 & 4.77 & 10.03 & 13.63 &  & 4.77 & 4.30 & 10.50 & 18.83 \\
0.8 & 4.50 & 4.43 & 11.00 & 13.80 &  & 4.57 & 4.43 & 11.97 & 18.70 &  & 4.10 & 3.73 & 13.70 & 36.73 \\
\midrule
\multicolumn{15}{l}{$T=600$} \\
\midrule
0 & 4.13 & 4.37 & 7.33 & 7.83 &  & 4.07 & 4.13 & 6.90 & 7.90 &  & 4.97 & 4.97 & 6.87 & 8.17 \\
0.3 & 4.70 & 4.90 & 8.33 & 8.90 &  & 4.77 & 4.83 & 6.93 & 8.13 &  & 4.80 & 4.67 & 8.10 & 10.57 \\
0.6 & 5.77 & 6.10 & 8.70 & 9.03 &  & 5.17 & 5.23 & 8.23 & 9.37 &  & 5.53 & 5.53 & 9.27 & 13.80 \\
0.8 & 4.73 & 4.80 & 9.07 & 10.33 &  & 4.57 & 4.77 & 9.63 & 12.80 &  & 4.93 & 4.73 & 12.03 & 25.87 \\
\midrule
			\multicolumn{15}{l}{\textbf{Panel B: Joint   test}} \\
\midrule
\multicolumn{15}{l}{$T=150$} \\
\midrule
0 & 4.23 & 4.03 & 12.13 & 15.07 &  & 3.53 & 3.50 & 16.50 & 23.07 &  & 3.40 & 3.33 & 25.57 & 50.30 \\
0.3 & 4.13 & 3.80 & 14.50 & 16.00 &  & 3.63 & 3.77 & 19.90 & 27.00 &  & 3.77 & 3.63 & 30.80 & 58.17 \\
0.6 & 3.23 & 2.83 & 17.13 & 22.07 &  & 3.93 & 3.73 & 24.67 & 38.53 &  & 2.13 & 2.43 & 40.87 & 82.83 \\
0.8 & 3.57 & 3.03 & 23.77 & 30.83 &  & 2.93 & 2.30 & 32.43 & 57.57 &  & 2.37 & 1.77 & 49.77 & 83.23 \\
\midrule
\multicolumn{15}{l}{$T=300$} \\
\midrule
0 & 4.63 & 4.83 & 9.40 & 10.97 &  & 4.10 & 4.20 & 11.47 & 15.63 &  & 3.80 & 3.83 & 18.50 & 31.87 \\
0.3 & 4.60 & 4.80 & 11.70 & 13.00 &  & 4.67 & 4.67 & 14.90 & 18.43 &  & 3.67 & 3.57 & 19.17 & 36.73 \\
0.6 & 5.00 & 4.53 & 14.00 & 15.17 &  & 4.03 & 3.33 & 17.00 & 24.80 &  & 3.17 & 2.83 & 29.00 & 59.57 \\
0.8 & 3.43 & 3.10 & 18.00 & 21.27 &  & 3.33 & 2.80 & 22.77 & 39.80 &  & 2.50 & 1.93 & 37.93 & 86.07 \\
\midrule
\multicolumn{15}{l}{$T=600$} \\
\midrule
0 & 4.23 & 4.47 & 7.97 & 8.63 &  & 3.20 & 3.43 & 9.10 & 11.43 &  & 3.87 & 3.83 & 10.47 & 17.40 \\
0.3 & 4.20 & 4.37 & 9.37 & 10.80 &  & 4.53 & 4.60 & 10.37 & 12.17 &  & 3.87 & 3.77 & 15.47 & 23.80 \\
0.6 & 5.43 & 5.33 & 11.90 & 12.30 &  & 4.80 & 4.70 & 12.27 & 17.03 &  & 5.00 & 4.57 & 19.63 & 39.03 \\
0.8 & 3.73 & 3.63 & 11.67 & 14.40 &  & 3.97 & 3.77 & 16.37 & 24.77 &  & 3.30 & 3.07 & 27.87 & 68.27 \\
\bottomrule
		\end{tabular}%
	}
\end{table}

\section{Details on Simulation DGPs 2(a)--2(c)} \label{secSup:SimulationDGP2}
The parameters of simulation DGPs have been selected according to the following general procedure.
\begin{enumerate}[\textsc{Step} 1:]
 \item Load the data and estimate the model corresponding to the specification under $H_0$. The resulting coefficients $\widehat\vgamma_T$ and residual series $\{\widehat{\vu}_t \}$ are stored.
 \item Estimate a VAR($1$) on the residuals, $\widehat{\vu}_t = \mA^{(1)} \widehat{\vu}_{t-1}  + \vxi_t^{(1)}$, and compute $\widehat\vxi_t^{ (1)} = \widehat{\vu}_t - \widehat{\mA}^{(1)} \widehat{\vu}_{t-1}$ for $t=2,\ldots, T$.
 \item Repeat \textsc{Step} 2 for $\diff \vx_t$. Using obvious notation, the resulting filtered residuals are $\widehat\vxi_t^{ (2)} = \diff \vx_t- \widehat{\mA}^{(2)} \diff \vx_{t-1}$ for $t=3,\ldots, T$.
 \item Set $\widehat \vxi_t = [\widehat\vxi_t^{ (1)\prime}, \widehat\vxi_t^{ (2)\prime}]\tran $ and compute the $(2N\times 2N)$ covariance matrix estimate $\widehat\mSigma = \frac{1}{T-2}\sum_{t=3}^T \widehat \vxi_t \widehat \vxi_t\tran$.
 \item The simulated data is based on the parameters from \textsc{Step} 1 -- \textsc{Step} 4. First, generate $\vxi_t =[\vxi_t^{ (1)\prime}, \vxi_t^{ (2)\prime}]\tran \stackrel{i.i.d.}{\sim} \rN(\vzeros, \widehat\mSigma)$. Subsequently, we use the results from the VAR($1$) models:
 \begin{enumerate}
  \item Set $\vu_0=\vzeros$ and construct innovations according to $\vu_t = \widehat \mA^{(1)} \vu_{t-1}+\vxi_t^{ (1)}$.
  \item Set $\diff \vx_0 = \vzeros$, construct the increments of the integrated explanatory variables through the recursion $\diff \vx_t = \widehat{\mA}^{(2)} \diff \vx_{t-1}+ \vxi_t^{ (2)}$, and compute the partial sums $\vx_t = \sum_{s=1}^t \diff \vx_{s}$.
 \end{enumerate}
 Given the simulated innovations and simulated integrated regressors, it remains to use $\widehat\vgamma_T$ to obtain the simulated dependent variables.
\end{enumerate}

Three remarks follow. First, we explicitly choose individual VAR($1$) models for $\{\widehat\vu_t\}$ and $\{\widehat \diff \vx_t \}$ rather than a single $2N$-dimensional VAR(1) for the joint vector. Otherwise, with $N=6$ and $T=145$ in the data, the number of parameters in the autoregressive matrix would be $12^2=144$ which is rather close to the length of the data  series. Similarly, the selection of the VAR($1$) specification results from a trade-off between model parsimony and a simulation DGP with serial correlation. Second, we follow the literature and rebuild the integrated explanatory variables as random walks \emph{without} drift. Third, the specific values of all parameters (rounded to 2 decimals) are reported in the next subsections.

\subsection{Parameter values for DGP2(a)}
The estimated parameters of the model are
$$
 \vy_t
 =
 -1.37\times 10^{-5} t^{2.45} \vones_6 
 +
 \left[
 \begin{smallmatrix}
  -6.39 \\
  -0.12 \\
  -12.66 \\
  -3.99 \\
  -1.67 \times 10^1 \\
  -2.55
 \end{smallmatrix}
 \right]
 +
 \left[
 \begin{smallmatrix}
  -4.9 \times 10^{-3} \\
  7.6 \times 10^{-3} \\
  1.05 \times 10^{-3} \\
  1.09 \times 10^{-3} \\
  -2.7 \times 10^{-3} \\
  2.14 \times 10^{-4}
 \end{smallmatrix} \right]t
 +
 \left[
  \begin{smallmatrix}
 1.73 x_{1,t} \\
 1.01 x_{2,t}\\
 2.22 x_{3,t}\\
 1.33 x_{4,t}\\
 2.55 x_{5,t}\\
 1.33 x_{6,t}
 \end{smallmatrix} \right] 
 + \widehat{\vu}_t.
$$
The results for the VAR(1) specifications follow
$$
 \widehat\mA^{(1)}
 =
 \left[
 \begin{smallmatrix}
  0.74	& -0.04	& 0.02	& 0.32	& -0.34	& 0.41 \\
  0.05	& 0.68	& 0.01	& -0.30	& -0.05	& -0.01 \\
  0.16	& 1.05	& 0.62	& -0.54	& -0.29	& -0.55 \\
  0.04	& 0.12	& 0.03	& 0.41	& 0.00	& -0.21 \\
  0.10	& 0.31	& -0.08	& -0.29	& 0.67	& 0.01 \\
  0.05	& 0.08	& -0.03	& -0.12	& 0.02	& 0.44
 \end{smallmatrix}\right], \qquad
  \widehat\mA^{(2)}
  =
  \left[
  \begin{smallmatrix}
  -0.09	& -0.32	& 0.21	& 0.96	& -0.22	& 0.55 \\
  -0.04	& 0.02	& 0.39	& 0.10	& -0.15	& -0.00 \\
  -0.08	&-0.10	& 0.38	& 0.15	& -0.07	& 0.35 \\
  -0.37	& 0.38	& -0.13	& 0.25	& -0.00	& 0.26 \\
  -0.18	& 0.19	& 0.05	& 0.18	& -0.07	& 0.41 \\
  0.03	& 0.18	& -0.05	& 0.02	& -0.06	& 0.47
  \end{smallmatrix}\right],
$$
and
$$
 \widehat\mSigma
 =
 \left[
 \begin{smallmatrix}
 6.13 	& *		& *		& *		&*		&*		&*		&*		&*		&*		&*		&* \\
 0.39		& 0.70 	&*		&*		&*		&*		&*		&*		&*		&*		&*		&* \\
 0.22		& 1.00	& 7.19 	&*		&*		&*		&*		&*		&*		&*		&*		&*\\
 0.59		& 0.06	& 0.29	& 0.91 	&*		&*		&*		&*		&*		&*		&*		&*\\
 0.57		& 0.15	& 0.68	& 0.38 	& 0.01 	&*		&*		&*		&*		&*		&*		&*\\
 0.36		& 0.13	& 0.37	& 0.01	& 0.13	& 0.46 	&*		&*		&*		&*		&*		&* \\
 -0.29	& -0.02	& 0.32	& 0.03	& 0.18	& -0.02	& 0.43 	&*		&*		&*		&*		&*\\
 0.13		& 0.07	& 0.22	& 0.04	& -0.06 	& 0.06	& -0.00	& 0.14	&*		&*		&*		&*\\
 0.21		& 0.11 	&0.21	& 0.09 	& -0.03	& 0.05	& 0.04	& 0.10	& 0.19	&*		&*		&*\\
 0.23 	& 0.18	& 0.28	& -0.07	& 0.01	& 0.04	& 0.03	& 0.11	& 0.10	& 0.33	&*		&*\\
 0.10 	& 0.07	& 0.25	& -0.01	& -0.34	& 0.07	& 0.01	& 0.06	& 0.05	& 0.05	& 0.20	&*\\
 0.02		& 0.02	& -0.03	& -0.00	& 0.03	& 0.02	& 0.05	& -0.01	& 0.02	& 0.01	& 0.02	& 0.08 \\
 \end{smallmatrix}\right]\times 10^{-2}.
$$

\subsection{Parameter values for DGP2(b)}
The estimated model specification is
$$
\vy_t 
=
\left[
\begin{smallmatrix}
  -1.01 \\
  8.64 \\
  -5.15 \\
  3.78 \\
  -17.84 \\
  11.84
\end{smallmatrix}
\right]
+
\left[ \begin{smallmatrix}
 -1.11\times 10^{-2} \\
 5.78\times 10^{-3} \\
 1.63 \times 10^{-2} \\
 7.69 \times 10^{-3} \\
 -2.36 \times 10^{-2} \\
 4.69\times 10^{-3}
\end{smallmatrix}
\right] t+
+
\left[
\begin{smallmatrix}
 1.10 x_{1,t} \\
 -1.44\times 10^{-3} x_{2,t} \\
 1.23 x_{3,t} \\
 4.42\times 10^{-1} x_{4,t} \\
 2.73 x_{5,t} \\
 -3.21 \times 10^{-1} x_{6,t}
\end{smallmatrix}
\right]
+\widehat\vu_t.
$$
The VAR($1$) dynamics in the innovations and increments are governed by
$$
 \widehat\mA^{(1)}
 =
 \left[ \begin{smallmatrix}
 0.78		& 0.01	& 0.15	& -0.07	& -0.39	& -0.07 \\
 0.03		& 0.87	& -0.01	& 0.02	& -0.02	& -0.07 \\
 0.06		& 1.29	& 0.60	& -0.37	& -0.01	& -0.35 \\
 0.04		& 0.36	& 0.01	& 0.50	& -0.05	& 0.06 \\
 0.03		& 0.38	& -0.03	& -0.07	& 0.60	& -0.06 \\
 0.03		& -0.05	& 0.01	& 0.08	& 0.02	& 0.59
 \end{smallmatrix}\right], \qquad
  \widehat\mA^{(2)}
  =
  \left[
  \begin{smallmatrix}
  -0.09	& -0.32	& 0.21	& 0.96	& -0.22	& 0.55 \\
  -0.04	& 0.02	& 0.39	& 0.10	& -0.15	& -0.00 \\
  -0.08	& -0.10	& 0.38	& 0.15	& -0.07	& 0.35 \\
  -0.37	& 0.38	& -0.13	& 0.25	& -0.00	& 0.26 \\
  -0.18	& 0.19	& 0.05	& 0.18	& -0.07	& 0.41 \\
  0.03	& 0.18	& -0.05	& 0.02	& -0.06	& 0.47
  \end{smallmatrix}
  \right],
$$
and
$$
\widehat\mSigma =
\left[\begin{smallmatrix}
5.65 		& *		&*		&*		&*		&*		&*		&*		&*		&*		&*		&*\\
0.68 		& 1.13	& *		&*		&*		&*		&*		&*		&*		&*		&*		&*\\
1.29		& 1.79 	& 8.56	&*		&*		&*		&*		&*		&*		&*		&*		&*\\
0.84 		& 0.58	& 1.33	& 1.23	&*		&*		&*		&*		&*		&*		&*		&*\\
0.67 		& 0.30	& 1.16	& 0.71	& 2.32	&*		&*		&*		&*		&*		&*		&*\\
0.09 		& 0.25	& 0.60	& 0.17	& 0.16	&0.72	&*		&*		&*		&*		&*		&*\\
-0.09 	& -0.01	& 0.24	& 0.06	& 0.19	&0.04	& 0.43	&*		&*		&*		&*		&*\\
0.15 		& 0.19	& 0.39	& 0.13	& -0.04	& 0.01	& -0.00	& 0.14	&*		&*		&*		&*\\
0.24 		& 0.20	& 0.40	& 0.20	& 0.01	&0.04	& 0.04	& 0.10	&0.19	&*		&*		&*\\
0.24 		& 0.32	& 0.43	& 0.19	& 0.09	& 0.01	& 0.03	& 0.11	& 0.10	& 0.33	&*		&*\\
0.09 		& 0.12	& 0.28	& 0.04	& -0.34	& 0.08	& 0.01	& 0.06	& 0.05	& 0.05	& 0.20	&*\\
0.00		& 0.01	& -0.07	& 0.02	& 0.02	& 0.09	& 0.05	& -0.01	& 0.02	& 0.01	& 0.02 	& 0.08
\end{smallmatrix}\right]\times 10^{-2}.
$$

\subsection{Parameter values for DGP2(c)}
The simulation experiments regarding the performance of the KPSS test are based on the same specification as DGP2(a).

\section{Further Empirical Results}\label{appendix:furtherempirics}
\subsection{Unit root tests}
\begin{table}[H]
	\centering
	\caption{\footnotesize The t-statistics for the ADF and DF-GLS unit root tests.  The columns with header `const' and `const \& trend' refer to the inclusion of only an intercept or both intercept and linear trend. Rejection of the unit root hypothesis at a $10\%$ and $5\%$ level are indicated with one and two stars, respectively.}
	\label{table:unitroottests}
	\begin{threeparttable}
	\begin{tabular}{cc cc cc cc cc}
		\toprule
		& \multicolumn{4}{c}{ADF}       &	                                        & \multicolumn{4}{c}{DF-GLS}                                         \\
		\cline{2-5} \cline{7-10}
		& \multicolumn{2}{c}{const} 	& \multicolumn{2}{c}{const \& trend} 		&	&\multicolumn{2}{c}{const} 	& \multicolumn{2}{c}{const \& trend} \\ 
		\cline{2-5} \cline{7-10}
		& GDP      & $\text{CO}_2$	& GDP          & $\text{CO}_2$      		&	& GDP     & $\text{CO}_2$ 	& GDP 		&$\text{CO}_2$ 	\\ 
		\midrule
		Australia   &  0.287   & -2.549   & -2.050   & -1.986      &   & 2.046  &  1.379  & -1.577      & -0.732            \\
		Austria     & -0.055   & -2.118   & -1.943   & -2.738      &   & 1.478  & -1.143  & -1.655      & $-2.718^{*}$      \\
		Belgium     &  0.153   & -2.336   & -1.705   & -2.818      &   & 2.041	& -0.794  & -1.287		& -2.644			\\
		Canada      & -0.575   & -1.133   & -2.020   & -1.120      &   & 1.117	&  0.874  & -1.894		& -0.387			\\
		Denmark    	& -0.235   & -2.446   & -2.326   & -0.136      &   & 1.393	&  0.410  & -1.505 		&  0.084			\\
		Finland    	& -0.362   & -1.327   & -2.315   &$-3.248^{*}$ &   & 0.420	& -0.076  & -1.155 		& $-3.217^{**}$		\\
		France      & -0.557   & -2.438   & -1.823   & -1.858      &   & 1.087	& -0.267  & -1.470		& -1.212			\\
		Germany     & -0.374   &$-3.099^{**}$& -2.767&$-3.971^{**}$&   & 1.195	& -0.726  & -2.474		& -2.080			\\
		Italy       & -0.252   & -1.546   & -1.759   & -1.987      &   & 1.213	&  0.354  & -1.240		& -1.860			\\
		Japan       &  0.010   & -0.862   & -1.733   & -0.941      &   & 1.382	&  0.504  & -1.272		& -0.878			\\
		Netherlands & -0.106   & -1.629   & -2.247   & -3.106      &   & 1.378	&  0.213  & -1.679 		& $-2.818^{*}$		\\
		Norway      & -0.680   & -2.044   & -2.064   & -2.318      &   & 0.749  &  0.331  & -1.017      & -1.292      	    \\
		Portugal    & -1.432   & -0.455   & -1.697   & -1.676      &   & -0.708 &  0.593  & -0.741      & -1.923      	    \\
		Spain       &  0.402   & -1.243   & -1.354   & -1.994      &   & 1.487  &  0.959  & -1.077      & -2.014       	    \\
		Sweden      & -0.789   & -2.075   & -2.289   & -1.625      &   & 0.258  &  0.180  & -1.513      & -0.968       	    \\
		Switzerland & -1.093   & -1.963   & -2.785   & -1.989      &   & 2.272  &  0.368  & -2.447      & -1.237        	\\
		UK          & -0.179   & -0.721   & -1.262   & -0.402      &   & 2.446	& -0.622  & -0.608		& -0.013 			\\
		USA         & -0.349   & -2.055   & -2.871   & -1.322      &   & 2.409  & -0.101  &$-2.708^{*}$	& -0.812 			\\
		\bottomrule   
	\end{tabular}
		    \begin{tablenotes}
	    	\footnotesize
	    	\item Note: Asterisks denote rejection of the null hypothesis at the $^{***}1\%$, $^{**}5\%$, and $^{*}10\%$ significance level.
    	\end{tablenotes}
\end{threeparttable}
\end{table}

\newpage
\subsection{\cite{perronyabu2009} Test for deterministic trend coefficient}
The \cite{perronyabu2009} test is used to test for the presence of a deterministic trend function in the log per capita GDP series, see Table \ref{tbl:pytests}. The test allows for integrated or stationary errors. The details of the procedure can be found on page 61 of \cite{perronyabu2009}. The asymptotic distribution of this test statistic is standard normal (quantiles are $z_{0.95}=1.645$, $z_{0.975}=1.96$, and $z_{0.995}=2.58$).

\begin{table}[h]
\centering
\caption{\cite{perronyabu2009} test statistic for each of the 18 countries.}
\label{tbl:pytests}
\begin{subtable}{.32\textwidth}
\centering
	\begin{tabular}{c  | c  c c c c c c c c}
		\toprule
			& $\widehat{PY}$ 		\\ \hline
Australia		& 3.17 \\
Austria		& 2.19 \\
Belgium		& 3.52 \\
Canada		& 3.33 \\
Denmark		& 5.58 \\
Finland		& 4.27 \\
        \bottomrule
	\end{tabular}
\end{subtable}
\begin{subtable}{.32\textwidth}
\centering
	\begin{tabular}{c  | c  c c c c c c c c}
		\toprule
			& $\widehat{PY}$ \\ \hline
France		& 2.41 \\
Germany		& 1.91 \\
Italy			& 2.11 \\
Japan		& 2.93 \\
Netherlands	& 2.27 \\
Norway		& 5.85 \\
        \bottomrule
	\end{tabular}
\end{subtable}
\begin{subtable}{.32\textwidth}
\centering
	\begin{tabular}{c  | c  c c c c c c c c}
		\toprule
			& $\widehat{PY}$ \\ \hline
Portugal		& 2.16 \\
Spain		& 2.31 \\
Sweden		& 7.12 \\
Switzerland	& 3.91 \\
UK			& 3.60 \\
USA			& 4.12 \\
    \bottomrule
	\end{tabular}
\end{subtable}
\end{table}

\newpage
\subsection{Overviews for Austria and Finland} \label{sec:AuandFi}
\begin{figure}[H]
\begin{subfigure}{.5\textwidth}
  \centering
  \includegraphics[width=.8\linewidth]{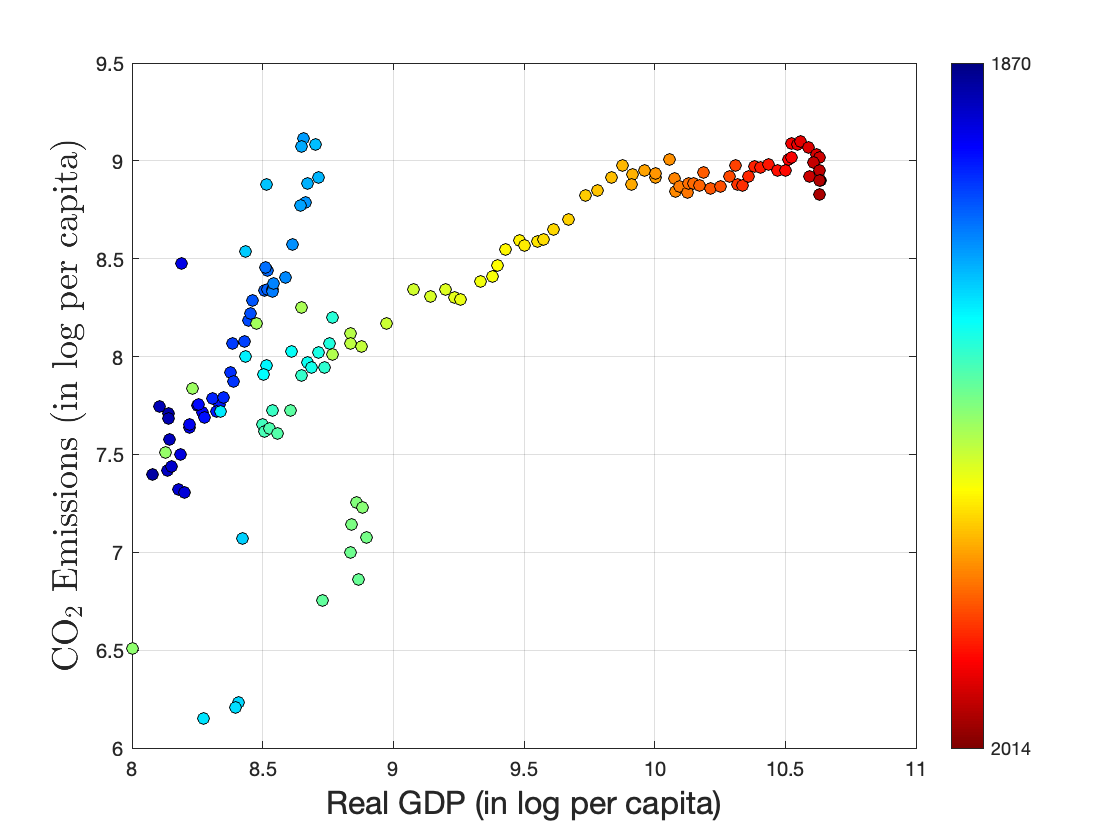}
  \caption{}
\end{subfigure}%
\begin{subfigure}{.5\textwidth}
  \centering
  \includegraphics[width=.8\linewidth]{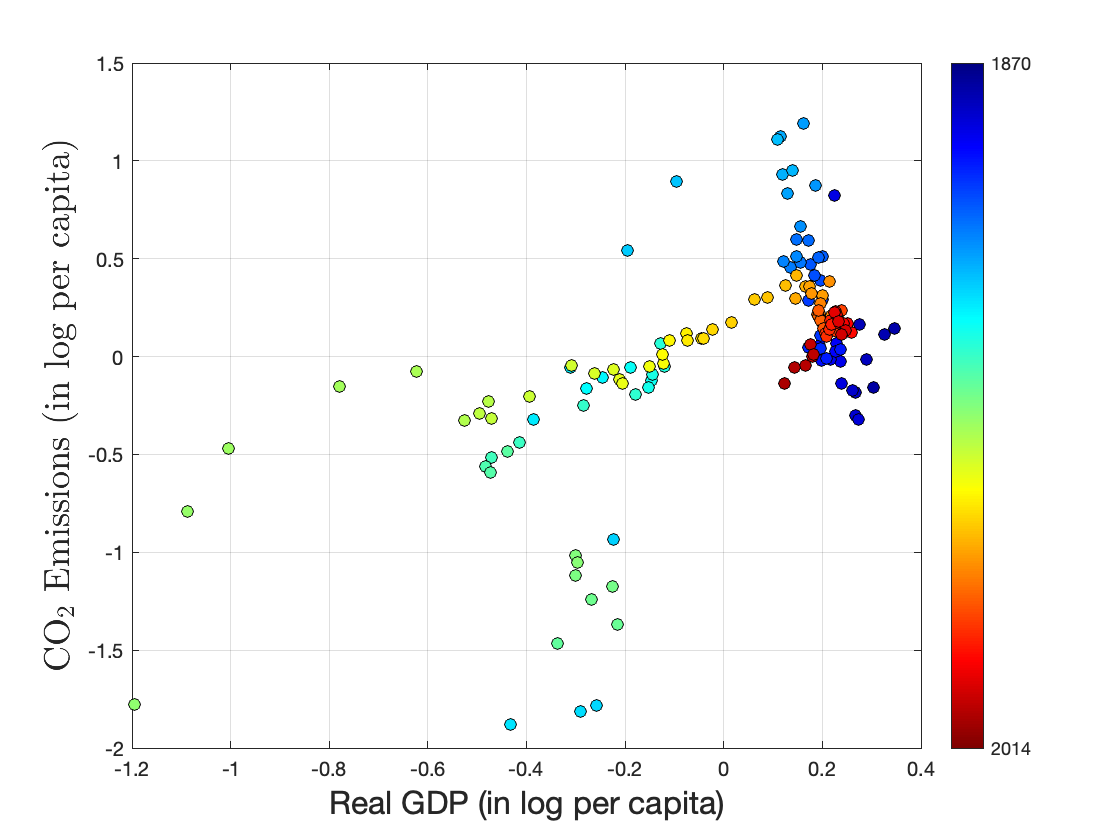}
  \caption{}
\end{subfigure}
\begin{subfigure}{.5\textwidth}
  \centering
  \includegraphics[width=.8\linewidth]{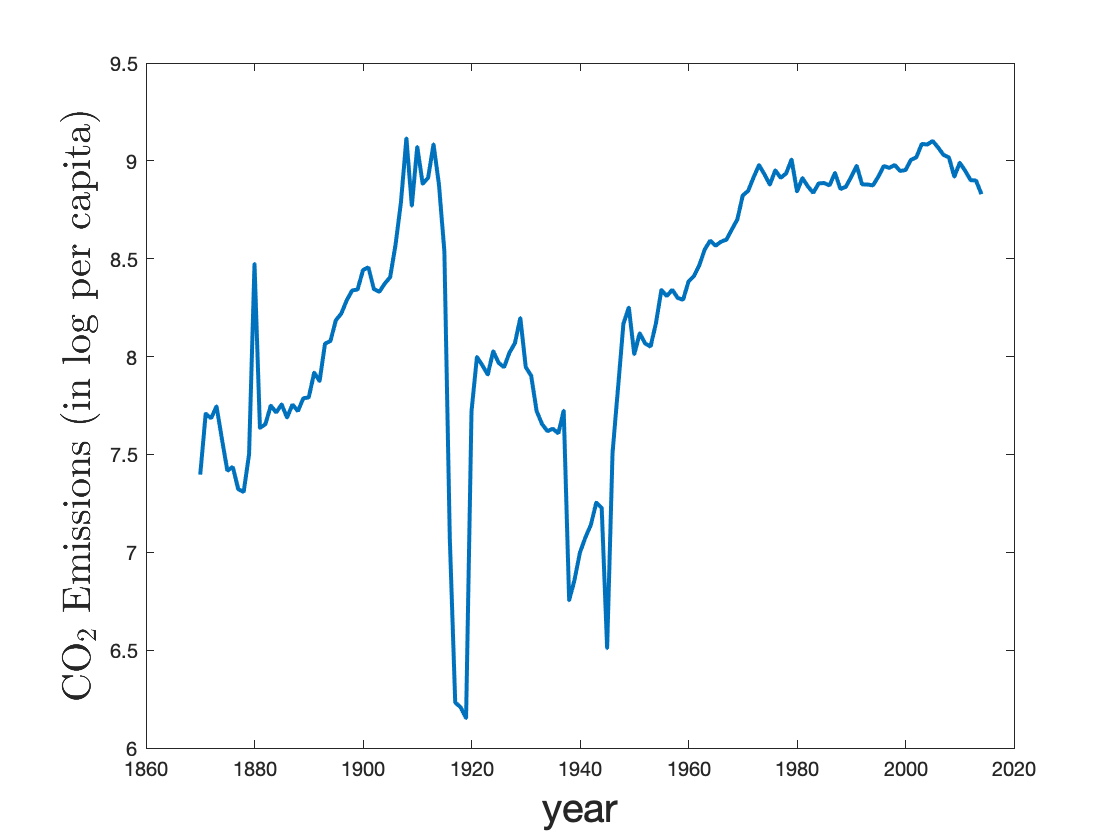}
  \caption{}
\end{subfigure}
\begin{subfigure}{.5\textwidth}
  \centering
  \includegraphics[width=.8\linewidth]{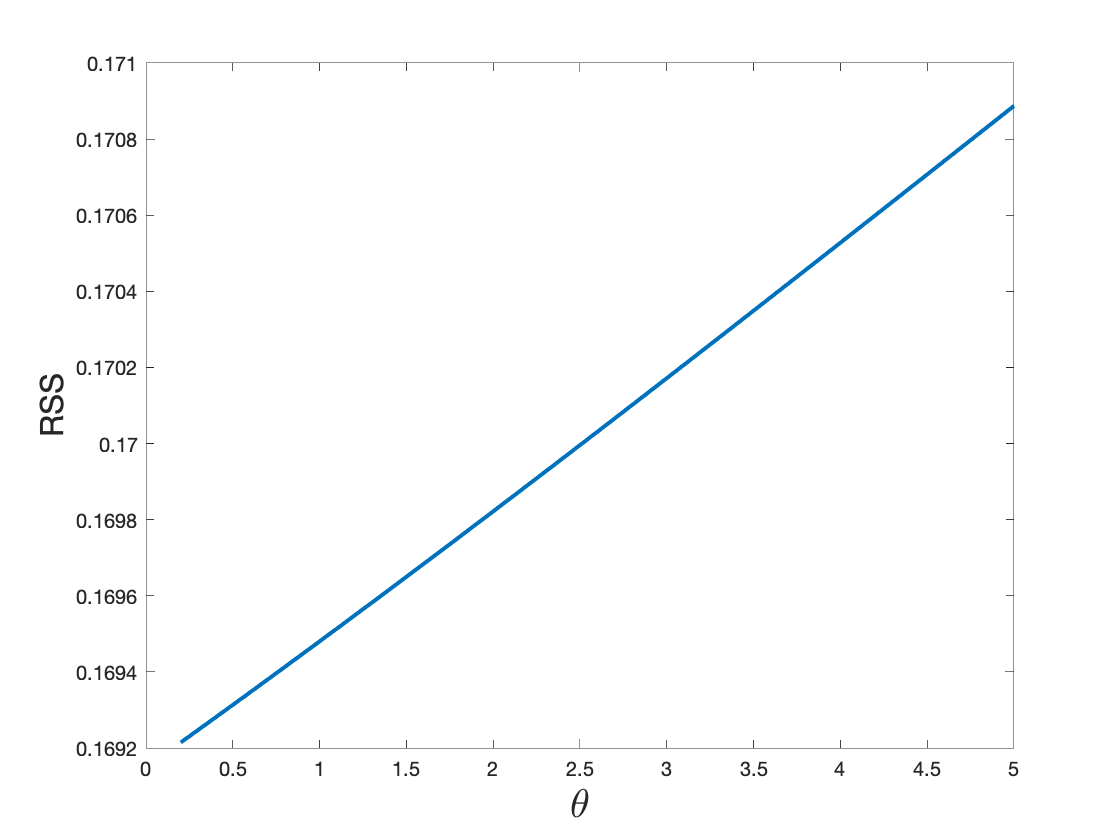}
  \caption{}
\end{subfigure}
\begin{subfigure}{.5\textwidth}
  \centering
  \includegraphics[width=.8\linewidth]{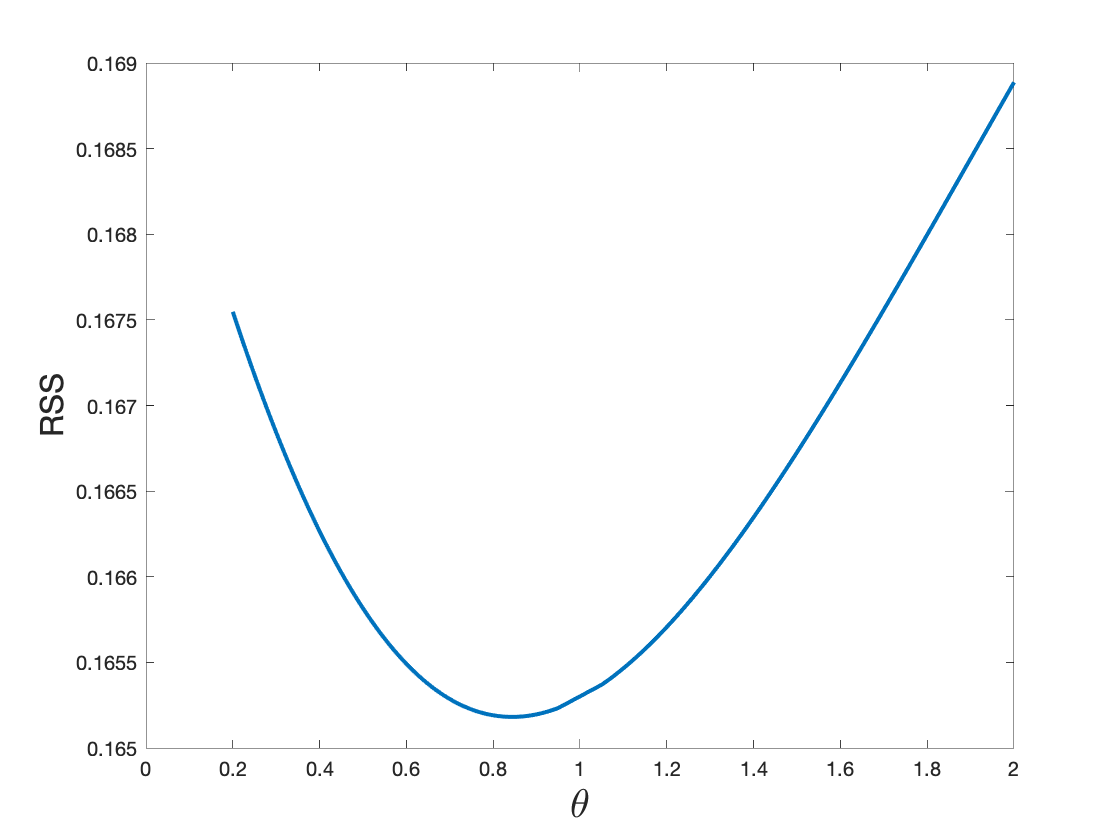}
  \caption{}
\end{subfigure}
\begin{subfigure}{.5\textwidth}
  \centering
  \includegraphics[width=.8\linewidth]{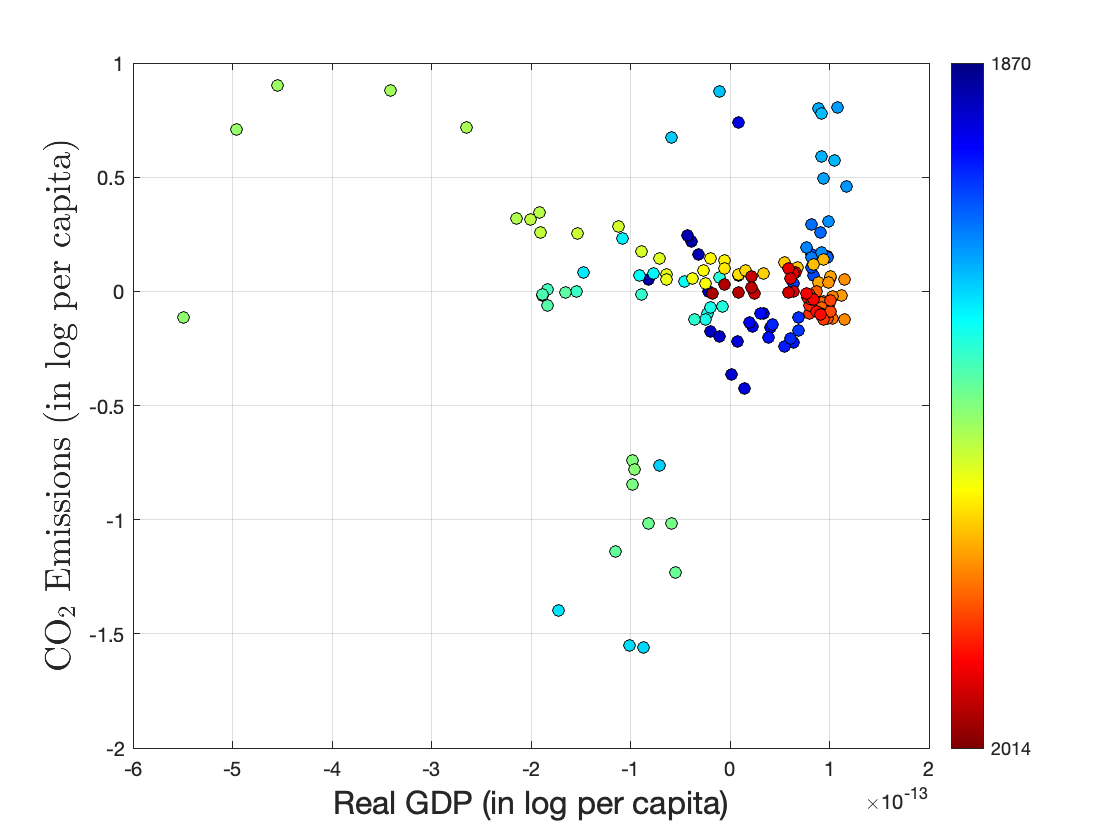}
  \caption{}
\end{subfigure}
\caption{Overview graphs for Austria over 1870-2014. \textbf{(a)} $\log(\text{GDP})$ versus $\log(\text{CO}_2)$ (both per capita). \textbf{(b)} As subfigure (a) but using detrended variables. \textbf{(c)} The log per capita CO\textsubscript{2} emissions time series for Austria. \textbf{(d)} The residual sum of squares (RSS) for the nonlinear model specification $y_t=\tau_1+\tau_2 t + \phi_1 x_t+ \phi_2 x_t^\theta+u_t$ for various values of $\theta$. \textbf{(e)} The RSS as a function of $\theta$ for the flexible nonlinear trend specification $y_t=\tau_1 + \tau_2 t + \tau_3 t^\theta + \phi x_t+u_t$. \textbf{(f)} The relation between $x_t$ and $y_t$ after partialling out the constant, linear trend, and flexible deterministic trend.}
\label{fig:overviewAustria}
\end{figure}

\begin{figure}[H]
\begin{subfigure}{.5\textwidth}
  \centering
  \includegraphics[width=.8\linewidth]{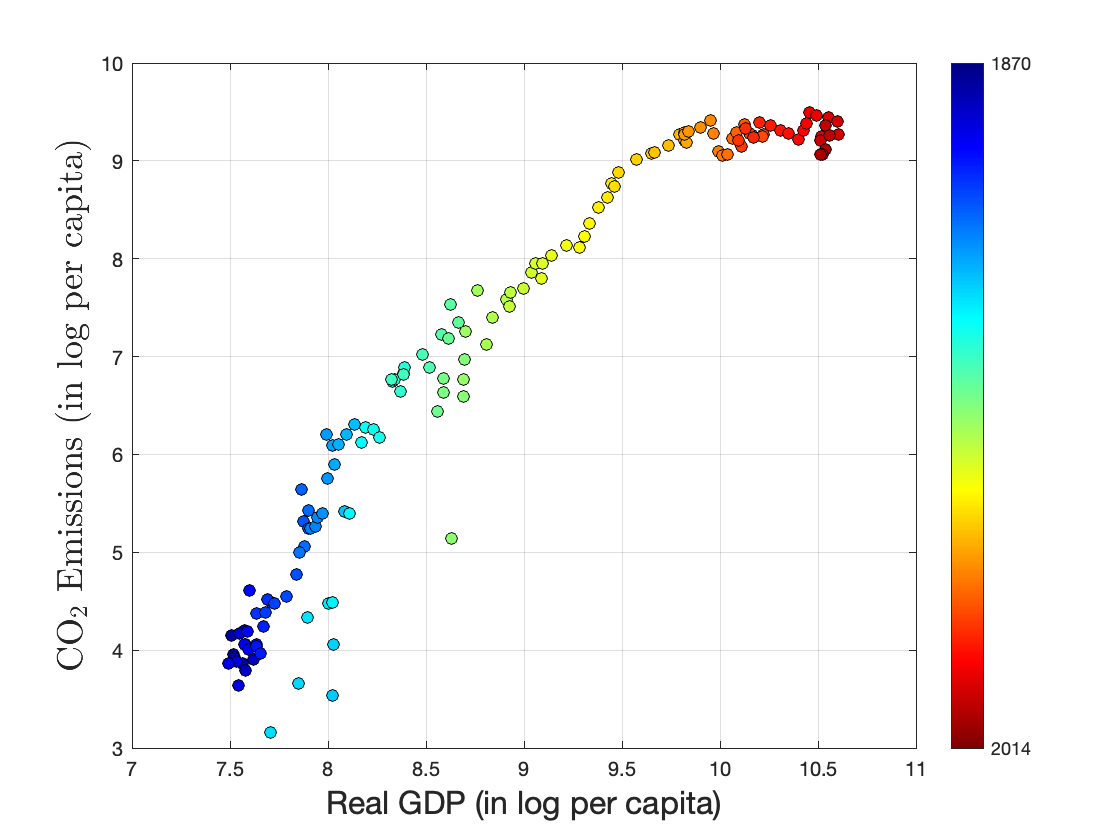}
  \caption{}
\end{subfigure}%
\begin{subfigure}{.5\textwidth}
  \centering
  \includegraphics[width=.8\linewidth]{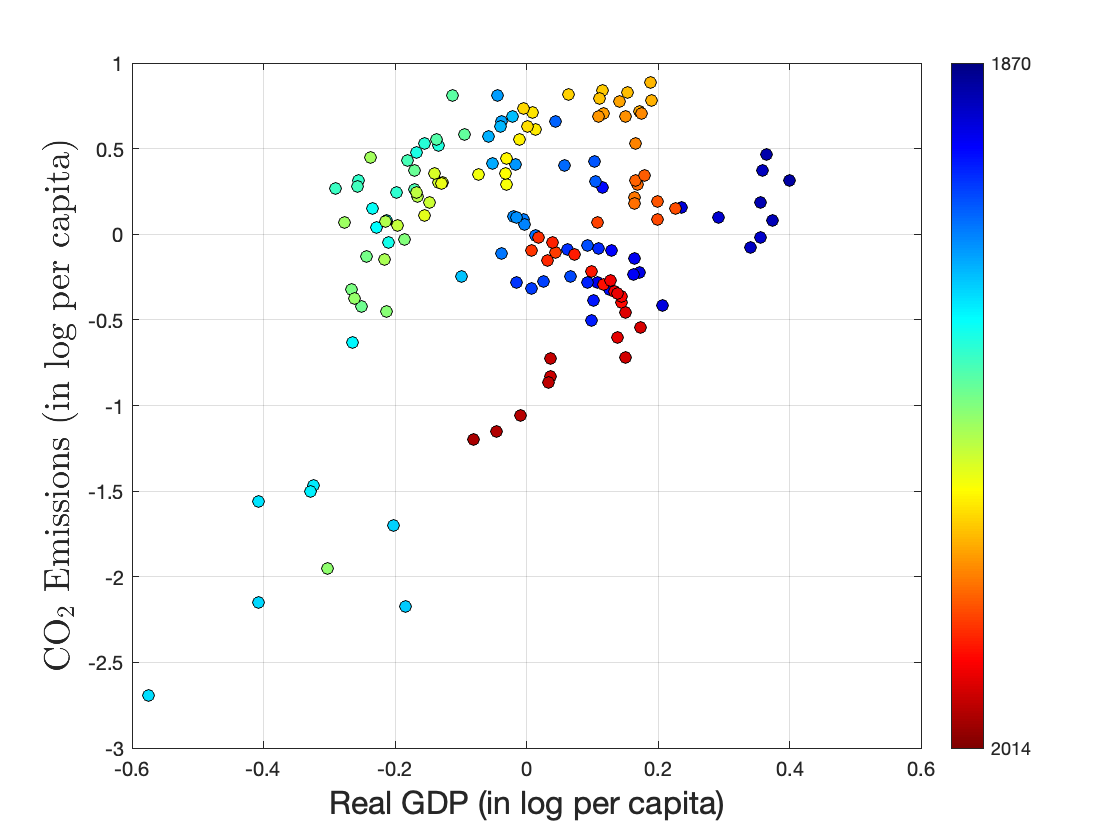}
  \caption{}
\end{subfigure}
\begin{subfigure}{.5\textwidth}
  \centering
  \includegraphics[width=.8\linewidth]{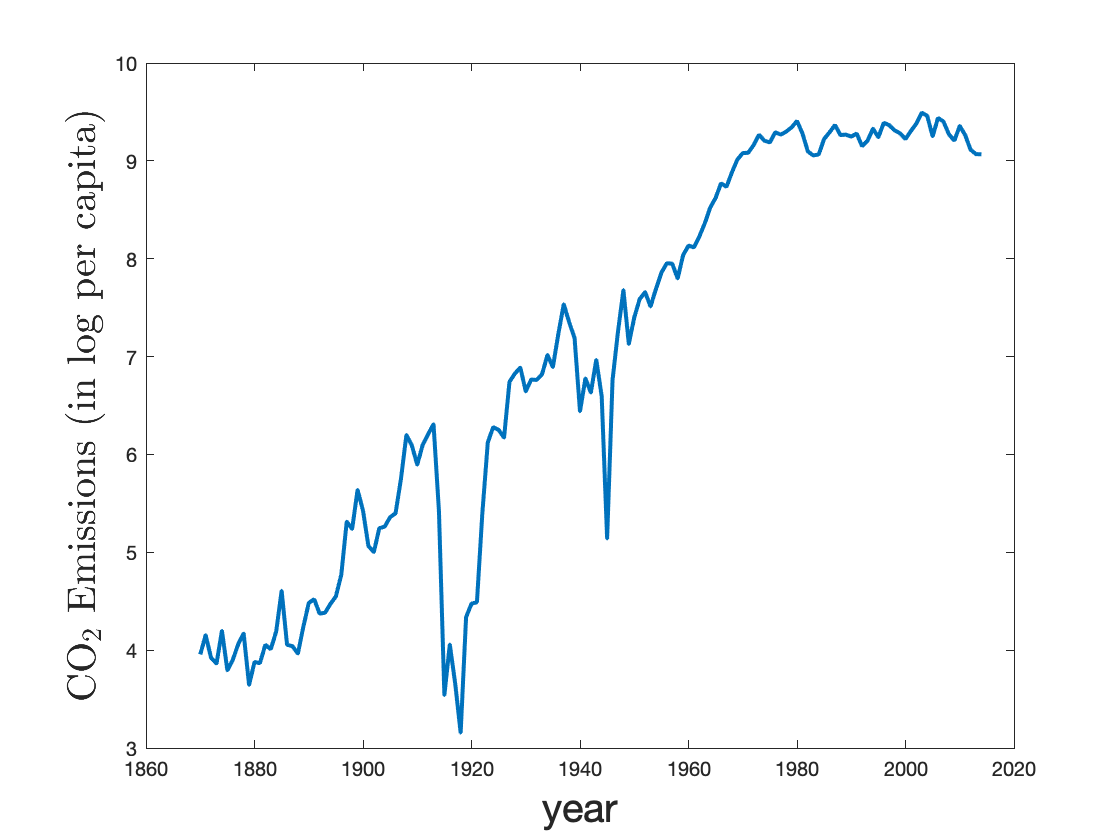}
  \caption{}
\end{subfigure}
\begin{subfigure}{.5\textwidth}
  \centering
  \includegraphics[width=.8\linewidth]{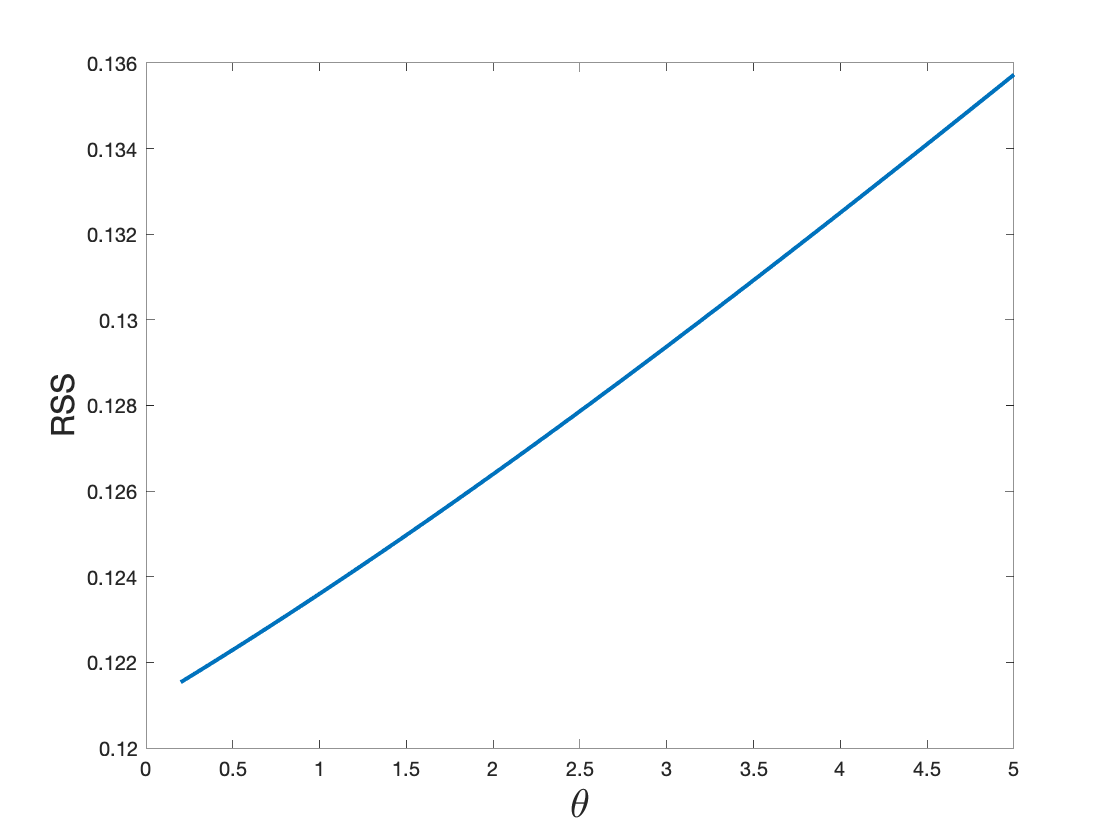}
  \caption{}
\end{subfigure}
\begin{subfigure}{.5\textwidth}
  \centering
  \includegraphics[width=.8\linewidth]{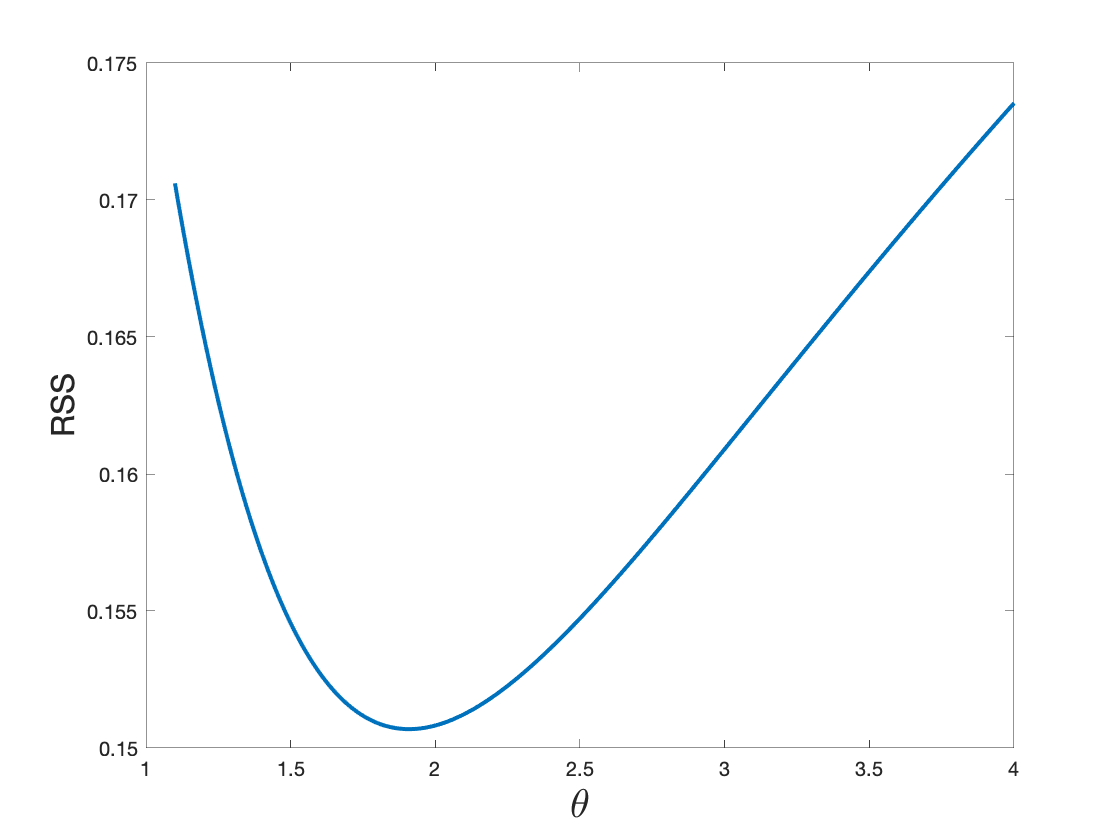}
  \caption{}
\end{subfigure}
\begin{subfigure}{.5\textwidth}
  \centering
  \includegraphics[width=.8\linewidth]{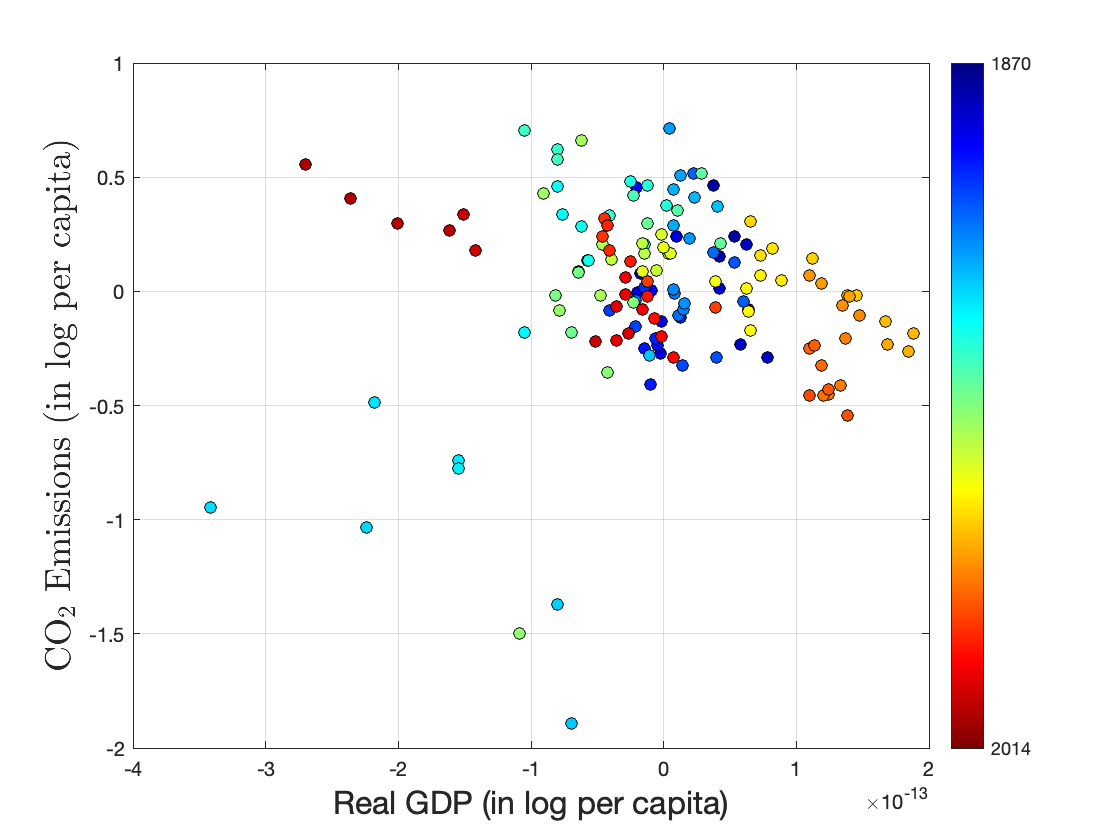}
  \caption{}
\end{subfigure}
\caption{Overview graphs for Finland over 1870-2014. \textbf{(a)} $\log(\text{GDP})$ versus $\log(\text{CO}_2)$ (both per capita). \textbf{(b)} As subfigure (a) but using detrended variables. \textbf{(c)} The log per capita CO\textsubscript{2} emissions time series for Finland. \textbf{(d)} The residual sum of squares (RSS) for the nonlinear model specification $y_t=\tau_1+\tau_2 t + \phi_1 x_t+ \phi_2 x_t^\theta+u_t$ for various values of $\theta$. \textbf{(e)} The RSS as a function of $\theta$ for the flexible nonlinear trend specification $y_t=\tau_1 + \tau_2 t + \tau_3 t^\theta + \phi x_t+u_t$. \textbf{(f)} The relation between $x_t$ and $y_t$ after partialling out the constant, linear trend, and flexible deterministic trend.}
\label{fig:overviewFinland}
\end{figure}

\subsection{RSS($\theta$) for $y_t=\tau_1+\tau_2 t +\phi_1 x_t + \phi_2 x_t^\theta +u_t$} \label{sec:thetagraphs}
\begin{figure}[H]
\begin{subfigure}{.5\textwidth}
  \centering
  \includegraphics[width=.8\linewidth]{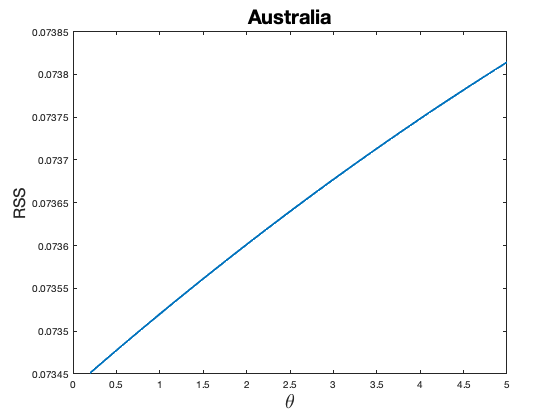}
  \caption{}
\end{subfigure}%
\begin{subfigure}{.5\textwidth}
  \centering
  \includegraphics[width=.8\linewidth]{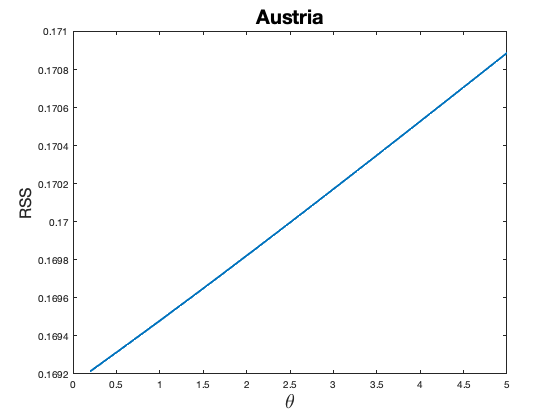}
  \caption{}
\end{subfigure}
\begin{subfigure}{.5\textwidth}
  \centering
  \includegraphics[width=.8\linewidth]{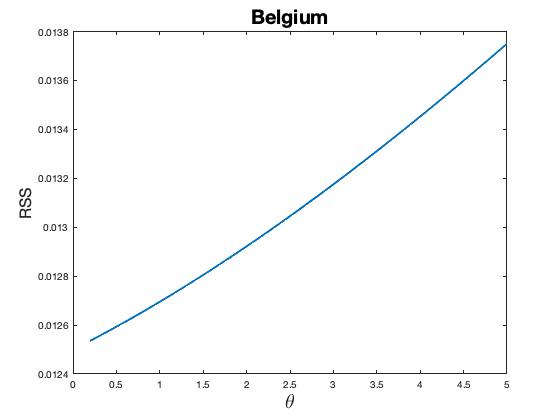}
  \caption{}
\end{subfigure}
\begin{subfigure}{.5\textwidth}
  \centering
  \includegraphics[width=.8\linewidth]{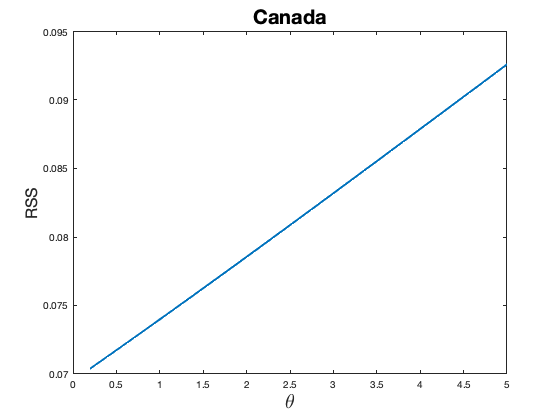}
  \caption{}
\end{subfigure}
\begin{subfigure}{.5\textwidth}
  \centering
  \includegraphics[width=.8\linewidth]{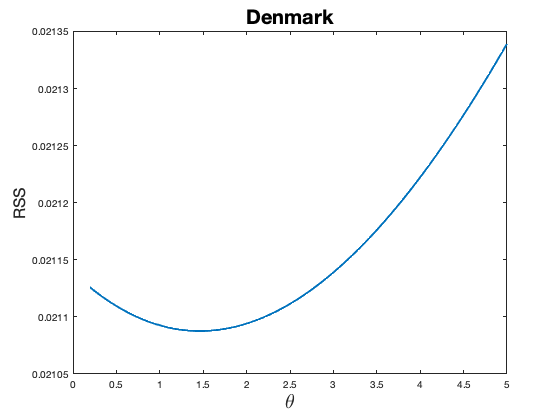}
  \caption{}
\end{subfigure}
\begin{subfigure}{.5\textwidth}
  \centering
  \includegraphics[width=.8\linewidth]{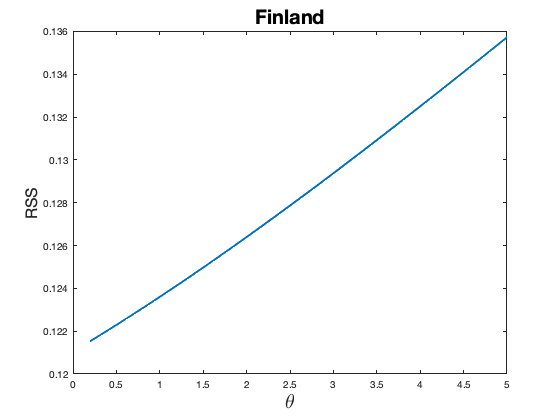}
  \caption{}
\end{subfigure}
\caption{The residual sum of squares (RSS) for the nonlinear specification $y_t=\tau_1+\tau_2 t +\phi_1 x_t + \phi_2 x_t^\theta +u_t$ for various values of $\theta$. This replicates Figure \ref{fig:overviewBelgium}(d) of the main paper for all countries in the data set.}
\label{fig:OverviewRSSvsTheta}
\end{figure}

\begin{figure}[H]
\begin{subfigure}{.5\textwidth}
  \centering
  \includegraphics[width=.8\linewidth]{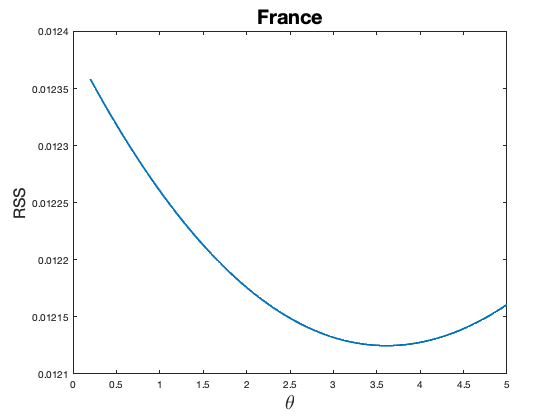}
  \caption*{\textbf{(g)}}
\end{subfigure}%
\begin{subfigure}{.5\textwidth}
  \centering
  \includegraphics[width=.8\linewidth]{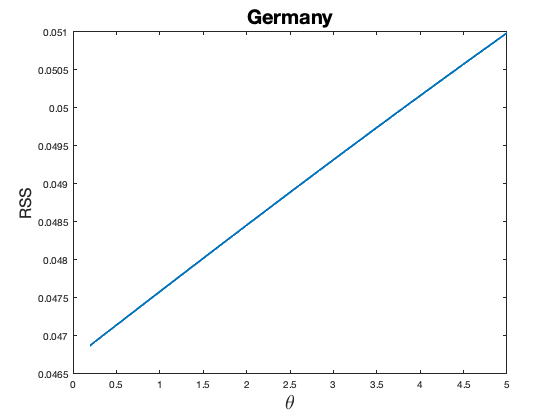}
  \caption*{\textbf{(h)}}
\end{subfigure}
\begin{subfigure}{.5\textwidth}
  \centering
  \includegraphics[width=.8\linewidth]{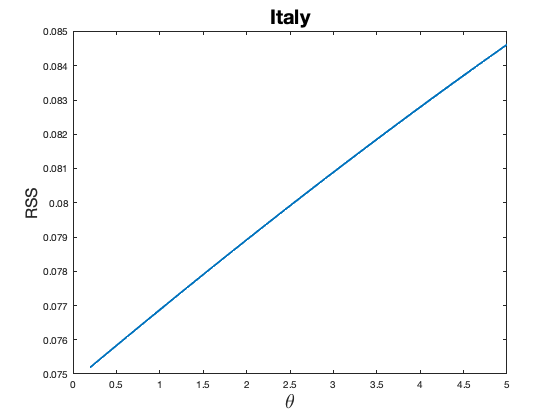}
  \caption*{\textbf{(i)}}
\end{subfigure}
\begin{subfigure}{.5\textwidth}
  \centering
  \includegraphics[width=.8\linewidth]{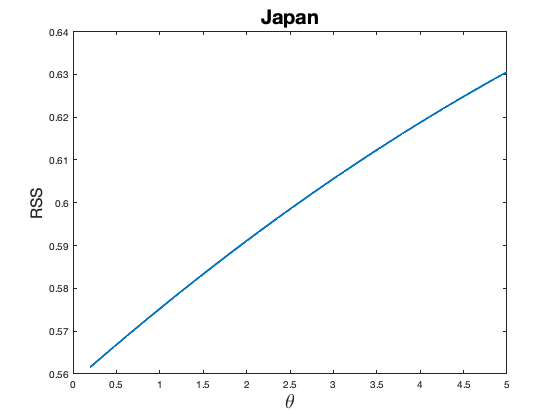}
  \caption*{\textbf{(j)}}
\end{subfigure}
\begin{subfigure}{.5\textwidth}
  \centering
  \includegraphics[width=.8\linewidth]{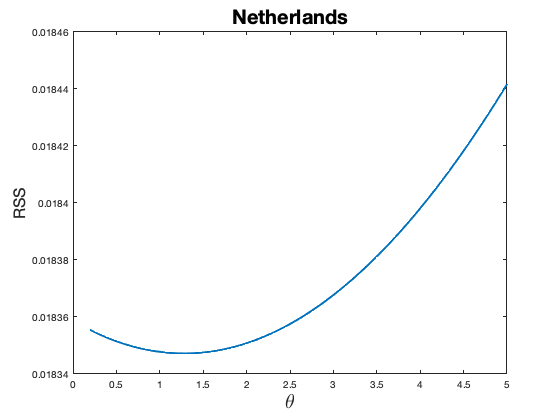}
  \caption*{\textbf{(k)}}
\end{subfigure}
\begin{subfigure}{.5\textwidth}
  \centering
  \includegraphics[width=.8\linewidth]{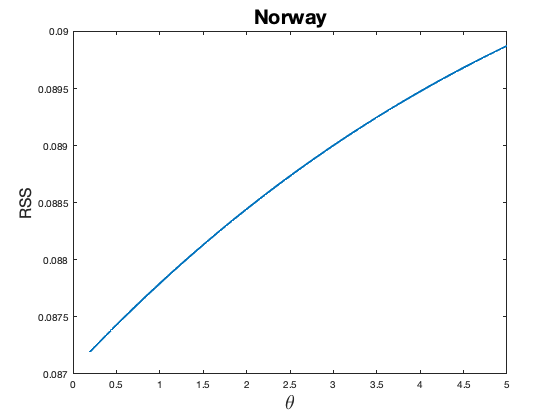}
  \caption*{\textbf{(l)}}
\end{subfigure}
\caption*{Continuation of Figure \ref{fig:OverviewRSSvsTheta}.}
\end{figure}

\begin{figure}[H]
\begin{subfigure}{.5\textwidth}
  \centering
  \includegraphics[width=.8\linewidth]{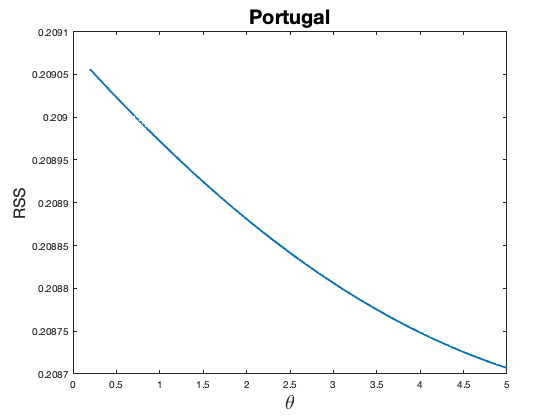}
  \caption*{\textbf{(m)}}
\end{subfigure}%
\begin{subfigure}{.5\textwidth}
  \centering
  \includegraphics[width=.8\linewidth]{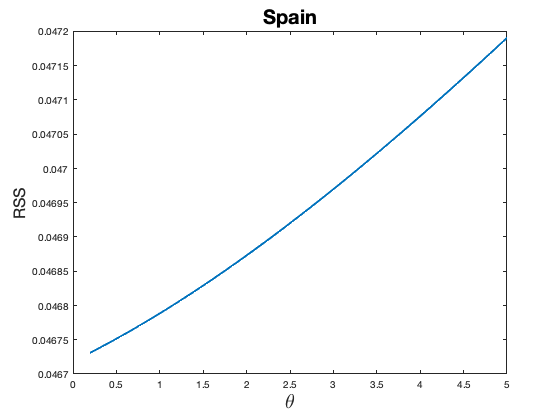}
  \caption*{\textbf{(n)}}
\end{subfigure}
\begin{subfigure}{.5\textwidth}
  \centering
  \includegraphics[width=.8\linewidth]{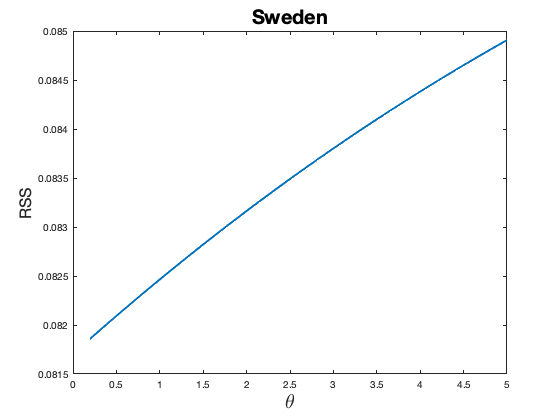}
  \caption*{\textbf{(o)}}
\end{subfigure}
\begin{subfigure}{.5\textwidth}
  \centering
  \includegraphics[width=.8\linewidth]{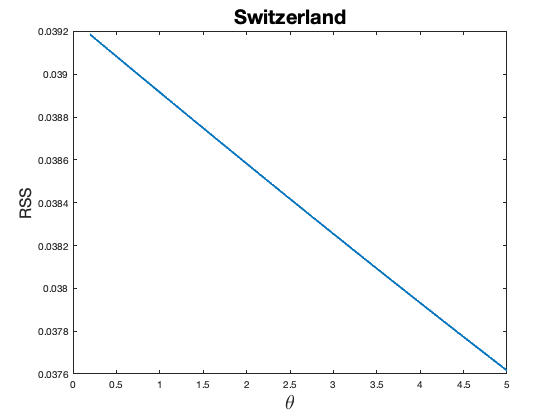}
  \caption*{\textbf{(p)}}
\end{subfigure}
\begin{subfigure}{.5\textwidth}
  \centering
  \includegraphics[width=.8\linewidth]{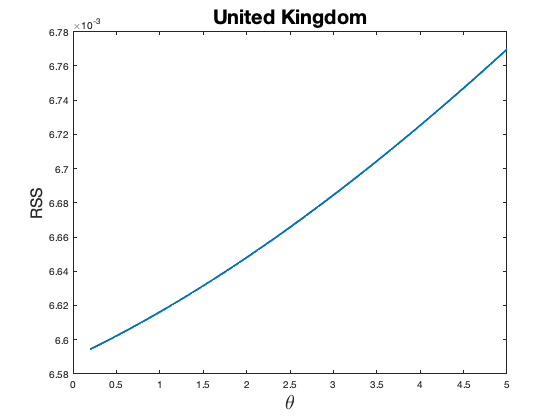}
  \caption*{\textbf{(q)}}
\end{subfigure}
\begin{subfigure}{.5\textwidth}
  \centering
  \includegraphics[width=.8\linewidth]{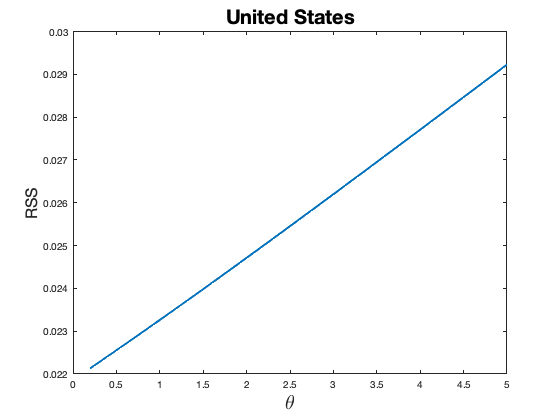}
  \caption*{\textbf{(r)}}
\end{subfigure}
\caption*{Continuation of Figure \ref{fig:OverviewRSSvsTheta}.}
\label{fig:OverviewRSSvsThetaFinal}
\end{figure}

\subsection{Additional results for univariate models}\label{appendix:univeriateanalysis}
Results of a more in-depth univariate analysis are collected in this section. We look at models (M1\textsuperscript{*})--(M4\textsuperscript{*}) as listed in Table \ref{tab:modeloverview}.

\begin{table}[H]
\centering
\caption{An overview of the univariate models.}
\label{tab:modeloverview}
\begin{threeparttable}
	\begin{tabular}{l c l}
	\toprule
 Model	&& Specification\\
 \midrule
 (M1\textsuperscript{*}) 	&& $y_t = \tau_1 + \tau_2 t + \phi_1 x_t + \phi_2 x_t^2 + u_t$ \\
 (M2\textsuperscript{*}) 	&& $y_t = \tau_1 + \tau_2 t + \tau_3 t^2 +\phi_1 x_t + \phi_2 x_t^2 + u_t$\\
 (M3\textsuperscript{*}) 	&& $y_t = \tau_1 + \tau_2 t + \tau_3 t^\theta + \phi_1 x_t + \phi_2 x_t^2 + u_t$ \\
 (M4\textsuperscript{*}) 	&& $y_t = \tau_1 + \tau_2 t + \tau_3 t^\theta + \phi_1 x_t + u_t$\\
	\bottomrule
	\end{tabular}%
    \end{threeparttable}
\end{table}

All three models are of the form:
 \begin{equation}
 y_t	= \tau_1 + \tau_2 t + \tau_3 t^\theta + \phi_1 x_t + \phi_2 x_t^2 + u_t.
 \label{eq:generalspecification}
 \end{equation}
 Model (M1\textsuperscript{*}) is the specification above with $\tau_3=0$ and forces all nonlinearities to be captured through $x_{t}^2$. Specifications (M2\textsuperscript{*}) and (M3\textsuperscript{*}) include deterministic nonlinear time trends. For model (M2\textsuperscript{*}), we allow for $\tau_3\neq 0$ but fix $\theta=2$. Model \eqref{eq:generalspecification} without further restrictions is referred to as (M3\textsuperscript{*}). In the latter model, the NLS estimator for $\theta$ is computed by a grid search over the values $\Theta=[0.05, 0.95]\cup [1.05,10]$ and simulated inference is used (see Section \ref{subsec:sim_inf} of the main paper). Table \ref{tab:ekc_est_kpss} illustrates how increasingly flexible nonlinear deterministic trends affect the parameter estimates for $\phi_1$ and $\phi_2$. Judging exclusively by the signs of $\widehat{\phi}_1$ and $\widehat{\phi}_2$, the EKC exists for 17 out of 18, 9 out of 18, and 8 out of 18 countries for (M1\textsuperscript{*}), (M2\textsuperscript{*}), and (M3\textsuperscript{*}), respectively. Moreover, the significance of squared log per capita GDP (read: $\phi_2$) reduces when nonlinear deterministic time trends are included. For model (M3\textsuperscript{*}), $\phi_2$ is never significantly different from zero at a 10\% level and evidence in favour of EKC becomes rather meagre. The results of the univariate KPSS tests for these models can be found in Table \ref{tab:ekc_est_kpss} under ``Stationarity tests''. In general, the cointegrating relations seem well-specified except maybe for Belgium, Denmark, and UK.

\begin{sidewaystable}
	\begin{table}[H]
	\centering
	\caption{Parameter estimates and output of the KPSS-type of test for stationarity as computed for model specifications (M1\textsuperscript{*})-(M4\textsuperscript{*}). The column $\widehat{KPSS}$ and $M_{opt}$ provide the numerical values of the KPSS tests and the number of chosen residual subblocks, respectively.}
	\label{tab:ekc_est_kpss}
	\resizebox{1\textheight}{!}{%
		\begin{threeparttable}
		\begin{tabular}{l d{2.5} d{2.5} d{3.5} d{2.5} d{3.5} d{2.2} d{1.2} d{1.5} c d{1.5} c d{1.4} c d{1.4} c d{1.4} c}
			\toprule
			& \multicolumn{8}{c}{Parameter estimates} &  & \multicolumn{8}{c}{Stationarity tests} \\
			\cmidrule{2-9}\cmidrule{11-18}
			\multicolumn{1}{c}{} & \multicolumn{2}{c}{(M1\textsuperscript{*})} & \multicolumn{2}{c}{(M2\textsuperscript{*})} & \multicolumn{2}{c}{(M3\textsuperscript{*})} & \multicolumn{2}{c}{(M4\textsuperscript{*})} &  & \multicolumn{2}{c}{(M1\textsuperscript{*})} & \multicolumn{2}{c}{(M2\textsuperscript{*})} & \multicolumn{2}{c}{(M3\textsuperscript{*})} & \multicolumn{2}{c}{(M4\textsuperscript{*})} \\
			\midrule
			\multicolumn{1}{c}{Country} & \multicolumn{1}{c}{$\widehat{\phi}_1$} & \multicolumn{1}{c}{$\widehat{\phi}_2$} & \multicolumn{1}{c}{$\widehat{\phi}_1$} & \multicolumn{1}{c}{$\widehat{\phi}_2$} & \multicolumn{1}{c}{$\widehat{\phi}_1$} & \multicolumn{1}{c}{$\widehat{\phi}_2$} & \multicolumn{1}{c}{$\widehat{\theta}$} & \multicolumn{1}{c}{$\widehat{\phi}_1$} &  & \multicolumn{1}{c}{$\widehat{KPSS}$} & $M_{opt}$ & \multicolumn{1}{c}{$\widehat{KPSS}$} & $M_{opt}$ & \multicolumn{1}{c}{$\widehat{KPSS}$} & $M_{opt}$ & \multicolumn{1}{c}{$\widehat{KPSS}$} & $M_{opt}$ \\
			\midrule
			Australia		& 2.75 		& -0.17       & -23.92^{***}& 1.40^{***} & -12.19^{***} & 0.74 & 0.88 & 1.25^{***} &  & 1.49 & 9 & 1.79 & 9 & 1.80 & 9 & 1.35 & 9 \\
			Austria		& 7.13^{***}	& -0.30^{**}  & 1.25        & 0.03 & 3.75^{***} & -0.12 & 0.88 & 1.55^{***} &  & 1.03 & 7 & 1.07 & 7 & 1.65 & 7 & 1.63 & 7 \\
			Belgium		& 11.45^{***}	& -0.57^{***} & 10.03^{***} & -0.49^{***} & 10.29^{***} & -0.50 & 2.60 & 1.01^{***} &  & 1.91 & 9 & 2.92^{*} & 8 & 2.74^{*} & 8 & 2.28 & 9 \\
			Canada		& 12.72^{***}	& -0.64^{***} & 14.80       & -0.77 & -3.46^{***} & 0.25 & 0.56 & 1.14^{***} &  & 2.83^{*} & 7 & 2.60^{*} & 7 & 1.26 & 9 & 1.28 & 9 \\
			Denmark		& 14.52^{***}	& -0.65^{***} & -2.80       & 0.25 & -5.75^{***} & 0.39 & 2.03 & 1.68^{***} &  & 3.14^{*} & 9 & 3.30^{**} & 9 & 2.98^{*} & 9 & 1.58 & 8 \\
			Finland		& 16.86^{***}	& -0.76^{***} & 16.97^{***} & -0.77^{***} & 22.58^{***} & -1.06 & 1.87 & 3.95^{***} &  & 2.05 & 8 & 2.06 & 8 & 0.71 & 9 & 0.80 & 9 \\
			France		& 10.87^{***} & -0.55^{***} & 3.14^{*}    & -0.12 & 3.31^{***} & -0.13 & 2.09 & 1.00^{***} &  & 1.72 & 9 & 0.69 & 8 & 0.56 & 8 & 2.49 & 9 \\
			Germany		& 6.24^{***}  & -0.31^{***} & -1.82       & 0.13 & -4.42^{***} & 0.29 & 0.59 & 0.89^{***} &  & 2.63^{*} & 7 & 2.1 & 8 & 1.23 & 9 & 2.81^{*} & 7 \\
			Italy 			& 11.76^{***} & -0.55^{***} & 7.31^{**}   & -0.30 & 7.72^{***} & -0.29 & 0.82 & 2.41^{***} &  & 4.18^{**} & 7 & 3.79^{**} & 8 & 0.79 & 7 & 0.78 & 7 \\
			Japan		& 9.86^{***}  & -0.52^{***} & -4.27       & 0.29 & 1.16^{***} & -0.00 & 0.05 & 1.15^{***} &  & 5.17^{***} & 8 & 3.93^{**} & 7 & 1.83 & 9 & 1.84 & 9 \\
			Netherlands	& 8.70^{***} & -0.41^{***} & 1.49        & -0.01 & 0.48^{**} & 0.05 & 1.86 & 1.32^{***} &  & 2.16 & 7 & 0.94 & 7 & 1.2 & 7 & 1.15 & 7 \\
			Norway 		& 3.87        & -0.16       & -9.14^{**}  & 0.51^{**} & -1.10^{**} & 0.16 & 0.46 & 2.05^{***} &  & 2.53^{*} & 7 & 1.06 & 7 & 0.74 & 9 & 1.44 & 8 \\
			Portugal		& 0.09        &  0.04       & -5.86^{***} & 0.42 & -1.11^{**} & 0.15 & 0.05 & 1.69^{***} &  & 6.95^{***} & 8 & 5.28^{**} & 7 & 0.64 & 7 & 1.64 & 7 \\
			Spain		& 7.72^{***}  & -0.37^{***} & 1.98        & -0.01 & 4.31^{***} & -0.16 & 1.55 & 1.52^{***} &  & 2.78^{*} & 7 & 2.03 & 8 & 2.4 & 8 & 2.42 & 8 \\
			Sweden 		& 10.91^{***} & -0.44^{***} & -9.08^{*}   & 0.61^{**} & 0.43 & 0.17 & 0.46 & 3.48^{***} &  & 3.59^{**} & 7 & 1.27 & 7 & 0.73 & 7 & 0.80 & 7 \\
			Switzerland	& 8.57^{***}& -0.29^{***} & -7.86^{**}  & 0.54^{***} & -13.86^{***} & 0.83 & 2.98 & 2.63^{***} &  & 0.80 & 7 & 0.96 & 7 & 0.77 & 7 & 0.75 & 7 \\
			UK 			& 9.32^{***}   & -0.47^{***} & 5.91^{***}  & -0.27^{***} & 4.13^{***} & -0.18 & 3.04 & 0.80^{***} &  & 2.76^{*} & 9 & 4.25^{**} & 9 & 4.30^{**} & 9 & 3.98^{**} & 9 \\
			USA 			& 8.67^{***}   & -0.44^{***} & 0.93        & -0.03 & -5.62^{***} & 0.35 & 0.92 & 0.95^{***} &  & 1.64 & 8 & 1.85 & 8 & 2.25 & 8 & 1.97 & 8 \\
			\bottomrule
		\end{tabular}%
	    \begin{tablenotes}
	    	\footnotesize
	    	\item Note: Asterisks denote rejection of the null hypothesis at the $^{***}1\%$, $^{**}5\%$, and $^{*}10\%$ significance level. Depending on the specific table entry, the null hypothesis refers to either a coefficient being zero or (nonlinear) cointegration.
    	\end{tablenotes}
    \end{threeparttable}
	}
\end{table}
\end{sidewaystable}

The insignificance of $\phi_2$ in model (M3\textsuperscript{*}) suggests a final model specification, namely
\begin{equation}
  y_t	= \tau_1 + \tau_2 t + \tau_3 t^\theta + \phi_1 x_t + u_t. \tag{M4\textsuperscript{*}}
 \label{eq:EKCwithoutxt2}
\end{equation}
Model \eqref{eq:EKCwithoutxt2} specifies a linear cointegrating relation around a flexible time trend and does not incorporate nonlinear effects in log per capita GDP.\footnote{Model specification (M4\textsuperscript{*}) has the additional advantage of being invariant to the possible presence of a drift component in log per capita GDP, also see footnote \ref{footnoteDetTrend} of the main text.} That is, the model specification does not allow for an EKC. As before, we check parameter estimates and test for stationarity of the error terms (the columns labeled ``(M4\textsuperscript{*})'' in Table \ref{tab:ekc_est_kpss}). Some remarks concerning this final model specification are:
\begin{enumerate}
 \item For Belgium, the fitted model reads
 \begin{equation}
  y_t = -0.049+ 0.0063 t - 6.131\times 10^{-6} \; t^{2.603} + 1.006 x_t + \hat{u}_t.
 \label{eq:Belgiumresults}
 \end{equation}
 The flexible power on the linear trend is estimated to be $\widehat{\theta}=2.603$ resulting in nonlinear behaviour over time. Moreover, the negative coefficient in front of $t^{2.603}$ provides a contribution that is sloping down over time. If time effects are ignored, then a 1\% increase in GDP will lead to an estimated 1.006\% increase in fossil-fuel CO\textsubscript{2} emissions.  

 \item The outcomes of the KPSS test do not point towards a misspecified cointegrating relation (Table \ref{tab:ekc_est_kpss}). The flexible deterministic trend is generally sufficient to describe the nonlinear behaviour of the (univariate) log per capita CO\textsubscript{2} emissions over time, that is, \emph{squared log per capita GDP is not needed in the univerariate models}. Visual proof is found in Figures \ref{fig:overviewBelgium}(a), \ref{fig:overviewBelgium}(b) and \ref{fig:overviewBelgium}(f) where the incorporation of  increasingly flexible time effects is seen to remove any apparent nonlinear relationship between log per capita GDP and CO\textsubscript{2} emissions.
\end{enumerate}

Visualisations of the model fits are available in Figures \ref{fig:overallfit}--\ref{fig:overallfit4}.

\begin{sidewaysfigure}[h]
  \centering
  \includegraphics[width=.9\linewidth]{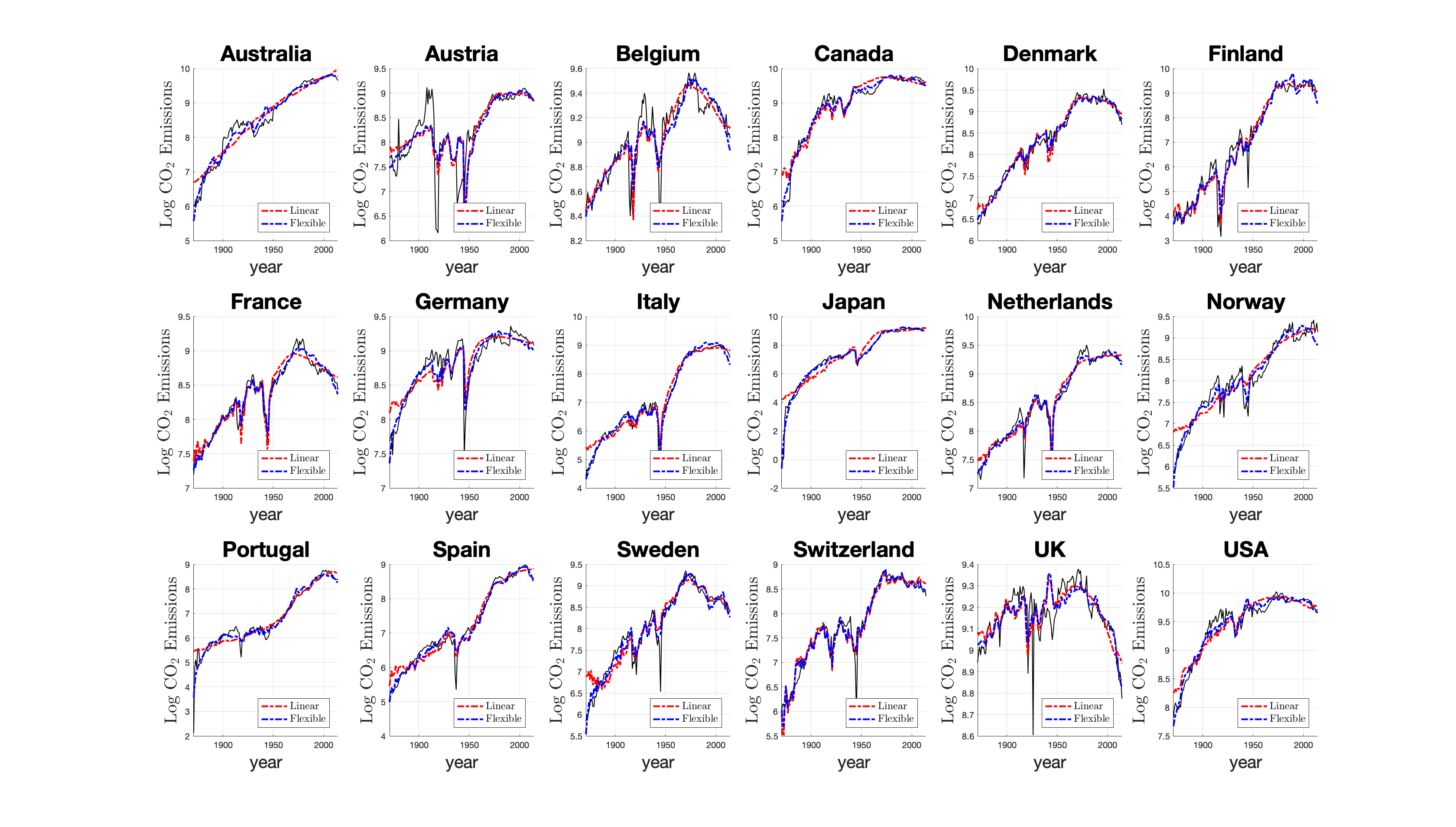}
  \caption{Estimation results for $\COTWO$ emissions: actual values (black), fitted values under the CPR model $y_t = \tau_1 + \tau_2 t+\phi_1 x_t + \phi_2 x_t^2 + u_t$ (red), and fitted values under the GCPR model $y_t = \tau_1 + \tau_2 t + \tau_3 t^\theta+\phi_1 x_t + u_t$ (blue).}
  \label{fig:overallfit}
\end{sidewaysfigure}

\begin{sidewaysfigure}[h]
  \centering
  \includegraphics[width=.9\linewidth]{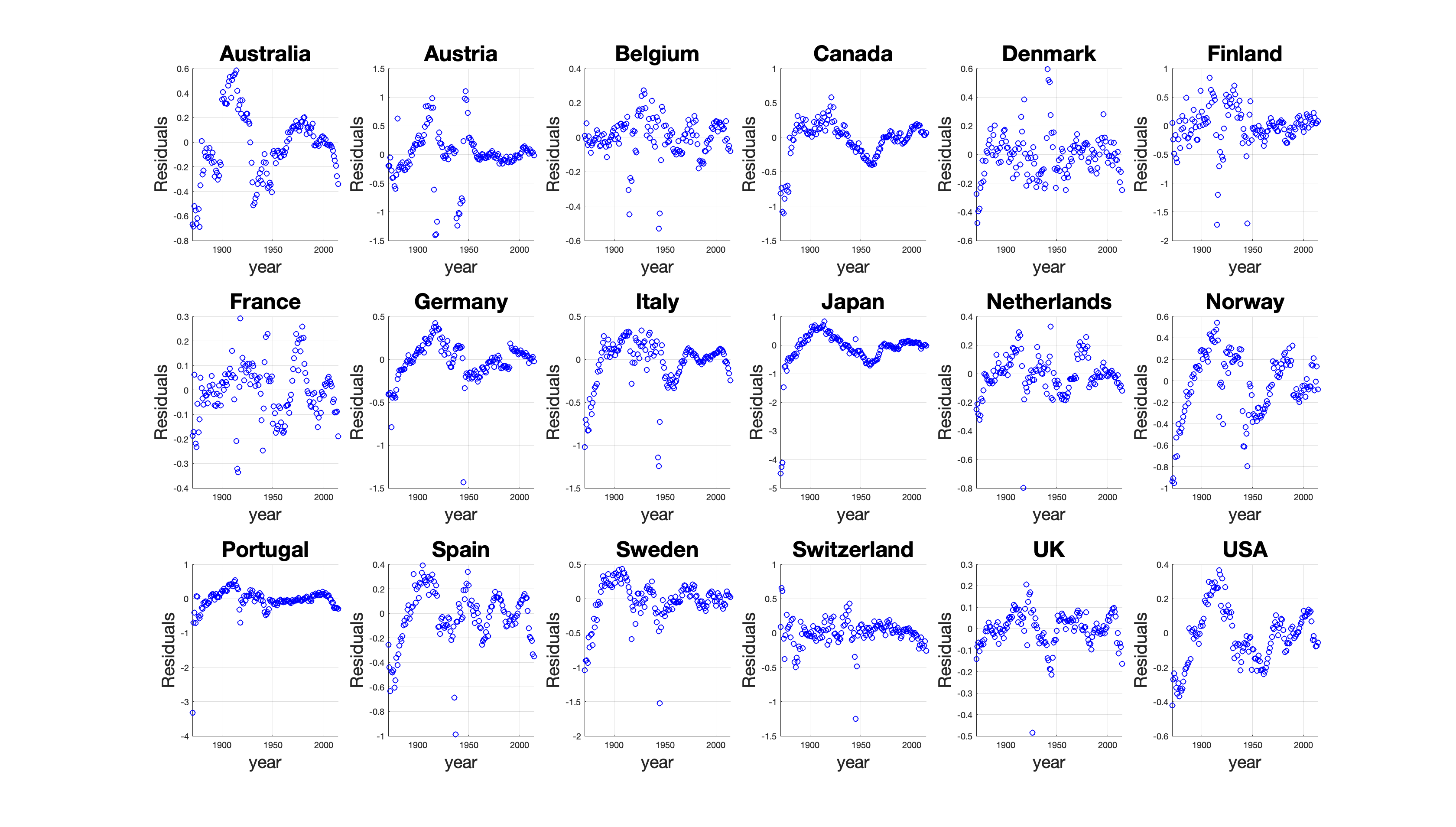}
  \caption{The residual series for each country under model specification (M1): $y_t = \tau_1 + \tau_2 t + \phi_1 x_t + \phi_2 x_t^2 + u_t$.}
  \label{fig:overallfit1}
\end{sidewaysfigure}
\clearpage

\begin{sidewaysfigure}[h]
  \centering
  \includegraphics[width=.9\linewidth]{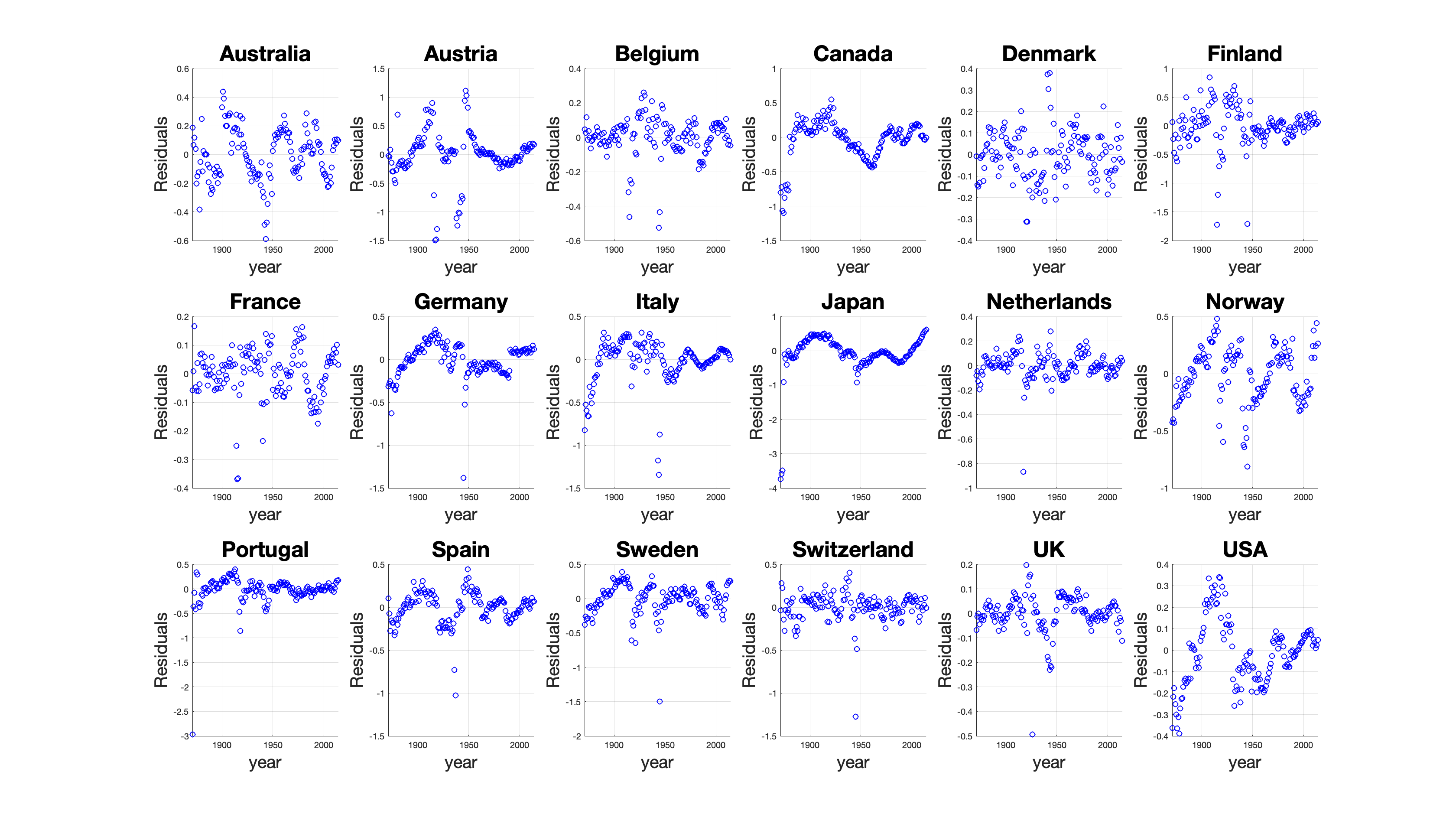}
  \caption{The residual series for each country under model specification (M2): $y_t = \tau_1 + \tau_2 t + \tau_3 t^2+ \phi_1 x_t + \phi_2 x_t^2 + u_t$.}
  \label{fig:overallfit2}
\end{sidewaysfigure}

\begin{sidewaysfigure}[h]
  \centering
  \includegraphics[width=.9\linewidth]{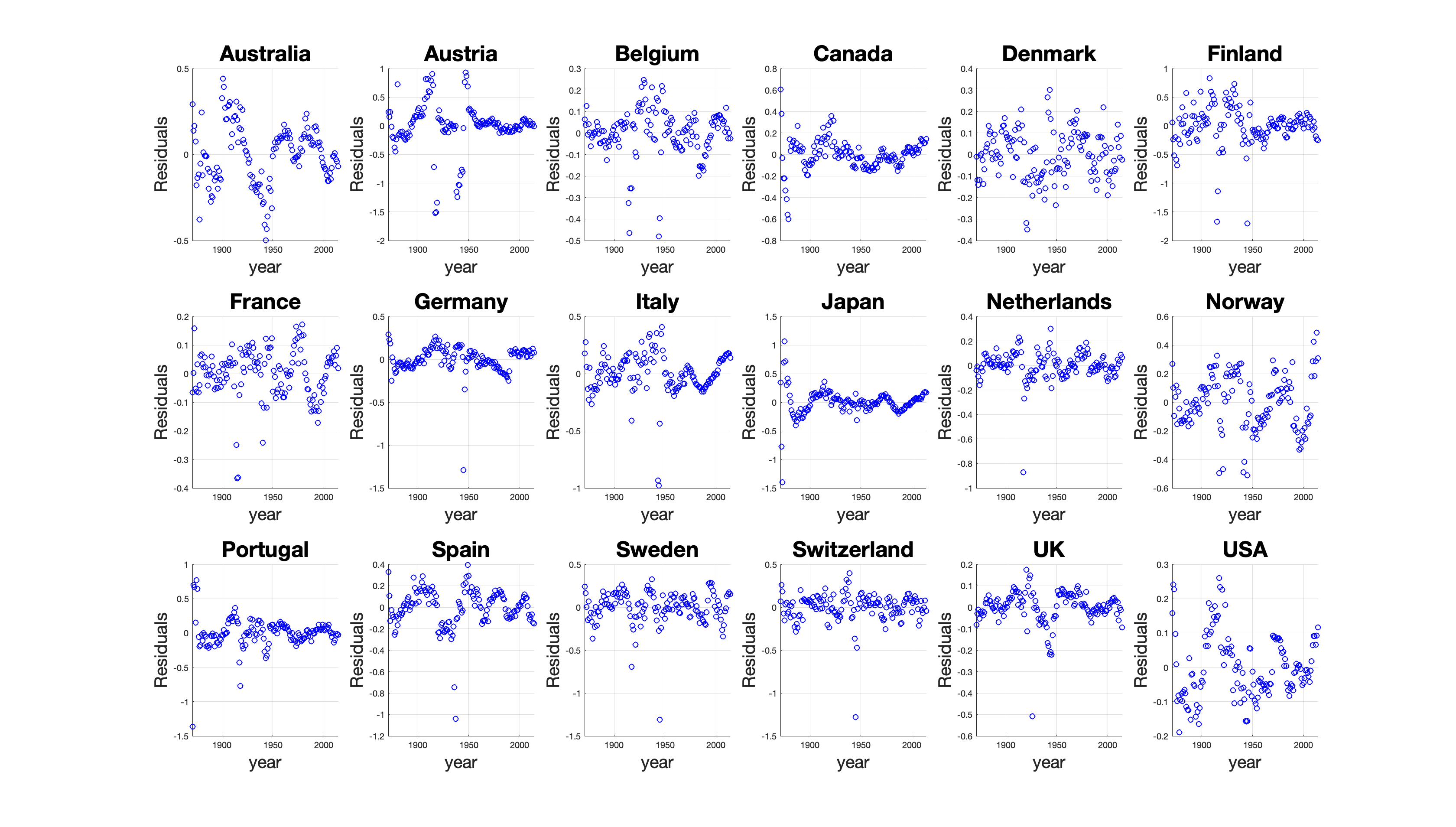}
  \caption{The residual series for each country under model specification (M3): $y_t = \tau_1 + \tau_2 t + \tau_3 t^\theta+ \phi_1 x_t + \phi_2 x_t^2 + u_t$.}
  \label{fig:overallfit3}
\end{sidewaysfigure}

\begin{sidewaysfigure}[h]
  \centering
  \includegraphics[width=.9\linewidth]{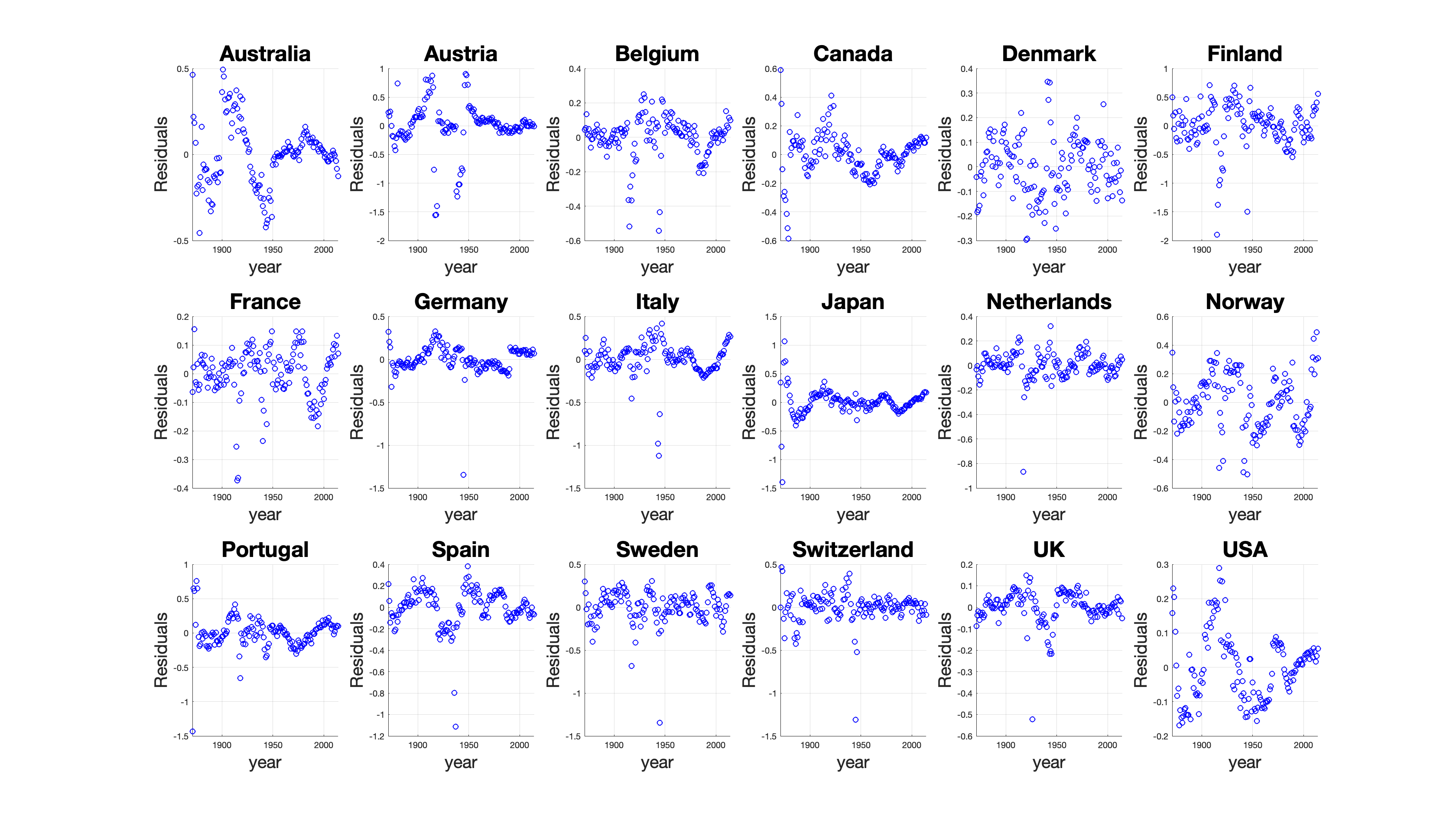}
  \caption{The residual series for each country under model specification (M4): $y_t = \tau_1 + \tau_2 t + \tau_3 t^\theta+ \phi_1 x_t + u_t$.}
  \label{fig:overallfit4}
\end{sidewaysfigure}

\clearpage
\subsection{Nonparametric kernel estimator and linear fit}\label{appendix:graphslinearfit}

\begin{figure}[h]
\centering
\includegraphics[width=.7\linewidth]{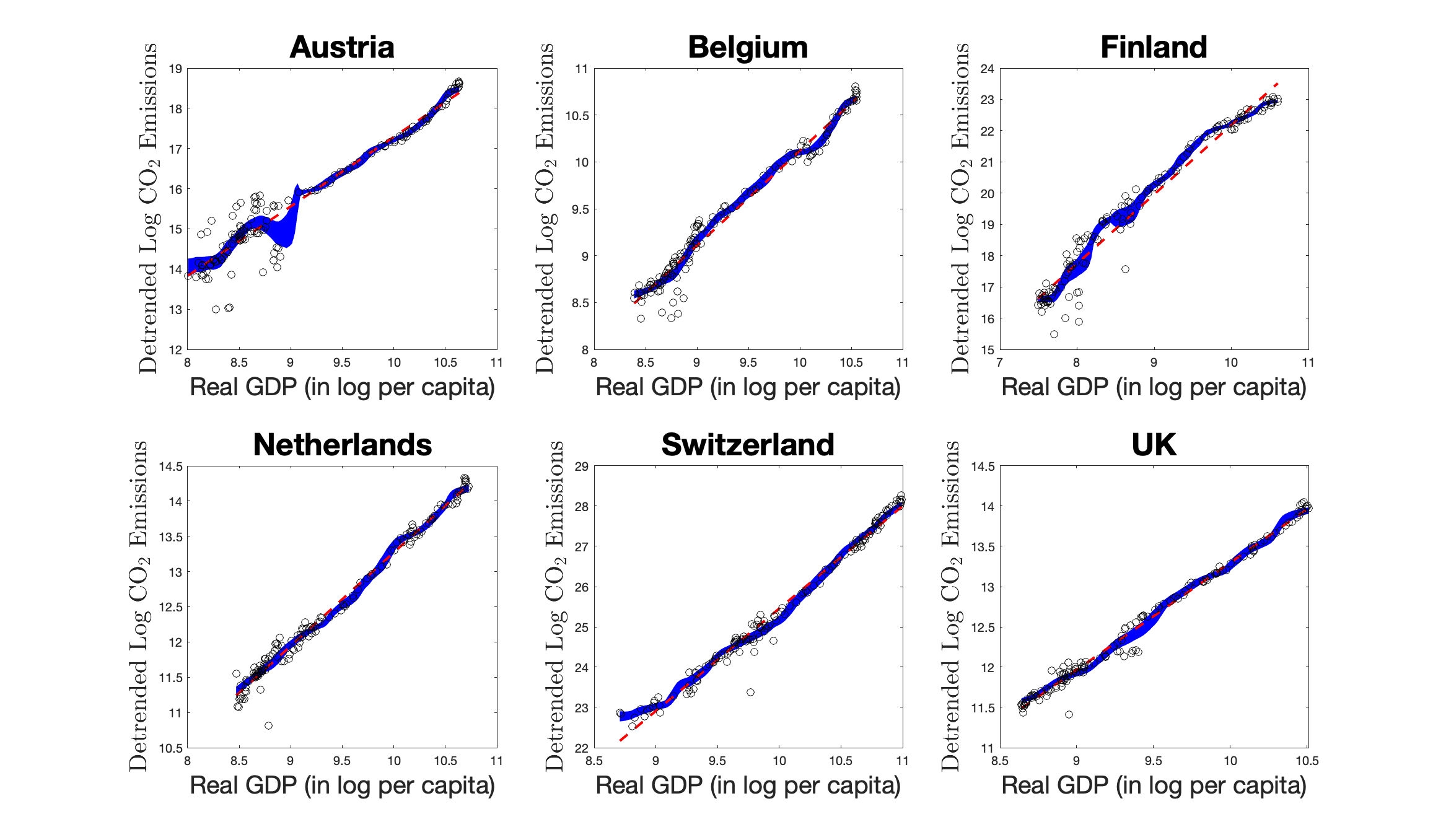}
\caption{The 95\% (point-wise) confidence intervals of the non-parametric kernel estimate for the relationship between GDP and CO\textsubscript{2} emissions (blue) after removal of the country-specific and joint flexible deterministic trends. The red dotted line is the linear fit. Results are based on the \emph{full sample}.}
\end{figure}

\begin{figure}[h]
\centering
\includegraphics[width=.7\linewidth]{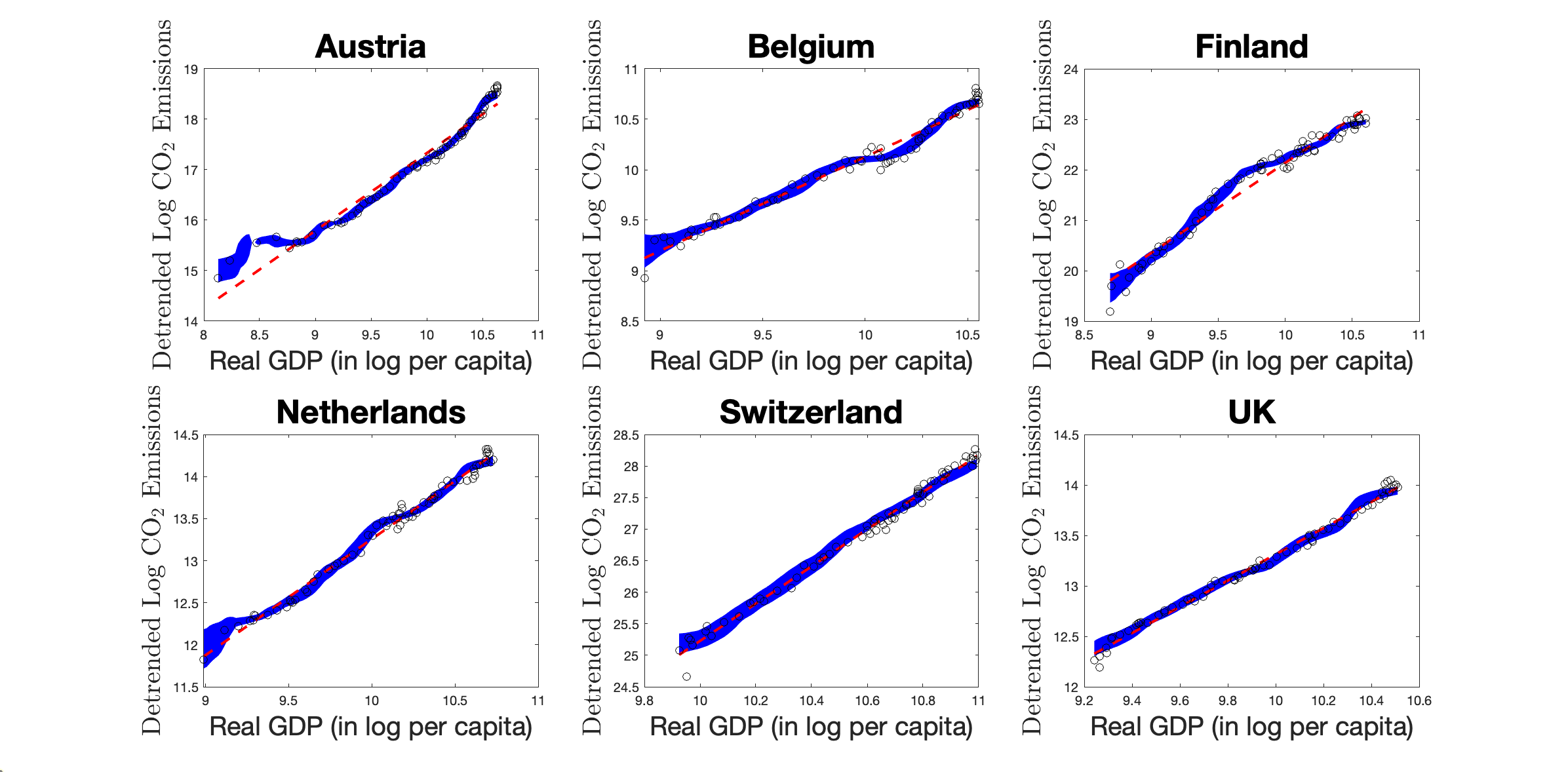}
\caption{The 95\% (point-wise) confidence intervals of the non-parametric kernel estimate for the relationship between GDP and CO\textsubscript{2} emissions (blue) after removal of the country-specific and joint flexible deterministic trends. The red dotted line is the linear fit. Results are based on \emph{observations after World War II}.}
\end{figure}



\end{document}